\numberwithin{equation}{section}
\spnewtheorem{assumption}{Assumption}{\bf}{\it}
\newcommand{\subscript}[2]{$#1 _ #2$}
\newcommand{\bb}[1]{\mathbf{#1}}
\newcommand{\bs}[1]{\boldsymbol{#1}}
\newcommand{\inr}{\in \mathbb{R}^2}
\newcommand{\R}{\mathbb{R}}
\newcommand{\C}{\mathbb{C}}
\newcommand{\Z}{\mathbb{Z}}
\newcommand{\B}{\mathcal{B}}
\newcommand{\pc}{\mathcal{PC}}
\newcommand{\bk}{{\bf k}}
\newcommand{\bfm}{{\bf m}}
\newcommand{\bK}{{\bf K}}
\newcommand{\bKp}{{\bf K'}}
\newcommand{\bv}{{\bf v}}
\newcommand{\bx}{{\bf x}}
\newcommand{\bX}{{\bf X}}
\newcommand{\by}{{\bf y}}
\newcommand{\LA}{{\mathcal{L}^A}}
\newcommand{\LB}{{\mathcal{L}^B}}
\newcommand{\balpha}{{\bm{\alpha}}}
\newcommand{\vtilde}{{\bm{\mathfrak{v}}}}
\newcommand{\ktilde}{{\bm{\mathfrak{K}}}}
\newcommand{\smallktilde}{{\mathfrak{K}}}
\newcommand{\kpar}{{k_{\parallel}}}
\newcommand{\kparv}{{\kpar=\bK\cdot\vtilde_1}}
\newcommand{\kparpv}{{\kpar=\bKp\cdot\vtilde_1}}
\newcommand{\kparvstar}{{\kpar=\bK_\star\cdot\vtilde_1}}
\newcommand{\abs}[1]{\left\lvert#1\right\rvert}
\newcommand{\norm}[1]{\left\lVert#1\right\rVert}
\newcommand{\inner}[1]{\left\langle#1\right\rangle}
\newcommand{\D}{\partial}
\newcommand{\eps}{\varepsilon}
\newcommand{\nit}{\noindent}
\newcommand{\nn}{\nonumber}
\newcommand{\lamsharp}{{\lambda_\sharp}}
\newcommand{\thetasharp}{{\vartheta_{\sharp}}}
\newcommand{\exponent}{\nu}
\newcommand{\bkappa}{{\bs \kappa}}
\newcommand{\e}{a}
\newcommand{\intomega}[1]{{\int_\Omega #1 \,d\mathbf{x}}}
\begin{document}
\title{Elliptic operators with honeycomb symmetry: Dirac points, Edge States and Applications to Photonic Graphene}

\author{J.P. Lee-Thorp \and M.I. Weinstein \and Y. Zhu}


\date{Received: date / Revised version: date}

\maketitle

\begin{abstract}
Consider  electromagnetic waves  in two-dimensional  {\it honeycomb structured media}, whose constitutive laws have the symmetries of a hexagonal tiling of the plane. The properties of transverse electric (TE) polarized waves are determined by the spectral properties of the elliptic operator $\LA=-\nabla_\bx\cdot A(\bx) \nabla_\bx$,  where $A(\bx)$ is $\Lambda_h-$ periodic
 ($\Lambda_h$ denotes the equilateral triangular lattice), and such that with respect to some origin of coordinates, $A(\bx)$ is $\mathcal{P}\mathcal{C}-$ invariant ($A(\bx)=\overline{A(-\bx)}$) and $120^\circ$ rotationally invariant ($A(R^*\bx)=R^*A(\bx)R$, where $R$ is a $120^\circ$ rotation in the plane).
A summary of our results is as follows:
a) For generic honeycomb structured media, the  band structure of $\LA$  has {\it Dirac points}, {\it i.e.} conical intersections between two adjacent Floquet-Bloch dispersion surfaces. b) Initial data of wave-packet type, which are spectrally concentrated about a Dirac point, give rise to solutions of the time-dependent Maxwell equations whose wave-envelope, on long time scales, is governed by an effective two-dimensional  time-dependent system of massless Dirac equations.
 c) Dirac points are unstable to arbitrary small perturbations which break either $\mathcal{C}$ (complex-conjugation) symmetry or $\mathcal{P}$ (inversion) symmetry. d) The introduction through small and slow variations of  a {\it domain wall} across a line-defect  gives rise to the bifurcation from Dirac points of highly robust (topologically protected) {\it edge states}. These are time-harmonic solutions of Maxwell's equations which are propagating parallel to the line-defect and spatially localized transverse to it.
 The transverse localization and strong robustness to perturbation of these edge states  is rooted in the protected zero mode of a one-dimensional  effective Dirac operator with spatially varying mass term.
e) These results imply the existence of {\it uni-directional} propagating edge states for two classes of time-reversal invariant media in which  $\mathcal{C}$ symmetry is broken: magneto-optic media and bi-anisotropic media.
\end{abstract}


\section{Introduction and Summary of Results}

\subsection{Introduction}\label{intro}

Motivated by the novel and subtle properties of electronic waves in graphene \cites{geim2007rise}, there has been very wide interest in the propagation of waves in two-dimensional structures having the symmetries of a hexagonal tiling of the plane and its applications to electromagnetic and other types of waves; see, for example, \cites{Singha_11,Rechtsman-etal:13,Shvets-PTI:13,MKW:15,yang2015topological,wu2015scheme}. Such physical systems have been dubbed artificial graphene. The present article is motivated by {\it photonic graphene}; the propagation of waves governed by the two-dimensional Maxwell equations in honeycomb media.

Among the remarkable properties of graphene (actual and artificial) being intensively explored in the fundamental and applied scientific communities are the existence of Dirac points \cites{RMP-Graphene:09,novoselov2005two,wallace1947band,FW:12,FLW-CPAM:17,berkolaiko-comech:15} (conical points at the intersections between dispersion surfaces), the implied wave-packet (quasi-particle) dynamics of states which are
 spectrally localized near Dirac points \cites{Segev07prl,ablowitz2009conical,FW:14}, and topologically protected edge states \cites{HR:07,RH:08,RMP-Graphene:09,Delplace-etal:11,poo2011experimental,plotnik2013observation,Shvets:14,chen2014experimental,ma2015guiding,FLW-2DM:15,cheng2016robust,FLW-AnnalsPDE:16}. Conical singularities
 in dispersion relations and their consequences for wave propagation have a long history. For example, they are well-known to occur  for spatially homogeneous anisotropic Maxwell's equations; see
 \cite{Berry-Jeffrey:07} and references cited therein.

The goal of this paper is to investigate the phenomena of Dirac points, effective dynamics of wavepackets and the bifurcation theory of topologically protected edge states in the context of a class of elliptic partial differential operators,
which incorporates important physical cases of electromagnetic propagation in honeycomb structures.
Specifically we  consider a class of periodic {\it scalar} divergence-form operators:
\begin{equation}
\LA=-\nabla\cdot A \nabla\ =\ -\sum_{i,j=1}^2 \frac{\partial}{\partial x_i} a_{ij}(\bx) \frac{\partial}{\partial x_j} \ ,\ \
A(\bx)= \left(a_{ij}(\bx)\right)_{i,j=1,2}\ ,
\label{L_def}
\end{equation}
with structural assumptions on $A(\bx)$, which we now discuss.

 Introduce the following operations:  $\mathcal{C}[f](\bx)=\overline{f(\bx)}, \ $
$\mathcal{P}[f](\bx)=f(-\bx)$,  and
$\mathcal{R}[f](\bx)\equiv f(R^*\bx)$, where $R$ is the $120^\circ$ clockwise rotation matrix of vectors in the plane.
  For the bulk (unperturbed) structure,  specified by a $2\times2$ matrix function $A(\bx)$, we assume the following properties:
\begin{enumerate}
\item [(i)] $A(\bx)$ is  periodic with respect to the equilateral triangular lattice, $\Lambda_h$; see Section \ref{triangle}.
\item [(ii)] $\LA$ is self-adjoint and uniformly elliptic on $\R^2$,\\
 and  with respect to some origin of coordinates:
\item [(iii)] $\LA$ is $\mathcal{PC}-$ invariant; $\left[\mathcal{PC},\LA\right]=0$, and
\item [(iv)] $\LA$ is $\mathcal{R}-$ invariant; $\left[\mathcal{R},\LA\right]=0$, where $[\mathcal{L}_1,\mathcal{L}_2]=\mathcal{L}_1\mathcal{L}_2-\mathcal{L}_2\mathcal{L}_1$.
\end{enumerate}
\medskip

\nit A characterization of matrix constitutive laws, $A(\bx)$,  which satisfy assumptions (i)-(iv) is presented in Section \ref{honeycomb-structures}.
\medskip

In this article we obtain results on:
\begin{enumerate}
\item Dirac points (conical singularities) in the Floquet-Bloch dispersion surfaces associated with the eigenvalue problem $\LA\psi=E\psi$; see \eqref{Eq_Eigen}.
\item The non-persistence of Dirac points (opening of a local spectral gap) for the perturbed eigenvalue problem for the operator $\mathcal{L}^{A+\delta B}$, where $\delta\ll1$ and  $B(\bx)$ breaks $\mathcal{P}-$ or $\mathcal{C}-$ invariance, and the persistence of Dirac points for small $\mathcal{P}\mathcal{C}-$ invariant perturbations.
\item The bifurcation of  topologically protected edge states from Dirac points upon introduction of a (non-compact) domain wall modulated periodic perturbation, $\mathcal{L}_{\rm dw}^{\delta}$, of $\LA$; see (d) in Section \ref{main_results}.
   \end{enumerate}

Two examples in 2D electromagnetics which motivate our study of this class of operators, and to which our results apply,  arise for Maxwell's equations considered
 in (1) {\it magneto-optic media} and  (2) in {\it bi-anisotropic meta-materials}.

\nit {\bf Example 1:}\  In \cites{HR:07,RH:08}, Haldane and Raghu proposed a photonic analogue of the quantum Hall effect. In particular, they demonstrated that uni-directional protected edge states could propagate along a domain-wall for systems governed by Maxwell's equations in, for example, magneto-optic materials which break $\mathcal{C}-$ (complex-conjugation) invariance.
Such photonic edge states were experimentally observed by Wang {\it et. al.} \cites{Soljacic-etal:08}.
Our domain-wall modulated operator $\mathcal{L}_{\rm dw}^\delta$, as well as the class of Schr\"odinger operators considered
in \cites{FLW-PNAS:14,FLW-2DM:15,FLW-AnnalsPDE:16,FLW-MAMS:17}, models the physical setting of \cites{HR:07,RH:08}.
In this setting, the eigenfunction $\psi$
%
%
corresponds to $H_3$ in the case of a TE-polarized electromagnetic field:
 ${\bf E}=(E_1,E_2,0)$ and $\vec{H}=(0,0,H_3)$.
\begin{remark}[TM Polarization]
For TM-Polarization: ${\bf E}=(0,0,E_3)$ and ${\bf H}=(H_1,H_2,0)$, and
 $\psi=E_3$ satisfies a scalar eigenvalue problem for a Helmholtz operator. The techniques of this paper
 can be adapted to this case as well, but we focus on the divergence form operator $\LA$.
\end{remark}

 \nit {\bf Example 2:}\ In Khanikaev {\it et. al.} \cites{Shvets-PTI:13}, bi-anisotropic media which respect time-reversal symmetry but break $\mathcal{C}-$ invariance,  were studied and the counter-propagation of uni-directional states of opposite ``spin'', $\psi_\pm=H_3\pm E_3$, associated with an underlying Kane-Mele model, was computationally demonstrated.
 \medskip

\nit  A detailed discussion of both examples is presented in Appendix \ref{maxwell}.

\subsection{Main results}\label{main_results}

We give a brief summary of our main results. Our approach makes use of and extends methods developed to treat the case of honeycomb Schr\"odinger operators \cites{FW:12,FW:14,FLW-PNAS:14,FLW-2DM:15,FLW-AnnalsPDE:16,FLW-MAMS:17,FLW-CPAM:17}.
\begin{enumerate}[(a)]
\item {\it Characterization of honeycomb structured media:}\ Theorem \ref{invariance} and Corollary \ref{A-expand} characterize 2D honeycomb structured media, defined by
 the Hermitian operator $\LA=-\nabla\cdot A(\bx)\nabla$ (see \eqref{L_def}), where $A(\bx)$ satisfies conditions (i)-(iv) above.
 \item {\it Dirac points:}\ Theorem \ref{prop_conical} gives conditions for the existence of Dirac points at the vertices of the Brillouin zone,
 $\mathcal{B}_h$.
 Theorem \ref{low-dp} states that all sufficiently low-contrast honeycomb structures, for which a distinguished Fourier coefficient of the structure is non-zero, have Dirac points. Depending on the sign of this Fourier coefficient, these Dirac points occur either at intersections of the $1^{st}$ and $2^{nd}$ or $2^{nd}$ and $3^{rd}$ dispersion surfaces.
  Extension of these results to generic honeycomb structured media of arbitrary contrast, using the strategy of
\cites{FW:12} (see also Appendix D of \cites{FLW-MAMS:17}) is discussed in Section \ref{generic_eps}.
 In Figure \ref{E_k_BZ} we plot two dispersion surfaces for a honeycomb structured medium, $A(\bx)$, which intersect at Dirac points for quasi-momenta located at the size vertices of the Brillouin zone.

 Remark \ref{wavepkts} discusses the effective dynamics of wave-packets for initial data which are spectrally localized near Dirac points. The envelope of such wave-packets evolves slowly and on long time scales according to an effective massless Dirac equation.

 \begin{figure}
\centering
\includegraphics[width=4in]{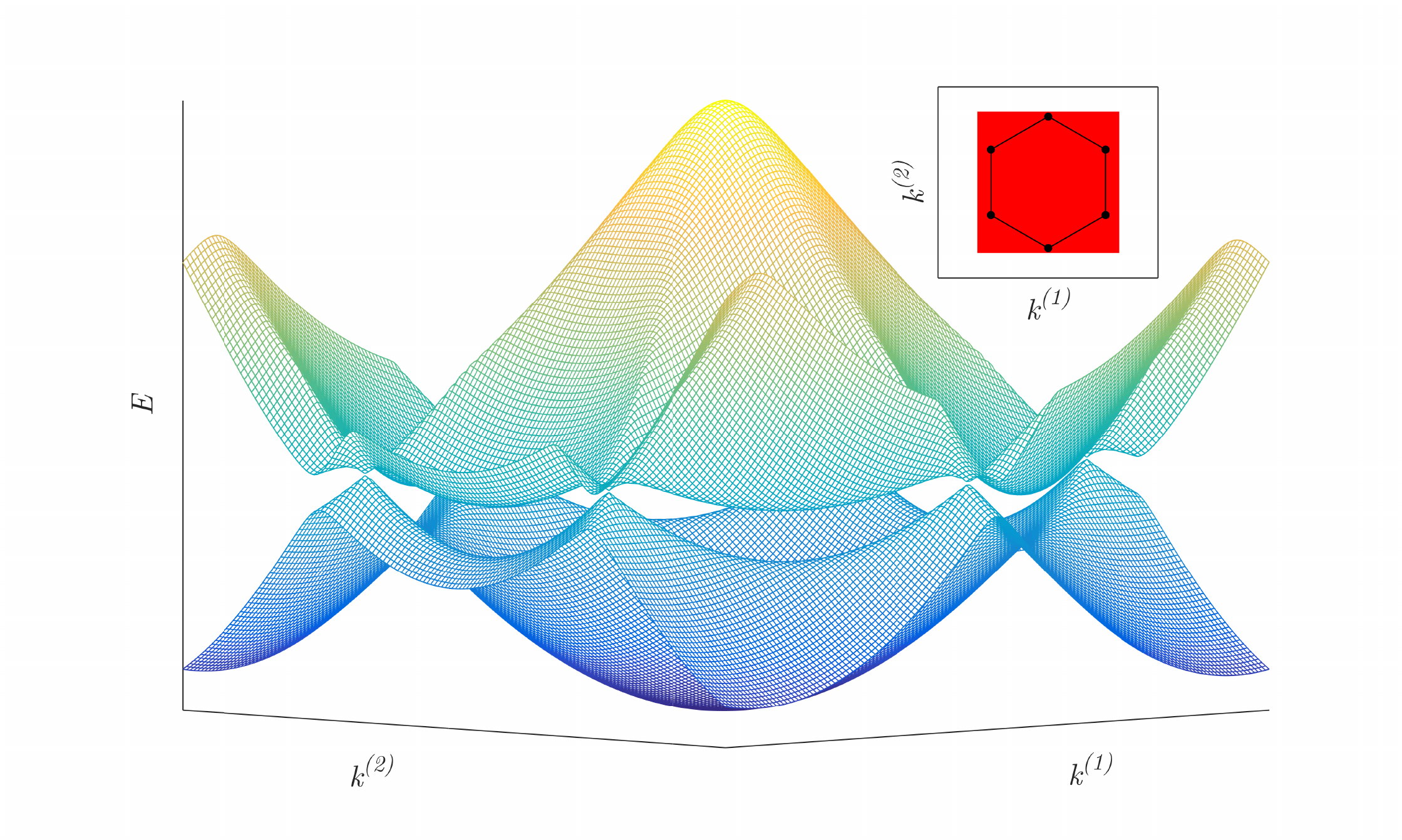}
 \caption{\footnotesize
 Lowest two dispersion surfaces $\bk\equiv(k^{(1)},k^{(2)})\in\mathcal{B}_h\mapsto E(\bk)$ of the band structure of $\LA$, with $A(\bx)=\e(x)I$ where $\e(\bx)$ is a particular honeycomb lattice function.
 Dirac points occur at the intersection of two dispersion surfaces, at the six vertices of the Brillouin zone, $\mathcal{B}_h$.}
 \label{E_k_BZ}
 \end{figure}

 \item {\it Non-persistence and persistence of Dirac points:}\ Theorem \ref{PT_thm} studies the instability of Dirac points under perturbations of honeycomb media, $-\nabla\cdot A\nabla\ \to\
-\nabla\cdot\ \left(\ A(\bx) +\delta B(\bx)\ \right) \nabla$    which break either
 $\mathcal{P}-$ or $\mathcal{C}-$ invariance; thus, $[\pc,\LB]\ne0$.  Here, $\mathcal{P}-$ refers to parity inversion, $f(\bx)\mapsto\mathcal{P}[f](\bx)=f(-\bx)$ and $\mathcal{C}-$ refers to complex conjugation, $f(\bx)\mapsto\mathcal{C}[f](\bx)=\overline{f(\bx)}$.
 If  $\pc-$ invariance is preserved, then the Dirac points persist, although their associated quasi-momenta may no longer be located on the vertices of the Brillouin Zone.

 \item {\it Topologically protected edge states (Theorems \ref{edgemode_thm} and \ref{rig-edge}):}
 We consider a honeycomb structure perturbed by an edge, across which the structure is adiabatically modulated by a domain wall.
 \footnote{We comment briefly on the terms {\it edge} and {\it edge state}.
   An edge is frequently understood to mean an abrupt termination of bulk structure.
The terms ``edge'' for a line-defect across which there is a change in a key characteristic of the structure, and ``edge state'' are also used in the physics literature; see, for example, \cites{HR:07, RH:08,Shvets-PTI:13}.
The edge states we discuss are of the latter type.}

 Starting from a honeycomb operator with Dirac points, $\LA$, we first consider operators
 $\mathcal{L}^\delta_{\pm\infty}=-\nabla\cdot\left[A(\bx) \pm\delta\eta_{\infty}B(\bx)\right]  \nabla$.
 Here, $\delta$ is small non-zero real parameter, $\eta_\infty>0$,  and $B(\bx)$ breaks $\pc-$ invariance, {\it i.e.} $[\pc,B]\ne0$. It follows from (c) that the operators $\mathcal{L}^\delta_{\pm\infty}$ have a local spectral gap, {\it i.e.} a gap for quasi-momenta varying near  Dirac points. Next, transverse to the line $\R\vtilde_1$ in the lattice direction $\vtilde_1\in\Lambda_h$,  we interpolate between operators $\mathcal{L}^\delta_{\pm\infty}$
with a {\it domain wall function} $\eta(\zeta)$:
 \begin{equation}
  \label{dw_function}
\eta(0)=0,\ \  \eta(\zeta) \rightarrow \pm \eta_{\infty}\ \text{with} \ \ \eta_{\infty}>0 \ \ \text{as} \ \ \zeta\rightarrow\pm\infty,
 \end{equation}
by introducing the operator
\begin{equation}
  \mathcal{L}_{\rm dw}^{(\delta)} \equiv -\nabla\cdot\left[A(\bx) +\delta\eta( \delta\ktilde_2 \cdot \bx)B(\bx)\right]  \nabla\ .
\label{L_dw} \end{equation}
Here, $\ktilde_2\in\Lambda_h^*$ is such that $\ktilde_2\cdot\vtilde_1=0$.
  The constitutive matrix: $A(\bx)+\delta\eta(\delta\ktilde_2\cdot\bx)B(\bx)$ defines a medium which is an interpolation (small and adiabatic), transverse to the  line-defect / ``edge'' $\R\vtilde_1$,  between perturbed (``gapped '') honeycomb structures.

  Associated with the  $\bx\mapsto\bx+\vtilde_1$ translation invariance of
   $\mathcal{L}_{\rm dw}^{(\delta)}$  is
   a parallel quasi-momentum, denoted $\kpar$. An edge state for parallel quasi-momentum, $\kpar$,  is a non-trivial solution of the eigenvalue problem $\mathcal{L}_{\rm dw}^{(\delta)}\Psi =E\ \Psi$ such that
   $ \Psi(\bx+\vtilde_1)=e^{i\kpar}\Psi(\bx)$ and $\Psi(\bx)\to0$ as $|\ktilde_2\cdot\bx|\to\infty$. The edge state eigenvalue problem is naturally posed on $L^2(\Sigma)$, where $\Sigma=\R^2/\Z\vtilde_1$ is a cylinder. Edge states are propagating (plane-wave like) parallel to the edge and are localized transverse to the edge.

 Theorem \ref{edgemode_thm} presents the formal multiple scale expansion of such states. These modes bifurcate from the continuous spectrum at the Dirac point energy of the underlying, unperturbed ($\delta=0$) operator as the perturbation parameter $\delta$ is varied away from zero. The bifurcation of these modes is ``topologically protected'' in the sense that these modes persist in the presence of spatially localized (even large) perturbations of the domain-wall function, $\eta(\zeta)$.

 The rigorous formulation of Theorem \ref{edgemode_thm} is given in Theorem \ref{rig-edge}.
 Section \ref{rigor} contains a detailed sketch of the proof of Theorem \ref{rig-edge}, based on the corresponding result for a class of Schr\"odinger operators (see \cites{FLW-AnnalsPDE:16}
 and \cites{FLW-MAMS:17}). A central role is played by the \emph{spectral no-fold condition} which is motivated and discussed. The validity of this condition for low contrast honeycomb structures for zigzag edges follows directly from the analysis for the corresponding Schr\"odinger operator case \cites{FLW-AnnalsPDE:16}.
 Figure \ref{E_delta_zz} illustrates numerical computations of bifurcations of edge states.

 \begin{figure}
 \centering
 \includegraphics[width=\textwidth]{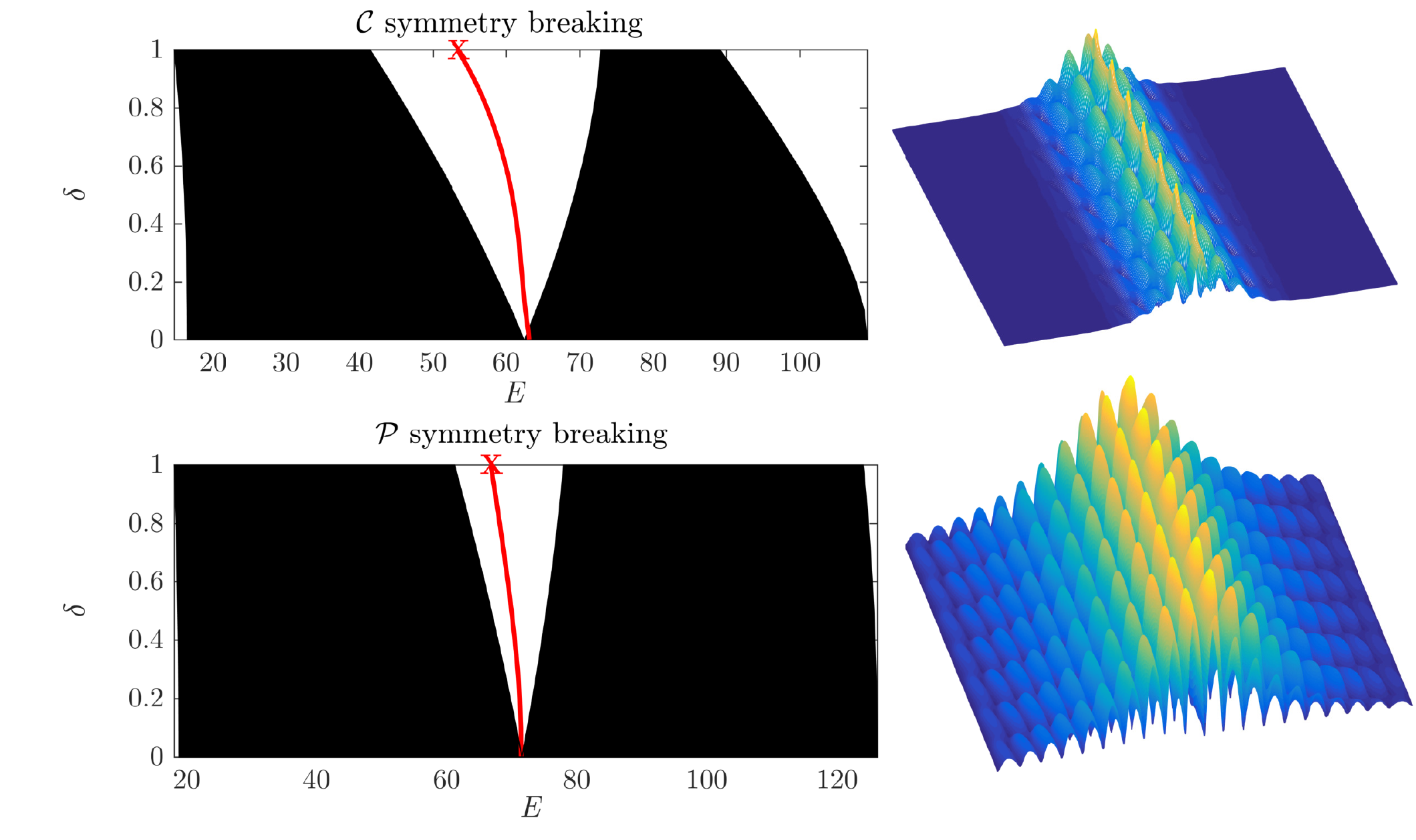}
 \caption{\footnotesize
 {\bf Left panels}: Bifurcation curves of topologically protected edge states (red curves) for the zigzag edge, illustrated by the $L^2_{\kpar}(\Sigma)-$ energy spectrum of $\mathcal{L}_{\rm dw}^{(\delta)}$ (see \eqref{L_dw}) vs. the perturbation parameter, $\delta$, for $\kpar=\bK\cdot\bv_1=2\pi/3$.
 The lowest two continuous spectral bands are shown in black.
 Here, $A(\bx)=\e(x)I_{_{2\times2}}$ defines a honeycomb structured medium, {\it i.e.} $\e(x)$ is real, even and $\e(R^*\bx)=\e(\bx)$. We take $\eta(\zeta)=\tanh(\zeta)$, a domain-wall function, and $B(\bx)$ is chosen to break $\mathcal{C}-$  or $\mathcal{P}-$  symmetries, respectively, in the top and bottom panels.
 {\bf Right panels}: Edge states corresponding to red $X$'s in the left panels.
 }
 \label{E_delta_zz}
 \end{figure}

 \item \emph{Group velocity properties of wavepackets concentrated along edges:}
 Theorem \ref{near-kpar1} shows the existence of edge states for all $\kpar$ near
  $\kpar=\bK\cdot\vtilde_1$ and $\kpar= -\bK\cdot\vtilde_1$.
   Taking a weighted and continuous superposition of such edge states in $\kpar$, yields fully localized wave-packets which are concentrated along the edge. The group velocity of these wave-packets is given by $\partial_\kpar E(\kpar)$ evaluated at $\kpar=\pm\bK\cdot\vtilde_1$.

\item \emph{Unidirectional edge propagation in time-reversal invariant and $\mathcal{C}-$ breaking media:} In Section \ref{HR-K} we apply our results to the propagation of edge states in the two examples discussed in Section \ref{intro}: (1) magneto-optic media, and (2) bi-anisotropic media; see also Appendix \ref{maxwell}. We consider the special cases where $\pc-$ invariance is broken by  perturbations which are (a) $\mathcal{P}-$ invariant, but $\mathcal{C}-$ anti-symmetric, and
 (b) $\mathcal{C}-$ invariant, but $\mathcal{P}-$ anti-symmetric. In case (a),  the families of edge states constructed are uni-directional, although Maxwell's equations in this setting is  time-reversal invariant (Section \ref{Maxwell_symmetry}); see the top panel of Figure \ref{E_kpar_zz}. In case (b), the resulting edge states propagate  bi-directionally; see the bottom panel of Figure \ref{E_kpar_zz}.

  We remark that in \cites{HR:07,RH:08}, which considers magneto-optic media, and \cites{Shvets-PTI:13}, which considers bi-anisotropic media,  it was assumed that the unperturbed bulk medium is isotropic ($A(\bx)=a(\bx)I_{_{2\times2}}$). In this article we allow for a more general class of  anisotropic media; see our definition of {\it honeycomb structured medium} in Section \ref{honey-media}.

 \begin{figure}
 \centering
 \includegraphics[width=\textwidth]{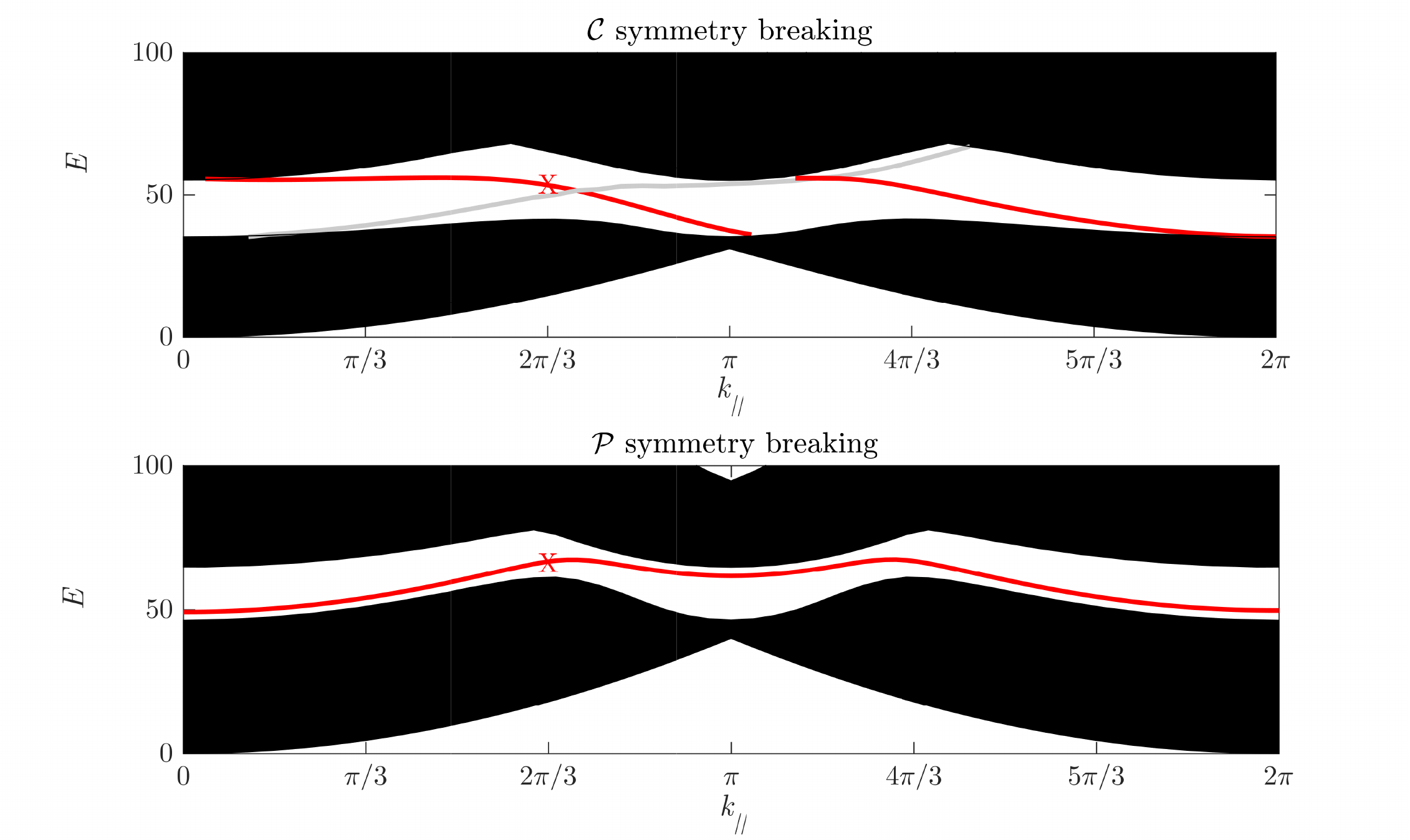}
 \caption{\footnotesize
 $L^2_{\kpar}(\Sigma)-$ energy spectrum of $\mathcal{L}_{\rm dw}^{(\delta=1)}$ vs. parallel quasi-momentum, $\kpar$. $\mathcal{L}_{\rm dw}^{(\delta=1)}$ is the same as given in each of the respective (top and bottom) left panels in Figure \ref{E_delta_zz}.
 {\bf Top panel:} \emph{$\mathcal{C}-$ symmetry breaking}.
 Topologically protected (domain-wall induced) edge modes are denoted by the red curves. These modes are uni-directional.
 Spurious (gray) ``hard edge'' modes are a result of the finiteness effects associated with the choice of numerical domain; see Appendix \ref{numerical_schemes} for a detailed discussion of the numerics and a discussion of the spurious modes.
 {\bf Bottom panel:} \emph{$\mathcal{P}-$ symmetry breaking}.
 The (red) edge modes are denoted by the red curves are bi-directional.}
 \label{E_kpar_zz}
 \end{figure}

 \end{enumerate}

\subsection{Connections to previous rigorous analytical work}
The existence of Dirac points for Schr\"odinger operators with generic honeycomb lattice potentials was proved in \cites{FW:12,FLW-MAMS:17}. Results for small amplitude potentials were obtained in  \cites{Colin-de-Verdiere:91,Grushin:09}.
A group representation perspective on the existence and persistence of Dirac points is developed in
  \cites{berkolaiko-comech:15}. The case where the potential is a superposition of delta function potentials centered on sites of the honeycomb structure is treated in \cite{Lee:16}. The low-lying dispersion surfaces of honeycomb Schr\"odinger operators in the strong binding regime, where the potential is the superposition of a general class of atomic potential wells, and its relation to the  tight-binding limit, was studied in
  \cites{FLW-CPAM:17}. Tight binding models have been studied extensively in the fundamental and applied physics and mathematics communities;
   see, for example, \cite{Ashcroft-Mermin:76,Dimassi-Sjoestrand:99,ACZ:12,Makwana-Craster:14}.

   A bifurcation theory of topologically protected bound states arising from domain wall perturbations of one-dimensional periodic Schr\"odinger operators with Dirac points (linear band crossings) is developed in
  \cites{FLW-PNAS:14,FLW-MAMS:17}. These results are  applied to a photonic setting in \cites{Thorp-etal:15} and \cites{poo2016observation,poo2017global}. The bifurcation of topologically protected edge states arising from domain wall perturbations of two-dimensional honeycomb Schr\"odinger operators with Dirac points (conical band crossings) was studied in \cites{FLW-AnnalsPDE:16}; see also
    \cites{FLW-2DM:15}.

    For an extensive discussion of Dirac points and edge states for nanotube structures in the context of  quantum graphs, see \cites{Kuchment-Post:07} and \cites{Do-Kuchment:13}.
For results on spectral gaps for elliptic problems modeling high contrast media,  see
 \cites{Figotin-Kuchment:96a,Figotin-Kuchment:96b,Hempel-Lienau:00,Hempel-Post:02,Lipton-Viator_1,Lipton-Viator_2,FLW-CPAM:17}.
  For a general review of mathematical problems arising in photonics see, for example, \cite{kuchment-01}.

\subsection{Outline}  The paper is structured as follows.
 In Section \ref{math-problem}, we review the relevant spectral (Floquet-Bloch) theory and introduce the notions of
 $\mathcal{C}$, $\mathcal{P}$ and $\mathcal{R}$ invariance, associated with complex conjugation, parity and rotational symmetries. Section \ref{HSM} introduces and characterizes honeycomb structured media via Fourier analysis.
 In Section \ref{dirac-pts} we discuss the notion of {\it Dirac point} in the band structure of the operator $\LA$ and prove
 results for the cases of low-contrast and generic honeycomb structures.  Section \ref{dirac_persistence} discusses the stability of Dirac points against small perturbations which preserve $\mathcal{P}\circ\mathcal{C}$ symmetry and their instability when such symmetry is violated.  In  Section \ref{edge_states} we discuss the existence of topologically protected edge states. We first present the detailed formal multi-scale construction of such states and then give a detailed guide to adapting the proof for the case of Schr\"odinger operators \cites{FLW-AnnalsPDE:16} to the case of divergence form operators  for honeycomb structured media. Note that results of this paper apply to large classes of complex-valued and anisotropic matrix-valued constitutive laws; see the examples arising in electromagnetics, discussed in Appendix \ref{maxwell}.
Edge states are constructed for a range of parallel quasi-momenta $\kpar$. In Section  \ref{edge_state_kpar} we discuss
the edge state dispersion curves $\kpar\mapsto E(\kpar)$, which determine the propagation properties of edge-localized wave-packets; see, in particular, the discussion of group velocity in Subsection \ref{direction}.
Appendix \ref{maxwell} discusses Maxwell's equations for bi-anisotropic media and the application of our general results to edge states and unidirectional edge-propagation in (1) magneto-optic and (2) bi-anisotropic media, both of which
are time-reversal invariant.
Appendix \ref{numerical_schemes} provides a brief discussion of the numerical methods used in our computer simulations. Appendix \ref{fig_potentials} provides the specific potentials used in these simulations.

\subsection{Notation and conventions}
\begin{enumerate}
\item $\Lambda_h\subset\R^2$ denotes the equilateral triangular lattice.\  $\Lambda_h^*\subset (\R^2)^*=\R^2_\bk$
denotes the dual  lattice. $\bv_j, ~j=1,2$ are the basis vectors of  $\Lambda_h$.
$\bk_l, ~l=1,2$ are the dual basis vectors of $\Lambda_h^*$, chosen to satisfy $\bk_l\cdot\bv_j=2\pi\delta_{lj}$.
\item $\vtilde_1=a_1\bv_1+a_2\bv_2\in\Lambda_h$,\ $a_1, a_2$ co-prime integers. The $\vtilde_1$- edge is $\R\vtilde_1$.
 $\vtilde_j,\ j=1,2$, is an alternate basis for $\Lambda_h$ with corresponding dual basis, $\ktilde_\ell, \ell=1,2$, satisfying $\ktilde_\ell\cdot\vtilde_j=2\pi\delta_{\ell j}$.
  \item For $\bfm=(m_1,m_2)\in\Z^2$, $\bfm\vec\bk=m_1\bk_1+m_2\bk_2$.
 \item $\B$ denotes the Brillouin Zone, associated with $\Lambda_h$, shown in the right panel of Figure \ref{lattice-and-itsdual}.
\item $z \in \mathbb C \Rightarrow \overline{z}$ denotes the complex conjugate of $z$.
 \item $J=\begin{pmatrix}0&-1\\1&0\end{pmatrix}$ is the counterclockwise $90^\circ$ rotation matrix.
 \item $\bk=(k^{(1)},k^{(2)})^T\in\mathbb R^2$ is represented in $\C$ by $\mathfrak{z}=k^{(1)}+ik^{(2)}$, and $\bk_\perp\equiv J\bk=(-k^{(2)},k^{(1)})^T$ is represented in $\mathbb C$ by $\mathfrak{z}_\perp=i\mathfrak{z}(\bk)=\mathfrak{z}(\bk_\perp)$.
  The corresponding points on the unit circle in $\C$ are denoted $\hat{\mathfrak{z}}(\bk)$ and $\hat{\mathfrak{z}}(\bk_\perp)$.
\item $\nabla=(\partial_{x_1}, \partial_{x_2})^T$.
\item $\inner{f,g}_{D}=\int_{D}\overline{f}g$  is the $L^2(D)$ inner product. If the region of integration is not specified, it is assumed to be a choice of fundamental period cell, $\Omega$.
\item Pauli matrices:
\[ \sigma_1=\begin{pmatrix}0&1\\1&0\end{pmatrix},\qquad \sigma_2=\begin{pmatrix}0&-i\\i&0\end{pmatrix}=iJ,\qquad \sigma_3=\begin{pmatrix}1&0\\0&-1\end{pmatrix}.\]
\item Unless otherwise specified, summation convention over repeated indices is assumed.
\end{enumerate}

\begin{acknowledgement}
The authors wish to thank C.L. Fefferman, L. Lu, M. Rechtsman, D. Ketcheson, V. Quenneville-B{\'e}lair and  N. Yu
 for stimulating discussions.
This research was supported in part by NSF grants: DMS-1412560, DMS-1620418, DGE-1069420 and Simons Foundation Math + X Investigator grant \#376319 (MIW); and the NSF grant DMR-1420073 (JPL-T).
YZ acknowledges the hospitality of the Department of Applied Physics and Applied Mathematics during academic visits to Columbia University,  supported by Tsinghua University Initiative Scientific Research Program \# 20151080424 and NSFC grants \#11471185 and \#11871299.
\end{acknowledgement}

\section{Preliminaries}\label{math-problem}

In this section, we outline the relevant spectral theory
 \cites{Eastham:74, RS4,kuchment2012floquet,kuchment2015overview} and introduce terminology and notation for discussing the symmetry properties of the (unperturbed) bulk operator $\LA$.

\subsection{Fourier analysis}\label{fourier-analysis}

Let $\{\bv_1, \bv_2\}$ be a linearly independent set in $\mathbb{R}^2$. The lattice generated by $\{\bv_1, \bv_2\}$ is the subset of $\R^2$:
\[\Lambda=\{\bfm\vec\bv=m_1\bv_1+m_2\bv_2: \bfm=(m_1, m_2)\in \mathbb{Z}^2\}=\mathbb{Z}\bv_1\oplus\mathbb{Z}\bv_2 .\]
A choice of fundamental cell is the parallelogram:
\[\Omega=\{\theta_1\bv_1+\theta_2\bv_2: 0\le\theta_j\le1,j=1,2\}.\]
The area of $\Omega$ is denoted $|\Omega|$.

The dual lattice, $\Lambda^*$, is defined to be
\[\Lambda^*=\{\bfm\vec\bk=m_1\bk_1+m_2\bk_2: \bfm=(m_1, m_2)\in \mathbb{Z}\}=\mathbb{Z}\bk_1\oplus\mathbb{Z}\bk_2,\]
where $\bk_1$ and $\bk_2$ are dual lattice vectors, satisfying the reciprocal relations
\[ \bk_i\cdot\bv_j=2\pi \delta_{ij}.\]
We also introduce the (affine) lattice $\Lambda^*_\bk$:
\begin{equation}
 \label{dual_lattice_k}
 \Lambda^*_{\bk}\equiv \bk+\Lambda^{*}=\left\{\bk+\bfm\vec\bk: \bfm\in \mathbb Z^2\right\}.
\end{equation}

Let $L^2(\mathbb{R}^2/\Lambda)$ denote the space of $L^2_{loc}$ functions which are $\Lambda-$periodic, i.e.
$f(\bx+\bv)=f(\bx)$, for almost all $ \bx\in \mathbb{R}^2$ and all $ \bv\in \Lambda$.
For $\bk\in\mathbb{R}^2$, we denote by $L^2_{\bk}$ the space of $L^2_{loc}$ functions such that
 $e^{-i\bk\cdot\bx}f(\bx)\in L^2(\mathbb{R}^2/\Lambda)$, {\it i.e.}
 $f(\bx+\bv)=e^{i\bk\cdot \bv}f(\bx), \quad\text{for} \quad \bx\in \mathbb{R}^2,\ \bv\in \Lambda$.
Note that if $\bk=0$, $L^2_{\bk}$ reduces to $L^2(\R^2/\Lambda)$.

Let $f$ and $g$ in $L^2_{\bk}$. Then, $\bar{f} g $ is locally integrable and $\Lambda-$periodic, and their inner product is naturally defined as
\[\inner{f,g}=\intomega{\overline{f(\bx)}g(\bx)}.\]
In a standard way, one can also introduce the Sobolev space $H^s_{\bk}$.

Any $f\in L^2_{\bk}$ has the Fourier expansion
\begin{align}
  f(\bx) &=\sum_{\bfm \in \mathbb{Z}^2}f_\bfm e^{i (\bk+\bfm\vec\bk)\cdot\, \bx},\ \ \textrm{where}\ \
f_\bfm = \frac{1}{|\Omega|}\int_\Omega e^{-i(\bk+\bfm\vec\bk)\cdot \,\by}f(\by) d\by. \label{fourier}
\end{align}

\subsection{Floquet-Bloch theory of $\LA=-\nabla\cdot A\nabla$}\label{fb_theory}

We summarize definitions and results from the theory of self-adjoint, elliptic and periodic divergence form
operators; see, for example,  \cite{Kuchment_Levendorskii_01,kuchment2012floquet,kuchment2015overview}.

\begin{assumption}\label{A_Assumption}
The $2\times2$ complex-valued matrix function $A(\bx)$ satisfies
\begin{enumerate}[label=(\subscript{A}{\arabic*})]
\item $A(\bx)$ is a smooth  and Hermitian, {\it i.e.}, $A(\bx)^{\dagger}\equiv\overline{A(\bx)^T}=A(\bx)$ for all $\bx$.
 \item The mapping $\bx\mapsto A(\bx)$ is $\Lambda-$ periodic.
\item $A(\bx)$ is uniformly elliptic, {\it i.e.} there exist constants $c_\pm>0$, such that for all $\bx\in\R^2$ and all $\xi\in\C^2$:
   $c_-|\xi|^2\ \le\ \left\langle \xi, A(\bx)\xi\right\rangle_{\C^2}\ \le c_+|\xi|^2$.
 \end{enumerate}
\end{assumption}

For each fixed $\bk\in \mathbb{R}^2$, consider the $L^2_\bk-$ Floquet-Bloch eigenvalue problem:
\begin{equation}\label{Eq_Eigen}
\begin{split}
\LA\Phi(\bx)&= E\Phi(\bx), \quad \bx \inr , \\
\Phi(\bx+\bv)&=e^{i\bk\cdot \bv}\Phi(\bx), \quad  \bx\in\R^2,\quad \bv\in \Lambda ,
\end{split}
\end{equation}
where $\LA=-\nabla\cdot A\nabla$ was defined in \eqref{L_def}.
A solution of \eqref{Eq_Eigen} is called a Floquet-Bloch state with quasi-momentum $\bk$.

Since the $\bk$-pseudo-periodic boundary condition in \eqref{Eq_Eigen} is invariant under translation $\bk\rightarrow \bk+{\bk}^\prime$ for any ${\bk}^\prime\in \Lambda^*$, it suffices to consider $\bk$ varying over a fundamental cell. A common choice is the {\it Brillouin Zone} $\B$, consisting of points $\bk\in \R^2$ which are closer to the origin than to any other  points in $\Lambda^*$; see Figure  \ref{lattice-and-itsdual}.

\begin{figure}
\centering
\includegraphics[width=\textwidth]{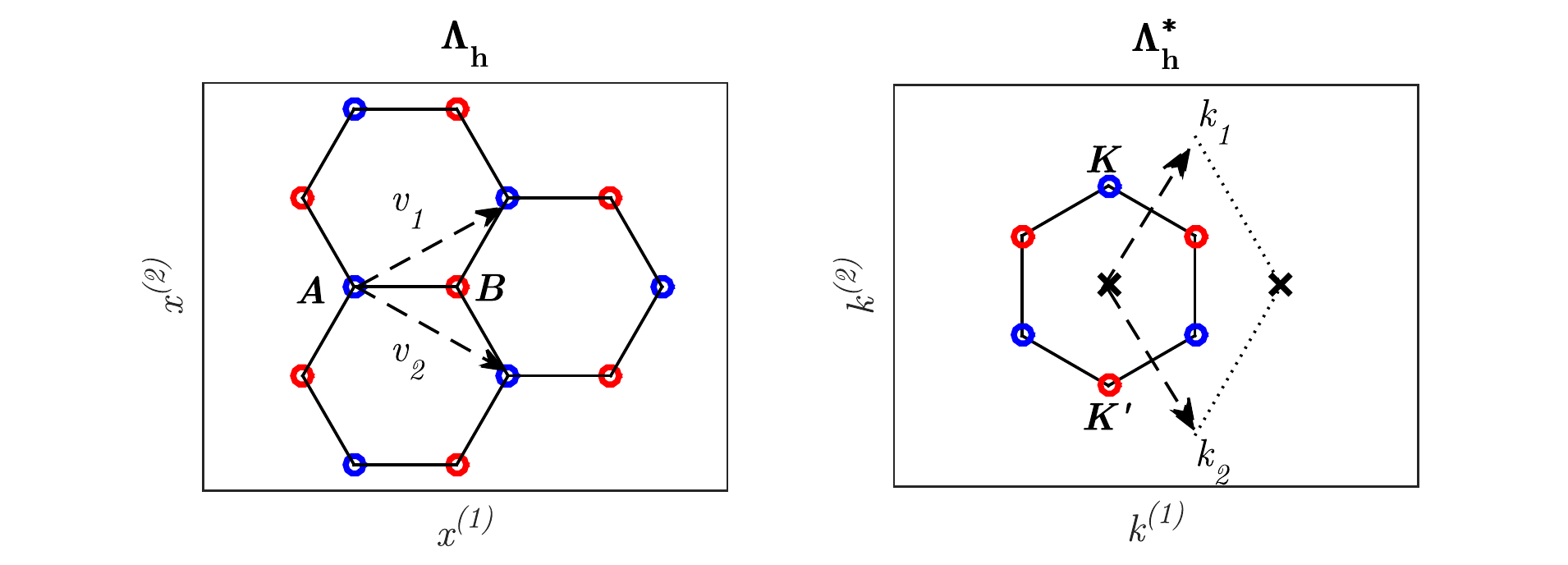}
\caption{\footnotesize
{\bf Left panel}: ${\bf A}=(0,0)$, ${\bf B}=(\frac{1}{\sqrt3},0)$.
Honeycomb structure, ${\bf H}$,  is the union of two sub-lattices  $\Lambda_{\bf A}={\bf A}+\Lambda_h$ (blue)
and $\Lambda_{\bf B}={\bf B}+\Lambda_h$ (red); several hexagons shown. The lattice vectors  $\{\bv_1,\bv_2\}$ generate $\Lambda_h$.
{\bf Right panel}:
Brillouin zone, $\B_h$, and dual basis $\{\bk_1,\bk_2\}$. $\bK$ and $\bK'$ are labeled.}
\label{lattice-and-itsdual}
\end{figure}

An alternative formulation of the eigenvalue problem \eqref{Eq_Eigen}  is obtained by setting $\Phi(\bx)=e^{i\bk\cdot \bx}\phi(\bx)$ for $\bk\in\B$. Let
\begin{equation}
\label{Eq_Eigen_k_op}
\LA(\bk)= e^{-i\bk\cdot\bx}\ \mathcal{L}^A\ e^{i\bk\cdot\bx}\ =\ -(\nabla+i\bk)\cdot A(\bx)(\nabla+i\bk).
\end{equation}
Then, \eqref{Eq_Eigen} is equivalent to  $\phi$ satisfying the  periodic elliptic boundary value problem:
\begin{equation}\label{Eq_Eigen_k}
\begin{split}
\LA(\bk)\phi(\bx) &= E(\bk)\phi(\bx), \quad \bx \inr ,\\
\phi(\bx+\bv) &=\phi(\bx), \quad \bx\in\R^2,\ \ \bv\in \Lambda .
\end{split}
\end{equation}

For each fixed $\bk$, the eigenvalue problem \eqref{Eq_Eigen_k} (equivalently \eqref{Eq_Eigen}) has a discrete spectrum: $ E_1(\bk)\le E_2(\bk)\le E_3(\bk)\le\cdots\le E_b(\bk)\le \cdots$
with eigenpairs $( \phi_b(\bx;\bk)$, $ E_b(\bk))$, $b\ge1$. The set $\left\{\phi_b(\bx;\bk):b\ge1\right\}$ can be taken to be a complete and orthonormal set in $L^2(\mathbb{R}^2/\Lambda)$.

The mappings $\bk \mapsto E_b(\bk)$ are called band dispersion functions. They are Lipschitz-continuous, see, for example, \cites{Avron-Simon:78,Kuchment_Levendorskii_01,kuchment2012floquet,kuchment2015overview} and Appendix A of \cites{FW:14}. As $\bk$ varies over $\B$, each function $ E_b(\bk)$ sweeps out a closed real interval. The union over $b\ge 1$ of these closed intervals is precisely the $L^2(\mathbb{R}^2)$-spectrum of  $\mathcal{L}^A$.
Furthermore,  the set $\left\{\Phi_b(\bx;\bk)\right\}_{b\ge 1, \bk \in \B}$ is complete in $L^2(\mathbb{R}^2)$:
\begin{equation*}
  f(\bx)=\displaystyle{\sum_{b\ge1}\int_\B\inner{\Phi_b(\cdot;\bk), f(\cdot)}_{L^2(\mathbb R^2)}\Phi_b(\bx;\bk)\,d\bk},
\end{equation*}
where the sum converges in the $L^2$ norm.

\subsection{$\mathcal{C}$, $\mathcal{P}$ and $\mathcal{R}$ invariance}\label{CPR-inv}

In this section we introduce terminology for discussing the symmetry properties of $A(\bx)$ and $\LA$. Let $g(\bx)$ denote a function defined on $\R^2$.

{\bf $\mathcal{C}-$ invariance:} The function $g(\bx)$ is $\mathcal{C}-$ invariant if
\begin{equation}
\left(\mathcal{C}g\right)(\bx)\ \equiv\ \overline{g(\bx)}\ =\ g(\bx)
\label{Cg_eq_g}
\end{equation}
for all $\bx\in\R^2$.

{\bf $\mathcal{P}-$ invariance:} The function $g(\bx)$ is $\mathcal{P}-$ (or parity-inversion) invariant with respect to $\bx_0\in\R^2$  if
\begin{equation*}
\left(\mathcal{P}g\right)(\bx)\ \equiv\ g(2\bx_0-\bx)\ =\ g(\bx).
\end{equation*}
for all $\bx\in\R^2$.

{\bf $\mathcal{R}-$ invariance:} The function $g(\bx)$ is $\mathcal{R}-$ (or $120^\circ$- rotationally) invariant with respect to $\bx_0\in\R^2$  if
\begin{equation*}
\left(\mathcal{R}g\right)(\bx)\ \equiv\ g(\bx_0+R^*(\bx-\bx_0))\ =\ g(\bx).
\end{equation*}
for all $\bx\in\R^2$. Here, $\bx\mapsto R\bx$ is the mapping on $\R^2$ which rotates a vector clockwise by
$120^\circ$ ($2\pi/3$) about  $\bx={\bf 0}$:
\begin{equation}\label{Def_R}
R=\begin{pmatrix}-\frac{1}{2} & \frac{\sqrt{3}}{2}\\[1.5 ex ]-\frac{\sqrt{3}}{2} & -\frac{1}{2}\end{pmatrix}.
\end{equation}
For later use, we record the eigenvalues and a choice of normalized eigenvectors of $R$:
\begin{equation}
\label{R_evals_evecs}
R\zeta\ =\ \tau\ \zeta, \quad  R\ \bar{\zeta}\ =\ \bar{\tau}\ \bar{\zeta}, \quad \text{where} \quad \tau=e^{i2\pi/3}\ \textrm{and}\
\zeta=\frac{1}{\sqrt{2}}\begin{pmatrix}1\\i\end{pmatrix} .
\end{equation}

{\it We say the operator  $L$ with domain $D(L)\subset D(\mathcal{S})$, is $\mathcal{S}-$ invariant if
the commutator } $ [\mathcal{S},L]\equiv \mathcal{S}L-L\mathcal{S}$ vanishes on $D(L)$.

\section{Honeycomb structured media}\label{HSM}

\subsection{The equilateral triangular lattice, $\Lambda_h$}\label{triangle}

In this section, we specialize the results of the previous section to the triangular lattice, $\Lambda_h$.
 Introduce the lattice
\[ \Lambda_h=\Z\bv_1 \oplus \Z\bv_2,\]
with lattice basis vectors
 \[
 \bv_1=\begin{pmatrix} \frac{\sqrt{3}}{2} \\  \\ \frac{1}{2}\end{pmatrix},\quad
 \bv_2=\begin{pmatrix}\frac{\sqrt{3}}{2} \\ \\ -\frac{1}{2} \end{pmatrix}.\]
The dual lattice
\[\Lambda_h^* = \Z\bk_1\oplus \Z\bk_2,\] is spanned by the dual basis vectors
\[ \bk_1=\frac{4\pi}{\sqrt{3}}\begin{pmatrix}\frac{1}{2} \\ \\ \frac{\sqrt{3}}{2}\end{pmatrix},\quad
 \bk_2=\frac{4\pi}{\sqrt{3}}\begin{pmatrix}\frac{1}{2}\\ \\ -\frac{\sqrt{3}}{2}\end{pmatrix} ;\]
see Figure  \ref{lattice-and-itsdual}.
Note  $\bk_l\cdot\bv_m=2\pi \delta_{lm}$, $l,m=1,2$.

The following proposition characterizes the high-symmetry points of the Brillouin Zone, $\B$.
\begin{proposition}\label{special_k_points}
The translated dual lattice $\Lambda^*_{h,\bk}= \bk + \Lambda_h^*$ is invariant under $120^\circ$ rotation,  {\it i.e.}
 $R:\Lambda^*_{h,\bk}\to \Lambda^*_{h,\bk}$ is one to one and onto, if and only if $\bk\in\{\bb\Gamma,\bK,\bK^\prime\}$, where
\begin{equation}
\label{defn_K_Kp}
\bb\Gamma \equiv 0,\quad \bK\equiv\frac{1}{3}(\bk_1-\bk_2),\quad \bK' \equiv-\bK .
\end{equation}
\end{proposition}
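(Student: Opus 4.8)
The plan is to convert the set-invariance requirement into a single membership condition in $\Lambda_h^*$, and then solve the resulting linear congruences. First I would record the action of $R$ on the dual lattice. A direct matrix computation using \eqref{Def_R} and the given $\bk_1,\bk_2$ gives $R\bk_1=\bk_2$ and $R\bk_2=-(\bk_1+\bk_2)$. In particular $R$ sends the generators of $\Lambda_h^*$ into $\Lambda_h^*$, and since $R^{-1}=R^2$ (as $R^3=I$) does the same, we get the set equality $R\Lambda_h^*=\Lambda_h^*$; thus $R$ restricts to a bijection of $\Lambda_h^*$. Because $R$ is linear this yields $R(\bk+\Lambda_h^*)=R\bk+\Lambda_h^*$, which is again a coset of $\Lambda_h^*$. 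Hence $R$ maps $\Lambda^*_{h,\bk}$ one-to-one and onto itself precisely when this coset coincides with $\bk+\Lambda_h^*$, i.e. if and only if $(R-I)\bk\in\Lambda_h^*$.

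Next I would expand $\bk=\alpha\bk_1+\beta\bk_2$ with $\alpha,\beta\in\R$ and use the action above to compute
\[
(R-I)\bk \;=\; -(\alpha+\beta)\,\bk_1 \;+\; (\alpha-2\beta)\,\bk_2 .
\]
The condition $(R-I)\bk\in\Lambda_h^*$ is therefore equivalent to the pair of integrality requirements $\alpha+\beta\in\Z$ and $\alpha-2\beta\in\Z$. Writing $\alpha-2\beta=(\alpha+\beta)-3\beta$, this system is equivalent to $\alpha+\beta\in\Z$ together with $3\beta\in\Z$, which in turn forces $3\alpha\in\Z$. Since replacing $(\alpha,\beta)$ by $(\alpha,\beta)+\Z^2$ alters neither the coset $\bk+\Lambda_h^*$ nor the two conditions, it suffices to enumerate $\alpha,\beta\in\{0,\tfrac13,\tfrac23\}$ with $\alpha+\beta\in\Z$.

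This short enumeration leaves exactly three solutions modulo $\Lambda_h^*$, namely $(\alpha,\beta)=(0,0),\,(\tfrac13,\tfrac23),\,(\tfrac23,\tfrac13)$. I would then match these cosets to the named points: $(0,0)$ is $\bb\Gamma$; the representative $(\tfrac13,\tfrac23)$ differs from $\bK=\tfrac13(\bk_1-\bk_2)$ by $\bk_2\in\Lambda_h^*$; and $(\tfrac23,\tfrac13)$ differs from $\bKp=-\bK$ by $\bk_1\in\Lambda_h^*$. Finally I would check these three cosets are pairwise distinct, since $\bK$, $\bKp$, and $\bK-\bKp=2\bK$ all have non-integer coordinates in the $\{\bk_1,\bk_2\}$ basis and so lie outside $\Lambda_h^*$. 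This establishes the claim. The computation is elementary throughout; the only point demanding care is the bookkeeping between the set-level assertion (``$R$ one-to-one and onto on $\Lambda^*_{h,\bk}$'') and the coset criterion $(R-I)\bk\in\Lambda_h^*$, together with the reduction modulo $\Lambda_h^*$ that turns the answer from a union of cosets into the finite list $\{\bb\Gamma,\bK,\bKp\}$ of distinguished representatives.
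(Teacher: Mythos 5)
Your proof is correct. The paper states Proposition \ref{special_k_points} without proof, recording only the key identities $R\bk_1=\bk_2$, $R\bk_2=-\bk_1-\bk_2$ in \eqref{R_bk_rel}; your argument derives exactly these, reduces the set-invariance of the coset $\bk+\Lambda_h^*$ to the membership condition $(R-I)\bk\in\Lambda_h^*$, and solves the resulting congruences correctly, so it supplies precisely the intended (and standard) argument. Your closing remark is also the right reading of the statement: the full solution set is the union of the three cosets $\bb\Gamma+\Lambda_h^*$, $\bK+\Lambda_h^*$, $\bKp+\Lambda_h^*$, and the proposition's finite list $\{\bb\Gamma,\bK,\bKp\}$ is to be understood modulo $\Lambda_h^*$ (equivalently, up to the identification of equivalent vertices of $\mathcal{B}_h$).
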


The set of six vertices of $\mathcal{B}_h$ is invariant under $R$ and decomposes into the two subsets:
the $\bK\,\, \text{type-points:} \ \bK, \ R\bK =\bK+\bk_2, \ R^2\bK = \bK - \bk_1$ and
the $\bK'\, \,\text{type-points:} \ \bK', \ R\bK' =\bK-\bk_2,\ R^2\bK' = \bK' + \bk_1$.
We note for future reference, the action of $R$ on $\bk_1$ and $\bk_2$:
\begin{equation}
 \label{R_bk_rel}
 R\bk_1=\bk_2, \ \ \text{and} \ \ R\bk_2=-\bk_1-\bk_2.
\end{equation}

\subsection{Fourier analysis in $L^2_{\bK_\star}$}

We focus on the vertices of $\B_h$. Consider such a vertex,  $\bK_\star$. If $f$ is $\bK_\star-$ pseudo-periodic, it is easy to check that $\mathcal{R}[f]$ is also $\bK_\star-$ pseudo-periodic and that $\mathcal{R}$ is an isomorphism of $L^2_{\bK_\star}$.
Furthermore,  $\mathcal R$ is unitary with eigenvalues $1$, $\tau$ and $\bar{\tau}$, where $\tau=\exp(2\pi i/3)$. It is therefore natural to split $L^2_{\bK_\star}$ into the direct (orthogonal) sum:
\begin{equation*}
L^2_{\bK_\star} = L^2_{\bK_\star,1}\oplus L^2_{\bK_\star,\tau}\oplus L^2_{\bK_\star,\overline\tau}\ ,
\end{equation*} where  the summand subspaces are given by:
\begin{equation}
\label{L2_K_sigma}
L^2_{\bK_\star,\sigma}~=~\left\{f\in L^2_{\bK_\star}:\mathcal{R}f=\sigma f\right\}, \quad \sigma=1,\tau,\bar\tau.
\end{equation}

\subsection{$\pc-$ invariance and $\mathcal{O}-$ invariance}\label{honeycomb-structures}

Let  $A(\bx)$ be a smooth $\Lambda_h-$ periodic and Hermitian matrix function, for which $\LA=-\nabla\cdot A\nabla$ is elliptic in the sense of Assumption  \ref{A_Assumption} of Section \ref{fb_theory}. In this section we derive further conditions on $A(\bx)$ which ensure that $\LA$ is $\pc-$ invariant and $\mathcal R-$ invariant.

We make use of the following general result. Let $O$ denote a real orthogonal matrix and define
\begin{equation}
\left(\mathcal{O}f\right)(\bx)=f({O}^*\bx)\ .
\end{equation}
\begin{theorem}\label{invariance}
Let $A(\bx)$ be an arbitrary smooth matrix function.
\begin{enumerate}
\item $\left[\mathcal{PC},\LA\right]\ =\ 0$ if and only if  $A(\bx)=\overline{A(-\bx)}$.
\item $\left[\mathcal{O},\LA\right]\ =\ 0$ if and only if  $A(O^*\bx)=O^*A(\bx)O$.
\item Assume $O^TO=I$ and $O\neq \pm I$. If the matrix $A(\bx)$ is Hermitian and $\mathcal{O}-$ invariant:
\begin{equation}
\mathcal{O}[A](\bx)\equiv A(O^*\bx)=A(\bx),
\label{A-Rinvariant}
\end{equation}
 then,  $[\LA,\mathcal{O}]=0$ and $[\LA,\mathcal{PC}]=0$ if and only if
 \begin{align}
   A(\bx)&=a(\bx)I + b(\bx) \sigma_2,\nn
   \end{align}
   where $a(\bx)$ and $b(\bx)$ are scalar functions satisfying
\begin{align}
   a(O^*\bx)&=a(\bx),\qquad a(-\bx)=a(\bx),\nn\\
b(O^*\bx)&=b(\bx),\qquad  b(-\bx)=-b(\bx).\nn
   \end{align}
\end{enumerate}
\end{theorem}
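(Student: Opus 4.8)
The plan is to reduce every statement to the sesquilinear form
\[
q_A[\phi,\psi] \;=\; \int \langle \nabla\phi,\, A\,\nabla\psi\rangle \, d\bx
\]
that defines $\LA$ through $\langle \phi,\LA\psi\rangle = q_A[\phi,\psi]$, because under this form the symmetry operations act by a simple change of variables and the transformed coefficient matrix can be read off directly. The structural fact I use throughout is that $A\mapsto q_A$ is injective on smooth matrix functions: testing against $\phi,\psi$ whose gradients point in prescribed directions on a small ball recovers each entry $a_{ij}(\bx)$. Consequently an operator identity $\mathcal{S}\LA\mathcal{S}^{-1}=\mathcal{L}^{A'}$ is \emph{equivalent} to the pointwise matrix identity relating $A$ and $A'$. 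Working at the level of the form (rather than conjugating the differential operator) is the cleanest way to avoid the bookkeeping of the spurious first-order terms $(\D_i a_{ij})\D_j$ that a naive conjugation of the variable-coefficient operator produces; this is the one genuinely fiddly point, and it is where I would be most careful.

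For part (2) I would change variables. Since $\mathcal{O}^{-1}[\phi](\bx)=\phi(O\bx)$, the chain rule gives $\nabla(\mathcal{O}^{-1}\phi)(\bx)=O^{T}(\nabla\phi)(O\bx)$; substituting $\by=O\bx$ (Jacobian $1$) and using $\langle O^{T}u,Mv\rangle=\langle u,OMv\rangle$ yields $q_A[\mathcal{O}^{-1}\phi,\mathcal{O}^{-1}\psi]=q_{A'}[\phi,\psi]$ with $A'(\bx)=O\,A(O^{*}\bx)\,O^{*}$. As $\mathcal{O}$ is unitary, the left-hand side equals $\langle\phi,\mathcal{O}\LA\mathcal{O}^{-1}\psi\rangle$, so $\mathcal{O}\LA\mathcal{O}^{-1}=\mathcal{L}^{A'}$, and by injectivity $[\mathcal{O},\LA]=0\iff A(O^{*}\bx)=O^{*}A(\bx)O$. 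For part (1) I would compute $\mathcal{C}\LA\mathcal{C}=\mathcal{L}^{\overline A}$ directly (complex conjugation commutes with the real operator $\nabla$ and conjugates $A$) and obtain $\mathcal{P}\LA\mathcal{P}=\mathcal{L}^{A(-\cdot)}$ as the special case $O=-I$ of part (2). Composing, and using that $\mathcal{C}$ and $\mathcal{P}$ commute and $\pc$ is an involution, gives $\pc\,\LA\,\pc=\mathcal{L}^{\overline{A(-\cdot)}}$, whence $[\pc,\LA]=0\iff A(\bx)=\overline{A(-\bx)}$.

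For part (3) I would synthesize (1) and (2) using the eigenstructure of the rotation recorded in \eqref{R_evals_evecs}. Under the standing hypothesis $A(O^{*}\bx)=A(\bx)$, part (2) turns $[\LA,\mathcal{O}]=0$ into the pointwise condition $A(\bx)=O^{*}A(\bx)O$, i.e.\ $[A(\bx),O]=0$. Because $O\neq\pm I$ is a planar rotation, it has simple eigenvalues $\tau,\overline\tau$ with orthonormal eigenvectors $\zeta,\overline\zeta$, so any matrix commuting with $O$ is diagonal in this basis; imposing Hermiticity gives $A=\lambda_1\,\zeta\zeta^{\dagger}+\lambda_2\,\overline\zeta\,\overline\zeta^{\,\dagger}$ with $\lambda_1,\lambda_2$ real. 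Since $\zeta\zeta^{\dagger}=\tfrac12(I+\sigma_2)$ and $\overline\zeta\,\overline\zeta^{\,\dagger}=\tfrac12(I-\sigma_2)$, this is exactly $A(\bx)=a(\bx)I+b(\bx)\sigma_2$ with $a,b$ real. The remaining symmetries then come for free: $\mathcal{O}$-invariance and linear independence of $I,\sigma_2$ give $a(O^{*}\bx)=a(\bx)$, $b(O^{*}\bx)=b(\bx)$, while part (1) gives $A(\bx)=\overline{A(-\bx)}=a(-\bx)I-b(-\bx)\sigma_2$ (using $\overline{\sigma_2}=-\sigma_2$), forcing $a(-\bx)=a(\bx)$ and $b(-\bx)=-b(\bx)$. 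The converse is a direct verification: for such $a,b$ one has $\overline{A(-\bx)}=A(\bx)$, giving $[\pc,\LA]=0$ by (1), and $A(O^{*}\bx)=A(\bx)=O^{*}A(\bx)O$ since $\sigma_2=iJ$ commutes with the rotation $O$, giving $[\mathcal{O},\LA]=0$ by (2). The point worth flagging is that the specific appearance of $\sigma_2$ hinges on $O$ being a rotation with eigenvectors $\zeta,\overline\zeta$; for a reflection the Hermitian commutant is instead spanned by $I$ and a $\sigma_3$-type matrix, so the rotational structure of $R$ is essential to the stated normal form.
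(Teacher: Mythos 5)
Your proof is correct and reaches the same conclusions as the paper, but by a partly different route. For parts (1) and (2) the difference is mostly cosmetic: the paper conjugates the differential operator directly, using the identity $\mathcal{O}\nabla=O^*\nabla\mathcal{O}$ (its Lemma on $\mathcal{P},\mathcal{C},\mathcal{O}$ and $\nabla$) to show $\LA\circ\mathcal{O}=\mathcal{O}\circ\mathcal{L}^{O^*A(O\cdot)O}$, whereas you perform the equivalent change of variables at the level of the sesquilinear form; both then identify coefficient matrices. One caveat applies to both arguments and is worth noting because you advertise it as your structural pillar: $A\mapsto\LA$ (equivalently $A\mapsto q_A$) is \emph{not} injective — adding a constant antisymmetric matrix to $A$ leaves $\LA$ unchanged, since $-\partial_1(c\,\partial_2 f)+\partial_2(c\,\partial_1 f)=0$ for constant $c$ — so the ``only if'' directions strictly yield $A'=A$ up to such a term. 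This is harmless for the intended applications and the paper glosses over it too, but the claim that testing recovers every entry $a_{ij}(\bx)$ is literally false. For part (3) your argument is genuinely different and arguably cleaner: the paper writes the pointwise condition $A=O^*AO$ as a $4\times4$ homogeneous linear system for the entries of $A$ and computes its rank and nullspace by hand, while you diagonalize $O$ using its simple eigenvalues with eigenvectors $\zeta,\bar\zeta$, note that the commutant of $O$ consists of matrices diagonal in that basis, impose Hermiticity to force real coefficients, and use $\zeta\zeta^\dagger=\tfrac12(I+\sigma_2)$, $\bar\zeta\bar\zeta^\dagger=\tfrac12(I-\sigma_2)$ to land directly on $A=aI+b\sigma_2$; the parity constraints on $a$ and $b$ then follow from part (1) exactly as in the paper. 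Your closing observation — that the $\sigma_2$ normal form requires $O$ to be a rotation and would become a $\sigma_3$-type commutant for a reflection — correctly flags an assumption ($\det O=1$) that the theorem's hypothesis $O^TO=I$, $O\neq\pm I$ does not state but that the paper also uses silently when it writes $O$ as a rotation matrix at the start of its proof of part (3).
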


The proof of Theorem \ref{invariance} makes use of the following basic lemma:
\begin{lemma}\label{PCRnabla}
\[
(a)\ \mathcal{P}\nabla=-\nabla\mathcal{P},\qquad
(b)\ \mathcal{C}\nabla=\nabla\mathcal{C},\qquad
(c)\  \mathcal{O}\nabla=O^*\nabla\mathcal{O}.
\]
\end{lemma}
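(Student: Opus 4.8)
The plan is to verify all three identities by a direct application of the chain rule, viewing $\nabla$ as differentiation of the outermost function and viewing $\mathcal{P}$, $\mathcal{C}$, $\mathcal{O}$ as precomposition with a (possibly trivial) linear change of the argument. I would state each as an operator equality on $C^\infty$ functions (equivalently on $H^1_{loc}$), since that is the generality in which the lemma is invoked in the proof of Theorem \ref{invariance}; the formal manipulations below are justified there.

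I would begin with (c), which is the general statement and contains the only real bookkeeping point. Fix a test function $f$ and compute the $i$-th component of $\nabla(\mathcal{O}f)$ at $\bx$. Since $(\mathcal{O}f)(\bx)=f(O^*\bx)$ and $(O^*\bx)_j=\sum_k (O^*)_{jk}x_k$, the chain rule gives
\[
\partial_{x_i}\big[f(O^*\bx)\big]=\sum_j (O^*)_{ji}\,(\partial_j f)(O^*\bx),
\]
where $\partial_j f$ denotes the derivative of $f$ in its $j$-th slot. The vector on the right is $(O^*)^T$ applied, componentwise, to $(\nabla f)(O^*\bx)=\mathcal{O}[\nabla f](\bx)$. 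Using orthogonality $O^TO=I$, i.e. $(O^*)^T=O$ and $O^{-1}=O^*$, this reads $\nabla\mathcal{O}=O\,\mathcal{O}\nabla$; multiplying on the left by $O^*$ yields $\mathcal{O}\nabla=O^*\nabla\mathcal{O}$, which is (c).

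For (a) I would simply specialize (c) to $O=-I$: then $O^*=-I$ and $\mathcal{O}=\mathcal{P}$, so the identity becomes $\mathcal{P}\nabla=-\nabla\mathcal{P}$; alternatively the one-line computation $\partial_{x_i}[f(-\bx)]=-(\partial_i f)(-\bx)$ gives it directly. For (b) the key observation is that the coordinates $x_1,x_2$ are real, so complex conjugation commutes with each $\partial_{x_i}$, namely $\overline{\partial_{x_i}f}=\partial_{x_i}\overline{f}$; applying this componentwise to $\nabla f$ gives $\mathcal{C}[\nabla f]=\nabla[\mathcal{C}f]$, i.e. $\mathcal{C}\nabla=\nabla\mathcal{C}$.

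There is no serious obstacle here, as the lemma is elementary. The only point requiring care is the appearance of $(O^*)^T=O$, rather than $O^*$ itself, in (c): the chain rule naturally produces the transpose of the Jacobian, and it is precisely orthogonality that allows one to move this factor across $\nabla$ and rewrite it as $O^*$ on the correct side of the final identity.
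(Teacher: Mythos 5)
Your proof is correct and follows essentially the same route as the paper's: part (c) is established by the chain rule, with the Jacobian $(O^{*})^{T}=O$ appearing and orthogonality used to move it across $\nabla$, which is exactly the computation in the paper. The only difference is that the paper proves (c) alone, leaving (a) and (b) as immediate, whereas you additionally observe that (a) is the special case $O=-I$ of (c) and that (b) holds because the coordinates are real; both observations are fine.
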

\begin{proof}[Proof of Lemma \ref{PCRnabla}]  We prove part (c). For any differentiable $f$, $\partial_{x_m}\mathcal{O}[f](\bx)=\partial_{x_m}f(O^*\bx)=\frac{\partial y_l}{\partial x_m}\partial_{y_l}f(\by)\Big|_{\by=O^*\bx}$. Now, $y_l=(O^*\bx)_l=O^*_{lr}x_r$ and therefore $\frac{\partial y_l}{\partial x_m}=O_{lr}^*\delta_{rm}=O^*_{lm}$.  Therefore,
$\partial_{x_m}\mathcal{O}[f](\bx)= O^*_{lm}\partial_{y_l}f(\by)\Big|_{\by=O^*\bx}=O_{ml}\partial_{y_l}f(\by)\Big|_{\by=O^*\bx} = O_{ml}\mathcal{O}\left[\partial_{y_l}f\right](\bx)$. This
implies $O^*_{lm}\partial_{x_m}\mathcal{O}[f](\bx)= \mathcal{O}\left[\partial_{y_l}f\right](\bx)$ for each $l=1,2$,
which completes the proof.
\end{proof}

We now turn to the proof of Theorem \ref{invariance}.
\begin{proof}[Proof of Theorem \ref{invariance}]
 Part 1 is straightforward. To prove part 2, note by part (c) of Lemma \ref{PCRnabla} that
 $\partial_{x_j}\mathcal{O}[f](\bx)=\mathcal{O}\left[O_{jl}\partial_{y_l}f\right](\bx)$. Multiplying this relation by $a_{ij}(\bx)$, using that
  $\mathcal{O}[a_{ij}(O\by)](\bx)=a_{ij}(\bx)$,  and summing over $j$ gives,
 \[ a_{ij}(\bx)\partial_{x_j}\mathcal{O}[f](\bx)=\mathcal{O}\left[a_{ij}(O\by)O_{jl}\partial_{y_l}f\right](\bx).\]
 Furthermore, again using Lemma \ref{PCRnabla}, we have
 \begin{align}
(\LA\circ\mathcal{O})[f](\bx) &=\ \partial_{x_i} a_{ij}(\bx)\partial_{x_j}\mathcal{O}[f](\bx)\nn\\
                 &=\partial_{x_i} \mathcal{O}\left[a_{ij}(O\by)O_{jl}\partial_{y_l}f\right](\bx)\nn\\
& = \mathcal{O}\left[O_{im}\partial_{y_m} a_{ij}(O\by)O_{jl}\partial_{y_l}f\right](\bx)\nn\\
& =  \mathcal{O}\left[\partial_{y_m} O_{mi}^*a_{ij}(O\by)O_{jl}\partial_{y_l}f\right](\bx)\nn\\
& = (\mathcal{O}\circ\mathcal{L}^{\mathcal{O}_\star A})[f](\bx)\ ,
\end{align}
where $[\mathcal{O}_\star A](\by)\equiv O^*A(O\by)O$.
Therefore, $[\mathcal{O},\LA]=0$ if and only if $\mathcal{O}_\star A(\by)=A(\by)$. Setting $\by=O^*\bx$, it follows that
 $O^*A(\bx)O=A(O^*\bx)$. This completes the proof of part 2.

Finally, we turn to the proof of part 3. By hypothesis
 $O=\begin{pmatrix}\cos \theta& \sin \theta \\ -\sin \theta &\cos \theta,
 \end{pmatrix}$ where $\theta \in (0,2\pi)$ and $\theta\ne\pi$.  Form the $4-$component vector of entries of $A$: $\vec{a}=(a_{11},a_{12},a_{21},a_{22})^T$. Since $A(O^*\bx)=A(\bx)$, by part 2 we have the relation $A(\bx)= O^* A(\bx) O$, which is equivalent to the system $M\vec{a}=0$,  where $M$ is the $4\times4$ matrix:
 \[
 M=\sin\theta\times\begin{pmatrix}-\sin\theta & -\cos\theta & -\cos\theta & \sin\theta \\
 \cos\theta &-\sin\theta &-\sin\theta & -\cos\theta\\
 \cos\theta&-\sin\theta &-\sin\theta &-\cos\theta \\ \sin\theta&\cos \theta &\cos\theta &-\sin\theta\\
  \end{pmatrix}.
 \]

Note that the second and third rows of $M$ are identical, and  the first and fourth rows are the same up to a factor of minus one. Therefore, since $\theta\neq 0,\ \pi$, the rank of $M$ is 2.  Hence, the dimension of the nullspace of $M$ is 2. In fact, the null space of $M$ is easily seen to be spanned by the vectors: $(1,0,0,1)^T$ and $(0,-1,1,0)^T$. Thus, any element of the nullspace of $M$, may be expressed in the form $a(\bx)(1,0,0,1)^T+ ib(\bx)(0,-1,1,0)^T$, for some choice of scalar functions $a$ and $b$. Using that $A^\dagger=A$, we have $A(\bx)=a(\bx)I + b(\bx) \sigma_2$, where $a(\bx), b(\bx)$ are real-valued. Finally, by part 1 we have $\overline{A(-\bx)}=A(\bx)$, and therefore $a(\bx)$ is even and $b(\bx)$ is odd.
 This completes the proof of Theorem \ref{invariance}.
 \end{proof}

\subsection{Characterization of honeycomb structured media}\label{honey-media}

Recall that $R$ denotes the $120^\circ$ rotation matrix \eqref{Def_R} and the mapping $\mathcal{R}:g(\bx)\mapsto g(R^*\bx)$. The bulk structures we consider in this article, modeled by the operator $\LA=-\nabla\cdot A\nabla$ are assumed to satisfy:
\begin{enumerate}
\item $\LA$ self-adjoint and elliptic. Hence, $A^\dagger=A$ and ellipticity condition $(A_3)$ of Assumption \ref{A_Assumption} holds.
\item $A(\bx+\bv)=A(\bx)$ for all $\bx\in\R^2$ and $\bv\in\Lambda_h$.
\item  $[\mathcal{PC}, \LA]=0$ and hence $\overline{A(-\bx)}=A(\bx)$, by Theorem \ref{invariance} part 1.
 \item $[\mathcal{R}, \LA]=0$ and hence $A(R^*\bx)=R^*A(\bx)R$,  by Theorem \ref{invariance} part 2. \\
 \end{enumerate}

 We shall also consider the more restrictive case where, in addition to (1)-(4), $\LA=-\nabla\cdot A\nabla$ satisfies:
 \begin{enumerate}[5.]
 \item $A(R^*\bx)=A(\bx)$ for all $\bx\in\R^2$. Hence, by Theorem \ref{invariance} part 3
there exist real-valued functions $a(\bx)$ (strictly positive) and $b(\bx)$, such that
 \begin{align}
 A(\bx)&=a(\bx)I_{_{2\times2}}+b(\bx)\sigma_2, \label{ARinv}
 \end{align} where
$a(R^*\bx)=a(\bx)$, $a(-\bx)=a(\bx)$, and
$b(R^*\bx)=b(\bx)$, $b(-\bx)=-b(\bx)$.
\end{enumerate}

We shall loosely refer to media which are modeled by $\LA$, satisfying the above properties (1)-(4) or (1)-(5)
 {\it honeycomb structured media}. When we impose the stronger constraint of $\mathcal{R}-$ invariance of the matrix $A(\bx)$ itself, {\it i.e.} $A(R^*\bx)=A(\bx)$, (5), we shall point this out explicitly.

We shall next derive the form of the  Fourier series of a general  matrix function giving rise to a honeycomb structured medium. The analysis follows that of the Schr\"odinger operator case, presented in \cites{FW:12}. We begin by noting that if $\bk=m_1\bk_1+m_2\bk_2\in\Lambda_h^*$, then $R\bk=-m_2\bk_1+(m_1-m_2)\bk_2$; see \eqref{R_bk_rel}.
Therefore, the mapping $R:\Lambda_h^*\to\Lambda_h^*$ induces a  mapping $\tilde{R}:\ \mathbb Z^2\rightarrow \mathbb Z^2$,  $\tilde{R}: (m_1, m_2) \mapsto (-m_2, m_1-m_2)$.  We note $\tilde{R}^2(m_1, m_2) = (m_2-m_1, m_1)$, $\tilde{R}^3(m_1, m_2) = (m_1, m_2)$, and that $\tilde{R}^{-1}=\tilde{R}^2$.

We say that $\mathbf{m}$ and $\mathbf{n}$ are in the same equivalence class if they lie on the same $\tilde{R}$ orbit.  Introduce the subset $\mathcal{S}\subset \Z^2\setminus\{{\bm 0}\}$,  consisting  of exactly one representative from each equivalence class. With the exception of  $\mathbf{m}=0$, left fixed by $\tilde{R}$, each $\bfm\in\Z^2$ lies on a $\tilde{R}-$ orbit of length
exactly three.

If $A(\bx)$ is a smooth $\Lambda_h-$periodic matrix function, then it can be represented as a convergent Fourier series  $A(\bx)=\sum_{\bfm\in\Z^2}A_\bfm e^{i\bfm\vec{\bk}\cdot\bx}$, with matrix-valued Fourier coefficients:
\begin{equation}
\label{fourier_coef}
A_{\mathbf{m}}\equiv\frac{1}{|\Omega_h|}\int_{\Omega_h} e^{-i\bfm\vec\bk\cdot \,\by}A(\by) d\by .
\end{equation}

\begin{proposition}\label{prop_fourier_coef}
\begin{enumerate}
\item Let $A(R^*\bx) = R^*A(\bx)R$. Then,
\begin{align}
A_{\tilde{R}^{-1}\mathbf{m}}=R A_\mathbf{m}R^*,\quad \mathbf{m}\in \mathbb Z^2,
\label{ARx-fr}
\end{align}
and therefore, $A_{\tilde{R}\bfm}=R^*A_\bfm R$ and $A_{\tilde{R}^2\bfm}=RA_\bfm R^*$, $\mathbf{m}\in \mathbb Z^2$.
\item Assume  $A(R^*\bx) = A(\bx)$, and therefore by part 3 of Theorem \ref{invariance},
 $A(R^*\bx) = a(\bx)I_{_{2\times2}}+b(\bx)\sigma_2$, where $a$ is real-valued and even, and $b$ is real-valued and odd.
 Then,
$
A_\bfm=a_\bfm I_{_{2\times2}}+b_\bfm\sigma_2,
$ where the Fourier coefficients of $a$ and $b$ satisfy:
\begin{equation}
a_\bfm=a_{\tilde{R}\bfm}=a_{\tilde{R}^2\bfm}, \quad b_\bfm=b_{\tilde{R}\bfm}=b_{\tilde{R}^2\bfm}\\
\end{equation}
Furthermore, since $a(-\bx)=a(\bx),\ b(-\bx)=-b(\bx)$, we have $\overline{a_\bfm}=a_\bfm=a_{-\bfm},\ \overline{b_\bfm}=-b_\bfm=b_{-\bfm}$.
\item Let $\mathcal{PC}[A](\bx) = A(\bx)$. Then,
\begin{align}
 A_\mathbf{m}=\overline{A_\mathbf{m}},\quad  \mathbf{m}\in \mathbb Z^2.
\label{PCA-fr}\end{align}
\item Let $A^\dagger=A$. Then,
\begin{align}
A_\mathbf{m}=(A_{-\mathbf{m}})^\dagger,\ \  \mathbf{m}\in \mathbb Z^2.\label{Asa-fr}
\end{align}
\end{enumerate}
\end{proposition}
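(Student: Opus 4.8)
The plan is to derive all four identities from a single mechanism: substitute the assumed symmetry of $A(\bx)$ into the defining integral \eqref{fourier_coef} for $A_\bfm$ and apply a volume-preserving linear change of variables. Each relevant operation (the rotation $R$, the inversion $\bx\mapsto-\bx$, and the lattice translations) maps $\Lambda_h$, and hence a fundamental cell $\Omega_h$, to itself with unit Jacobian, and each transforms the plane wave $e^{-i\bfm\vec\bk\cdot\bx}$ into another such plane wave; this is what makes the coefficients transform cleanly. Parts 3 and 4 are short conjugation/reflection arguments, Part 2 is a scalar specialization of Part 1 combined with parity, and Part 1 is the only computation requiring real care.

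For Part 1, I would begin with $A_{\tilde R^{-1}\bfm}=\frac{1}{|\Omega_h|}\int_{\Omega_h}e^{-i(\tilde R^{-1}\bfm)\vec\bk\cdot\by}A(\by)\,d\by$. The essential bookkeeping fact, built into the definition of $\tilde R$, is that $\tilde R$ is the index map induced by $R$ acting on $\Lambda_h^*$, so $(\tilde R\bfm)\vec\bk=R(\bfm\vec\bk)$ and therefore $(\tilde R^{-1}\bfm)\vec\bk=R^*(\bfm\vec\bk)$; see \eqref{R_bk_rel}. Setting $\bq=\bfm\vec\bk$ and using the transpose identity that moves the rotation off the phase and onto the integration variable, I would rewrite the phase and then substitute $\mathbf{z}=R\by$ (or $R^*\by$, depending on the direction chosen). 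Because $R\Lambda_h=\Lambda_h$ and the integrand is $\Lambda_h$-periodic, the domain remains a fundamental cell, and the integral becomes one in which $A$ is evaluated at $R^*\mathbf{z}$. Invoking the hypothesis $A(R^*\mathbf{z})=R^*A(\mathbf{z})R$ and pulling the constant matrices $R,R^*$ out of the integral then yields the conjugation relation $A_{\tilde R^{-1}\bfm}=RA_\bfm R^*$. Replacing $\bfm$ by $\tilde R\bfm$ and by $\tilde R^2\bfm$, and using $\tilde R^{-1}=\tilde R^2$, gives the two companion identities.

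Parts 2--4 follow the same template with simpler operations. For Part 4, conjugate-transposing the definition of $A_\bfm$ and using $A(\by)^\dagger=A(\by)$ gives $A_\bfm=(A_{-\bfm})^\dagger$ in one line. For Part 3, substituting the $\mathcal{PC}$-hypothesis $A(\by)=\overline{A(-\by)}$ into \eqref{fourier_coef}, changing variables $\by\mapsto-\by$, and comparing with the expression obtained by conjugating the definition directly yields $A_\bfm=\overline{A_\bfm}$. For Part 2, I would first invoke part 3 of Theorem \ref{invariance} to write $A(\bx)=a(\bx)I+b(\bx)\sigma_2$ with $a$ real and even, $b$ real and odd; by linearity of \eqref{fourier_coef} the coefficients split as $A_\bfm=a_\bfm I+b_\bfm\sigma_2$. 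The rotation relations $a_\bfm=a_{\tilde R\bfm}=a_{\tilde R^2\bfm}$ and $b_\bfm=b_{\tilde R\bfm}=b_{\tilde R^2\bfm}$ are exactly the scalar case of Part 1, in which the conjugation $R^*(\cdot)R$ acts trivially on scalars; the relations $\overline{a_\bfm}=a_\bfm=a_{-\bfm}$ and $\overline{b_\bfm}=-b_\bfm=b_{-\bfm}$ come from separately feeding realness (which gives $\overline{f_\bfm}=f_{-\bfm}$) and the parity of $a$ and $b$ (which gives $f_\bfm=\pm f_{-\bfm}$) into the coefficient integral.

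I expect the main obstacle to be the $R$-versus-$R^*$ bookkeeping in Part 1: one must track consistently (i) the direction in which $\tilde R$ corresponds to $R$ on lattice indices, (ii) the transpose identity used to transfer the rotation from the phase to the integration variable, and (iii) the order in which $R^*$ and $R$ appear when the non-commuting hypothesis $A(R^*\mathbf{z})=R^*A(\mathbf{z})R$ is applied. A single inconsistency in any of these flips $RA_\bfm R^*$ into $R^*A_\bfm R$. Everything else rests only on the unit Jacobian and the $\Lambda_h$-invariance of the symmetry operations, which guarantee the change of variables keeps the integral over a fundamental cell.
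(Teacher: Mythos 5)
Your proposal follows essentially the same route as the paper's proof: each part is obtained by inserting the assumed symmetry into the defining integral \eqref{fourier_coef} and performing the corresponding unit-Jacobian, lattice-preserving change of variables, with Part 2 reduced to the scalar instance of Part 1 plus parity, and Parts 3--4 handled by the one-line conjugation/reflection arguments you describe. Your flagged worry about the $R$-versus-$R^*$ bookkeeping is well placed: carrying out your own change of variables carefully yields $A_{\tilde{R}^{-1}\bfm}=R^*A_\bfm R$ (equivalently $A_{\tilde{R}\bfm}=RA_\bfm R^*$), which is the conjugation in the order opposite to the one printed in \eqref{ARx-fr} --- the displayed formula appears to have $R$ and $R^*$ transposed, a consistently propagated slip that is harmless in the $\mathcal{R}$-invariant case since $A_\bfm=a_\bfm I_{_{2\times2}}+b_\bfm\sigma_2$ commutes with $R$.
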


\begin{proof}[Proof of Proposition \ref{prop_fourier_coef}]
We first prove \eqref{ARx-fr}.
\begin{equation}
\begin{split}
[\mathcal{R}A]_\bfm&=\frac{1}{|\Omega_h|}\int_{\Omega_h} e^{-i\bfm\vec\bk\cdot \,\by}A(R^*\by)d\by\ =\frac{1}{|\Omega_h|}\int_{\Omega_h} e^{-iR^*(\bfm\vec\bk)\cdot \,\bb z}A(\bb z) d\bb z\\
&=\frac{1}{|\Omega_h|}\int_{\Omega_h} e^{-i(\tilde{R}^{-1}\bb m)\bb k\cdot \,\bb z}A(\bb z) d\bb z\ =\ A_{\tilde{R}^{-1}\mathbf{m}}
\end{split}
\end{equation}
Therefore, $\mathcal{R}[A](\bx)\equiv A(R^*\bx)=R^*A(\bx)R$ $\iff$ $A_{\tilde{R}^{-1}\mathbf{m}}=R A_{\mathbf{m}}R^*$.

To prove \eqref{PCA-fr}, note that
\begin{equation}
\begin{split}
(\mathcal{PC})[A]_{\bfm}&=\frac{1}{|\Omega_h|}\int_{\Omega_h} e^{-i\bfm\cdot\vec\bk \,\by}\overline{A(-\by)}d\by =\frac{1}{|\Omega_h|}\int_{\Omega_h} e^{i\bfm\vec\bk\cdot \,\bb z}\overline{A(\bb z)} d\bb z=\overline{A_{\bfm}}\\
\end{split}
\end{equation}
Therefore,
$(\mathcal{PC})[A](\bx)\equiv \overline{A(-\bx)}=A(\bx)$ $\iff$ $A_\mathbf{m}$ is real.

To prove \eqref{Asa-fr} note that $(A_\bfm)^{\dagger}=(A^\dagger)_{-\bfm}$. Hence, if $A$ is Hermitian then $A_\bfm=(A_{-\bfm})^\dagger$.
\end{proof}

Because $A(\bx)$ is $\Lambda_h-$ periodic, we can express it in the Fourier series
\begin{equation}
\begin{split}
A(\bx)&= A_{\bf 0}+\sum_{\mathbf{m}\ne{\bf 0}}A_\mathbf{m}e^{i\mathbf{m}\vec{\bk}\cdot \bx}\\
&= A_{\bf 0}+\sum_{\mathbf{m}\in\tilde{\mathcal{S}}\setminus\{\bf 0\}}
\left(\ A_\mathbf{m}e^{i\mathbf{m}\vec{\bk}\cdot \bx}+A_{\tilde{R}\mathbf{m}}e^{i(\tilde{R}\mathbf{m})\vec{\bk}\cdot \bx}
+A_{\tilde{R}^2\mathbf{m}}e^{i(\tilde{R}^2\mathbf{m})\vec{\bk}\cdot \bx}\ \right).
\end{split}
\end{equation}

Symmetries such as $\pc-$ and $\mathcal{R}-$ invariance imply further constraints on the Fourier series:
\begin{corollary}[Characterization of honeycomb structured media]\label{A-expand}
Let $A(\bx)$ be a Hermitian and $\Lambda_h-$ periodic matrix function. Then, the following holds:
\begin{enumerate}
\item
Assume  $\mathcal{PC}[A]=A$ (hence $[\pc,\LA]=0$) and $A(R^*\bx)=R^*A(\bx)R$ (hence $[\mathcal{R},\LA]=0$).  Then,
\begin{equation}
A(\bx)=a_{_{\bb 0}}I_{_{2\times2}}+\sum_{\mathbf{m}\in\tilde{\mathcal{S}}\setminus\{\bb 0\}}A_\mathbf{m}\ e^{i\mathbf{m}\vec{\bk}\cdot \bx}+R^* A_\mathbf{m}R\ e^{i(\tilde{R}\mathbf{m})\vec{\bk}\cdot \bx}+R A_\mathbf{m}R^*\ e^{i(\tilde{R}^2\mathbf{m})\vec{\bk}\cdot \bx}.
\label{Afourier}
\end{equation}
%
%
where $a_{_{\bb 0}}$ is real and positive; $A_\mathbf{m}$ is real and satisfies $A_{-\mathbf{m}}=A_{\mathbf{m}}^T$.
\item Assume further that $A(R^*\bx)=A(\bx)$. Then,
\begin{align}
A(\bx)&=a_{_{\bb 0}}I_{_{2\times2}}+
\sum_{\mathbf{m}\in\tilde{\mathcal{S}}\setminus\{\bb 0\}}\left(a_\bfm I_{_{2\times2}}+b_\bfm\sigma_2\right)
\left(e^{i\mathbf{m}\vec{\bk}\cdot \bx}+e^{i(\tilde{R}\mathbf{m})\vec{\bk}\cdot \bx}+e^{i(\tilde{R}^2\mathbf{m})\vec{\bk}\cdot \bx}\right)\nn\\
&= a_{_{\bb 0}}I_{_{2\times2}}+\sum_{\mathbf{m}\in\tilde{\mathcal{S}}\setminus\{\bb 0\}}
\begin{pmatrix}
a_\bfm & -\beta_\bfm\\ \beta_\bfm & a_\bfm
\end{pmatrix}\
\left(e^{i\mathbf{m}\vec{\bk}\cdot \bx}+e^{i(\tilde{R}\mathbf{m})\vec{\bk}\cdot \bx}+e^{i(\tilde{R}^2\mathbf{m})\vec{\bk}\cdot \bx}\right),
\label{AAfourier}\end{align}
where $a_\bfm$ and $\beta_\bfm=ib_\bfm$ ($\bfm\in\mathcal{S}\setminus\{\bb 0\}$) are real.
\item Under the assumptions of (2), $A(\bx)$ may also be rewritten as
\begin{align}
A(\bx)&=a_{_{\bb 0}}I_{_{2\times2}}\ +\
I_{_{2\times2}}\cdot \sum_{\mathbf{m}\in\tilde{\mathcal{S}}\setminus\{\bb 0\}}\ a_\bfm\
\left(\ \cos(\bfm\vec\bk\cdot\bx)\ +\  \cos((\tilde{R}\bfm)\vec\bk\cdot\bx)\ +\  \cos((\tilde{R}^2\bfm)\vec\bk\cdot\bx)\ \right)\nn\\
&\ + \sigma_2\cdot \sum_{\mathbf{m}\in\tilde{\mathcal{S}}\setminus\{\bb 0\}}\ \beta_\bfm
\left(\ \sin(\bfm\vec\bk\cdot\bx)\ +\  \sin((\tilde{R}\bfm)\vec\bk\cdot\bx)\ +\  \sin((\tilde{R}^2\bfm)\vec\bk\cdot\bx)\ \right) ,
\end{align}
where $a_\bfm$ and $\beta_\bfm$ are real.
\end{enumerate}
\end{corollary}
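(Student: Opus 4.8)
The plan is to read off all three representations directly from the orbit-grouped Fourier series displayed just above the Corollary, feeding in the coefficient identities of Proposition \ref{prop_fourier_coef}. For part 1, I start from
\[
A(\bx)=A_{\bb{0}}+\sum_{\bfm\in\tilde{\mathcal{S}}\setminus\{\bb{0}\}}\left(A_\bfm e^{i\bfm\vec{\bk}\cdot\bx}+A_{\tilde{R}\bfm}e^{i(\tilde{R}\bfm)\vec{\bk}\cdot\bx}+A_{\tilde{R}^2\bfm}e^{i(\tilde{R}^2\bfm)\vec{\bk}\cdot\bx}\right),
\]
and substitute the rotation identities $A_{\tilde{R}\bfm}=R^*A_\bfm R$ and $A_{\tilde{R}^2\bfm}=RA_\bfm R^*$ from Proposition \ref{prop_fourier_coef}(1), which is exactly where the hypothesis $A(R^*\bx)=R^*A(\bx)R$ enters. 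This produces \eqref{Afourier}. The reality of each $A_\bfm$ is then Proposition \ref{prop_fourier_coef}(3) (from $\mathcal{PC}$-invariance), and combining Hermiticity, Proposition \ref{prop_fourier_coef}(4), namely $A_\bfm=(A_{-\bfm})^\dagger$, with this reality gives $A_{-\bfm}=A_\bfm^T$.

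The only non-mechanical point in part 1 is the claim $A_{\bb{0}}=a_{\bb{0}}I$ with $a_{\bb{0}}>0$. Setting $\bfm=\bb{0}$ (the unique $\tilde{R}$-fixed index) in the rotation identity gives $A_{\bb{0}}=R^*A_{\bb{0}}R$, i.e. $A_{\bb{0}}$ commutes with $R$. Since $A_{\bb{0}}$ is real (from $\mathcal{PC}$) and symmetric (from $A^\dagger=A$), and $R$ is a $120^\circ$ rotation with no real invariant line, any eigenspace of $A_{\bb{0}}$ would have to be $R$-invariant; hence $A_{\bb{0}}$ has a single (double) eigenvalue and is a scalar multiple of $I$. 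Writing $A_{\bb{0}}=a_{\bb{0}}I$ with $a_{\bb{0}}=|\Omega_h|^{-1}\int_{\Omega_h}A$, uniform ellipticity (Assumption \ref{A_Assumption}) forces $a_{\bb{0}}\ge c_->0$.

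For part 2 I invoke the additional hypothesis $A(R^*\bx)=A(\bx)$ through Theorem \ref{invariance}(3) and Proposition \ref{prop_fourier_coef}(2), which give $A_\bfm=a_\bfm I+b_\bfm\sigma_2$. The key observation is that both $I$ and $\sigma_2=iJ$ commute with the rotation $R$, so $R^*A_\bfm R=RA_\bfm R^*=A_\bfm$; the three distinct coefficient matrices of \eqref{Afourier} thus collapse to the single matrix $a_\bfm I+b_\bfm\sigma_2$ multiplying the symmetric exponential triple, yielding the first line of \eqref{AAfourier}. Writing out $a_\bfm I+b_\bfm\sigma_2$ in components and setting $\beta_\bfm=ib_\bfm$ (real, since Proposition \ref{prop_fourier_coef}(2) gives $\overline{b_\bfm}=-b_\bfm$) produces the matrix form in the second line.

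Part 3 is where the bookkeeping is most delicate, and I would treat the scalar functions $a(\bx)$ and $b(\bx)$ separately before reassembling $A=aI+b\sigma_2$. Since $a_\bfm$ is real with $a_{-\bfm}=a_\bfm$, the imaginary (sine) part of $\sum_\bfm a_\bfm e^{i\bfm\vec{\bk}\cdot\bx}$ cancels pairwise, leaving $a(\bx)=a_{\bb{0}}+\sum_{\bfm\neq\bb{0}}a_\bfm\cos(\bfm\vec{\bk}\cdot\bx)$; regrouping this full sum over $\tilde{R}$-orbits (on which $a_\bfm$ is constant) gives the cosine sum over $\tilde{\mathcal{S}}$. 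Symmetrically, $b_\bfm$ is purely imaginary with $b_{-\bfm}=-b_\bfm$ and $b_{\bb{0}}=0$ (from $\overline{b_{\bb{0}}}=-b_{\bb{0}}$), so the cosine part of $\sum_\bfm b_\bfm e^{i\bfm\vec{\bk}\cdot\bx}$ cancels pairwise and only the sine sum survives, producing the $\beta_\bfm$ terms after the same orbit regrouping. The subtlety to handle carefully is that $-\bfm$ lies on a different $\tilde{R}$-orbit than $\bfm$ (since $-I$ is not a power of the $120^\circ$ rotation), so the $\bfm\leftrightarrow-\bfm$ pairing that produces real cosines and sines must be carried out on the ungrouped sum over $\Z^2$ first, and only afterwards reorganized into the orbit-indexed sum over $\tilde{\mathcal{S}}$.
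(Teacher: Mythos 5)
Your proposal is correct and follows essentially the same route as the paper's own (very terse) proof: read off \eqref{Afourier} from the orbit-grouped Fourier series via Proposition \ref{prop_fourier_coef}, identify $A_{\bb 0}=a_{\bb 0}I$ from $A_{\bb 0}=RA_{\bb 0}R^*$ together with reality and Hermiticity, collapse the three coefficient matrices in part 2 because $I$ and $\sigma_2$ commute with $R$, and obtain part 3 by the $\bfm\leftrightarrow-\bfm$ pairing (which is exactly the paper's $A=\tfrac12(A+A^\dagger)$ symmetrization in Fourier form). Your extra care about $-\bfm$ lying on a different $\tilde R$-orbit than $\bfm$, and the ellipticity argument for $a_{\bb 0}>0$, are correct details the paper leaves implicit.
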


\begin{proof}[Proof of Corollary \ref{A-expand}] From Proposition \ref{prop_fourier_coef}, $A_{\bb 0}$ should be a real matrix satisfying $A_{\bb 0}=R A_{\bb 0} R^*$. Thus $A_{\bb 0}=a_{_{\bb 0}}I_{_{2\times2}}$ following from Theorem \ref{invariance}.
The rest of part 1 of Corollary \ref{A-expand} follows from Proposition \ref{prop_fourier_coef}, using that $R^2=R^*$. Part 2 is a consequence of \eqref{ARinv} and that $R^*\sigma_2R=\sigma_2=R\sigma_2R^*$. Part 3 follows from part 2 using that $A=\frac12(A+A^\dagger)$, since $A$ is Hermitian.
\end{proof}

Define the vector operator
\begin{equation}\label{A_def}
\mathscr{A}=A\ \frac{1}{i}\nabla+\frac{1}{i}\nabla\cdot A,
\end{equation}
with component operators:
\begin{equation}\label{Aj_def}
\mathscr{A}_j=\sum_{l=1}^2a_{jl}\ \frac{1}{i}\partial_{x_l}+\sum_{l=1}^2\ \frac{1}{i}\partial_{x_l}a_{lj} , \quad j=1,2.
\end{equation}

Recall, by part 3 of Lemma \ref{PCRnabla}, the relation $\mathcal{R}\nabla f=R^*\nabla \mathcal{R}f$, for all smooth scalar function $f$. We next show that if $A(R^*\bx)=R^*A(\bx)R$ (which implies that $[\mathcal{R},\LA]=0$, by part 2 of Theorem \ref{invariance})
that this relation holds with $-i\nabla$ replaced by $\mathscr{A}$.

\begin{lemma}\label{RcomA}
Let $A(\bx)$ be a smooth matrix function and define $\LA=-\nabla\cdot A\nabla$.
 If $\left[\mathcal{R}, \LA\right]=0$, then  $\mathcal{R}\mathscr{A}=R^* \mathscr{A}\mathcal{R}$.
Equivalently, for $j=1,2$:
\[ \mathcal{R}\mathscr{A}_j=\sum_{l=1}^2R^*_{jl}\mathscr{A}_{l}\mathcal{R} .\]
\end{lemma}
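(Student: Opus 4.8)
The plan is to recognize $\mathscr{A}$ as the $\bk$-gradient of the Bloch operator $\LA(\bk)$ at $\bk=\mathbf{0}$, and to transport the commutation relation $[\mathcal{R},\LA]=0$ into the Bloch picture, where $\mathcal{R}$ acts on the quasimomentum through $\bk\mapsto R\bk$. Differentiating the resulting intertwining identity in $\bk$ then yields the claim directly, with no explicit use of the coefficient-level relation $A(R^*\bx)=R^*A(\bx)R$.

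First I would verify the elementary identity
\begin{equation*}
\mathscr{A}_j\ =\ \partial_{k_j}\LA(\bk)\big|_{\bk=\mathbf{0}},\qquad j=1,2,
\end{equation*}
where $\LA(\bk)=-(\nabla+i\bk)\cdot A(\nabla+i\bk)$ as in \eqref{Eq_Eigen_k_op}. Writing $\LA(\bk)=-\sum_{m,n}(\partial_{x_m}+ik_m)a_{mn}(\partial_{x_n}+ik_n)$ and applying $\partial_{k_j}$ by the product rule gives, at $\bk=\mathbf{0}$, exactly $\tfrac1i\big(\sum_l a_{jl}\partial_{x_l}+\sum_l\partial_{x_l}a_{lj}\big)=\mathscr{A}_j$, using $1/i=-i$. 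Next I would establish the Bloch-level intertwining relation
\begin{equation*}
\mathcal{R}\,\LA(\bk)\ =\ \LA(R\bk)\,\mathcal{R}\qquad\text{for all }\bk\in\R^2 .
\end{equation*}
This follows by conjugating $\LA(\bk)=M_{e^{-i\bk\cdot\bx}}\,\LA\, M_{e^{i\bk\cdot\bx}}$ (with $M$ denoting a multiplication operator) by $\mathcal{R}$, using two facts: (i) $\mathcal{R}\,M_{e^{i\bk\cdot\bx}}=M_{e^{i(R\bk)\cdot\bx}}\,\mathcal{R}$, which is immediate from $\bk\cdot R^*\bx=(R\bk)\cdot\bx$; and (ii) the hypothesis $\mathcal{R}\LA=\LA\mathcal{R}$. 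Since $\mathcal{R}$ is invertible (indeed $\mathcal{R}^{-1}=\mathcal{R}^2$), this reads $\mathcal{R}\LA(\bk)\mathcal{R}^{-1}=\LA(R\bk)$.

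Finally I would differentiate this identity in $k_j$ — everything is polynomial in $\bk$, so the differentiation is unambiguous — and set $\bk=\mathbf 0$. The left side gives $\mathcal{R}\mathscr{A}_j\mathcal{R}^{-1}$, while the chain rule applied to $\LA(R\bk)$ produces $\sum_m R_{mj}\,\partial_{k_m}\LA(R\bk)$, which at $\bk=\mathbf 0$ equals $\sum_m R_{mj}\mathscr{A}_m$. Hence $\mathcal{R}\mathscr{A}_j=\sum_m R_{mj}\mathscr{A}_m\mathcal{R}=\sum_m R^*_{jm}\mathscr{A}_m\mathcal{R}$, which is the assertion; in vector form this is $\mathcal{R}\mathscr{A}=R^*\mathscr{A}\mathcal{R}$.

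The only delicate point is the index bookkeeping: $\mathcal{R}$ acts on the quasimomentum through $R$ (not $R^*$) in the intertwining relation, while the chain rule converts this into the transpose $R^*_{jm}=R_{mj}$ demanded by the statement, so the crux is simply a clean derivation of $\mathcal{R}\LA(\bk)=\LA(R\bk)\mathcal{R}$. A purely direct alternative is also available: expand $\mathcal{R}\mathscr{A}_j$ by pushing $\mathcal{R}$ through each factor using $\mathcal{R}[a_{lj}]=(R^*A R)_{lj}$ (the matrix form of $[\mathcal{R},\LA]=0$ from Theorem \ref{invariance}, part 2) together with $\mathcal{R}\partial_{x_l}=\sum_p R^*_{lp}\partial_{x_p}\mathcal{R}$ from Lemma \ref{PCRnabla}(c); collapsing the resulting double sums via $R R^*=I$ reproduces $\sum_l R^*_{jl}\mathscr{A}_l\mathcal{R}$ (the orthogonality relation $\sum_l R_{tl}R^*_{lp}=\delta_{tp}$ being what makes the two rotation matrices cancel). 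That route is more computational and must handle both the multiplication part $a_{jl}\partial_{x_l}$ and the divergence part $\partial_{x_l}a_{lj}$ of $\mathscr{A}_j$ separately, whereas the Bloch-derivative argument packages all of this into the single covariance identity above, and is the version I would expect to be cleanest.
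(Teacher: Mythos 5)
Your proof is correct, but it takes a genuinely different route from the paper's. The paper argues at the level of coefficients: it invokes the equivalence $[\mathcal{R},\LA]=0\iff A(R^*\bx)=R^*A(\bx)R$ (Theorem \ref{invariance}, part 2) together with $\mathcal{R}\nabla=R^*\nabla\mathcal{R}$ (Lemma \ref{PCRnabla}(c)), pushes $\mathcal{R}$ through each factor of $\mathscr{A}_j=a_{jl}\tfrac1i\partial_{x_l}+\tfrac1i\partial_{x_l}a_{lj}$ separately, and collapses the double sums via the orthogonality $R R^*=I$ --- exactly the "purely direct alternative" you sketch at the end. Your main argument instead identifies $\mathscr{A}_j=\partial_{k_j}\LA(\bk)\big|_{\bk=\mathbf 0}$, derives the covariance $\mathcal{R}\,\LA(\bk)\,\mathcal{R}^{-1}=\LA(R\bk)$ from the commutation hypothesis and the elementary intertwining $\mathcal{R}M_{e^{i\bk\cdot\bx}}=M_{e^{i(R\bk)\cdot\bx}}\mathcal{R}$, and differentiates at $\bk=\mathbf 0$; since $\LA(\bk)$ is a polynomial in $\bk$ with operator coefficients, the differentiation is legitimate, and the chain rule is what produces the transpose $R^*_{jm}=R_{mj}$ in the final identity. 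What your approach buys is conceptual economy: it never touches the matrix relation $A(R^*\bx)=R^*A(\bx)R$, it explains structurally why $R^*$ rather than $R$ appears (it is the Jacobian of $\bk\mapsto R\bk$ transposed by the pairing), and it generalizes verbatim to any orthogonal symmetry and to higher-order $\bk$-derivatives. What the paper's computation buys is self-containedness at the level of the objects already in play (the coefficient matrices $a_{jl}$), at the cost of heavier index bookkeeping. Both are complete proofs; yours is arguably the cleaner one.
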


\begin{proof}[Proof of Lemma \ref{RcomA}] For all differentiable $f$,
\begin{align}
\left(\mathcal{R}\circ\mathscr{A}_j\right)[f](\bx)
&=\ \mathcal{R}\left[a_{jl}(\by)\frac{1}{i}\partial_{y_l}f+\frac{1}{i}\partial_{y_l}a_{lj}(\by)f\right](\bx)\nn\\
&=\ a_{jl}(R^*\bx)\mathcal{R}\left[\frac{1}{i}\partial_{y_l}f\right]+\mathcal{R}\left[\frac{1}{i}\partial_{y_l}a_{lj}(\by)f(\by)\right](\bx) .
\label{RcomA-step1}
\end{align}
Recall that by part 2 of Theorem \ref{invariance} and part (c) of Lemma \ref{PCRnabla} we have $A(R^*\bx)=R^*A(\bx)R$ and $\mathcal{R}\nabla f=R^*\nabla\mathcal{R}f$. Therefore, continuing from \eqref{RcomA-step1}, we have
{\footnotesize{
\begin{align}
\left(\mathcal{R}\circ\mathscr{A}_j\right)[f](\bx) &= \
R^*_{js}a_{sm}(R^*\bx)R_{ml}\cdot R_{ln}^*\frac{1}{i}\partial_{x_n}\mathcal{R}\left[f\right]+R_{ln}^*\frac{1}{i}\partial_{x_n}a_{lj}(R^*\bx)\mathcal{R}\left[f\right](\bx)\nn\\
&=\ R^*_{js}a_{sm}(\bx) \delta_{mn}\  \frac{1}{i}\partial_{x_n}\mathcal{R}\left[f\right]+R_{ln}^*\frac{1}{i}\partial_{x_n} \ R^*_{lm}a_{ms}(\bx)R_{sj}\ \mathcal{R}\left[f\right](\bx)\nn\\
&=\ R^*_{js}a_{sm}(\bx) \delta_{mn}\  \frac{1}{i}\partial_{x_n}\mathcal{R}\left[f\right]+\delta_{nm}\ \frac{1}{i}\partial_{x_n} \ a_{ms}(\bx)R_{sj}\ \mathcal{R}\left[f\right](\bx)\nn\\
&=\ R^*_{js}\left[a_{sm}(\bx)\  \frac{1}{i}\partial_{x_m}\mathcal{R}\left[f\right]+ \frac{1}{i}\partial_{x_m} \ a_{ms}(\bx)\ \mathcal{R}\left[f\right] \right]\nn\\
&=\ R^*_{js}\left[a_{sm}(\bx)\  \frac{1}{i}\partial_{x_m}+ \frac{1}{i}\partial_{x_m} \ a_{ms}(\bx) \right]\ \mathcal{R}\left[f\right]
\ =\ R^*_{js}\ \mathscr{A}_s\mathcal{R}\left[f\right] .
\label{RcomA-step2}\end{align}
}}
This completes the proof of Lemma \ref{RcomA}.
\end{proof}

\section{Dirac points}\label{dirac-pts}

In this section we review the notion of {\it Dirac points}, conical points at the intersections of two dispersion surfaces.
We then provide sufficient conditions on the multiplicity of the $L^2_{\bK_\star}-$ eigenvalues and certain non-degeneracies for the existence of Dirac points for the operator  $\LA=-\nabla\cdot A\nabla$. We use this result in the Section \ref{low-dp} to prove the existence of Dirac points for $\LA$, under the conditions that $A(\bx)$ is smooth, Hermitian and $\Lambda_h-$ periodic, and that $[ \mathcal{PC},\LA] = [\mathcal{R},\LA]=0$.

\begin{definition}[Dirac points]\label{dirac_pt_defn}%
The quasi-momentum / energy pair $(\bK_\star,E_D)\in\B_h\times\R$ is called a {\it Dirac point} if there exists $b_\star\ge1$  and Floquet-Bloch eigenpairs mappings:
\[ \bk\mapsto (\Phi_{b_\star}(\bx;\bk),E_{b_\star}(\bk))\ \ {\rm and}\ \   \bk\mapsto (\Phi_{b_\star+1}(\bx;\bk),E_{b_\star+1}(\bk)),\]
such that:
\begin{enumerate}
\item $E_D=E_{b_\star}(\bK_\star)=E_{b_\star+1}(\bK_\star)$ is a two-fold degenerate $L^2_{\bK_\star}-$ eigenvalue of $\mathcal{L}$.
\item There exist functions
\[ \Phi_1(\bx) \in L^2_{\bK_\star,\tau},\qquad \Phi_2(\bx) = \left(\mathcal{P}\mathcal{C}\right)[\Phi_1](\bx)=\overline{\Phi_1(-\bx)}\in L^2_{\bK_\star,\bar\tau},\ \]
(see \eqref{L2_K_sigma}) and $\inner{\Phi_a, \Phi_b}_{L^2_{\bK_\star}(\Omega_h)} = \delta_{ab}$, $a,b=1,2$.
such that
\[ \textrm{Nullspace}(\mathcal{L}-E_D I) =
{\rm span}\{ \Phi_1(\bx) , \Phi_2(\bx)\}.\]
\item  Let $E_-(\bk)=E_{b_*}(\bk)$ and $E_+(\bk)=E_{b_*+1}(\bk)$.
There exist constants ${\upsilon_F} >0$ and $\zeta_0>0$, and Lipschitz continuous functions $e_\pm(\bk)$, defined for $|\bk-\bK_\star|<\zeta_0$, such that $\abs{e_\pm(\bk)|\le C|\bk-\bK_\star}$, and satisfying
\footnote{In  condensed matter physics  $v_{_F}$ is known as  the ``Fermi velocity''.}
\begin{equation}
\label{cones}
\begin{split}
E_+(\bk)-E_D\ &=\ + {\upsilon_F}\
\left| \bk-\bK_\star \right|\
\left( 1\ +\ e_+(\bk) \right),\\
E_-(\bk)-E_D\ &=\ - {\upsilon_F}\
\left| \bk-\bK_\star \right|\
\left( 1\ +\ e_-(\bk) \right) .
\end{split}
\end{equation}
\end{enumerate}
\end{definition}

\subsection{Multiplicity two $L^2_{\bK_\star}-$ eigenvalues of $\mathcal{L}^A$ and conical singularities}

The following result, the analogue of Theorem 4.1 of \cites{FW:12} for Schr\"odinger operators, provides sufficient conditions for the existence of Dirac points for honeycomb-structured media defined by $\mathcal{L}^A$.

\begin{theorem}\label{prop_conical}
Let $\mathcal{L}^A=-\nabla\cdot A\nabla$, where $A(\bx)$ defines a honeycomb structured medium.
Fix $\bK_\star$, a vertex of $\mathcal{B}_h$.
Assume  that
\begin{enumerate}[label=(\subscript{A}{\arabic*})]
\item $\mathcal{L}^A$ has an $L^2_{\bK_\star,\tau}-$ eigenvalue, $ E_D$, of multiplicity one, with corresponding eigenvector $\Phi_1(\bx)$, normalized to have $L^2(\Omega_h)$ norm equal to one.
\item $\mathcal{L}^A$ has an $L^2_{\bK_\star,\bar\tau}-$ eigenvalue, $ E_D$, of multiplicity one, with corresponding eigenvector $\Phi_2(\bx)=\overline{\Phi_1(-\bx)}$.
\item $ E_D$ is not an $L^2_{\bK_\star,1}-$ eigenvalue of $\mathcal{L}^A$.
\item the following non-degeneracy condition holds:
\begin{equation}\label{lamsharp_def}
\upsilon_{_F} =\frac12\left|
\overline{\inner{\Phi_1, \mathscr{A}\Phi_2}}\ \cdot\ \begin{pmatrix}1\\i\end{pmatrix}\right|>0,
\end{equation}
where $\mathscr{A}$ is defined in \eqref{A_def}.
\end{enumerate}

Then, we have the following:
\begin{enumerate}
\item $(\bK_\star, E_D)$ is a Dirac point in the sense of Definition \ref{dirac_pt_defn}.\medskip

\item
Assume further that $A(\bx)$ defines  a real-valued honeycomb structured media.
 Then,
 \begin{itemize}
 \item[(i)]\ $(E,\Phi(\bx))$ is an $L^2_\bK$-  eigenpair of $\mathcal{L}^A$ if and only if $\left(E,(\mathcal{C}\Phi)(\bx)\right)$ is an $L^2_{\bK^\prime}$- eigenpair  of $\mathcal{L}^A$; here $\bK^\prime=-\bK$.
\item[(ii)]\ $(\bK,E_D)$ is a Dirac point with associated normalized modes $\Phi_1^\bK\in L^2_{\bK,\tau}$ and
 $\Phi_2^\bK=\left(\mathcal{C}\circ\mathcal{P}\right) \Phi_1\in L^2_{\bK,\overline\tau}$, if and only if
 $(\bK^\prime,E_D)$ is a Dirac point with associated  normalized modes $\Phi_1^{\bK^\prime}= \mathcal{C}\Phi_2^\bK \in L^2_{\bK^\prime,\tau}$ and
 $\Phi_2^{\bK^\prime}=\mathcal{C} \Phi_1^\bK\in L^2_{\bK^\prime,\overline\tau}$\ .
%
\item[(iii)]\ Since $\mathcal{R}$ commutes with $\mathcal{L}^A$, we have that $(E,\Phi(\bx;R^j\bK))$,\  are  $L^2_\bK-$ eigenpairs and $(E,\Phi(\bx;R^j\bK^\prime))$ are $L^2_{\bK^\prime}-$ eigenpairs for $j=0,1,2$.
\item[(iv)] \ ${\upsilon_F}^{\bK}={\upsilon_F}^{\bKp}$.
\end{itemize}
\end{enumerate}
\end{theorem}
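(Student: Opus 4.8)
The plan is to obtain Part 1 by a Lyapunov--Schmidt reduction of the Floquet--Bloch problem near $\bK_\star$, and to obtain Part 2 from symmetry bookkeeping once one notes that a real-valued $A(\bx)$ forces $[\mathcal{C},\LA]=0$. For Part 1, write $\bk=\bK_\star+\bkappa$ and gauge away the $\bkappa$-dependence of the boundary condition by the substitution $\Phi=e^{i\bkappa\cdot\bx}\widetilde\Phi$, $\widetilde\Phi\in L^2_{\bK_\star}$, so that the problem becomes finding eigenvalues near $E_D$ of the self-adjoint operator $\LA+\bkappa\cdot\mathscr{A}+\bkappa^{T}A\bkappa$ on the fixed space $L^2_{\bK_\star}$, with $\mathscr{A}$ as in \eqref{A_def}. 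First I would show that the $E_D$-eigenspace of $\LA$ on $L^2_{\bK_\star}$ is exactly $\mathrm{span}\{\Phi_1,\Phi_2\}$: since $[\mathcal{R},\LA]=0$, this eigenspace is $\mathcal{R}$-invariant and splits along the orthogonal decomposition \eqref{L2_K_sigma}, where assumptions $(A_1)$--$(A_3)$ assign dimensions $1$, $1$, $0$ to the $\tau$-, $\bar\tau$- and $1$-summands. Together with $\Phi_2=\mathcal{PC}[\Phi_1]$ this already gives conditions (1)--(2) of Definition \ref{dirac_pt_defn}.

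The conical structure, condition (3), is the heart of the argument. Projecting onto $N=\mathrm{span}\{\Phi_1,\Phi_2\}$ and its orthogonal complement, I would solve the complement equation (where $\LA-E_D$ is boundedly invertible) and reinsert to obtain a $2\times2$ reduced matrix whose leading part is $M_{ab}(\bkappa)=\langle\Phi_a,\bkappa\cdot\mathscr{A}\Phi_b\rangle$. To compute $M$ I would rewrite Lemma \ref{RcomA} in the eigenbasis \eqref{R_evals_evecs}: $\zeta\cdot\mathscr{A}$ intertwines $\mathcal{R}$ with a factor $\tau$ and $\bar\zeta\cdot\mathscr{A}$ with a factor $\bar\tau$, so these operators shift the $\mathcal{R}$-character of a function by $\tau$ and $\bar\tau$ respectively. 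Orthogonality of the summands in \eqref{L2_K_sigma} then forces the diagonal entries of $M$ to vanish and forces $\langle\Phi_1,\mathscr{A}\Phi_2\rangle$ to be a scalar multiple of $(1,i)^{T}$. Hence $M(\bkappa)$ is off-diagonal with $M_{12}=c\,(\kappa_1+i\kappa_2)$, $M_{21}=\overline{M_{12}}$, and $|c|=\upsilon_F$ as defined in \eqref{lamsharp_def}; its eigenvalues are $\pm\upsilon_F|\bkappa|$. Carrying the $O(|\bkappa|^2)$ and off-$N$ contributions through the reduced equation yields \eqref{cones} with Lipschitz remainders $e_\pm(\bk)=O(|\bkappa|)$, and the non-degeneracy hypothesis $(A_4)$, namely $\upsilon_F>0$, makes the crossing a genuine cone.

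For Part 2, reality of $A$ gives $\overline A=A$, hence $[\mathcal{C},\LA]=0$, and combined with $\mathcal{PC}$-invariance also $[\mathcal{P},\LA]=0$. Item (i) is then immediate: $\mathcal{C}$ sends $L^2_{\bK}$ onto $L^2_{-\bK}=L^2_{\bKp}$ and preserves the real eigenvalue $E$. For (ii) I would apply $\mathcal{C}$ to the Dirac data at $\bK$ and track $\mathcal{R}$-characters using $[\mathcal{R},\mathcal{C}]=0$ (so $\mathcal{C}:L^2_{\bK,\sigma}\to L^2_{\bKp,\bar\sigma}$), together with $[\mathcal{R},\mathcal{P}]=0$ and $\mathcal{C}^2=I$; this places $\Phi_1^{\bKp}=\mathcal{C}\Phi_2^{\bK}$ in $L^2_{\bKp,\tau}$ and $\Phi_2^{\bKp}=\mathcal{C}\Phi_1^{\bK}$ in $L^2_{\bKp,\bar\tau}$, and the same relations verify $\Phi_2^{\bKp}=\mathcal{PC}[\Phi_1^{\bKp}]$. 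Item (iii) follows from $[\mathcal{R},\LA]=0$ alone, since $\mathcal{R}$ carries an $L^2_{\bK}$-eigenpair to an $L^2_{R\bK}$-eigenpair of the same energy, propagating the Dirac data over the $R$-orbit of vertices (Proposition \ref{special_k_points}). For (iv) I would use the antilinear identity $\mathscr{A}\,\overline f=-\overline{\mathscr{A}f}$, valid for real $A$, which gives $\langle\Phi_1^{\bKp},\mathscr{A}\Phi_2^{\bKp}\rangle=-\overline{\langle\Phi_2^{\bK},\mathscr{A}\Phi_1^{\bK}\rangle}$; combined with the self-adjointness of each $\mathscr{A}_j$ (so $\overline{\langle\Phi_1,\mathscr{A}\Phi_2\rangle}=\langle\Phi_2,\mathscr{A}\Phi_1\rangle$) this yields $\upsilon_F^{\bKp}=\upsilon_F^{\bK}$.

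The main obstacle is the analytic control in the Lyapunov--Schmidt step of Part 1: uniform invertibility of $\LA-E_D$ on $N^{\perp}$ for small $\bkappa$ and the Lipschitz estimates that pin the remainders to $e_\pm=O(|\bkappa|)$ rather than merely $o(1)$. This parallels Theorem 4.1 of \cites{FW:12}, so I expect the technical machinery to transfer, and the genuinely new computation is the symmetry-enforced off-diagonal form of the reduced matrix $M$; everything in Part 2 is then essentially algebra built on the commutation relations among $\mathcal{C}$, $\mathcal{P}$, $\mathcal{R}$ and the intertwining identity of Lemma \ref{RcomA}.
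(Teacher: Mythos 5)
Your proposal is correct and follows essentially the same route as the paper: a Lyapunov--Schmidt reduction near $\bK_\star$ whose reduced $2\times2$ matrix is forced into off-diagonal form $\upsilon_F(\kappa^{(1)}\pm i\kappa^{(2)})$ by the intertwining relation $\mathcal{R}\mathscr{A}=R^*\mathscr{A}\mathcal{R}$ of Lemma \ref{RcomA} together with the spectral decomposition of $R$, followed by the determinant condition yielding \eqref{cones}; Part 2 is the same $\mathcal{C}$-, $\mathcal{P}$-, $\mathcal{R}$-bookkeeping the paper leaves as ``straightforward.'' The only cosmetic difference is that you gauge onto the fixed space $L^2_{\bK_\star}$ rather than onto $L^2(\R^2/\Lambda_h)$, which is the same reduction conjugated by $e^{i\bK_\star\cdot\bx}$.
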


\begin{remark}\label{rem_upsilon_positive} The choice of normalized eigenfunctions $\{\Phi_1,\Phi_2=\pc\Phi_1\}$ in Theorem \ref{prop_conical} is unique once the phase of $\Phi_1$ is specified.  However, ${\upsilon_F}$ is independent of the choices of phase. We may use these phase degrees of freedom to select $\Phi_1\in L^2_{\bK,\tau}$ and $\Phi_2=\pc\Phi_1\in L^2_{\bK,\bar\tau}$ such that
\begin{equation}
{\upsilon_F}=\frac12
\overline{\inner{\Phi_1, \mathscr{A}{\Phi}_2}}\ \cdot\ \begin{pmatrix}1\\i\end{pmatrix}> 0.\label{lamsharp_pos}
\end{equation}
Indeed, given  $\tilde{\Phi}_1\in L^2_{\bK_\star,\tau}$
 and $\tilde{\Phi}_2=\pc\tilde{\Phi}_1\in L^2_{\bK_\star,\bar\tau}$ satisfying the properties of Theorem \ref{prop_conical}, define $\lamsharp=\frac12
\overline{\inner{\tilde{\Phi}_1, \mathscr{A}{\mathcal{PC}\tilde{\Phi}}_1}}\ \cdot\ \begin{pmatrix}1\\i\end{pmatrix}$ and $\lamsharp=|\lamsharp|e^{i\arg\lambda_\sharp}$.
Set $\Phi_1=e^{-\frac{1}{2}i\arg\lambda_\sharp}\tilde{\Phi}_1$ and $\Phi_2=\left(\mathcal{C}\mathcal{P}\right)\Phi_1$. Then,
$\frac12\
\overline{\inner{\Phi_1, \mathscr{A}\Phi_2}}\ \cdot\ \begin{pmatrix}1\\i\end{pmatrix}=e^{-i\arg\lambda_\sharp}\ \frac12\ \overline{\inner{\tilde{\Phi}_1, \mathscr{A}\tilde{\Phi}_2}}\cdot \begin{pmatrix}1\\i\end{pmatrix}\
=|\lamsharp|>0$.
Henceforth, we shall assume this choice of  $\Phi_1$ and $\Phi_2$.
\end{remark}

\begin{remark}\label{dp-dependence}
In the statement of Theorem \ref{prop_conical}, the mappings $\bk\mapsto E_\pm$, $\bk\mapsto\Phi_\pm(\bx;\bk)$
 as well as ${\upsilon_F}$, depend on $\bK_\star$, but this dependence has been suppressed.
   We shall at times make this dependence explicit by
 writing $E^{\bK_\star}_\pm$, $\Phi^{\bK_\star}_\pm(\bx;\bk)$, and ${\upsilon_F}^{\bK_\star}$.
\end{remark}

\begin{remark}\label{dp-nonreal}
Theorem \ref{prop_conical} states that if $A(\bx)$ is real-valued honeycomb structured media, then the $E_D^{\bK}=E_D^{\bK'}$ and the associated eigenfunctions are related by symmetry.
For general complex-valued honeycomb structured media, this relation between the eigenspaces
 of Dirac points for $\bK$ and $\bK'$ does not hold.  A numerical illustration is  given in Appendix \ref{nonreal}.
\end{remark}

\begin{remark}[Wave-packets dynamics for data spectrally localized near Dirac points]\label{wavepkts}
The conical behavior of dispersion surfaces in a neighborhood of Dirac points suggests that the dynamics of wave-packets, which are initially spectrally localized near a Dirac point, are governed by a system of Dirac equations.
Specifically, consider the time-dependent 2D wave equation to which Maxwell's equations reduces (see Appendix \ref{maxwell})
\begin{equation}
\partial_t^2\psi+\LA\psi\ =\ \left(\ \partial_t^2-\nabla\cdot A\nabla\ \right) \psi\ =\ 0.
\label{LA-wave}
\end{equation}
Using the approach of \cites{FW:14} we may construct solutions for initial conditions which are spectrally localized in a neighborhood of Dirac point $(\bK_\star,E_D)$. To leading order, such solutions have a two-scale structure
\begin{equation}
\psi(\bx,t)\approx e^{-i\sqrt{E_D}\, t}\ \delta\ \left(\ \alpha_1(\bb \delta\bx,\delta t)\Phi_1(\bx)+\alpha_2(\delta\bx,\delta t)\Phi_2(\bx)\ \right).
\label{2scale-sol}\end{equation}
The expression in \eqref{2scale-sol} is of order $\delta^0$ in $H^s(\R^2),\ s\ge0$ and the corrector can be shown to be $o(1)$ in $H^s(\R^2)$
on large but finite time scales: $t\lesssim \mathcal{O}(\delta^{-2+})$. Here,
 $\alpha(\bX,T)=\left(\alpha_1(\bX,T),\alpha_2(\bX,T)\right)^{\rm tr}$, $\bX=\delta\bx=(X_1,X_2)$ and $T=\delta t$,   satisfies 2-dimensional system of massless Dirac equations
\begin{equation}
i\ 2\sqrt{E_D}\ \partial_T\begin{pmatrix}\alpha_1\\ \alpha_2\end{pmatrix}
 =\ \upsilon_{_F}\begin{pmatrix}0&i\partial_{X_1}-\partial_{X_2}\\ i\partial_{X_1}+\partial_{X_2}&0\end{pmatrix}\ \begin{pmatrix}\alpha_1\\ \alpha_2\end{pmatrix}\ .
\end{equation}

This is in contrast to the dynamics of wave-packets, which are spectral concentrated near a spectral band edge which borders a spectral gap. Consider the case where the energy at the spectral band edge is a simple Floquet-Bloch eigenvalue. In this case, wave packet initial conditions which are spectral localized near the edge energy have envelope dynamics governed by an effective Schr\"odinger equation with non-zero effective mass; see, for example,
\cites{Ashcroft-Mermin:76,Allaire-Piatnitski:05,hoefer-weinstein:11,APR:11,KMOW:18} and \cite{BS:99,BS:01,BS:03,Suslina:04,BS:06,KP:01,KP:07}.
A different asymptotic limit is the  geometrical optics / semi-classical regime, considered for Maxwell's equations, for example, in
 \cite{De_Nittis-Lein:14b,De_Nittis-Lein:17b}.
 \end{remark}
\medskip

\begin{proof}[Proof of Theorem \ref{prop_conical}]
The proof of part 2 of the theorem is straightforward.
By part 2, in order to prove part 1,  it suffices to prove all assertions for the vertex $\bK_\star=\bK$.

The proof follows a Lyapunov-Schmidt reduction strategy; see also \cites{FW:12}.
Let $\bK+\bkappa$, $|\bkappa|$ small, denote a quasi-momentum in a neighborhood of $\bK$. We seek a non-trivial solution $(\phi,E)$ of the Floquet-Bloch eigenvalue problem:
\begin{equation}\label{k_evp}
\begin{split}
\mathcal{L}^A(\bK+\bkappa)\phi(\bx) &= E\phi(\bx), \quad \bx \inr, \\
\phi(\bx+\bv) &=\phi(\bx), \quad \bx\in\R^2,\quad \bv\in \Lambda_h,
\end{split}
\end{equation}
where
$\mathcal{L}^A(\bk)=-(\nabla+i\bk)\cdot A(\bx)(\nabla+i\bk)$;
 see \eqref{Eq_Eigen_k}-\eqref{Eq_Eigen_k_op}.

For $\bkappa$ small, we perturbatively construct solutions of the Floquet-Bloch eigenvalue problem \eqref{k_evp}.
 Expanding $\mathcal{L}^A(\bK+\bkappa)$, we have
\begin{equation*}
\mathcal{L}^A(\bK+\bkappa)=\mathcal{L}^A(\bK)+\bkappa\cdot \mathscr{A}(\bK)+\bkappa^T A(\bx)\bkappa ,
\end{equation*}
where $\mathscr{A}$ is defined in \eqref{A_def} and
\begin{align}
\mathscr{A}(\bk)&= e^{-i\bk\cdot\bx} \mathscr{A} e^{i\bk\cdot\bx} = A(\bx) \frac{1}{i}(\nabla_\bx+i\bk) + \frac{1}{i}(\nabla_\bx+i\bk)\cdot A(\bx) . \label{Ak_def}
\end{align}

Let $ E^{(0)} = E_D$ denote the hypothesized double eigenvalue, and
\begin{equation}
 \label{phi1_exp}
 \phi^{(0)}(\bx)=\alpha \phi_1(\bx)+\beta\phi_2(\bx) ,
\end{equation}
with $\alpha$ and $\beta$ to be determined.
Introduce the orthogonal projections: $P_\parallel$, onto $\textrm{span}\{\phi_1,\phi_2\}$:
\[P_\parallel f\ \equiv\  \ \inner{\phi_1,f}\phi_1(\bx)\ +\ \inner{\phi_2,f}\phi_2(\bx) ,
\] and $P_\perp=I-P_\parallel$.
We seek $ E(\bK+\bkappa)$ and $\phi(\bx;\bK+\bkappa)$ in the form:
\begin{equation}\label{Eq_Expan2}
 E(\bK+\bkappa)= E_D+ E^{(1)},\quad \phi(\bx;\bK+\bkappa)=\phi^{(0)}+\phi^{(1)},\ \ P_\parallel\phi^{(1)}=0.
\end{equation}
Substituting \eqref{Eq_Expan2} into the eigenvalue problem \eqref{k_evp} we obtain
\begin{equation}
\label{eqn_for_phi1}
(\mathcal{L}^A(\bK)- E_D)\phi^{(1)}=\left(-\bkappa\cdot \mathscr{A}(\bK)-\bkappa^T A \bkappa + E^{(1)}\right)(\phi^{(0)}+\phi^{(1)}), \quad \phi^{(1)}\in L^2(\R^2/\Lambda_h).
\end{equation}

Equation \eqref{eqn_for_phi1} may equivalently expressed as the system:
\begin{eqnarray}
(\mathcal{L}^A(\bK)- E_D)\phi^{(1)}=P_\perp(-\bkappa\cdot \mathscr{A}(\bK)-\bkappa^T A \bkappa+ E^{(1)})(\phi^{(1)}+\phi^{(0)}),\label{Eq_Sol1}\\
0=P_\parallel(-\bkappa\cdot \mathscr{A}(\bK)-\bkappa^T A \bkappa+ E^{(1)})(\phi^{(1)}+\phi^{(0)})\label{Eq_Sol2}.
\end{eqnarray}

We first solve \eqref{Eq_Sol1}, for small $\bkappa$ and bounded $E^{(1)}$, to obtain $\phi^{(1)}$, which is linear in $\alpha$, $\beta$, with smooth dependence on $\bkappa$ and $E^{(1)}$.  We then substitute $\phi^{(1)}[\alpha,\beta,\bkappa,E^{(1)}]$ into \eqref{Eq_Sol2} to obtain a homogeneous linear system for $\alpha, \beta$, depending on $E^{(1)}$ and $\bkappa$. The solvability condition of this system specifies the dependence of $E^{(1)}$ on $\bkappa$.

By the elliptic theory, $R_{_\bK}(E_D)=(\mathcal{L}^A(\bK)- E_D)^{-1}$ is defined as a bounded map from $P_\perp L^2(\R^2/\Lambda_h)$ to $P_\perp H^2(\R^2/\Lambda_h)$.
In addition, the mapping
\begin{align}
f\mapsto \Xi(\bkappa,E^{(1)}) f &\equiv R_{_\bK}(E_D)\ P_\perp\left(-\bkappa\cdot \mathscr{A}(\bK)-\bkappa^T A \bkappa+ E^{(1)}\right) f\nn
\end{align}
is a bounded operator on $H^s(\R^2/\Lambda_h)$, for any $s\geq0$.
Furthermore, for $|\bkappa|+| E^{(1)}|$ sufficiently small, the operator norm of $\Xi=\Xi(\bkappa,E^{(1)}) $ is less than one, and therefore $(I-\Xi)^{-1}$ is defined as a bounded operator on $H^s(\R^2/\Lambda_h)$. Therefore,  \eqref{Eq_Sol1} has a unique solution:
\begin{equation}
\phi^{(1)} =\hat{c}[\bkappa, E^{(1)}]\phi_1\ \alpha\ +\ \hat{c}[\bkappa, E^{(1)}]\phi_2\ \beta\ ,\ P_\perp\phi^{(1)}=\phi^{(1)},
\label{phi1-def}\end{equation}
where the operator $g\mapsto \hat{c}[\bkappa, E^{(1)}]\ g$,  defined by:
\begin{equation}
\hat{c}[\bkappa, E^{(1)}]\ g=\left[I-\Xi(\bkappa,E^{(1)})\right]^{-1}\ \Xi(\bkappa,E^{(1)})\ g ,
\label{c-def}\end{equation}
maps $H^s(\R^2/\Lambda_h)$ to $H^s(\R^2/\Lambda_h)$,\ $s\ge0$.
For $j=1,2$, $(\bkappa, E^{(1)})\mapsto \hat{c}[\bkappa, E^{(1)}]\ \phi_j(\bx)$ are smooth mappings from a neighborhood of $(0,0)$ in $\R^2\times \mathbb C$ into $H^2(\R^2/\Lambda_h)$ satisfying the bound:
\begin{equation}\label{Boundc}
\norm{\hat{c}[ E^{(1)},\bkappa]\ \phi_j}_{H^2(\R^2/\Lambda_h)}\le C\left(|\bkappa|+|E^{(1)}|\right), \quad j=1,2.
\end{equation}
and $P_\parallel \hat{c}[\bkappa, E^{(1)}]\phi_j=0$.

Substituting \eqref{phi1-def} into \eqref{Eq_Sol2}, we obtain a system of two homogeneous linear equations for $\alpha$ and $\beta$:
\begin{equation*}
\mathcal M( E^{(1)}, \bkappa)\begin{pmatrix}\alpha\\ \beta\end{pmatrix}=0 ,
\end{equation*}
where $\mathcal M( E^{(1)}, \bkappa)$ is a $2\times2$ matrix:
\begin{equation}\label{Mek}
\mathcal M( E^{(1)}, \bkappa)= E^{(1)}I_{_{2\times2}}-\mathcal{M}_{\mathscr{A}}(\bkappa)-\mathcal{M}_R[ E^{(1)},\bkappa] .
\end{equation}

Recall that $\phi_j=e^{-i\bK\cdot\bx}\Phi_j$, $j=1,2$ and define
\begin{equation}
\mathcal{\hat{C}}[\bkappa,E^{(1)}] g\ \equiv\ e^{i\bK\cdot\bx}\ \hat{c}[\bkappa,E^{(1)}]\ e^{-i\bK\cdot\bx}\ g .
\end{equation}
Hence, $\inner{\Phi_j,\hat{C}[\bkappa,E^{(1)}] g}=0,\ j=1,2$. Expressed in terms of $\Phi_j,\ j=1,2$, the matrices $\mathcal{M}_{\mathscr{A}}$
and $\mathcal{M}_{R}$ are given by:
\begin{equation}\label{M_A-def}
\mathcal{M}_{\mathscr{A}}(\bkappa)
=\begin{pmatrix}\inner{\Phi_1,\bkappa\cdot \mathscr{A} \Phi_1}&\inner{\Phi_1,\bkappa\cdot \mathscr{A} \Phi_2}\\
\inner{\Phi_2,\bkappa\cdot \mathscr{A} \Phi_1}&\inner{\Phi_2,\bkappa\cdot \mathscr{A} \Phi_2}\end{pmatrix} ,\ \textrm{and}
\end{equation}
\begin{align}\label{M_R-def}
&\mathcal M_R[ E^{(1)},\bkappa] \\
&=
\begin{pmatrix}
\inner{\Phi_1,\left[\left(\bkappa\cdot\mathscr{A}+\bkappa^T A\bkappa\right)\hat{C}+
\bkappa^T A\bkappa\right]\Phi_1} &
\inner{\Phi_1,\left[\left(\bkappa\cdot\mathscr{A}+\bkappa^T A\bkappa\right)\hat{C}+
\bkappa^T A\bkappa\right]\Phi_2} \\
\inner{\Phi_2,\left[\left(\bkappa\cdot\mathscr{A}+\bkappa^T A\bkappa\right)\hat{C}+
\bkappa^T A\bkappa\right]\Phi_1}
&
\inner{\Phi_2,\left[\left(\bkappa\cdot\mathscr{A}+\bkappa^T A\bkappa\right)\hat{C}+
\bkappa^T A\bkappa\right]\Phi_2}
 \end{pmatrix},
\nn\end{align}
where $\hat{C}=\hat{C}[\bkappa,E^{(1)}]$. Furthermore,
by \eqref{Boundc},
\begin{equation} \mathcal M_{R,lm}[ E^{(1)},\bkappa] = \mathcal{O}\left(|\bkappa|\left(|\bkappa|+E^{(1)}\right)\right),\ l,m=1,2.\label{M_R-est}\end{equation}

We have therefore obtained the following result which characterizes the dispersion surfaces in a neighborhood of Dirac points:
\begin{proposition}\label{det0} For $|\bkappa|$ sufficiently small,
$ E= E_D+ E^{(1)}$ is a $L^2_{_{\bK+\bkappa}}-$ eigenvalue problem \eqref{k_evp} if and only if
\begin{equation}\label{Det}
\det\mathcal M( E^{(1)},\bkappa)=0,
\end{equation}
where $M( E^{(1)},\bkappa)$ is given by \eqref{Mek}, \eqref{M_A-def} and \eqref{M_R-def}.
\end{proposition}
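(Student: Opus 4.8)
The plan is to recognize Proposition \ref{det0} as the exact Lyapunov--Schmidt bookkeeping that renders the eigenvalue problem \eqref{k_evp} equivalent to the finite-dimensional system already assembled in \eqref{Eq_Sol1}--\eqref{Eq_Sol2}, and then to convert solvability of that finite system into the determinant condition \eqref{Det}. The only analytic inputs required have been established in the preceding construction: the resolvent $R_\bK(E_D)=(\mathcal{L}^A(\bK)-E_D)^{-1}$ is bounded on $P_\perp L^2(\R^2/\Lambda_h)$, and for $|\bkappa|+|E^{(1)}|$ small the operator $\Xi=\Xi(\bkappa,E^{(1)})$ has norm less than one, so that $(I-\Xi)^{-1}$ exists on $H^s(\R^2/\Lambda_h)$. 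Everything else is linear algebra.

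First I would record that the passage from \eqref{k_evp} to the pair \eqref{Eq_Sol1}--\eqref{Eq_Sol2} is a genuine equivalence rather than merely a consequence of an ansatz. Writing $\mathcal{L}^A(\bK+\bkappa)=\mathcal{L}^A(\bK)+\bkappa\cdot\mathscr{A}(\bK)+\bkappa^T A\bkappa$ and $E=E_D+E^{(1)}$, the eigenvalue equation reads $(\mathcal{L}^A(\bK)-E_D)\phi=(-\bkappa\cdot\mathscr{A}(\bK)-\bkappa^T A\bkappa+E^{(1)})\phi$. Since $\mathcal{L}^A(\bK)$ is self-adjoint with $E_D$ an isolated eigenvalue whose $L^2_\bK$-eigenspace is exactly $\mathrm{span}\{\phi_1,\phi_2\}$ (hypotheses $(A_1)$--$(A_3)$), the orthogonal projections $P_\parallel,P_\perp$ commute with $\mathcal{L}^A(\bK)-E_D$. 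Applying each projection yields exactly \eqref{Eq_Sol1} for $\phi^{(1)}=P_\perp\phi$ and \eqref{Eq_Sol2} for $\phi^{(0)}=P_\parallel\phi$, the left-hand side of the parallel equation vanishing because $\phi^{(0)}$ lies in the kernel of $\mathcal{L}^A(\bK)-E_D$. Thus $\phi$ solves \eqref{k_evp} if and only if $(\phi^{(0)},\phi^{(1)})$ solves the system, with no information lost.

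Next I would solve the perpendicular equation and reduce. For fixed $\phi^{(0)}=\alpha\phi_1+\beta\phi_2$ and small $(\bkappa,E^{(1)})$, equation \eqref{Eq_Sol1} rearranges to $(I-\Xi)\phi^{(1)}=\Xi\,\phi^{(0)}$ after applying $R_\bK(E_D)$, which has the unique solution $\phi^{(1)}=\hat{c}[\bkappa,E^{(1)}](\alpha\phi_1+\beta\phi_2)$ given by \eqref{phi1-def}--\eqref{c-def}; this $\phi^{(1)}$ is linear in $(\alpha,\beta)$ and lies in $P_\perp H^2$. Substituting into the parallel equation \eqref{Eq_Sol2} and pairing the result against $\phi_1$ and $\phi_2$ produces precisely the homogeneous $2\times2$ system $\mathcal{M}(E^{(1)},\bkappa)(\alpha,\beta)^T=0$, with $\mathcal{M}$ as in \eqref{Mek}--\eqref{M_R-def} once re-expressed through $\Phi_j=e^{i\bK\cdot\bx}\phi_j$ and $\hat{C}$. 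A nontrivial $(\alpha,\beta)$ exists if and only if $\det\mathcal{M}(E^{(1)},\bkappa)=0$, which is \eqref{Det}.

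Finally I would close both directions of the biconditional, where the one genuinely delicate point lives. For the reverse implication, any nonzero $(\alpha,\beta)$ in the kernel of $\mathcal{M}$ yields $\phi^{(0)}\neq0$ by linear independence of $\phi_1,\phi_2$, hence $\phi=\phi^{(0)}+\phi^{(1)}\neq0$ (as $\phi^{(0)}=P_\parallel\phi$), and by construction $\phi$ solves \eqref{k_evp}. The forward implication requires ruling out ``purely perpendicular'' eigenfunctions: if $\phi\neq0$ solved \eqref{k_evp} with $\phi^{(0)}=P_\parallel\phi=0$, then \eqref{Eq_Sol1} would collapse to $(I-\Xi)\phi^{(1)}=0$, forcing $\phi^{(1)}=0$ by invertibility of $I-\Xi$, and thus $\phi=0$, a contradiction. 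Therefore $(\alpha,\beta)\neq0$ and $\det\mathcal{M}=0$. I expect this exclusion step to be the only substantive obstacle, and it is exactly where the smallness of $|\bkappa|$ (equivalently $\|\Xi\|<1$) is indispensable; the remainder is bookkeeping already prepared by the estimates \eqref{Boundc} and \eqref{M_R-est}.
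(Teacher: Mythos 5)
Your proof is correct and follows the same Lyapunov--Schmidt reduction as the paper: solve the $P_\perp$ equation by a Neumann series for $|\bkappa|+|E^{(1)}|$ small, substitute into the $P_\parallel$ equation, and read off the determinant condition on the resulting $2\times2$ homogeneous system. Your explicit exclusion of purely perpendicular eigenfunctions via the invertibility of $I-\Xi$ makes the forward direction of the biconditional cleaner than the paper's implicit treatment, but the underlying argument is the same.
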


We shall next exploit  symmetry to obtain simplified expressions for the entries of $\mathcal{M}_{\mathcal{A}}(\bkappa)$.
\begin{proposition}\label{prop_g}
Let $\bK_\star$ denote any vertex of the Brillouin zone, $\mathcal{B}_h$. Let $A(\bx)$ define a honeycomb structured medium; see Section \ref{honey-media}.
 Let  $\Phi_1$ and $\Phi_2$ be as  in the Theorem \ref{prop_conical}, $\mathscr{A}$ be given by \eqref{A_def} and $\bkappa=(\kappa^{(1)},\kappa^{(2)})\in\R^2-\{\bs 0\}$. Then,
\begin{enumerate}
\item $\inner{\Phi_1,\bkappa\cdot \mathscr{A} \Phi_1}
=\inner{\Phi_2,\bkappa\cdot \mathscr{A} \Phi_2}=0$.
\item $\inner{\Phi_1,\bkappa\cdot \mathscr{A} \Phi_2}
=\overline{\inner{\Phi_2,\bkappa\cdot \mathscr{A} \Phi_1}}=\upsilon_{_F}(\kappa^{(1)}+i\kappa^{(2)}),$
where $\upsilon_{_F}$ is given by \eqref{lamsharp_def}.
\end{enumerate}
Thus,
\begin{equation}
\mathcal{M}_{\mathscr{A}}(\bkappa)
=\upsilon_{_F}\ \begin{pmatrix}0 & \kappa^{(1)}+i\kappa^{(2)}\\
\kappa^{(1)}-i\kappa^{(2)}&0\end{pmatrix}.
\label{MA-simp}\end{equation}
\end{proposition}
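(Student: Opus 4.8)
The plan is to convert each matrix entry $\inner{\Phi_a,\mathscr{A}_j\Phi_b}$ into a purely algebraic eigenvector condition for the $2\times2$ rotation matrix $R$, exploiting that $\Phi_1,\Phi_2$ are eigenfunctions of the unitary operator $\mathcal{R}$ together with the commutation relation of Lemma \ref{RcomA}. Concretely, I would introduce for each pair $(a,b)$ the vector $\bfm^{ab}\equiv\left(\inner{\Phi_a,\mathscr{A}_1\Phi_b},\inner{\Phi_a,\mathscr{A}_2\Phi_b}\right)^T\in\C^2$ and record the eigen-relations $\mathcal{R}\Phi_1=\tau\Phi_1$ and $\mathcal{R}\Phi_2=\bar\tau\Phi_2$ coming from $\Phi_1\in L^2_{\bK_\star,\tau}$, $\Phi_2\in L^2_{\bK_\star,\bar\tau}$ (see \eqref{L2_K_sigma}); write $\sigma_1=\tau$, $\sigma_2=\bar\tau$. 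Using unitarity of $\mathcal{R}$ on $L^2_{\bK_\star}$ and the identity $\mathcal{R}\mathscr{A}_j=\sum_l R^*_{jl}\mathscr{A}_l\mathcal{R}$ from Lemma \ref{RcomA}, I compute
\[
\inner{\Phi_a,\mathscr{A}_j\Phi_b}=\inner{\mathcal{R}\Phi_a,\mathcal{R}\mathscr{A}_j\Phi_b}=\overline{\sigma_a}\,\sigma_b\sum_{l}R^*_{jl}\,\inner{\Phi_a,\mathscr{A}_l\Phi_b}.
\]
Reading this in vector form gives $\bfm^{ab}=\overline{\sigma_a}\sigma_b\,R^*\bfm^{ab}$, and applying $R$ (using $RR^*=I$) yields the clean statement $R\bfm^{ab}=\overline{\sigma_a}\sigma_b\,\bfm^{ab}$: every nonzero $\bfm^{ab}$ is an eigenvector of $R$ with eigenvalue $\overline{\sigma_a}\sigma_b$.

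Part 1 then follows from the spectral data of $R$ recorded in \eqref{R_evals_evecs}: the eigenvalues of $R$ are only $\tau$ and $\bar\tau$, and in particular $1$ is \emph{not} an eigenvalue. Since $\overline{\sigma_1}\sigma_1=\overline{\sigma_2}\sigma_2=1$, the vectors $\bfm^{11}$ and $\bfm^{22}$ would have to be eigenvectors of $R$ with eigenvalue $1$, forcing $\bfm^{11}=\bfm^{22}=0$. Contracting with $\bkappa$ gives $\inner{\Phi_1,\bkappa\cdot\mathscr{A}\Phi_1}=\inner{\Phi_2,\bkappa\cdot\mathscr{A}\Phi_2}=0$.

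For the off-diagonal entry $(a,b)=(1,2)$, the eigenvalue is $\overline{\sigma_1}\sigma_2=\bar\tau^2=\tau$ (using $\tau^3=1$), whose eigenspace is spanned by $\zeta=\tfrac{1}{\sqrt2}(1,i)^T$ by \eqref{R_evals_evecs}. Hence $\bfm^{12}=c\,(1,i)^T$ for some $c\in\C$, so that $\inner{\Phi_1,\bkappa\cdot\mathscr{A}\Phi_2}=c(\kappa^{(1)}+i\kappa^{(2)})$. To identify $c$ I would substitute $\inner{\Phi_1,\mathscr{A}\Phi_2}=c(1,i)^T$ into the definition \eqref{lamsharp_def}, obtaining $\overline{\inner{\Phi_1,\mathscr{A}\Phi_2}}\cdot(1,i)^T=2\bar c$ and therefore $\upsilon_{_F}=|c|$; the phase normalization fixed in Remark \ref{rem_upsilon_positive} makes $\bar c=\upsilon_{_F}>0$ real, so $c=\upsilon_{_F}$. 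Finally, for the $(2,1)$ entry I would note that each $\mathscr{A}_j$ is formally self-adjoint on $L^2_{\bK_\star}$ — a one-line integration by parts using the Hermiticity $a_{lj}=\overline{a_{jl}}$ — whence $\inner{\Phi_2,\bkappa\cdot\mathscr{A}\Phi_1}=\overline{\inner{\Phi_1,\bkappa\cdot\mathscr{A}\Phi_2}}=\upsilon_{_F}(\kappa^{(1)}-i\kappa^{(2)})$. Assembling the four entries into \eqref{M_A-def} gives \eqref{MA-simp}.

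I expect the only genuinely delicate point to be upgrading $|c|=\upsilon_{_F}$ to the sharp equality $c=\upsilon_{_F}$, which rests entirely on the phase convention of Remark \ref{rem_upsilon_positive} and must be invoked explicitly; everything else is the rigid eigenvalue dichotomy for $R$ (eigenvalues $\tau,\bar\tau$, with $1$ excluded) combined with the correct identification $\bar\tau^2=\tau$. A secondary point worth stating carefully is the self-adjointness of the vector operator $\mathscr{A}$, since it is what links the $(1,2)$ and $(2,1)$ entries by complex conjugation.
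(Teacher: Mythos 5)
Your proposal is correct and follows essentially the same route as the paper's proof: both derive the eigenvector relation $R\,\inner{\Phi_a,\mathscr{A}\Phi_b}=\overline{\sigma_a}\sigma_b\,\inner{\Phi_a,\mathscr{A}\Phi_b}$ from the unitarity of $\mathcal{R}$ and Lemma \ref{RcomA}, kill the diagonal entries because $1$ is not an eigenvalue of $R$, identify $\inner{\Phi_1,\mathscr{A}\Phi_2}$ as a multiple of $(1,i)^T$ via $\bar\tau^2=\tau$ and the normalization of Remark \ref{rem_upsilon_positive}, and obtain the $(2,1)$ entry by self-adjointness of $\mathscr{A}$. Your explicit attention to the phase convention fixing $c=\upsilon_{_F}$ is a slightly more careful rendering of a step the paper handles by citing hypothesis $(A_4)$ and Remark \ref{rem_upsilon_positive}, but it is not a different argument.
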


\begin{proof}[Proof of Proposition \ref{prop_g}]
Without loss of generality, let $\bK_\star=\bK$.
 Since $\Phi_1\in L^2_{\bK,\tau}$ and $\Phi_2\in L^2_{\bK,\overline{\tau}}$, we have $\mathcal{R}\Phi_1=\tau\Phi_1$ and $\mathcal{R}\Phi_2=\overline{\tau}\Phi_2$.
Let $\nu_1=\tau$ and $\nu_2=\overline{\tau}$.
 For $l,m=1,2$ consider the column vector $\inner{\Phi_l,\mathscr{A}\Phi_m}$.
Applying  Lemma \ref{RcomA} we have
\begin{equation*}
\inner{\Phi_l,\mathscr{A}\Phi_m}=\inner{\mathcal R \Phi_l, \mathcal R \mathscr{A} \Phi_m}=\inner{\mathcal R \Phi_l, R^*\mathscr{A} \mathcal R \Phi_m}=\overline{\nu_l}\nu_m R^*\inner{\Phi_l,\mathscr{A} \Phi_m},\ l,m=1,2\ .
\end{equation*}
Hence, $R \inner{\Phi_l,\mathscr{A}\Phi_m} = \overline{\nu_l}\nu_m\ \inner{\Phi_l,\mathscr{A}\Phi_m}$.

If $l=m$, then $\overline{\nu_l}\nu_m=|\nu_m|^2=1$. Therefore, $R\inner{\Phi_m,\mathscr{A}\Phi_m}_{L^2_\bK}=\inner{\Phi_m,\mathscr{A}\Phi_m}$. Since $1$ is not an eigenvalue of  $R$, it follows that
\[\inner{\Phi_1,\mathscr{A}\Phi_1}_{L^2_\bK}=\inner{\Phi_2,\mathscr{A}\Phi_2}_{L^2_\bK}=0.\]
 If $l=1$ and $m=2$, then $R\inner{\Phi_1,\mathscr{A}\Phi_2}_{L^2_\bK}=\tau \inner{\Phi_1,\mathscr{A}\Phi_2}_{L^2_\bK}$. By \eqref{R_evals_evecs},
 \begin{equation}
\label{lambda_inner_prod}
\inner{\Phi_1, \mathscr{A}\Phi_2}_{L^2_\bK}=\upsilon_{_F}\begin{pmatrix}1\\i\end{pmatrix}.
\end{equation}
Note, by hypothesis $(A4)$ and Remark \ref{rem_upsilon_positive}, we may take $\upsilon_{_F}>0$.
 Finally, if $l=2$ and $m=1$
\begin{equation*}
  \inner{\Phi_2,\mathscr{A}\Phi_1}=\inner{\mathscr{A}\Phi_2,\Phi_1}=\overline{\inner{\Phi_1,\mathscr{A}\Phi_2}}
  ={\upsilon_{_F}}\begin{pmatrix}1\\-i\end{pmatrix} .
\end{equation*}
This completes the proof of Proposition \ref{prop_g}.
\end{proof}

It follows that the eigenvalue condition  $\det\mathcal{M}( E^{(1)},\bkappa) =0$ (\eqref{Det}) is of the form:
$\left( E^{(1)}\right)^2={\upsilon_F}^2|\bkappa|^2+g_{21}(E^{(1)},\bkappa)+g_{12}(E^{(1)},\bkappa)+g_{03}(\bkappa),$
where the functions $g_{rs}$ are smooth and satisfy the bounds
$
|g_{rs}(E^{(1)},\bkappa)|\le C |E^{(1)}|^r\ |\bkappa|^s,
$
for $|E^{(1)}|\le1,$ $|\bkappa|\le1$.
 The  proof of Proposition \ref{prop_conical} is  completed along the lines of Proposition 4.2 of \cites{FW:12} and yields
 two locally conical solution branches: $E_+(\bK+\bkappa)=E_D+E^{(1)}_{+}(\bkappa)=E_D+{\upsilon_{_F}}\ |\bkappa|(1+e_+(\bK+\bkappa))$ and $E_-(\bK+\bkappa)=E_D+E^{(1)}_{-}(\bkappa)=E_D-{\upsilon_{_F}}\ |\bkappa|(1+e_-(\bK+\bkappa))$.
\end{proof}

\subsection{Dirac points in low-contrast media}\label{sec:dp-low}

Any low-contrast honeycomb structured medium in the sense of Section \ref{honey-media}
is of the form  $A^{(\eps)}(\bx)=a_0I+\eps A^{(1)}(\bx)$ where $a_0>0$ and where $\eps/a_0 >0$ is small.
Indeed, suppose $A^{(\eps)}(\bx)=A^{(0)}+\eps A^{(1)}(\bx)$ with $A^{(0)}$ constant. By Theorem \ref{invariance}, $A^{(0)}=a_0I+b_0\sigma_2$ where $a_0>0$ and $b_0$ is an odd function. But $b_0$ constant and odd implies that $b_0=0$.

In this section we apply Theorem \ref{prop_conical} to the study of Dirac points for low-contrast structures. We consider the Floquet-Bloch eigenvalue problem  for the operator
$\mathcal{L}^{(\eps)}=-\nabla\cdot A^{(\eps)}(\bx)\nabla$. By linearity, we may make the replacements: $A^{(\eps)}(\bx) \to  \frac{1}{a_0}A^{(\eps)}(\bx)$ and $E\to  E/a_0$. Hence without loss of generality, we assume
\begin{equation}
\label{normalized_honeycomb}
 A^{(\eps)}(\bx)=I+\eps A^{(1)}(\bx) .
\end{equation}
For small $\eps$, we study $\bK-$ pseudo-periodic eigenvalue problem
\begin{equation}
\label{eps_evp}
\mathcal{L}^{(\eps)} \Phi= E\Phi, \quad \Phi\in L^2_{\bK}(\R^2/\Lambda_h),
\end{equation}
where
\begin{align}\label{L_eps_def}
\mathcal{L}^{(\eps)}\equiv\mathcal{L}^{A^{(\eps)}} &=
 -\Delta-\eps\nabla\cdot A^{(1)}\nabla = -\Delta+\eps \mathcal{L}^{(1)}.
\end{align}
We shall solve the eigenvalue problem \eqref{eps_evp} for small $\eps$. We begin by summarizing the  relevant spectral properties of $\mathcal{L}^{(0)}=-\Delta$ in  $L^2_{\bK}(\R^2/\Lambda_h)$; see \cites{FW:12,berkolaiko-comech:15}:

\begin{proposition}
\label{eps_0_prop}
\begin{enumerate}
\item $E^{(0)}=|\bK|^2$ is an $L^2_\bK-$ eigenvalue of $-\Delta$ of multiplicity of three with corresponding three-dimensional eigenspace:
\begin{equation*}
  \emph{span}\left\{e^{i\bK\cdot \bx},e^{iR\bK\cdot\bx},e^{iR^2\bK\cdot\bx}\right\}.
\end{equation*}
\item Considered in each of the three orthogonal $\mathcal{R}$-invariant subspaces, $L^2_{\bK,\sigma}$, $\sigma=1,\tau,\bar{\tau}$, $ E^{(0)}=|\bK|^2$ is a simple eigenvalue of
$-\Delta$ with corresponding eigenspace spanned by the $L^2(\R^2/\Lambda_h)$ normalized eigenvector:
\begin{equation}
  \Phi^{(0)}_\sigma \equiv \frac{1}{\sqrt{3|\Omega_h|}}\left[e^{i\bK\cdot\bx}+\bar{\sigma} e^{iR\bK\cdot\bx}+\sigma e^{iR^2\bK\cdot\bx}\right] \in\ L^2_{\bK,\sigma} .
\label{Phi0sig}\end{equation}
\end{enumerate}
\end{proposition}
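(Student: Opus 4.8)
The plan is to diagonalize $\mathcal{L}^{(0)}=-\Delta$ on $L^2_\bK(\R^2/\Lambda_h)$ directly in the plane-wave basis and then decompose the resulting eigenspace under $\mathcal{R}$. First I would record that every $f\in L^2_\bK$ admits the Fourier expansion \eqref{fourier} with frequencies $\bq=\bK+\bfm\vec\bk$, $\bfm\in\Z^2$, and that $-\Delta\,e^{i\bq\cdot\bx}=|\bq|^2\,e^{i\bq\cdot\bx}$. Hence the $L^2_\bK$-spectrum of $-\Delta$ is $\{\,|\bK+\bfm\vec\bk|^2:\bfm\in\Z^2\,\}$, and the multiplicity of the value $E^{(0)}=|\bK|^2$ equals the number of lattice translates $\bq\in\bK+\Lambda_h^*$ with $|\bq|=|\bK|$. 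This reduces Part 1 to a counting problem on $\bK+\Lambda_h^*$.

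To carry out the count I would compute, from the explicit vectors of Section \ref{triangle}, the Gram data $|\bfm\vec\bk|^2=\frac{16\pi^2}{3}\,(m_1^2-m_1m_2+m_2^2)$ and $2\bK\cdot\bfm\vec\bk=\frac{16\pi^2}{3}\,(m_1-m_2)$. Expanding $|\bK+\bfm\vec\bk|^2=|\bK|^2$ as $2\bK\cdot\bfm\vec\bk+|\bfm\vec\bk|^2=0$ then collapses everything to the single Diophantine equation $m_1^2-m_1m_2+m_2^2+m_1-m_2=0$. Viewing it as a quadratic in $m_1$, the discriminant $1+2m_2-3m_2^2$ is nonnegative only for $m_2\in\{0,1\}$, and solving in each case yields exactly the three solutions $(0,0)$, $(-1,0)$, $(0,1)$. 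By \eqref{R_bk_rel} and \eqref{defn_K_Kp} these are precisely $\bK$, $R^2\bK=\bK-\bk_1$, and $R\bK=\bK+\bk_2$, which are distinct and of common norm $|\bK|$ since $R$ is orthogonal. This establishes both the multiplicity-three claim and the stated eigenspace.

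For Part 2 I would exploit the $\mathcal{R}$-symmetry. Since $I$ is trivially $\mathcal{R}$-invariant, Theorem \ref{invariance}(2) gives $[\mathcal{R},-\Delta]=0$, so the three-dimensional eigenspace $V=\mathrm{span}\{e^{i\bK\cdot\bx},e^{iR\bK\cdot\bx},e^{iR^2\bK\cdot\bx}\}$ is $\mathcal{R}$-invariant. Using $\mathcal{R}[e^{i\bq\cdot\bx}]=e^{i(R\bq)\cdot\bx}$ together with $R^3=I$, the operator $\mathcal{R}|_V$ cyclically permutes the three basis plane waves; in this basis it is the order-three cyclic shift, whose eigenvalues are the simple roots $1,\tau,\bar\tau$. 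A direct check — using $\sigma\bar\sigma=1$ and $\sigma^2=\bar\sigma$ for $\sigma\in\{1,\tau,\bar\tau\}$ — confirms that $\Phi^{(0)}_\sigma$ in \eqref{Phi0sig} satisfies $\mathcal{R}\Phi^{(0)}_\sigma=\sigma\Phi^{(0)}_\sigma$, while orthogonality of distinct plane waves over $\Omega_h$ gives $\|\Phi^{(0)}_\sigma\|_{L^2(\Omega_h)}=1$. Because each $\mathcal{R}$-eigenvalue occurs once in $V$, $E^{(0)}$ is a simple eigenvalue of $-\Delta$ in each subspace $L^2_{\bK,\sigma}$ with eigenvector $\Phi^{(0)}_\sigma$, which is Part 2.

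The only genuinely delicate step is the integer count in Part 1: the entire multiplicity statement rests on showing that the Diophantine equation $m_1^2-m_1m_2+m_2^2+m_1-m_2=0$ has exactly three solutions in $\Z^2$. Reducing it to a quadratic in $m_1$ with sign-definite discriminant makes this elementary, but it is the step that genuinely uses the geometry of $\Lambda_h^*$ and the choice of $\bK$ as a vertex of $\mathcal{B}_h$. Part 2 is then essentially the representation theory of $\Z/3\Z$ acting on $V$ and is routine.
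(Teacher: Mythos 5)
Your proof is correct, and it supplies an argument that the paper itself omits (Proposition \ref{eps_0_prop} is stated with only a citation to \cites{FW:12,berkolaiko-comech:15}); the route you take — diagonalizing $-\Delta$ in the plane-wave basis of $L^2_\bK$, reducing the multiplicity count to the Diophantine equation $m_1^2-m_1m_2+m_2^2+m_1-m_2=0$, and then decomposing the resulting three-dimensional eigenspace under the order-three cyclic action of $\mathcal{R}$ — is essentially the standard one used in the cited references. All the computations check out: the Gram data for $\bk_1,\bk_2$, the discriminant bound forcing $m_2\in\{0,1\}$, the identification of the three solutions with $\bK$, $R\bK=\bK+\bk_2$, $R^2\bK=\bK-\bk_1$ via \eqref{R_bk_rel}, and the verification that $\mathcal{R}\Phi^{(0)}_\sigma=\sigma\Phi^{(0)}_\sigma$.
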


We next turn to the eigenvalue problem \eqref{eps_evp} for $\eps$ small.

\begin{theorem}\label{low-dp}
Let $A^{(\eps)}(\bx)=I_{_{2\times2}}+\eps A^{(1)}(\bx)$ define a honeycomb structured medium, with Fourier expansion characterized by \eqref{AAfourier} (Corollary \ref{A-expand}).
Assume that $A^{(\eps)}(\bx)$ satisfies the non-degeneracy condition:
\begin{equation}
\bK^T\ A_{0,-1}\ R\bK\neq0\label{non-deg}.
\end{equation}

Then, there exists $\eps^0>0$, and mappings
$\eps\mapsto E_D^\eps$ and $\eps\mapsto \widetilde{E}^\eps$,
 $\eps\mapsto \Phi_1^{\eps}\in L^2_{\bK,\tau}$,  $\eps\mapsto \Phi_2^{\eps}\in L^2_{\bK,\bar\tau}$
  and $\eps\mapsto \widetilde{\Phi}^{(\eps)}\in L^2_{\bK,1}$
 such that the following holds for all $\eps\in(-\eps^0,\eps^0)$:
 \begin{enumerate}
\item $ E^{\eps}_D$ is a simple $L^2_{\bK,\tau}-$ eigenvalue of $\mathcal{L}^{(\eps)}$ with
\begin{equation}
 E_D^\eps= \ |\bK|^2+\eps\left(\bK^T\ A^{(1)}_{0,0}\ \bK-\bK^T\ A^{(1)}_{0,-1}\ R\bK\ \right)+\mathcal{O}(\eps^2) ,\label{dp-eps}
\end{equation}
with eigenspace spanned by states $\Phi^{\eps}_1\in L^2_{\bK,\tau}$
\item $ E_D^\eps$ is also a simple $L^2_{\bK,\overline\tau}-$ eigenvalue of $\mathcal{L}^{(\eps)}$
with corresponding eigenfunction $\Phi^{\eps}_2=\left(\mathcal{C}\circ\mathcal{P}\right) \Phi_1^\eps\in L^2_{\bK,\bar{\tau}}$.
\item $\widetilde{ E}^\eps\ne E_D^\eps$ is a simple $L^2_{\bK, 1}-$ eigenvalue of $\mathcal{L}^{(\eps)}$:
\begin{equation}
\widetilde{ E}^{\eps}=|\bK|^2
+\eps\ \left(\bK^T A^{(1)}_{0,0}\ \bK+2\bK^T\ A^{(1)}_{0,-1}\ R\bK\  \right)+\mathcal{O}(\eps^2),
\label{Etilde-eps}\end{equation}
with corresponding eigenspace spanned by $\tilde{\Phi}^{(\eps)}\in L^2_{\bK,1}$.
\item
\begin{align}
{\upsilon^\bK_{_F}}(\eps)=\left| \frac12 \overline{\inner{\Phi^{\eps}_1,\mathscr{A^{(\eps)}}\Phi^{\eps}_2}} \cdot \begin{pmatrix}1\\i\end{pmatrix} \right|= \frac{4\pi}{3}+\mathcal{O}(\eps).
\label{vF-eps}
\end{align}
Therefore, $(\bK,E^\eps_D)$ is a Dirac point in the sense of Definition \ref{dirac_pt_defn}.
\item \subitem If $\eps\bK^T\ A^{(1)}_{0,-1}\ R\bK>0$, the Dirac points occurs at the intersection of the $1^{st}$ and $2^{nd}$ dispersion surfaces at the vertices of $\B_h$. \subitem If $\eps\bK^T\ A^{(1)}_{0,-1}\ R\bK<0$, Dirac points occur at the intersection of the $2^{nd}$ and $3^{rd}$ dispersion surfaces at the vertices of $\B_h$.
\end{enumerate}
\end{theorem}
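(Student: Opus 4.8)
The plan is to exploit the $\mathcal{R}$-invariance of $\mathcal{L}^{(\eps)}$ to convert the \emph{degenerate} perturbation problem at the triply-degenerate eigenvalue $E^{(0)}=|\bK|^2$ of $-\Delta$ into three \emph{non-degenerate} perturbation problems, one in each $\mathcal{R}$-invariant subspace $L^2_{\bK,\sigma}$, $\sigma\in\{1,\tau,\bar\tau\}$, and then to verify the hypotheses $(A_1)$--$(A_4)$ of Theorem \ref{prop_conical}. Since $A^{(\eps)}$ defines a honeycomb structured medium, $[\mathcal{R},\mathcal{L}^{(\eps)}]=0$, so $\mathcal{L}^{(\eps)}$ preserves each $L^2_{\bK,\sigma}$; by part (2) of Proposition \ref{eps_0_prop}, $E^{(0)}=|\bK|^2$ is a \emph{simple} eigenvalue of $\mathcal{L}^{(0)}=-\Delta$ within each such subspace, with normalized eigenfunction $\Phi^{(0)}_\sigma$. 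First I would apply regular (analytic) perturbation theory within each fixed subspace $L^2_{\bK,\sigma}$ --- equivalently a one-dimensional Lyapunov--Schmidt reduction as in the proof of Theorem \ref{prop_conical} --- to obtain, for $|\eps|<\eps^0$, smooth branches $\eps\mapsto E^\eps_\sigma$ and $\eps\mapsto\Phi^\eps_\sigma\in L^2_{\bK,\sigma}$ bifurcating from $(|\bK|^2,\Phi^{(0)}_\sigma)$. The gap separating $|\bK|^2$ from the rest of the $L^2_\bK$-spectrum of $-\Delta$ guarantees that these remain the three lowest bands at $\bK$ for $\eps$ small.

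Next I would compute the first-order corrections $E^{(1)}_\sigma=\inner{\Phi^{(0)}_\sigma,\mathcal{L}^{(1)}\Phi^{(0)}_\sigma}=\inner{\nabla\Phi^{(0)}_\sigma,A^{(1)}\nabla\Phi^{(0)}_\sigma}$, where $\mathcal{L}^{(1)}=-\nabla\cdot A^{(1)}\nabla$. Writing $\Phi^{(0)}_\sigma$ as the explicit combination of $e^{i\bK\cdot\bx},e^{iR\bK\cdot\bx},e^{iR^2\bK\cdot\bx}$ from \eqref{Phi0sig} with coefficients $c_0=1,\ c_1=\bar\sigma,\ c_2=\sigma$, and using $R\bK=\bK+\bk_2$, $R^2\bK=\bK-\bk_1$, the three diagonal terms each select the mean $A^{(1)}_{0,0}$, which is $\mathcal{R}$-invariant and hence a scalar multiple of the identity (Theorem \ref{invariance}), producing the common contribution $\bK^T A^{(1)}_{0,0}\bK$. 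The six off-diagonal terms select the Fourier coefficients $A^{(1)}_\bfm$ with $\bfm$ ranging over the two $\tilde{R}$-orbits $\pm\{(0,-1),(1,1),(-1,0)\}$; collecting them with the prefactors $\overline{c_a}c_b$ and reducing via the covariance relations $A^{(1)}_{\tilde{R}\bfm}=R^*A^{(1)}_\bfm R$ of Proposition \ref{prop_fourier_coef} (together with $A^{(1)}_{-\bfm}=(A^{(1)}_\bfm)^\dagger$) collapses everything to a multiple of the single scalar $\bK^T A^{(1)}_{0,-1}R\bK$, with net coefficient $2$ for $\sigma=1$ and $-1$ for $\sigma\in\{\tau,\bar\tau\}$ (i.e. $\sigma+\bar\sigma$). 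This yields \eqref{Etilde-eps} and \eqref{dp-eps}, respectively. The \emph{exact} equality $E^\eps_\tau=E^\eps_{\bar\tau}=:E^\eps_D$ for all $\eps$, together with $\Phi^\eps_2=\mathcal{PC}\Phi^\eps_1$, follows not from the expansion but from antilinearity of $\mathcal{PC}$: since $[\mathcal{PC},\mathcal{L}^{(\eps)}]=0$ and $\mathcal{PC}$ maps $L^2_{\bK,\tau}$ onto $L^2_{\bK,\bar\tau}$ (conjugating the $\mathcal{R}$-eigenvalue while preserving $L^2_\bK$), it intertwines the eigenproblems in the two subspaces; this establishes parts (1)--(2).

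With parts (1)--(3) in hand, the hypotheses of Theorem \ref{prop_conical} are verified directly. Hypothesis $(A_3)$, that $\widetilde E^\eps\ne E^\eps_D$, follows from the non-degeneracy condition \eqref{non-deg}, because $\widetilde E^\eps-E^\eps_D=3\eps\,\bK^T A^{(1)}_{0,-1}R\bK+\mathcal{O}(\eps^2)\neq0$ for small $\eps\ne0$. For $(A_4)$ I would evaluate $\upsilon^{\bK}_{_F}(\eps)$ at $\eps=0$: with $A^{(0)}=I$ one has $\mathscr{A}^{(0)}=-2i\nabla$ (from \eqref{A_def}), and the symmetry argument of Proposition \ref{prop_g} gives $\inner{\Phi^{(0)}_1,\mathscr{A}^{(0)}\Phi^{(0)}_2}=\upsilon_{_F}\begin{pmatrix}1\\i\end{pmatrix}$; the key point is that the \emph{conjugated} combination $\overline{\inner{\Phi^{(0)}_1,\mathscr{A}^{(0)}\Phi^{(0)}_2}}\cdot\begin{pmatrix}1\\i\end{pmatrix}$ is constructive (the three plane-wave contributions add rather than cancel), yielding $\upsilon^{\bK}_{_F}(0)=|\bK|=4\pi/3$ and hence $\upsilon^{\bK}_{_F}(\eps)=4\pi/3+\mathcal{O}(\eps)>0$ for small $\eps$. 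Theorem \ref{prop_conical} then certifies that $(\bK,E^\eps_D)$ is a Dirac point. Part (5) follows from band ordering: since $|\bK|^2$ is the \emph{lowest} $L^2_\bK$-eigenvalue of $-\Delta$ (of multiplicity three), the branches $E^\eps_D$ (double) and $\widetilde E^\eps$ (simple) are the three lowest bands at $\bK$, so $\eps\,\bK^T A^{(1)}_{0,-1}R\bK>0$ forces $\widetilde E^\eps>E^\eps_D$ and $E^\eps_D=E_1(\bK)=E_2(\bK)$, while $\eps\,\bK^T A^{(1)}_{0,-1}R\bK<0$ forces $\widetilde E^\eps<E^\eps_D$ and $E^\eps_D=E_2(\bK)=E_3(\bK)$.

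The main obstacle is computational rather than conceptual: because $\mathcal{R}$-invariance splits the problem into three \emph{simple} eigenvalue continuations, the usual difficulty of degenerate perturbation theory disappears, and the genuine work is the careful Fourier bookkeeping of the six off-diagonal terms --- tracking the two $\tilde{R}$-orbits and the prefactors $\overline{c_a}c_b$ --- needed to rigorously pin down the $(2,-1,-1)$ pattern and, in particular, to confirm that the entire off-diagonal contribution reduces to the single scalar $\bK^T A^{(1)}_{0,-1}R\bK$. A secondary point requiring care is ensuring the expansions are uniform in $\eps$ and that $E^\eps_D$ and $\widetilde E^\eps$ remain simple, separated from each other, and separated from the fourth band throughout $(-\eps^0,\eps^0)$, so that the conclusions of Theorem \ref{prop_conical} apply verbatim.
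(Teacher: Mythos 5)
Your proposal is correct and follows essentially the same route as the paper: decompose by $\mathcal{R}$-symmetry into three simple-eigenvalue continuations in $L^2_{\bK,\sigma}$, compute the first-order shifts $\inner{\Phi^{(0)}_\sigma,\mathcal{L}^{(1)}\Phi^{(0)}_\sigma}=\bK^T A^{(1)}_{0,0}\bK+(\sigma+\bar\sigma)\,\bK^T A^{(1)}_{0,-1}R\bK$, use $\mathcal{PC}$ to force the exact $\tau/\bar\tau$ degeneracy, verify $(A_1)$--$(A_4)$ of Theorem \ref{prop_conical} including $\upsilon_F(0)=4\pi/3$, and read off the band ordering from the sign of $\eps\,\bK^T A^{(1)}_{0,-1}R\bK$. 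The only cosmetic difference is that the paper organizes the off-diagonal bookkeeping via the $\mathcal{R}$-invariant quantities $\alpha,\beta,\gamma=\bar\beta$ rather than by explicitly tracking the two $\tilde R$-orbits, but the computation is the same.
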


\begin{proof}
Statements (1)-(4) of the theorem imply that for $\eps\in(-\eps_0,\eps_0)$, assumptions $(A_1)$-$(A_4)$ of Theorem \ref{prop_conical} hold. These statements therefore imply, by Proposition \ref{prop_conical}, the existence of Dirac points $(\bK_\star,E^{\eps}_D))$ at the vertices of $\B_h$.

Consider the eigenvalue problem in $L^2_{\bK,\sigma}$ for $\sigma=1,\tau, \bar{\tau}$:
\begin{equation}\label{Eq_EigenL}
(-\Delta+\eps \mathcal{L}^{(1)})\Phi_\sigma^{(\eps)}= E_\sigma^{\eps}\Phi_\sigma^{(\eps)}, \quad \Phi_\sigma^{(\eps)} \in L^2_{\bK,\sigma},
\end{equation}
where $\mathcal{L}^{(1)}=-\nabla\cdot A^{(1)}\nabla$.
We seek $ E_\sigma^{\eps}$ and $\Phi_\sigma^{(\eps)}$ in form of expansions:
\begin{equation}\label{Eq_Expan1}
   E_\sigma^{\eps}= E^{(0)} + \eps E_\sigma^{(1,\eps)},\quad \Phi_\sigma^\eps=\Phi_\sigma^{(0)} + \eps \Phi_\sigma^{(1)},\quad
   \left\langle \Phi_\sigma^{(0)},\Phi_\sigma^{(1)}\right\rangle=0,
\end{equation}
where $ E^{(0)}=|\bK|^2$ (Proposition \ref{eps_0_prop}) and $\Phi_\sigma^{(0)}\in L^2_{\bK,\sigma}$, given by the expression in
\eqref{Phi0sig},
satisfies $-\Delta\Phi_\sigma^{(0)}= E^{(0)}\Phi_\sigma^{(0)}$. Then, $\Phi_\sigma^{(1)}$ satisfies the non-homogeneous equation:
\begin{equation}
\label{F_eqn}
\left(-\Delta- E^{(0)}\right)\Phi_\sigma^{(1)}=\left( -\mathcal{L}^{(1)}+ E^{(1)}\right)\left(\Phi_\sigma^{(0)}+\eps\Phi_\sigma^{(1)}\right) .
\end{equation}

 Introduce the orthogonal projections: $P_\parallel$ onto span$\{\Phi_\sigma^{(0)}\}$, the $L^2_{\bK,\sigma}-$ nullspace of $(-\Delta- E^{(0)})$,  and $P_{\perp}=I-P_\parallel$.
Equation \eqref{F_eqn} may then be rewritten as the following equivalent system for $\Phi_\sigma^{(1)}$ and $ E^{(1)}$:
\begin{eqnarray}
\left(-\Delta- E^{(0)}\right)\Phi_\sigma^{(1)}=P_\perp \left( -\mathcal{L}^{(1)}+ E^{(1)}\right)\left(\Phi_\sigma^{(0)}+\eps\Phi_\sigma^{(1)}\right),\label{Eq_Phi1}\\
0=P_\parallel \left( -\mathcal{L}^{(1)}+ E^{(1)}\right)\left(\Phi_\sigma^{(0)}+\eps\Phi_\sigma^{(1)}\right)\label{Solvability}.
\end{eqnarray}
In analogy with the reduction strategy for the proof of Proposition \ref{prop_conical}, we first solve \eqref{Eq_Phi1} to obtain a solution for $\Phi_\sigma^{(1)}$ as a smooth functional of $\eps$ and $E^{(1)}$, and then substitute the solution, $\Phi_\sigma^{(1)}=\Phi_\sigma^{(1)}[\eps,E^{(1)}]$,  into \eqref{Solvability} to obtain a closed equation for $E^{(1)}$ as a function of $\eps$.

Because $\left(-\Delta- E^{(0)}\right)^{-1}$ is a bounded operator from $P_\perp L_{\bK, \sigma}^2$ to $P_\perp H_{\bK,\sigma}^2$, equation \eqref{Eq_Phi1} may be rewritten as
\begin{equation}\label{Eq_Phi12}
  \left(I-\eps\mathcal{W}(E^{(1)})\right)\Phi^{(1)}_\sigma
  =  -\left(-\Delta- E^{(0)}\right)^{-1}P_\perp\mathcal{L}^{(1)} \Phi^{(0)}_\sigma,
\end{equation}
where the operator $\mathcal{W}(E^{(1)})$, defined by:
\begin{equation*}
 f\mapsto \mathcal{W}(E^{(1)}) f\equiv  \left(-\Delta- E^{(0)}\right)^{-1} P_\perp\left(-\mathcal{L}^{(1)} + E^{(1)}\right),
\end{equation*}
is a bounded on $H^s_{\bK,\sigma}$, for any $s\geq0$. 
Furthermore, for $|\eps|+ |E^{(1)}|$ sufficiently small, the operator norm of $\eps\mathcal{W}$ is less than one. Hence,  $(1-\eps\mathcal{W}(E^{(1)}))^{-1}$ exists, is bounded on $H^s_{\bK,\sigma}$,  and equation \eqref{Eq_Phi12} is uniquely solvable in $H^2_{\bK}$:
\begin{equation}\label{Eq_Phi123}
\Phi^{(1)}_\sigma[\eps,E^{(1)}]= - \left(I-\eps\mathcal{W}(E^{(1)})\right)^{-1} \left(-\Delta- E^{(0)}\right)^{-1} P_\perp \mathcal{L}^{(1)}\Phi_\sigma^{(0)}\ .
\end{equation}

Substituting \eqref{Eq_Phi123} into the solvability equation \eqref{Solvability}, we obtain a closed equation for $E^{(1)}$:
\begin{equation}
\label{G_sigma_def}
G_\sigma(\eps, E^{(1)}) \equiv \inner{\Phi^{(0)}_\sigma, \left(-\mathcal{L}^{(1)}+ E^{(1)} \right) \left(\Phi^{(0)}_\sigma + \eps\Phi^{(1)}_\sigma[\eps,E^{(1)}] \right)}_{L^2_{\bK}} = 0.
\end{equation}
For $\eps=0$, equation \eqref{G_sigma_def} reduces to
$G_\sigma(0, E^{(1)}) = -\inner{\Phi^{(0)}_\sigma, \mathcal{L}^{(1)} \Phi^{(0)}_\sigma} + E^{(1)} = 0$. Thus,
$G_\sigma(0, E_\sigma^{(1,0)})=0$, where
\begin{equation}
E_{\sigma}^{(1,0)}\equiv \inner{\Phi^{(0)}_\sigma, \mathcal{L}^{(1)} \Phi^{(0)}_\sigma}.
\label{Esigma01}\end{equation}
  Moreover,
$\D_{E^{(1)}} G_\sigma(0,E_{\sigma}^{(1,0)})=1\ne0$. Hence, by the implicit function theorem, there is a unique mapping $\eps\mapsto E_\sigma^{(1,\eps)}$, defined and analytic in a neighborhood of $\eps=0$, which satisfies $E_\sigma^{(1,\eps=0)}=E^{(1,0)}_{\sigma}$ and  $G_\sigma(\eps, E_\sigma^{(1,\eps)})=0$, for all  $\eps$ in a complex neighborhood of $\eps=0$.

Assertions 1-3 of  Theorem \ref{low-dp} are now a direct consequence of the evaluation of $E^{(1,0)}_{\sigma}$, given in the following:
\begin{proposition}\label{prop_per2}
Recall that $\mathcal{L}^{(\eps)}= -\Delta + \eps\mathcal{L}^{(1)}=-\Delta-\eps\nabla\cdot A^{(1)}\nabla$, and
for $\bfm=(m_1,m_2)\in\Z^2$, denote by $A^{(1)}_\bfm$ the Fourier coefficients of $A^{(1)}(\bx)$:
\begin{equation*}
  A^{(1)}_\bfm \equiv \frac{1}{|\Omega_h|} \int_{\Omega_h} e^{-i\bfm\vec\bk\cdot \,\by} A^{(1)}(\by) d\by.
\end{equation*}
We have,
\begin{equation}
E^\eps_\sigma\ =\ |\bK|^2\ +\ \eps\ E_\sigma^{(1,\eps)}\ =\ |\bK|^2\ +\ \eps\ E_\sigma^{(1,0)}+\mathcal{O}(\eps^2),
\label{Eeps-sigma}
\end{equation}
where the following assertions concerning $E^{(1,0)}_{\sigma}$ hold:
\begin{enumerate}
\item If $[\pc,\mathcal{L}^{(1)}]=0$ and $[\mathcal{R},\mathcal{L}^{(1)}]=0$ (hence $A^{(1)}(R^*\bx)=R^*A^{(1)}(\bx)R$), then
\begin{equation}
\label{phi_sigma_identity}
E^{(1,0)}_{\sigma}\ =\ \inner{\Phi^{(0)}_\sigma, \mathcal{L}^{(1)} \Phi^{(0)}_\sigma} =\bK^T\ A^{(1)}_{0,0}\ \bK\ +\
 (\sigma+\overline{\sigma})\ \bK^T A^{(1)}_{0,-1}\ R\bK
.
\end{equation}
\item Assume further that $A^{(1)}(R^*\bx)=A^{(1)}(\bx)$. (Therefore, by Theorem \ref{invariance},  $A^{(1)}(\bx)=a^{(1)}(\bx)\ I_{_{2\times2}}\ +\ b^{(1)}(\bx)\sigma_2$ with $a^{(1)}(-\bx)=a^{(1)}(\bx)$ and $b^{(1)}(-\bx)=-b^{(1)}(\bx)$, and then $A^{(1)}_{0,-1}=a_{0,-1}^{(1)}(\bx)\ I_{_{2\times2}}\ +\ b_{0,-1}^{(1)}(\bx)\sigma_2$ ) \\
Then,
\begin{equation}
E^{(1,0)}_{\sigma}\ =\ \inner{\Phi^{(0)}_\sigma, \mathcal{L}^{(1)} \Phi^{(0)}_\sigma}=|\bK|^2\left(a^{(1)}_{0,0}-\left(\frac{1}{2}a^{(1)}_{0,-1}-\frac{\sqrt{3}}{2}ib^{(1)}_{0,-1}\right)(\sigma+\overline{\sigma})\right).
\label{ab11} \end{equation}
 by $\bK^T \ R \bK =-\frac{1}{2}|\bK|^2$ and $\bK^T \ \sigma_2 R\bK = \frac{\sqrt{3}}{2} i$.
 \end{enumerate}

\end{proposition}

We prove Proposition \ref{prop_per2} below. We first conclude the proof of Theorem \ref{low-dp} by verifying the assertion in part 4.
  By \eqref{Eq_Expan1} and \eqref{Eq_Phi123},
\begin{equation}
\label{phi_eps}
 \Phi_\sigma^{(\eps)} = \Phi^{(0)}_\sigma + \mathcal{O}(\eps),
\end{equation}
where $\Phi^{(0)}_\sigma$ is given in \eqref{Phi0sig}.
Note that $\Phi_\tau^{(\eps)}=\Phi^{\eps}_1$ and $\Phi_{\overline{\tau}}^{(\eps)}=\Phi^{\eps}_2$.
Substituting \eqref{phi_eps} into the definition of $\upsilon_F^{\bK}$ (\eqref{lamsharp_def}) and recalling that $A(\bx)=I_{_{2\times2}}+\eps A^{(1)}(\bx)$ (\eqref{normalized_honeycomb}) yields:
\begin{align*}
\upsilon_{_F}^{\bK}(\eps) &=
 \frac12\left| \overline{\inner{\Phi^{\eps}_1,\mathscr{A}^\eps\Phi^{\eps}_2}_{L^2_\bK}}\cdot\begin{pmatrix}1\\i\end{pmatrix} \right|\\
&= \left|\overline{\inner{\Phi_\tau^{(0)},-i\nabla\Phi_{\overline{\tau}}^{(0)}}}_{L^2_\bK} \cdot \begin{pmatrix}1\\i\end{pmatrix}\right| + \mathcal{O}(\eps) \\
&=\left| \frac{1}{3} \left(1 + \tau R + \bar\tau R^2\right)\bK \cdot \begin{pmatrix}1\\i\end{pmatrix}\right| + \mathcal{O}(\eps)
\\
&= \left|\frac{1}{3} \left(\tau\bk_2 - \bar\tau\bk_1 \right) \cdot \begin{pmatrix}1\\i\end{pmatrix} \right|+ \mathcal{O}(\eps) \\
&= \left|\frac{2\pi}{3} \begin{pmatrix}i\\1\end{pmatrix} \cdot \begin{pmatrix}1\\i\end{pmatrix} \right|+ \mathcal{O}(\eps)\ =\ \frac{4\pi}{3} + \mathcal{O}(\eps) .
\end{align*}
It follows that  for $\eps$ sufficiently small, $\upsilon_{_F}^{\bK}(\eps)\ne0$.
This completes the proof of part 4, and therewith Theorem \ref{low-dp}.
\end{proof}

It remains to prove Proposition \ref{prop_per2}.

\begin{proof}[Proof of Proposition \ref{prop_per2}]
Since $[\mathcal{R},\mathcal{L}^{(1)}]=0$,
\begin{equation}
 \alpha \equiv \left\langle e^{i\bK\cdot \bx}, \mathcal{L}^{(1)}e^{i\bK\cdot \bx}\right\rangle =
 \left\langle e^{iR\bK\cdot \bx}, \mathcal{L}^{(1)}e^{iR\bK\cdot \bx}\right\rangle = \left\langle e^{iR^2\bK\cdot \bx}, \mathcal{L}^{(1)}e^{iR^2\bK\cdot \bx}\right\rangle .
\end{equation}
Note that
$
\alpha=\left\langle e^{i\bK\cdot \bx}, \mathcal{L}^{(1)}e^{i\bK\cdot \bx}\right\rangle =\int_{\Omega_h}\bK\cdot A^{(1)}(\bx)\bK d\bx=|\Omega_h|\ \bK^T\ A^{(1)}_{0,0}\ \bK.
$
Again since  $[\mathcal{R},\mathcal{L}^{(1)}]=0$,
\begin{equation}
\beta\equiv \left\langle e^{i\bK\cdot \bx}, \mathcal{L}^{(1)}e^{iR\bK\cdot \bx}\right\rangle = \left\langle
e^{iR\bK\cdot \bx}, \mathcal{L}^{(1)}e^{iR^2\bK\cdot \bx}
\right\rangle = \left\langle  e^{iR^2\bK\cdot \bx}, \mathcal{L}^{(1)}e^{i\bK\cdot \bx}\right\rangle,
\end{equation}
and
\begin{equation}
 \gamma \equiv \left\langle e^{i\bK\cdot \bx}, \mathcal{L}^{(1)}e^{iR^2\bK\cdot \bx}\right\rangle = \left\langle e^{iR\bK\cdot \bx}, \mathcal{L}^{(1)}e^{i\bK\cdot \bx}\right\rangle = \left\langle e^{iR^2\bK\cdot \bx}, \mathcal{L}^{(1)}e^{iR\bK\cdot \bx}\right\rangle .
\end{equation}

Since $\mathcal{L}^{(1)}$ is self-adjoint,
 we have
 $
\gamma=\left\langle e^{i\bK\cdot \bx}, \mathcal{L}^{(1)}e^{iR\bK\cdot \bx}\right\rangle =\overline{\left\langle e^{iR\bK\cdot \bx}, \mathcal{L}^{(1)}e^{i\bK\cdot \bx}\right\rangle}=\overline{\beta}$.
Moreover,
\begin{equation}
\beta=\left\langle e^{i\bK\cdot \bx}, \mathcal{L}^{(1)}e^{iR\bK\cdot \bx}\right\rangle=\int_{\Omega_h}\bK\cdot A^{(1)}(\bx)R\bK e^{i\bk_2\cdot \bx}d\bx=|\Omega_h|\bK^T\ A^{(1)}_{0,-1}\ R\bK,
\end{equation}
where we have used $\mathbf{R}\bK-\bK=\bk_2$.
From the $\mathcal{PC}$ symmetry, we know that $A^{(1)}_{0,-1}$ is real by Corollary \ref{A-expand}. So $\gamma=\beta$ is real.

The preceding discussion yields
\begin{equation}
\begin{split}
\left\langle \Phi^{(0)}_\sigma, \mathcal{L}^{(1)}\Phi^{(0)}_\sigma \right\rangle
&=\frac{1}{3|\Omega_h|}\begin{pmatrix}1&\sigma&\overline{\sigma}\end{pmatrix}\begin{pmatrix}\alpha&\beta&\beta\\ \beta&\alpha&\beta\\ \beta&\beta&\alpha\end{pmatrix}\begin{pmatrix}1\\ \overline{\sigma}\\ \sigma\end{pmatrix}\\
&=\frac{1}{|\Omega_h|}(\alpha+\beta(\sigma+\overline{\sigma}))\\&=\bK\cdot A^{(1)}_{0,0}\bK+\bK\cdot A^{(1)}_{0,-1}R\bK(\sigma+\overline{\sigma}).
\end{split}
\end{equation}
For $\sigma=\tau$ or $\bar\tau$, $E_\sigma^{(1,0)}\equiv \left\langle \Phi^{(0)}_\sigma, \mathcal{L}^{(1)}\Phi^{(0)}_\sigma \right\rangle = \bK^T\ A^{(1)}_{0,0}\ \bK-\bK^T\  A^{(1)}_{0,-1}R\ \bK.$
\\ For $\sigma=1$,
$E_1^{(1,0)}\equiv\left\langle \Phi^{(0)}_1, \mathcal{L}^{(1)}\Phi^{(0)}_1 \right\rangle = \bK^T\ A_{0,0}^{(1)}\ \bK\ +\
 2\bK^T\ A_{0,-1}^{(1)}\ R\ \bK$.

\end{proof}

Note that if $\bK^T\ A_{0,-1}^{(1)}\ R\ \bK\ne0$, then
for all small non-zero $\eps$, in a neighborhood of $E^{(0)}=|\bK|^2$ are two distinct $L^2_{\bK}-$ eigenvalues separated by $\mathcal{O}(\eps)$: a simple $L^2_{\bK,1}-$ eigenvalue, $\tilde{E}^\eps_1=E_1^\eps$ and doubly-degenerate
 $L^2_{\bK,\tau}\oplus L^2_{\bK,\bar\tau}-$ eigenvalue,
  $E^\eps_D=E^\eps_\tau=E^\eps_{\bar\tau}$.

 If $\eps\bK^T\ A_{0,-1}^{(1)}\ R\ \bK>0$, then $E^\eps_D<\tilde{E}^\eps$; Dirac points occur at the intersection of the first and second dispersion surfaces.
 If $\eps\bK^T\ A_{0,-1}^{(1)}\ R\ \bK<0$, then $E^\eps_D>\tilde{E}^\eps$; Dirac points occur at the intersection of the second and third dispersion surfaces. This verifies the assertion in part 5 of Theorem \ref{eps_0_prop} and the proof of Theorem \ref{eps_0_prop} is now complete.

\subsection{Dirac points for arbitrary contrast honeycomb structures}\label{generic_eps}

Theorem \ref{low-dp} studies Dirac point for low contrast honeycomb structures by studying
 $\mathcal{L}^{(\eps)}=-\nabla\cdot\left(\ I\ +\ \eps\ A^{(1)}(\bx)\ \right)\nabla$ for $\eps\ne0$ and small.
In this section we discuss an extension of these results on Dirac points to arbitrary contrast structures.  \medskip

\nit {\bf Assumption:} For all $\eps>0$, $A^{(\eps)}(\bx)\equiv I\ +\ \eps\ A^{(1)}(\bx) $ defines a honeycomb structured medium and therefore  $\mathcal{L}^{(\eps)}\equiv -\nabla\cdot\ A^{(\eps)}(\bx)\ \nabla$ is  self-adjoint, positive definite and uniformly elliptic on $\R^2$.
\medskip

We state the result and then briefly discuss the strategy of proof, implemented in full detail
in the context of Schr\"odinger operators by C. L. Fefferman and one of the authors; see \cites{FW:12} and
Appendix D of \cites{FLW-MAMS:17}.  We do not present the detailed implementation of this strategy in this work.
\medskip

\nit {\bf Claim:}\ Let $\eps^0$ be as in Theorem \ref{low-dp}. There exists a discrete set
$\tilde{\mathcal{C}}\subset \R\setminus(0,\eps^0)$, such that if $\eps\notin\tilde{\mathcal{C}}$, then the conditions of
 Proposition \ref{prop_conical} hold for some quasi-momentum energy pair $(\bK_\star,E_D)$, where $\bK_\star$ is any vertex of the Brillouin zone. It follows that $(\bK_\star,E_D)$ is a Dirac point.  \medskip
 \medskip

\nit {\bf N.B.}\ For small $0<|\eps|<\eps^0$, Theorem
 \ref{low-dp} ensures that these Dirac points occur at the intersections between the first and second, or second and third dispersion surfaces.
  For general $\eps\notin\tilde{\mathcal{C}}$, possibly large, we make no assertions on which dispersion surfaces intersect at Dirac points.
\medskip

The strategy is based on an analytical characterization of the $L^2_{\bK_\star,\sigma}-$ eigenvalues of $\mathcal{L}^{(\eps)}$. By assumptions on $A^{(\eps)}$ (strong ellipticity and symmetries), $T(\eps) \equiv (\mathcal{L}^{(\eps)})^{-1}$
is defined as a bounded operator from $L^2_{\bK_\star,\sigma}$ to $H^2_{\bK_\star,\sigma}$. Hence, the $L^2_{\bK_\star,\sigma}-$ eigenvalue problem \eqref{eps_evp} may formulated equivalently as:
\begin{equation}
(I-E\ T(\eps))\Phi=0, \quad \Phi\in L^2_{\bK_\star, \sigma}.
\label{LipSch}
\end{equation}
We require a global analytical criteria for a complex number, $E$, to be an eigenvalue \eqref{LipSch}.
Although  $T(\eps)$ is compact, it is not trace class, and therefore its determinant is not defined. However, $T(\eps)$ is a Hilbert-Schmidt operator and we proceed by working with its modified determinant $\det_2(I-ET(\eps))$ \cites{Newton:72,Simon:05,GLZ:08}.

\begin{theorem}\label{arbitrary-dp}
Let $\sigma$ take on the values $1, \;\tau$ or $\bar{\tau}$.
\begin{enumerate}
\item $\eps \mapsto T(\eps)$ is an analytic mapping from $\mathbb C$ to the space of Hilbert-Schmidt operators on $L^2_{\bK,\sigma}$.
\item For $T(\eps)$, considered as a mapping on $L^2_{\bK_\star,\sigma}$, define
\begin{equation}
\mathcal E_\sigma(E,\eps)={\det}_2(I-ET(\eps)).
\label{det2}\end{equation}
The mapping $(E,\eps)\mapsto \mathcal E_\sigma(E,\eps)$, is analytic.
\item For $\eps$ real, $E$ is an $L^2_{\bK_\star,\sigma}-$ eigenvalue of geometric multiplicity $m$ if and only if $E$ is a root of $\mathcal E_\sigma(E,\eps)=0$ of multiplicity $m$.
\end{enumerate}
\end{theorem}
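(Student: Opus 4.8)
The plan is to obtain all three parts from two ingredients: the two-dimensional decay of the unperturbed resolvent, and the standard calculus of the regularized Hilbert--Schmidt determinant ${\det}_2$. For Part 1, I would first record that on $L^2_{\bK_\star,\sigma}$ the free operator $\mathcal{L}^{(0)}=-\Delta$ is boundedly invertible: since $\bK_\star$ is a vertex of $\B_h$ we have $\bK_\star+\bfm\vec\bk\neq 0$ for all $\bfm\in\Z^2$, so the eigenvalues $|\bK_\star+\bfm\vec\bk|^2$ are bounded below by a positive constant. The decisive dimensional fact is that $T(0)=(-\Delta)^{-1}$ is Hilbert--Schmidt, since $\|T(0)\|_{\mathcal S_2}^2=\sum_{\bfm}|\bK_\star+\bfm\vec\bk|^{-4}<\infty$ in two dimensions. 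To pass to $\eps\neq0$ I would use the second resolvent identity in the form $T(\eps)=T(\eps_0)\left[I+(\eps-\eps_0)\mathcal{L}^{(1)}T(\eps_0)\right]^{-1}$, valid wherever $T(\eps_0)$ exists; expanding the bracketed bounded factor in a Neumann series and noting that every term carries the Hilbert--Schmidt factor $T(\eps_0)$ on the left shows simultaneously that $T(\eps)\in\mathcal S_2$ and that $\eps\mapsto T(\eps)$ is analytic in $\mathcal S_2$-norm on the open set of $\eps$ for which $\mathcal{L}^{(\eps)}$ is invertible, a set containing the real elliptic axis.

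For Part 2, I would invoke the standard fact (Simon, and the references cited in the text) that $X\mapsto {\det}_2(I-X)$ is an entire function on $\mathcal S_2$ obeying the bound $|{\det}_2(I-X)|\le \exp(c\|X\|_{\mathcal S_2}^2)$. Since $(E,\eps)\mapsto E\,T(\eps)$ is analytic from $\C\times(\text{domain of }T)$ into $\mathcal S_2$ by Part 1, composition with the entire map ${\det}_2$ yields the joint analyticity of $\mathcal E_\sigma(E,\eps)={\det}_2(I-E\,T(\eps))$.

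For Part 3 I would fix $\eps$ real, so that $\mathcal{L}^{(\eps)}$ is self-adjoint and $T(\eps)$ is a self-adjoint compact operator. Then ${\det}_2(I-E\,T(\eps))=0$ iff $I-E\,T(\eps)$ is singular iff $1/E\in\mathrm{spec}(T(\eps))$ iff $E$ is an $L^2_{\bK_\star,\sigma}$-eigenvalue of $\mathcal{L}^{(\eps)}$, the eigenvectors coinciding. For the multiplicity count, fix such an $E_0$ with eigenspace $\mathcal N$, $\dim\mathcal N=m$; by self-adjointness the orthogonal decomposition $L^2_{\bK_\star,\sigma}=\mathcal N\oplus\mathcal N^\perp$ reduces $T(\eps)$, with $T(\eps)|_{\mathcal N}=E_0^{-1}I_m$ and $E_0^{-1}\notin\mathrm{spec}(T(\eps)|_{\mathcal N^\perp})$. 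Using the multiplicativity of ${\det}_2$ over this reduction, together with the elementary computation ${\det}_2\big((1-E/E_0)I_m\big)=(1-E/E_0)^m e^{mE/E_0}$, I would factor $\mathcal E_\sigma(E,\eps)=(1-E/E_0)^m e^{mE/E_0}\,h(E)$, where $h$ is analytic near $E_0$ and $h(E_0)\neq0$ because $I-E_0\,T(\eps)|_{\mathcal N^\perp}$ is invertible. Hence $E\mapsto\mathcal E_\sigma(E,\eps)$ vanishes at $E_0$ to order exactly $m$, the geometric (= algebraic) multiplicity.

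The genuinely delicate points are the two properties of ${\det}_2$ I lean on: its analyticity on $\mathcal S_2$ (Part 2) and, above all, its multiplicativity over the spectral reduction together with the exact order-of-vanishing count (Part 3). Self-adjointness at real $\eps$ does essential work in Part 3: it furnishes an \emph{orthogonal} reducing decomposition with no Jordan blocks, so that geometric and algebraic multiplicities agree and the finite-dimensional block contributes a zero of order precisely $m$; for complex $\eps$ one would instead have to track partial multiplicities of the characteristic value, which is exactly why the statement is confined to real $\eps$. A secondary point to treat with care is the domain in Part 1: $T(\eps)$ is not entire, since it has poles at the complex $\eps$ for which $0\in\mathrm{spec}\,\mathcal{L}^{(\eps)}$, so the analyticity holds on $\C$ minus a discrete set --- precisely the exceptional set that feeds into the Claim of Section \ref{generic_eps}.
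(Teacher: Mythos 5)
Your proof is correct and is essentially the argument the paper intends: the paper states Theorem \ref{arbitrary-dp} without a detailed proof, deferring to the Hilbert--Schmidt/${\det}_2$ machinery of the cited references, and your three steps --- the two-dimensional summability $\sum_{\bfm}|\bK_\star+\bfm\vec\bk|^{-4}<\infty$ giving $T(0)\in\mathcal{S}_2$, the resolvent identity plus Neumann series for $\mathcal{S}_2$-valued analyticity, and the orthogonal reduction with the explicit factor $(1-E/E_0)^m e^{mE/E_0}$ for the multiplicity count --- are the standard implementation of exactly that strategy. Your closing caveat is also apt: analyticity of $T(\eps)$ holds on $\C$ minus the discrete set where $0\in\operatorname{spec}\mathcal{L}^{(\eps)}$, which is how the theorem's ``mapping from $\C$'' must be read.
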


The complex function theory strategy of \cites{FW:12} (see also Appendix D of \cites{FLW-MAMS:17}) can be used to establish that for all $\eps$ which fall outside of a discrete subset of $\tilde{\mathcal{C}}\subset\R\setminus(0,\eps^0)$ that there exists $E_D^\eps\in\R$ such that
(a)\ $E_D^\eps\in\R$ is a simple zero of  $\mathcal{E}_\tau(E_D^\eps,\eps)$ and $\mathcal{E}_{\overline\tau}(E_D^\eps,\eps)$,\ (b)\ $\mathcal{E}_1(E_D^\eps,\eps)\ne0$, and (c) $\upsilon_{_F}(\eps)\ne0$. Here, $\mathcal{E}_\sigma(E,\eps)$,
 for $\sigma=1,\tau,\overline\tau$, is defined in \eqref{det2}. Therefore, by Proposition \ref{prop_conical}, for all such $\eps$ there exist of Dirac points at the vertices of the Brillouin zone, $\mathcal{B}$.

\section{Dirac points under perturbation - instability and persistence}\label{dirac_persistence}

In Section \ref{dirac-pts} we studied the existence of Dirac points of honeycomb structures $\LA=-\nabla\cdot A\nabla$ in the setting where  $\LA$ commutes with $\pc$ and with $\mathcal{R}$. Theorem \ref{prop_conical} gives criteria for the exists of Dirac points at vertices of the Brillouin zone $\B_h$. Theorem \ref{low-dp} in Section \ref{sec:dp-low}, proved using  Theorem \ref{prop_conical}, studies of Dirac points for low-contrast media. In Section \ref{generic_eps} discuss Dirac points for generic honeycomb structures, without any assumptions on contrast, appealing to the continuation argument developed in \cites{FW:12} for the case of Schr\"odinger operators with honeycomb lattice potential, $-\Delta+V(\bx)$, where $V(\bx)$ is $\Lambda_h-$ periodic, and $\pc-$ and $\mathcal{R}-$ invariant.

 In this section we discuss how the locally conical structure of dispersion surfaces near Dirac points deform when $\pc-$ invariance in broken. For results on the case of Schr\"odinger operators with a honeycomb potentials, see \cite{FW:12,berkolaiko-comech:15}.

 Starting with $\LA$, we introduce perturbed operators of the form
\begin{equation}
\label{L_delta}
\mathcal{L}^{(\delta)}=-\nabla\cdot\left(A + \delta B \right)\nabla=-\nabla\cdot A\nabla-\delta \nabla\cdot B\nabla \equiv \mathcal{L}^A + \delta \mathcal{L}^B.
\end{equation}
Here,  $B=B(\bx)$ is a smooth $2\times2$ Hermitian matrix function, which is $\Lambda_h-$periodic.  The parameter, $\delta$, is the strength of the perturbation, and is taken to be a real number and sufficiently small.

The perturbed operator, $\mathcal{L}^{(\delta)}$, breaks $\pc-$ invariance if $[\pc,\mathcal{L}^{(\delta)}]\ne0$, or equivalently if
$[\pc,\LB]\ne0$. We shall, in particular, assume
 that $\LB$ is $\pc-$ anti-symmetric:
 \begin{equation}
 \pc\LB\ =\ -\LB\pc. \label{pc-anti}
 \end{equation}

In Appendix \ref{maxwell},
 we show that this class of $\pc-$ invariance breaking perturbation encompasses examples of magneto-optic and bi-anisotropic media introduced discussed in the  introduction.

The perturbed eigenvalue problem in $L^2(\R^2/\Lambda_h)$ is given by
\begin{equation}\label{FB-delta}
\mathcal{L}^{(\delta)}(\bk)\phi^{\delta}(\bx;\bk)= E^{\delta}(\bk)\phi^{\delta}(\bx;\bk), \quad \phi^{\delta}(\bx+\bv;\bk)=\phi^{\delta}(\bx;\bk)\ \  \forall\ \bv\in\Lambda_h,
\end{equation}
where
\begin{equation}
\label{L_delta_k}
\mathcal{L}^{(\delta)}(\bk)= \mathcal{L}^A(\bk) + \delta \mathcal{L}^B(\bk) \equiv -(\nabla+i\bk)\cdot A(\nabla+i\bk)-\delta(\nabla+i\bk)\cdot B(\nabla+i\bk).
\end{equation}
To study the deformation of dispersion surfaces near Dirac points under perturbation we study the perturbed Floquet-Bloch eigenvalue problem \eqref{FB-delta} for $\delta$ small and  $\bkappa\equiv \bk-\bK_\star$ small,
 where $\bK_\star$ is a vertex of $\B_h$.

By Theorem \ref{low-dp}, the operator $\LA(\bK_\star)$ has a doubly degenerate $L^2(\R^2/\Lambda_h)$- eigenvalue, which we denote $E_{D}$ (independent of the particular vertex $\bK_\star$) with corresponding two dimensional eigenspace: $\textrm{span}
\{\phi^{\bK_\star}_1(\bx),\phi^{\bK_\star}_2(\bx)\}$.
Expansion of $\mathcal{L}^{(\delta)}(\bk)$ about $\bk=\bK_\star$ gives:
\begin{align*}
\mathcal{L}^{(\delta)}(\bK_\star+\bkappa)=\mathcal{L}^A(\bK_\star)+\delta \mathcal{L}^B(\bK_\star)+\bkappa\cdot \mathscr{A}(\bK_\star)+\mathscr{R}_2(\bkappa,\delta).
\end{align*}
Here $\mathscr{A}(\bK_\star)$ is defined in \eqref{Ak_def} and $\mathscr{R}_2(\bkappa,\delta)$ includes all terms of order $\mathcal{O}(|\bkappa|^2+\delta|\bkappa|)$:
\begin{equation*}
 \mathscr{R}_2(\bkappa,\delta) \equiv \bkappa^T A\bkappa-\delta\left(i\bkappa\cdot B(\nabla+i\bK_\star)+i(\nabla+i\bK_\star)\cdot(B\bkappa)-\bkappa^TB\bkappa\right).
\end{equation*}
Let
\begin{equation*}
\phi^{\delta}(\bx;\bk)=\phi_{\bK_\star}^{(0)}(\bx)+\phi_{\bK_\star}^{(1)}(\bx;\bkappa),  \quad  E^{\delta}(\bk)= E_D+ E_{\bK_\star}^{(1)}(\bkappa),
\end{equation*}
where $\phi_{\bK_\star}^{(0)}\in \textrm{span}
\{\phi^{\bK_\star}_1,\phi^{\bK_\star}_2\}$
 and $\left\langle\phi^{\bK_\star}_j,\phi_{\bK_\star}^{(1)}(\cdot;\bkappa)\right\rangle=0,\ j=1,2$.
We follow a Lyapunov-Schmidt reduction strategy, analogous to that used in the proof of Theorem \ref{prop_conical}. Since $\phi_{\bK_\star}^{(0)}(\bx)$ is in the nullspace of $\LA(\bK_\star)-E_D$, we have
\begin{equation}
\phi_{\bK_\star}^{(0)}(\bx)=\alpha_1^{\bK_\star}\phi_1^{\bK_\star}(\bx)+\alpha_2^{\bK_\star}\phi_2^{\bK_\star},
\end{equation}
where $\alpha_1^{\bK_\star},\ \alpha_2^{\bK_\star}$ are complex constants to be determined. Recall also that
 $\phi_j^{\bK_\star}(\bx)=e^{-i\bK_\star\cdot\bx}\Phi_j^{\bK_\star}(\bx)$, where $\Phi_1^{\bK_\star}\in L^2_{\bK_\star,\tau}$ and $\Phi_2^{\bK_\star}\in L^2_{\bK_\star,\bar\tau}$; see Theorem \ref{prop_conical}.

  Calculations which are analogous to those  in the proof of Theorem \ref{prop_conical}, lead to a system of homogeneous linear equations for $\alpha_1^{\bK_\star}$ and $\alpha_2^{\bK_\star}$:

\begin{equation}
\label{PT_K_eqn}
\left( E_{\bK_\star}^{(1)}I-\mathcal M^{\bK_\star}_{\mathscr{A}}(\bkappa)-\delta\mathcal M^{\bK_\star}_{\LB}-\mathcal M^{\bK_\star}_{\mathscr{R}_2}( E^{(1)},\bkappa,\delta) \right)\begin{pmatrix}\alpha_1^{\bK_\star}\\ \alpha_2^{\bK_\star}\end{pmatrix}=0,
\end{equation}
where, by Proposition \ref{prop_g} (all inner products over $L^2_{\bK_\star}$),
\begin{equation}
\label{J_defn}
\mathcal M^{\bK_\star}_{\mathscr{A}}(\bkappa) =
\upsilon_{_F}\ \begin{pmatrix}0 & \kappa^{(1)}+i\kappa^{(2)}\\
\kappa^{(1)}-i\kappa^{(2)}&0\end{pmatrix},
\end{equation}
\begin{equation}
\label{A_defn}
\mathcal M^{\bK_\star}_{\mathcal{L}^B} =
\begin{pmatrix}
\inner{\Phi^{\bK_\star}_1, \mathcal{L}^B \Phi^{\bK_\star}_1} & \inner{\Phi^{\bK_\star}_1, \mathcal{L}^B \Phi^{\bK_\star}_2}\\
\inner{\Phi^{\bK_\star}_2, \mathcal{L}^B \Phi^{\bK_\star}_1}&\inner{\Phi^{\bK_\star}_2, \mathcal{L}^B \Phi^{\bK_\star}_2}
\end{pmatrix} ;
\end{equation}
and $|\mathcal{M}^{\bK_\star}_{\mathcal{R}}( E^{(1)},\bkappa,\delta)|
\lesssim|\bkappa|^2+|\delta|\ |\bkappa|$.

Up to this point we have not used particular properties of the perturbation, $\LB$. In the following subsections, we discuss conditions on $B(\bx)$ under which Dirac points are unstable (conical behavior perturbs to locally smooth and ``gapped'' dispersion surfaces) and conditions under which the local conical structure of  Dirac points persists. The key, as we  see below, is to determine the character of the perturbation matrix
 $\mathcal M^{\bK_\star}_{\mathcal{L}^B}$. We note that $\mathcal M^{\bK_\star}_{\mathcal{L}^B}$ is Hermitian since $\LB$ is self-adjoint.

\subsection{Instability of Dirac points for a class of $\pc-$ breaking perturbations}\label{unstable-dirac}

\begin{proposition}\label{MLB}
  If $\mathcal{L}^B$ is $\pc-$ anti-symmetric, {\it i.e.} $\mathcal{PC} \mathcal{L}^B= -\mathcal{L}^B \mathcal{PC}$, then
\begin{equation}
\mathcal M^{\bK_\star}_{\mathcal{L}^B}=\begin{pmatrix}\thetasharp^{\bK_\star} &0\\0&-\thetasharp^{\bK_\star} \end{pmatrix},
\label{A_defn1}\end{equation}
where $\thetasharp^{\bK_\star}$ is given by
\begin{equation}
\thetasharp^{\bK_\star}\equiv \inner{\Phi^{\bK_\star}_1,\mathcal{L}^B \Phi^{\bK_\star}_1},
\label{thta_shp}
\end{equation}
and is real, by self-adjointness of $\LB$.
\end{proposition}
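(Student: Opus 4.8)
The plan is to read off all four entries of $\mathcal M^{\bK_\star}_{\LB}$ (see \eqref{A_defn}) from three structural facts: the anti-unitarity of $\pc$, the self-adjointness of $\LB$, and the hypothesis $\pc\LB = -\LB\pc$. Recall from Theorem \ref{prop_conical} that $\Phi_2 = \pc\Phi_1$; since $\pc$ is an involution ($(\pc)^2[f](\bx) = \overline{\overline{f(-(-\bx))}} = f(\bx)$) we also have $\Phi_1 = \pc\Phi_2$. The one identity I would establish at the outset is the anti-unitarity relation $\inner{\pc f, \pc g} = \overline{\inner{f, g}}$, obtained by writing $\inner{\pc f, \pc g} = \int_{\Omega_h} f(-\bx)\overline{g(-\bx)}\,d\bx$ and changing variables $\by = -\bx$, which is legitimate on the period cell by $\Lambda_h$-periodicity of the integrand.

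For the diagonal entries I would compute $\inner{\Phi_2, \LB\Phi_2} = \inner{\pc\Phi_1, \LB\pc\Phi_1}$, move $\LB$ across $\pc$ using the anti-symmetry to get $-\inner{\pc\Phi_1, \pc\LB\Phi_1}$, and then apply anti-unitarity to reach $-\overline{\inner{\Phi_1, \LB\Phi_1}}$. Because $\LB$ is self-adjoint, $\mathcal M^{\bK_\star}_{\LB}$ is Hermitian and its diagonal entries are real; hence $\inner{\Phi_1, \LB\Phi_1} =: \thetasharp^{\bK_\star}$ is real and $\inner{\Phi_2, \LB\Phi_2} = -\thetasharp^{\bK_\star}$, giving both diagonal entries at once.

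For the off-diagonal entries I would run the same manipulation on $\inner{\Phi_1, \LB\Phi_2} = \inner{\pc\Phi_2, \LB\pc\Phi_1}$: anti-symmetry gives $-\inner{\pc\Phi_2, \pc\LB\Phi_1}$ and anti-unitarity turns this into $-\overline{\inner{\Phi_2, \LB\Phi_1}}$. Combining with the Hermiticity relation $\inner{\Phi_2, \LB\Phi_1} = \overline{\inner{\Phi_1, \LB\Phi_2}}$ yields $\inner{\Phi_1, \LB\Phi_2} = -\inner{\Phi_1, \LB\Phi_2}$, so this entry vanishes, and its conjugate partner $\inner{\Phi_2, \LB\Phi_1}$ vanishes by Hermiticity. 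Assembling the four computations produces the claimed diagonal matrix.

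I do not anticipate a genuine analytical obstacle; the delicate part is purely the bookkeeping of complex conjugates, i.e.\ tracking which step (anti-unitarity versus self-adjointness) introduces a conjugation, so that the two contributions to each off-diagonal entry reinforce rather than cancel the sign. As a cross-check on the off-diagonal vanishing, if $\LB$ in addition commutes with $\mathcal R$ one may argue as in Proposition \ref{prop_g}: since $\Phi_1 \in L^2_{\bK_\star,\tau}$ and $\Phi_2 \in L^2_{\bK_\star,\bar\tau}$, one obtains $\inner{\Phi_1, \LB\Phi_2} = \bar\tau^2\,\inner{\Phi_1, \LB\Phi_2} = \tau\,\inner{\Phi_1, \LB\Phi_2}$ with $\tau \neq 1$. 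The advantage of the $\pc$-based route above is that it requires no rotational hypothesis on $B$.
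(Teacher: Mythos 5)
Your proposal is correct and follows essentially the same route as the paper's proof: both rest on the anti-unitarity identity $\inner{\pc f,\pc g}=\overline{\inner{f,g}}=\inner{g,f}$, the anti-commutation $\pc\LB=-\LB\pc$, and the self-adjointness of $\LB$, applied to each entry of $\mathcal M^{\bK_\star}_{\LB}$ via $\Phi_2=\pc\Phi_1$. The only differences are cosmetic (you carry conjugations where the paper swaps arguments, and your $\mathcal R$-based cross-check for the off-diagonal entries is an extra observation not in the paper).
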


{\bf Assumption:} Throughout we shall assume that
\begin{equation}
\thetasharp^{\bK_\star}\ne0.
\label{thta_ne0}
\end{equation}
\bigskip

\begin{proof}[Proof of Proposition \ref{MLB}]
Suppose $\mathcal{PC}\mathcal{L}^B=-\mathcal{L}^B\mathcal{PC}$. In the calculations below, we frequently use the following equality: for any $f,g\in L^2_{\bK_\star}$,
\begin{equation*}
\inner{\mathcal{PC}f,\mathcal{PC}g}=\inner{\mathcal{C}f,\mathcal{C}g}=\overline{\inner{f,g}}=\inner{g,f}.
\end{equation*}

Consider first  the off-diagonal elements of
 $\mathcal M^{\bK_\star}_{\mathcal{L}^B}$, we have
 \begin{align*}
\inner{\Phi^{\bK_\star}_1,\mathcal{L}^B \Phi^{\bK_\star}_2}&=\inner{\mathcal{PC}\Phi^{\bK_\star}_2, \mathcal{L}^B \mathcal{PC}\Phi^{\bK_\star}_1} =-\inner{\mathcal{PC}\Phi^{\bK_\star}_2,\mathcal{PC} \mathcal{L}^B\Phi^{\bK_\star}_1}\nn\\
&=
-\inner{\mathcal{L}^B\Phi^{\bK_\star}_1,\Phi^{\bK_\star}_2}=-\inner{\Phi^{\bK_\star}_1, \mathcal{L}^B\Phi^{\bK_\star}_2},
\end{align*}
where we have used the self-adjointness of $\mathcal{L}^B$ at the last equality. Then it follows that $\inner{\Phi^{\bK_\star}_1, \mathcal{L}^B \Phi^{\bK_\star}_2}=0$.

We now turn to  the diagonal entries of $\mathcal M^{\bK_\star}_{\mathcal{L}^B}$.  We have
\begin{align}
\label{theta_defn1}
\inner{\Phi^{\bK_\star}_1, \mathcal{L}^B\Phi^{\bK_\star}_1}&=\inner{\mathcal{PC} \Phi^{\bK_\star}_2,\mathcal{L}^B \mathcal{PC} \Phi^{\bK_\star}_2}\nn\\
&=-\inner{\mathcal{PC} \Phi^{\bK_\star}_2,  \mathcal{PC} \mathcal{L}^B \Phi^{\bK_\star}_2}=-\inner{\mathcal{L}^B\Phi^{\bK_\star}_2, \Phi^{\bK_\star}_2}=\thetasharp^{\bK_\star}.
\end{align}
This completes the proof of Proposition \ref{MLB}.
\end{proof}

It follows from Proposition \ref{MLB} that if $\pc\LB=-\LB \pc$, then

\begin{equation}
\label{local_K_matrix}
\mathcal M^{\bK_\star}_{\mathscr{A}}(\bkappa)+\delta\mathcal M^{\bK_\star}_{\mathcal{L}^B}=
\begin{pmatrix}\delta\thetasharp^{\bK_\star} &\upsilon_{_F}(\kappa^{(1)}+i\kappa^{(2)})\\
\upsilon_F(\kappa^{(1)}-i\kappa^{(2)})&-\delta\thetasharp^{\bK_\star} \end{pmatrix}.
\end{equation}
 By \eqref{PT_K_eqn} and \eqref{local_K_matrix}, the energy $E=E_D+E_{\bK_\star}^{(1)}$ is a $L^2_{\bK_\star+\bkappa}-$ eigenvalue of the perturbed Floquet-Bloch eigenvalue problem for $\mathcal{L}^{(\delta)}$ if and only if
 \begin{equation}
 \label{det_rel}
 \det ( E_{\bK_\star}^{(1)}I-\mathcal M^{\bK_\star}_{\mathscr{A}}(\bkappa)-\delta\mathcal M^{\bK_\star}_{\LB}-\mathcal M^{\bK_\star}_{\mathscr{R}_2}( E^{(1)},\bkappa,\delta))  = 0.
 \end{equation}
 The eigenvalue condition \eqref{det_rel} is of the form
\begin{equation}
\label{local_K_energy}
 ( E_{\bK_\star}^{(1)})^2-\delta^2(\thetasharp^{\bK_\star})^2 -\upsilon_{_F}^2|\bkappa|^2
 + g_{120} + g_{111} + g_{012} + g_{021} + g_{003} = 0,
\end{equation}
where $g_{rsl}=g_{rsl}( E_{\bK_\star}^{(1)},\delta,\bkappa )$ are smooth and satisfy the bound
$g_{rsl}( E_{\bK_\star}^{(1)},\delta,\bkappa ) \leq C |E_{\bK_\star}^{(1)}|^r|\delta|^s|\bkappa|^l$.
After some manipulations, we may apply the implicit function theorem to obtain from \eqref{local_K_energy} that
\begin{equation}
\label{PT_antisym_local_energy}
 E_{\bK_\star}^{(1)}=\pm\sqrt{\upsilon_{_F}^2|\bkappa|^2+\delta^2(\thetasharp^{\bK_\star})^2}(1+e^{\bK_\star}_\pm(\bkappa,\delta)),
\end{equation}
where
$e^{\bK_\star}_\pm(\bkappa,\delta)=\mathcal{O}(\delta+\bkappa)$.

Therefore, provided $\thetasharp^{\bK_\star}\neq0$, the Dirac point $(\bK_\star, E_D)$ does not persist in the presence of $\mathcal{PC}$- anti-symmetric perturbations. That is, if $\pc\LB=-\LB\pc$ and $\thetasharp^{\bK_\star}\neq0$, then for all $\delta$ sufficiently small, the dispersion surfaces are locally smooth and a \underline{local} spectral gap up for quasi-momenta near $\bK_\star$.  We summarize the above discussion in the following

\begin{theorem}\label{PT_thm}
Consider the operator
$\mathcal{L}^{(\delta)}=\LA+\delta\LB=-\nabla\cdot(A+\delta B)\nabla$ defined in \eqref{L_delta}. For the unperturbed operator we assume that $A$ is smooth, Hermitian, $\Lambda_h-$ periodic, and that $[\pc,\LA]=0$ and $[ \mathcal{R},\LA]=0$.
For the perturbed operator we assume that
$B(\bx)$ is smooth, Hermitian, $\Lambda_h-$ periodic, and that
$\pc\LB=-\LB\pc$ ($\pc-$ anti-symmetry).

 Let $\bK_\star$ be a point of $\bK$ or $\bK'$ type, and let $(\bK_\star,E_D)$ denote a Dirac point of $\LA$ (Theorem \ref{low-dp}).
Assume $\thetasharp^{\bK_\star} \equiv \inner{\Phi^{\bK_\star}_1,\LB \Phi^{\bK_\star}_1}\neq0$. Then,
there exists a $\delta_0>0$ such that for $0<\delta<\delta_0$,
 the Dirac point $(\bK_\star,E_D)$ does not persist, the perturbed dispersion surfaces are locally smooth and a local spectral gap opens up in a neighborhood of $(\bk,E)=(\bK_\star,E_D)$.
\end{theorem}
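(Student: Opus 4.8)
The plan is to reduce the perturbed Floquet--Bloch eigenvalue problem \eqref{FB-delta} in a neighborhood of the Dirac point $(\bK_\star,E_D)$ to a finite-dimensional problem, via the same Lyapunov--Schmidt strategy used in the proof of Theorem \ref{prop_conical}, and then to read off the perturbed local dispersion relation. First I would set $\bk=\bK_\star+\bkappa$ with $|\bkappa|$ small, expand $\mathcal{L}^{(\delta)}(\bK_\star+\bkappa)$ about $\bK_\star$ as displayed above, and seek $(\phi^\delta,E^\delta)$ in the form $\phi^\delta=\phi^{(0)}_{\bK_\star}+\phi^{(1)}_{\bK_\star}$, $E^\delta=E_D+E^{(1)}_{\bK_\star}$, where $\phi^{(0)}_{\bK_\star}=\alpha_1^{\bK_\star}\phi_1^{\bK_\star}+\alpha_2^{\bK_\star}\phi_2^{\bK_\star}$ spans the two-dimensional kernel of $\mathcal{L}^A(\bK_\star)-E_D$ and $P_\parallel\phi^{(1)}_{\bK_\star}=0$. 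Splitting by the projections $P_\parallel,P_\perp$ onto $\mathrm{span}\{\Phi_1^{\bK_\star},\Phi_2^{\bK_\star}\}$ and its complement, I solve the $P_\perp$-component for $\phi^{(1)}_{\bK_\star}$ as a smooth functional of $(\alpha_1^{\bK_\star},\alpha_2^{\bK_\star},\bkappa,\delta,E^{(1)}_{\bK_\star})$ using the boundedness of the reduced resolvent $R_{\bK_\star}(E_D)$ and a Neumann series, exactly as in \eqref{phi1-def}--\eqref{c-def}; this produces the remainder matrix $\mathcal{M}^{\bK_\star}_{\mathscr{R}_2}$ with $|\mathcal{M}^{\bK_\star}_{\mathscr{R}_2}|\lesssim|\bkappa|^2+|\delta|\,|\bkappa|$. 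Substituting back into the $P_\parallel$-component yields the homogeneous $2\times2$ system \eqref{PT_K_eqn} for $(\alpha_1^{\bK_\star},\alpha_2^{\bK_\star})$.

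The next step is to evaluate the three reduced matrices. By Proposition \ref{prop_g} the $\mathscr{A}$-matrix is purely off-diagonal, $\mathcal{M}^{\bK_\star}_{\mathscr{A}}(\bkappa)$ as in \eqref{J_defn}, with coefficient $\upsilon_{_F}>0$. The essential use of the hypotheses on $B$ enters here: since $\mathcal{L}^B$ is self-adjoint and $\pc$-anti-symmetric, Proposition \ref{MLB} shows $\mathcal{M}^{\bK_\star}_{\mathcal{L}^B}$ is diagonal with entries $\pm\thetasharp^{\bK_\star}$, where $\thetasharp^{\bK_\star}=\inner{\Phi_1^{\bK_\star},\mathcal{L}^B\Phi_1^{\bK_\star}}$ is real. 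Consequently the leading part of the reduced matrix is \eqref{local_K_matrix}, and the vanishing of its determinant gives $(E^{(1)}_{\bK_\star})^2=\upsilon_{_F}^2|\bkappa|^2+\delta^2(\thetasharp^{\bK_\star})^2$, up to the controlled higher-order terms $g_{rsl}$ collected in \eqref{local_K_energy}.

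It then remains to solve \eqref{local_K_energy} for $E^{(1)}_{\bK_\star}$. I would verify the smooth remainders $g_{rsl}$ satisfy the stated bounds and apply the implicit function theorem to $F(E^{(1)}_{\bK_\star},\bkappa,\delta)=(E^{(1)}_{\bK_\star})^2-\upsilon_{_F}^2|\bkappa|^2-\delta^2(\thetasharp^{\bK_\star})^2+\sum g_{rsl}$, using that the two roots $\pm\sqrt{\upsilon_{_F}^2|\bkappa|^2+\delta^2(\thetasharp^{\bK_\star})^2}$ are separated and nonzero when $\delta\neq0$, so $\partial_{E^{(1)}}F\neq0$ on each branch. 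This produces the two branches \eqref{PT_antisym_local_energy}, $E^{(1)}_\pm=\pm\sqrt{\upsilon_{_F}^2|\bkappa|^2+\delta^2(\thetasharp^{\bK_\star})^2}\,(1+e^{\bK_\star}_\pm(\bkappa,\delta))$ with $e^{\bK_\star}_\pm=\mathcal{O}(|\delta|+|\bkappa|)$.

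Finally I would draw the two conclusions. When $\delta\neq0$ and $\thetasharp^{\bK_\star}\neq0$, the radicand $\upsilon_{_F}^2|\bkappa|^2+\delta^2(\thetasharp^{\bK_\star})^2$ is bounded below by $\delta^2(\thetasharp^{\bK_\star})^2>0$ for all $\bkappa$, so $\bkappa\mapsto E^{(1)}_\pm(\bkappa)$ is smooth (in contrast to the $\delta=0$ case, where vanishing of the radicand at $\bkappa=0$ produces the cone), and $E_+(\bK_\star+\bkappa)-E_-(\bK_\star+\bkappa)\geq 2|\delta|\,|\thetasharp^{\bK_\star}|\,(1+o(1))>0$, so a local spectral gap of width $\sim 2|\delta|\,|\thetasharp^{\bK_\star}|$ opens. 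I expect the main obstacle to be bookkeeping rather than conceptual: tracking that the square-root structure survives the higher-order perturbation in \eqref{local_K_energy}, and that the implicit function theorem applies uniformly in $\bkappa$ near $0$ and $\delta$ near $0$, so that smoothness and the lower bound on the gap hold on a single neighborhood common to both branches. The symmetry input from Proposition \ref{MLB} --- that anti-symmetry forces the off-diagonal perturbation entries to vanish and the diagonal entries to equal $\pm\thetasharp^{\bK_\star}$ --- is the crux: it makes $\delta\mathcal{L}^B$ act as a Dirac mass term, turning the cone into a gapped, smooth pair of sheets.
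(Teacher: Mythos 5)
Your proposal is correct and follows essentially the same route as the paper: the Lyapunov--Schmidt reduction to the $2\times2$ system \eqref{PT_K_eqn}, the use of Proposition \ref{MLB} to identify $\delta\mathcal{M}^{\bK_\star}_{\mathcal{L}^B}$ as a diagonal Dirac-mass term, the determinant condition \eqref{local_K_energy}, and the implicit function theorem yielding the smooth, gapped branches \eqref{PT_antisym_local_energy}. The only addition beyond the paper's treatment is your explicit remark on the uniformity of the implicit function theorem and the lower bound on the gap, which the paper leaves implicit in the phrase ``after some manipulations.''
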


\begin{example}\label{ex-pcanti} Here we give two typical examples of a family of matrices, $B(\bx)$, satisfying the hypotheses of Theorem \ref{PT_thm}. Let $\mu_1(\bx)$ be real-valued, $\mu_1(\bx)=\mu_1(-\bx)$, and $\mu_1(R^*\bx)=\mu_1(\bx)$. Define
\begin{equation}
B(\bx)\equiv\sigma_2\mu_1(\bx).\ \
\label{B-example}\end{equation}
Then, $[\mathcal{R},\LB]=0$ and $\pc\LB=-\LB\pc$, since
$
 \mathcal{C}[B(\bx)\cdot]=-\sigma_2\mu(\bx)\mathcal{C}\cdot=-B(\bx)\mathcal{C}\cdot$,
 while $
 \mathcal{P}[B(\bx)\cdot]=\sigma_2\mu(-\bx)\mathcal{P}\cdot
 = \sigma_2\mu(\bx)\mathcal{P}\cdot=B(\bx)\mathcal{P}\cdot$.

 The other example is
 \begin{equation*}
B(\bx)\equiv\mu_2(\bx) I_{2\times 2}.\ \
\label{B-example2}\end{equation*}
 where $\mu_2(\bx)$ is a real-valued function satisfying $\mu_2(-\bx)=-\mu_2(\bx)$, and $\mu_2(R^*\bx)=\mu_2(\bx)$.
Then, in this case $[\mathcal{R},\LB]=0$ and $\pc\LB=-\LB\pc$, since
$
 \mathcal{C}[B(\bx)\cdot]=B(\bx)\mathcal{C}\cdot$,
 while $
 \mathcal{P}[B(\bx)\cdot]=\mu_2(-\bx)\mathcal{P}\cdot
 = -B(\bx)\mathcal{P}\cdot$.

\end{example}

\begin{remark}\label{thetasharp_bK_bKp}
In our discussion of edge states in Section \ref{edge_states}, we shall consider two important sub-cases, where $\mathcal{L}^B$ is $\mathcal{PC}$ anti-symmetric. Case(a): $[\mathcal{P},\LB]=0$ and $\mathcal{C}\LB=-\LB\mathcal{C}$, as in Example \ref{ex-pcanti}, and Case (b) $\mathcal{P}\LB=-\LB\mathcal{P}$, but $[\mathcal{C},\LB]=0$. We shall see that
for case (a): $\thetasharp^\bKp=+\thetasharp^\bK$ and for case (b): $\thetasharp^\bKp=-\thetasharp^\bK$.  Implications  for the directionality of edge states in each of cases are explored in Section \ref{edge_states}.
\end{remark}

\subsection{Remarks on the persistence of Dirac points under $\mathcal{P}\circ\mathcal{C}$ symmetry preserving perturbations}
\label{persistence_dirac}

\begin{proposition}\label{persist-prop}
 Assume $B$ is $\Lambda_h-$ periodic
 and $\mathcal{L}^B$ is $\pc-$-symmetric, {\it i.e.} $[\mathcal{PC}, \mathcal{L}^B]=0$.
  Then,
  \begin{equation}
\mathcal M^{\bK_\star}_{\mathcal{L}^B}=
\begin{pmatrix}\thetasharp^{\bK_\star} & \varrho_\sharp^{\bK_\star}\\ \overline{\varrho_\sharp^{\bK_\star}} &\thetasharp^{\bK_\star} \end{pmatrix},
\end{equation}
 where $\thetasharp^{\bK_\star}\in\R$ is given in \eqref{thta_shp} and $\varrho_\sharp^{\bK_\star}\equiv \inner{\Phi^{\bK_\star}_1, \LB\Phi^{\bK_\star}_2}$.
\end{proposition}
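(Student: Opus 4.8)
The plan is to mirror the computation used in the proof of Proposition \ref{MLB}, the only difference being that $\LB$ now \emph{commutes} with $\pc$ rather than anti-commuting; this flips one sign and leaves the off-diagonal entries essentially unconstrained. First I would record two structural facts that are immediate from self-adjointness. Since $\LB$ is self-adjoint on $L^2_{\bK_\star}$, the matrix $\mathcal M^{\bK_\star}_{\LB}$ is Hermitian: its diagonal entries are real, and its lower-left entry equals $\overline{\inner{\Phi^{\bK_\star}_1,\LB\Phi^{\bK_\star}_2}}=\overline{\varrho_\sharp^{\bK_\star}}$. This already accounts for three of the four entries (the upper-left being $\thetasharp^{\bK_\star}$ and the upper-right being $\varrho_\sharp^{\bK_\star}$ by definition) as well as for the reality of $\thetasharp^{\bK_\star}$.

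The only substantive step is then to show that the two diagonal entries coincide, i.e. $\inner{\Phi^{\bK_\star}_2,\LB\Phi^{\bK_\star}_2}=\inner{\Phi^{\bK_\star}_1,\LB\Phi^{\bK_\star}_1}$. Here I would use the identity $\inner{\pc f,\pc g}=\inner{g,f}$, valid for all $f,g\in L^2_{\bK_\star}$ (established in the proof of Proposition \ref{MLB}), together with $\Phi^{\bK_\star}_2=\pc\Phi^{\bK_\star}_1$ and the involution property $(\pc)^2=I$. Writing $\inner{\Phi^{\bK_\star}_2,\LB\Phi^{\bK_\star}_2}=\inner{\pc\Phi^{\bK_\star}_1,\LB\pc\Phi^{\bK_\star}_1}$, I would commute $\LB$ through $\pc$ using $[\pc,\LB]=0$ to get $\inner{\pc\Phi^{\bK_\star}_1,\pc\LB\Phi^{\bK_\star}_1}$, and finally apply the inner-product identity to obtain $\inner{\LB\Phi^{\bK_\star}_1,\Phi^{\bK_\star}_1}=\overline{\thetasharp^{\bK_\star}}=\thetasharp^{\bK_\star}$. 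Hence $\inner{\Phi^{\bK_\star}_2,\LB\Phi^{\bK_\star}_2}=\thetasharp^{\bK_\star}$, both diagonal entries equal $\thetasharp^{\bK_\star}$, and the claimed form of $\mathcal M^{\bK_\star}_{\LB}$ follows.

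There is no genuine obstacle here beyond bookkeeping; the entire content is the sign produced by $[\pc,\LB]=0$ versus $\pc\LB=-\LB\pc$. In the anti-symmetric case of Proposition \ref{MLB} the same manipulation forced the off-diagonal entries to vanish and the diagonal entries to be negatives of one another, whereas the commuting hypothesis instead forces the diagonal entries to be equal and leaves the off-diagonal entry free beyond Hermiticity. I would note in passing that this structure is consistent with persistence of the Dirac point: the resulting local perturbation $\delta\,\mathcal M^{\bK_\star}_{\LB}=\delta\thetasharp^{\bK_\star}I$ plus a Hermitian off-diagonal term merely shifts the crossing energy and displaces its quasi-momentum, rather than opening a gap as in the $\pc$-anti-symmetric case.
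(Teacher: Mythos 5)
Your proposal is correct and follows essentially the same route as the paper's proof: both arguments combine self-adjointness of $\LB$ (giving Hermiticity of $\mathcal M^{\bK_\star}_{\mathcal{L}^B}$, hence real diagonal entries and lower-left entry $\overline{\varrho_\sharp^{\bK_\star}}$) with the single computation $\inner{\Phi^{\bK_\star}_2,\LB\Phi^{\bK_\star}_2}=\inner{\pc\Phi^{\bK_\star}_1,\pc\LB\Phi^{\bK_\star}_1}=\inner{\LB\Phi^{\bK_\star}_1,\Phi^{\bK_\star}_1}=\thetasharp^{\bK_\star}$, using $[\pc,\LB]=0$ and the identity $\inner{\pc f,\pc g}=\inner{g,f}$. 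The paper performs the identical manipulation in the opposite direction, writing $\Phi^{\bK_\star}_1=\pc\Phi^{\bK_\star}_2$ and reducing $\thetasharp^{\bK_\star}$ to $\inner{\Phi^{\bK_\star}_2,\LB\Phi^{\bK_\star}_2}$, so the two proofs are interchangeable.
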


\begin{proof}[Proof of of Proposition \ref{persist-prop}]
If $\mathcal{PC}\mathcal{L}^B=\mathcal{L}^B\mathcal{PC}$, then
\begin{align*}
\thetasharp^{\bK_\star} &\equiv \inner{\Phi^{\bK_\star}_1,\mathcal{L}^B \Phi^{\bK_\star}_1}=\inner{\mathcal{PC} \Phi^{\bK_\star}_2, \mathcal{L}^B \mathcal{PC} \Phi^{\bK_\star}_2} \\
&=\inner{\mathcal{PC} \Phi^{\bK_\star}_2,\mathcal{PC} \mathcal{L}^B \Phi^{\bK_\star}_2} = {\inner{\Phi^{\bK_\star}_2, \mathcal{L}^B \Phi^{\bK_\star}_2}},
\end{align*}
and
\begin{equation}
\varrho_\sharp^{\bK_\star} \equiv \inner{\Phi^{\bK_\star}_1, \LB\Phi^{\bK_\star}_2}=\inner{\LB\Phi^{\bK_\star}_1, \Phi^{\bK_\star}_2}=\overline{\inner{\Phi^{\bK_\star}_2, \LB\Phi^{\bK_\star}_1}}.
\end{equation}
This completes the proof of Proposition \ref{persist-prop}.
\end{proof}

Therefore, if $\pc\LB=\LB \pc$, we have
\begin{equation}
\label{pc_M_matrices}
\mathcal M^{\bK_\star}_{\mathscr{A}}(\bkappa)+\delta\mathcal M^{\bK_\star}_{\mathcal{L}^B}=
\begin{pmatrix}\delta\thetasharp^{\bK_\star} &\upsilon_F(\kappa^{(1)}+i\kappa^{(2)}) + \delta\varrho_\sharp^{\bK_\star}\\
\upsilon_F(\kappa^{(1)}-i\kappa^{(2)})+ \delta\overline{\varrho_\sharp^{\bK_\star}}&\delta\thetasharp^{\bK_\star} \end{pmatrix}.
\end{equation}
By \eqref{PT_K_eqn} and \eqref{pc_M_matrices}, the energy $E=E_D+E_{\bK_\star}^{(1)}$ is a $L^2_{\bK_\star+\bkappa}-$ eigenvalue of the perturbed Floquet-Bloch eigenvalue problem for $\mathcal{L}^{(\delta)}$ if and only if
 \begin{equation*}
 \det ( E_{\bK_\star}^{(1)}I-\mathcal M^{\bK_\star}_{\mathscr{A}}(\bkappa)-\delta\mathcal M^{\bK_\star}_{\LB}-\mathcal M^{\bK_\star}_{\mathscr{R}_2}( E^{(1)},\bkappa,\delta))  = 0.
 \end{equation*}
 Following the same procedure as in Section \ref{unstable-dirac} but for \eqref{pc_M_matrices}, we find that
 that $E_{\bK_\star}^{(1)}$ is given by
\begin{equation}
\label{PT_sym_local_energy}
 E_{\bK_\star}^{(1)}= \left( \delta\thetasharp^{\bK_\star} \pm \abs{\upsilon_F(\kappa^{(1)}+i\kappa^{(2)})+\delta\varrho_\sharp^{\bK_\star}} \right) ( 1 +e^{\bK_\star}_\pm(\bkappa,\delta) ) ,
\end{equation}
where $e^{\bK_\star}_\pm(\bkappa,\delta)=\mathcal{O}(\delta+|\bkappa|)$.

The local energy expansion \eqref{PT_sym_local_energy} demonstrates, to finite order in $\delta$, that the Dirac point $(\bK_\star,E_D)$ is protected by $\pc-$ symmetry. That is, if $[\pc,\LA]=0$, the Dirac point degeneracy shifts to $(\bK_\star+\delta\bb b,E_D+\delta\thetasharp^{\bK_\star})$, where $\bb b=(b^{(1)},b^{(2)})$ satisfies $\upsilon_F(b^{(1)}+ib^{(2)})+\varrho_\sharp^{\bK_\star}=0$
A rigorous proof can be implemented along the lines of Theorem 9.1  in \cites{FW:12}.

\section{Edge States}\label{edge_states}

Edge states are time-harmonic solutions of the wave equation (obtained via reduction of the 2D Maxwell Equations -- see Remark \ref{wavepkts} and Appendix \ref{maxwell}), which are bounded and oscillatory in the direction parallel to an extended line-defect, and spatially localized transverse to it.

In Sections \ref{rational_edges} we review the mathematical framework of \cites{FLW-AnnalsPDE:16,FLW-2DM:15} for describing ``rational'' edges in a honeycomb structure. We then introduce our model of a honeycomb structure perturbed by an edge in Section \ref{honeycomb_model}. We formally derive edge states in Section \ref{multiple_scales}, and sketch the rigorous justifications, along the lines of \cites{FLW-AnnalsPDE:16}, of our formal results in Section \ref{rigor}. Sections \ref{edge_state_kpar} and \ref{direction} investigate the direction of wavepacket propagation along the edge. {\it For simplicity, we restrict our attention to the case of real-valued  honeycomb structured media, $A(\bx)$; see Remark \ref{dp-nonreal}.}

\subsection{Rational edges}\label{rational_edges}

Recall, from Section \ref{triangle}, that the lattice and dual lattice are given by $\Lambda_h=\Z\bv_1\oplus\Z\bv_2$ and $\Lambda_h^*=\Z\bk_1\oplus\Z\bk_2$, respectively. An {\it edge} is a line  of the form $\R (a_1\bv_1+a_2\bv_2)$, where $(a_1,b_1)=1$, {\it i.e.} $a_1$ and $b_1$ are relatively prime integers. In Section \ref{honeycomb_model}, we shall introduce a family of operators, which interpolate between two distinct $\Lambda_h-$ periodic structures at ``$+\infty$'' and ``$-\infty$''.

We fix an edge by choosing a vector $\vtilde_1=a_1\bv_1+b_1\bv_2$, where  $(a_1, b_1)=1$. Since  $a_1, b_1$ are relatively prime,  there exists a relatively prime pair of integers: $a_2,b_2$  such that $a_1b_2-a_2b_1=1$.  Set $\vtilde_2 = a_2 \bv_1 + b_2 \bv_2$.
It follows that $\Z\vtilde_1\oplus\Z\vtilde_2=\Z\bv_1\oplus\Z\bv_2=\Lambda_h$.
Since $a_1b_2-a_2b_1=1$, we have dual lattice vectors $\ktilde_1, \ktilde_2\in\Lambda_h^*$, given by
\begin{equation}
\label{ktilde_1_2}
\ktilde_1=b_2\bk_1-a_2\bk_2,\ \  \ktilde_2=-b_1\bk_1+a_1\bk_2,
\end{equation}
  which satisfy
\begin{equation*} \ktilde_\ell \cdot \vtilde_{\ell'} = 2\pi \delta_{\ell, \ell'},\ \ 1\leq \ell, \ell' \leq 2.\label{ktilde-orthog}
\end{equation*}
Note that $\Z\ktilde_1\oplus\Z\ktilde_2=\Z\bk_1\oplus\Z\bk_2=\Lambda^*_h$.

Two typical examples of the edges are:
\begin{enumerate}
\item The zigzag edge: $\vtilde_1=\bv_1,~\vtilde_2=\bv_2$ and $\ktilde_1=\bk_1,~\ktilde_2=\bk_2$.
\item The armchair edge: $\vtilde_1=\bv_1+\bv_2,~\vtilde_2=\bv_2$ and $\ktilde_1=\bk_1,~\ktilde_2= \bk_2-\bk_1$.
\end{enumerate}

\subsection{Model of a honeycomb structure with a rational edge}\label{honeycomb_model}

The edge through a honeycomb material is modeled through a perturbation of the bulk operator $\LA$.
We have the following setup.
\begin{enumerate}
\item \emph{Unperturbed operator, $\LA$}: Let $A(\bx)$ be smooth, $\Lambda_h-$ periodic and such that $[\pc,\LA]=0$ and $[\mathcal{R},\LA]=0$, {\it i.e.}, a honeycomb structured medium, as described in Section \ref{honey-media}.
\item \emph{Dirac points}: Let $\bK_\star$ denote a quasi-momentum at the vertex of the Brillouin zone, $\B_h$ (a point of $\bK$ or $\bK'$ type), and assume that $(\bK_\star,E_D)$ is a Dirac point of the operator $\mathcal{L}^A=-\nabla\cdot A\nabla$; see Definition \ref{dirac_pt_defn}, {\it e.g.} the setting assumed in the hypotheses of Theorem \ref{prop_conical}.
\item \emph{Perturbed operator asymptotics far from the edge}: Let $B(\bx)$ be a $\Lambda_h-$periodic, $2\times 2$ Hermitian matrix such that $\mathcal{L}^B \equiv -\nabla\cdot B\nabla$ is anti-$\mathcal{PC}-$ symmetric: $\mathcal{PC}\LB=-\LB\mathcal{PC}$; {\it i.e.}, part (1) of Theorem \ref{PT_thm} applies to the operator $-\nabla\cdot(A+\delta B)\nabla$. Furthermore, assume that
\begin{equation}
\thetasharp^{\bK_\star} \equiv \inner{\Phi^{\bK_\star}_1,\LB \Phi^{\bK_\star}_1}\neq0\ .
\label{thta_shp1}
\end{equation}
\end{enumerate}

Suppose we fix an edge, given by the line $\R\vtilde_1$ or, alternatively, $\bx\in\R^2$ such that $\ktilde_2\cdot\bx=0$.
Our operator is a smooth and slow interpolation, transverse to the edge,  between the operators
\begin{equation}
\label{asymptotic_ops}
\mathcal{L}_{\pm}^{(\delta)} \equiv -\nabla \cdot \left[ A(\bx) \pm \delta \eta_\infty B(\bx)\right] \nabla,
\end{equation}
associated with periodic structures $A(\bx)-\delta \eta_\infty B(\bx)$ for $\ktilde_2\cdot\bx\to-\infty$ and $A(\bx)+\delta \eta_\infty B(\bx)$
 for $\ktilde_2\cdot\bx\to+\infty$. Here, $\eta_\infty$ is a positive constant.
This interpolation is  effected by a \emph{domain wall function}:

\begin{definition} \label{domain_wall_defn}
We call $\eta(\zeta)\in C^{\infty}(\R)$ a  domain wall function if $\eta(\zeta)$ tends to $\pm\eta_\infty$ as $\zeta\to\pm\infty$. We take $\eta(0)=0$, and
without loss of generality, we assume $\eta_\infty>0$.
\end{definition}

\nit Our model of a honeycomb structure with an edge is the domain-wall modulated operator:
\begin{equation}
 \label{dw_ham}
 \mathcal{L}_{\rm dw}^{(\delta)} \equiv -\nabla\cdot\left[A(\bx) +\delta\eta( \delta\ktilde_2 \cdot \bx)B(\bx)\right]  \nabla.
\end{equation}
The operator $\mathcal{L}_{\rm dw}^{(\delta)}$ breaks translation invariance with respect to arbitrary elements of the lattice, $\Lambda_h$, but  is invariant with respect to translation by $\vtilde_1$, parallel to the edge (because $\ktilde_2 \cdot\vtilde_1=0$ in \eqref{dw_ham}). Associated with this translation invariance is a parallel quasi-momentum, which we denote by $\kpar$.

Edge states are solutions of the eigenvalue problem
\begin{align}
&\mathcal{L}_{_{\rm dw}}^{(\delta)}\Psi(\bx;\kpar) = E(\kpar)\Psi(\bx;\kpar), \label{dw_evp}\\
&\Psi(\bx+\vtilde_1;\kpar)=e^{i\kpar}\Psi(\bx;\kpar),\qquad \textrm{(propagation parallel to the edge,  $\R\vtilde_1$)},\label{pseudo-per}\\
&\Psi(\bx;\kpar) \to 0\ \ {\rm as}\ \ |\bx\cdot\ktilde_2|\to\infty. \qquad  \textrm{(localization tranverse to the edge,  $\R\vtilde_1$)}\label{localized} .
\end{align}
We refer to a solution pair $(E(\kpar),\Psi(\bx;\kpar))$ of \eqref{dw_evp}-\eqref{localized} as an edge state or edge mode.
We shall construct edge modes for $\kpar$ near $\bK\cdot\vtilde_1$ and near $\bK'\cdot\vtilde_1=-\bK\cdot\vtilde_1$.

\subsection{Multiple scales construction of the edge state}\label{multiple_scales}

Recall the notations
\begin{equation}
 \label{op_notation}
 \mathcal{L}^A \equiv -\nabla_\bx\cdot A(\bx)\nabla_\bx, \qquad \mathcal{L}^B = -\nabla_\bx\cdot B(\bx)\nabla_\bx, \qquad
 \mathscr{A} \equiv \frac{1}{i}A(\bx) \nabla_\bx +\frac{1}{i}\nabla_\bx\cdot A(\bx) ;
\end{equation}
see \eqref{L_def}, \eqref{L_delta} and \eqref{A_def}, respectively.

We first consider the eigenvalue problem \eqref{dw_evp}-\eqref{localized} for the choice of parallel quasi-momentum $\kpar=\bK_\star\cdot\vtilde_1$.
Edge states for $|\kpar-\bK_\star\cdot\vtilde_1|$ small can be constructed  perturbatively; see the discussion in Section \ref{edge_state_kpar}.

For $\delta\ll1$, we formally seek solutions of the eigenvalue problem \eqref{dw_evp}-\eqref{localized}, which depend on  fast ($\bx$) and slow/transverse ($\zeta=\delta\ktilde_2\cdot\bx$) spatial scales:
\begin{align}
E^{\delta}&=E^{(0)}+\delta E^{(1)}+\cdots, \label{formal-E} \\
\Psi^{\delta}&=\Psi^{(0)}(\bx,\zeta)+\delta \Psi^{(1)}(\bx,\zeta)+\cdots,\ \ \zeta=\delta\ktilde_2\cdot\bx . \label{formal-psi}
\end{align}
The pseudo-periodicity condition \eqref{pseudo-per}, with
 $\kpar=\bK_\star\cdot\vtilde_1$, and decaying \eqref{localized} boundary conditions are encoded by requiring, for $j\ge0$:
\begin{align*}
&\Psi^{(j)}(\bx+\vtilde_1,\cdot)=e^{i{\bK_\star}\cdot\vtilde_1}\Psi^{(j)}(\bx,\cdot) \ \ \ \forall\ \vtilde\in\Lambda_h , \quad \text{and} \\
&\zeta\to\Psi^{(j)}(\bx,\zeta)\in L^2(\R_\zeta).
\end{align*}

We substitute the expansions \eqref{formal-E}-\eqref{formal-psi} into \eqref{dw_evp} and equate terms of equal order in $\delta^j$, $j\ge0$.
At order $\delta^0$ we have that $(E^{(0)},\Psi^{(0)})$ satisfies
\begin{equation}\label{Eq_delta0}
\begin{split}
(\mathcal{L}^A-E^{(0)})\Psi^{(0)}&=0,\\
\Psi^{(0)}(\bx+\vtilde_1,\zeta)&=e^{i{\bK_\star}\cdot\vtilde_1}\Psi^{(0)}(\bx,\zeta).
\end{split}
\end{equation}

We are interested in constructing solutions which are spectrally localized near a Dirac point $({\bK_\star},E_D)$. Therefore, we solve \eqref{Eq_delta0} by taking
\begin{equation}
\label{delta_zero}
E^{(0)}= E_D, \quad \Psi^{(0)}=\alpha^{\bK_\star}_1(\zeta)\Phi^{\bK_\star}_1(\bx)+\alpha^{\bK_\star}_2(\zeta)\Phi^{\bK_\star}_2(\bx),
\end{equation}
where the amplitudes, $\alpha^{\bK_\star}_1(\zeta)$ and
$\alpha^{\bK_\star}_2(\zeta)$ are to be determined.

Proceeding to order $\delta^1$, we find that $(E^{(1)},\Psi^{(1)})$ satisfies
\begin{equation}\label{Eq_delta1}
\begin{array}{l}(\mathcal{L}^A-E_D)\Psi^{(1)} =G_1^{(1)}(\bx,\zeta;\Psi^{(0)})+G_2^{(1)}(\bx,\zeta;\Psi^{(0)})+E^{(1)}\Psi^{(0)},\\[1.2 ex]
\Psi^{(1)}(\bx+\vtilde_1,\zeta)=e^{i{\bK_\star}\cdot\vtilde_1}\Psi^{(1)}(\bx,\zeta).
\end{array}
\end{equation}
where
\begin{equation}\label{G1}
G_1^{(1)}(\bx,\zeta;\Psi^{(0)})=i(\partial_\zeta\alpha^{\bK_\star}_1 {\ktilde_2}\cdot \mathscr{A} \Phi^{\bK_\star}_1+\partial_\zeta\alpha^{\bK_\star}_2 \ktilde_2 \cdot \mathscr{A} \Phi^{\bK_\star}_2),
\end{equation}
and
\begin{equation}\label{G2}
G_2^{(1)}(\bx,\zeta;\Psi^{(0)})=-\eta(\zeta)(\alpha^{\bK_\star}_1 \mathcal{L}^B \Phi^{\bK_\star}_1+\alpha^{\bK_\star}_2 \mathcal{L}^B \Phi^{\bK_\star}_2) .
\end{equation}

The pseudo-periodic boundary value problem \eqref{Eq_delta1}  is solvable if and only if its right hand side is $L^2_{\bK_\star}-$ orthogonal to the nullspace of $\LA-E_D$, which is spanned by $\Phi^{\bK_\star}_j,\ j=1,2$. This yields the two solvability  relations:
\begin{equation}
\label{Solvability_d}
-E^{(1)}\alpha^{\bK_\star}_j=\inner{\Phi^{\bK_\star}_j,G_1^{(1)}(\bx,\zeta;\Psi^{(0)})+G_2^{(1)}(\bx,\zeta;\Psi^{(0)})}, \qquad j=1,2.
\end{equation}

Let $\balpha^{\bK_\star}(\zeta)=(\alpha^{\bK_\star}_1(\zeta),\alpha^{\bK_\star}_2(\zeta))^T$.
Substituting \eqref{G1} and \eqref{G2} into \eqref{Solvability_d}, gives
\begin{equation}
\left( \mathcal{D}^{\bK_\star} - E^{(1)} \right) \balpha^{\bK_\star}(\zeta) = 0 , \quad \balpha^{\bK_\star}\in L^2(\R)  , \label{dirac_eqn}
\end{equation}
where $\mathcal{D}^{\bK_\star}$ denotes the 1D Dirac operator:
\begin{align}
 \label{dirac_op}
 \mathcal{D}^{\bK_\star}
 &= -i\mathcal M^{\bK_\star}_{\mathscr{A}}(\ktilde_2) \partial_\zeta + \eta(\zeta)\thetasharp^{\bK_\star} \sigma_3 .
\end{align}
In simplifying \eqref{dirac_op}, we have used the assumption that $\mathcal{L}^B$ is anti-$\mathcal{PC}-$ symmetric and recalled the matrix definitions \eqref{J_defn} and \eqref{A_defn1}:
\begin{align}\label{matr-k2}
\mathcal M^{\bK_\star}_{\mathscr{A}}(\ktilde_2) &=
\upsilon_{_F} \begin{pmatrix}0 & \ktilde_2^{(1)}+i\ktilde_2^{(2)}\\
\ktilde_2^{(1)}-i\ktilde_2^{(2)}&0\end{pmatrix},\quad \text{and} \quad
\mathcal M^{\bK_\star}_{\mathcal{L}^B}
= \thetasharp^{\bK_\star} \sigma_3 .
\end{align}

\begin{proposition}
\label{0E_prop}
Let $\eta(\zeta)$ be a domain wall function (Definition \ref{domain_wall_defn}), $\bK_\star$ a point of $\bK$ or $\bK'$ type, and assume that $\thetasharp^{\bK_\star}\neq0$. Then:
\begin{enumerate}
\item The Dirac operator $\mathcal{D}^{\bK_\star}$ \eqref{dirac_op} possesses a zero-energy eigenvalue, $E^{(1)}=0$,  with exponentially localized eigenfunction given by:
\begin{align}
\label{0mode_theta}
\balpha^{\bK_\star}_\star(\zeta) &=
 \left\{ \begin{array}{ll}
\gamma\ e^{-\frac{\abs{\thetasharp^{\bK_\star}}}{{\upsilon_{_F}} | \ktilde_2|}\ \int_0^\zeta \eta(s) ds}\ \chi_-(\ktilde_2^\perp)
& \quad \text{if} \quad \thetasharp^{\bK_\star}>0; \\
\gamma\ e^{-\frac{\abs{\thetasharp^{\bK_\star}}}{{\upsilon_{_F}} | \ktilde_2|}\ \int_0^\zeta \eta(s) ds}\ \chi_+(\ktilde_2^\perp)
& \quad \text{if} \quad \thetasharp^{\bK_\star}<0, \\
\end{array} \right. \quad \text{where} \\
& \chi_{\pm}(\ktilde_2^\perp)\ =\ \frac{1}{\sqrt2}\ \begin{pmatrix}\widehat{\mathfrak{z}}(\ktilde_2^\perp)\\ \pm1\end{pmatrix}, \nn
\end{align}
and $\widehat{\mathfrak{z}}(\ktilde)= (\smallktilde^{(1)}+i\smallktilde^{(2)})/|\ktilde|$.
The normalization constant,
\begin{equation}\label{gamma-def}
\gamma=\left(
\int_{-\infty}^\infty
\exp\left(
-2\ \frac{\abs{\thetasharp^{\bK_\star}}}{\upsilon_{F} | \ktilde_2|}\ \int_0^\zeta \eta(s) ds\right) d\zeta \right)^{-{\frac{1}{2}}} ,
\end{equation}
 is a real and positive constant and chosen so that $\norm{\balpha^{\bK_\star}_\star}_{L^2(\R)}=1$.
\item The zero-energy mode \eqref{0mode_theta} is ``topologically protected'' in the sense that this mode persists against sufficiently spatially localized (even large) perturbations of  $\eta(\zeta)$,
 which preserve the asymptotic behavior, $\pm\eta_\infty$,  as $\zeta\to\pm\infty$.
\end{enumerate}
\end{proposition}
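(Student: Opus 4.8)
The plan is to solve the zero-energy equation $\mathcal{D}^{\bK_\star}\balpha=0$ \emph{explicitly}, treating it as a first-order linear ODE on $\R$, and then to read off both the exponentially localized $L^2$ solution (part 1) and its robustness (part 2) directly from the structure of that ODE. First I would note that, since $\upsilon_{_F}>0$ and $\ktilde_2\neq0$, the constant Hermitian matrix $\mathcal M^{\bK_\star}_{\mathscr{A}}(\ktilde_2)$ of \eqref{matr-k2} is invertible (its determinant is $-\upsilon_{_F}^2|\ktilde_2|^2$). Hence the zero-energy equation coming from \eqref{dirac_op} is equivalent to the system
\[
\partial_\zeta\balpha(\zeta)=\eta(\zeta)\,\mathcal{N}\,\balpha(\zeta),\qquad
\mathcal{N}\equiv-i\,\thetasharp^{\bK_\star}\,\big[\mathcal M^{\bK_\star}_{\mathscr{A}}(\ktilde_2)\big]^{-1}\sigma_3 ,
\]
where $\mathcal{N}$ is $\zeta$-independent and $\eta$ is a scalar. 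The system therefore decouples along the eigenbasis of $\mathcal{N}$: if $\mathcal{N}\chi=\mu\,\chi$, then $\balpha(\zeta)=\chi\,\exp\!\big(\mu\int_0^\zeta\eta(s)\,ds\big)$ solves the ODE.

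A short computation (using $\widehat{\mathfrak{z}}(\ktilde_2^\perp)=i\,\widehat{\mathfrak{z}}(\ktilde_2)$, which accounts for the $\sigma_3$) shows that $\mathcal{N}$ has eigenvalues $\pm|\thetasharp^{\bK_\star}|/(\upsilon_{_F}|\ktilde_2|)$ with eigenvectors $\chi_+$ and $\chi_-$ of the form stated in \eqref{0mode_theta}. I would then select the $L^2(\R)$ solution. Using the domain-wall asymptotics $\eta(\zeta)\to\pm\eta_\infty$ with $\eta_\infty>0$, one checks that $\int_0^\zeta\eta(s)\,ds\to+\infty$ as $\zeta\to\pm\infty$ (its leading behaviour is $\eta_\infty|\zeta|$ at \emph{both} ends). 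Consequently the branch built from the \emph{negative} eigenvalue $\mu=-|\thetasharp^{\bK_\star}|/(\upsilon_{_F}|\ktilde_2|)$ decays at both infinities, while the branch built from the positive eigenvalue grows at both ends and is not in $L^2$. Since $\thetasharp^{\bK_\star}\neq0$ by assumption, its sign identifies the decaying eigenvector as $\chi_-$ when $\thetasharp^{\bK_\star}>0$ and as $\chi_+$ when $\thetasharp^{\bK_\star}<0$; normalizing by $\gamma$ as in \eqref{gamma-def} reproduces \eqref{0mode_theta} with $\|\balpha^{\bK_\star}_\star\|_{L^2(\R)}=1$, which proves part 1.

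For part 2 the key observation is that nothing in the above used the detailed profile of $\eta$ — only its continuity and the \emph{sign change} of the asymptotic mass $\eta\,\thetasharp^{\bK_\star}$ at $\pm\infty$. Replacing $\eta$ by any admissible $\tilde\eta=\eta+p$, where $p$ is a sufficiently localized (integrable, possibly large) perturbation preserving the limits $\pm\eta_\infty$, leaves $\int_0^\zeta\tilde\eta(s)\,ds$ with the same divergent leading behaviour $\eta_\infty|\zeta|$, the localized part contributing only a bounded shift. Hence the decaying eigen-solution stays exponentially localized and in $L^2$, the other stays non-$L^2$, and — because the zero-energy ODE has a two-dimensional solution space — exactly one $L^2$ zero mode survives. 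This invariance of the count under localized changes of $\eta$ is precisely the asserted ``topological protection,'' whose invariant origin is the signed number of sign changes of the asymptotic mass, i.e.\ a relative index of the one-dimensional operator $\mathcal{D}^{\bK_\star}$, unchanged under compactly supported perturbations.

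The step requiring the most care is this last one: establishing existence \emph{and} uniqueness of the $L^2$ zero mode for a general admissible $\tilde\eta$, rather than for the explicit $\tanh$-type profile. Concretely, the work is in controlling the asymptotics of $\int_0^\zeta\tilde\eta$ so as to guarantee that the decaying branch remains square-integrable and that the growing branch is definitively excluded; once the two solution branches are pinned to the two eigenvectors of the same constant matrix $\mathcal{N}$, the remaining argument is purely a matter of the scalar exponent's behaviour at $\pm\infty$.
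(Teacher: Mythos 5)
Your proposal is correct and follows essentially the same route as the paper's proof: both reduce $\mathcal{D}^{\bK_\star}\balpha=0$ to the constant-coefficient first-order system $\partial_\zeta\balpha=\eta(\zeta)\,\mathcal{N}\,\balpha$ (your $\mathcal{N}=-i\thetasharp^{\bK_\star}\big[\mathcal M^{\bK_\star}_{\mathscr{A}}(\ktilde_2)\big]^{-1}\sigma_3$ is exactly the matrix appearing in the paper's \eqref{dirac_0E}, written there as a multiple of $\mathcal{M}^{\bK_\star}_{\mathscr{A}}(\ktilde_2^{\perp})$), diagonalize via the eigenpairs $\chi_\pm(\ktilde_2^\perp)$, and select the branch of the two-dimensional solution space that decays at both infinities according to the sign of $\thetasharp^{\bK_\star}$. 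Your explicit check that $\int_0^\zeta\eta(s)\,ds\sim\eta_\infty|\zeta|$ at both ends, and your spelled-out argument for part 2 (localized changes of $\eta$ shift the exponent only by a bounded amount, so exactly one $L^2$ zero mode survives), are correct amplifications of steps the paper leaves implicit.
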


Proposition \ref{0E_prop} is proved below. This proposition and the formal expansion preceding it imply the following result on the bifurcation of edge states, which is topologically protected against arbitrary localized perturbations of the domain-wall, defined by $\eta$:

\begin{theorem}[Formal expansion of edge states]
\label{edgemode_thm}
Assume:
\begin{enumerate}
\item $A(\bx)$ is smooth, $\Lambda_h-$ periodic and Hermitian, and such that $[\pc,\LA]=0$ and $[\mathcal{R},\LA]=0$.
 \item  $(\bK_\star,E_D)$ is a Dirac point of $\LA=-\nabla\cdot A\nabla$ (see Definition \ref{dirac_pt_defn} and Theorems \ref{low-dp} and \ref{arbitrary-dp}), where
  $\bK_\star$ is a vertex of the hexagonal Brillouin zone, $\B_h$.
\item $\mathcal{L}^B \equiv -\nabla\cdot B\nabla$ is anti-$\mathcal{PC}-$ symmetric, and such that $\thetasharp^{\bK_\star}\neq0$; see \eqref{thta_shp1}.
\end{enumerate}
Let $\mathcal{L}_{\rm dw}^{(\delta)} \equiv -\nabla\cdot\left[A(\bx) +\delta\eta( \delta\ktilde_2 \cdot \bx)B(\bx)\right]  \nabla$,
denote the domain-wall operator, associated with the line-defect across the rational edge, $\R\vtilde_1$.

Then, the edge-state eigenvalue problem for $\mathcal{L}_{\rm dw}^{(\delta)}$, \eqref{dw_evp}-\eqref{localized}, for fixed parallel quasi-momentum $\kpar=\bK_\star\cdot\vtilde_1$, has the formal, \underline{topologically protected} eigenpair solution $(E^\delta_{\bK_\star},\Psi^{\delta}_{\bK_\star})$,
corresponding to a state which propagates in the $\vtilde_1-$ direction with parallel quasi-momentum $\kpar = {\bK_\star} \cdot \vtilde_1$, and is exponentially decaying in the transverse direction, as $\ktilde_2\cdot \bx \to \pm \infty$.

Furthermore, the eigenpair $(E^\delta_{\bK_\star},\Psi^{\delta}_{\bK_\star})$ can be expanded to any finite order in $\delta$ in powers of $\delta$. To leading order $E^\delta_{\bK_\star}=E_D+\mathcal{O}(\delta^2)$ and
\begin{equation}
 \label{ldord}
 \Psi^{\delta}_{\bK_\star}(\bx) =
 \left\{ \begin{array}{ll}
 \delta^{1/2}\ \gamma\ \Big[ \chi_-(\ktilde_2^\perp)\cdot\left(\Phi_1^{\bK_\star}(\bx),\Phi_2^{\bK_\star}(\bx)\right) \Big] \
 &e^{-\frac{\abs{\thetasharp^{\bK_\star}}}{\upsilon_F|\ktilde_2|}\int_0^{\delta \ktilde_2\cdot \bx}\eta(s)d s} + \mathcal{O}(\delta), \\
 & \quad \text{if} \quad \thetasharp^{\bK_\star}>0; \\\\
 \delta^{1/2}\ \gamma\ \Big[ \chi_+(\ktilde_2^\perp)\cdot\left(\Phi_1^{\bK_\star}(\bx),\Phi_2^{\bK_\star}(\bx)\right) \Big] \
 &e^{-\frac{\abs{\thetasharp^{\bK_\star}}}{\upsilon_F|\ktilde_2|}\int_0^{\delta \ktilde_2\cdot \bx}\eta(s)d s} + \mathcal{O}(\delta), \\
 & \quad \text{if} \quad \thetasharp^{\bK_\star}<0. \\\\
\end{array} \right.
\end{equation}
The constant $\gamma$ is displayed in \eqref{gamma-def}, and the factor of $\delta^{1/2}$ ensures that the edge state is normalized to unity.
\end{theorem}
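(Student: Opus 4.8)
The plan is to carry out the formal two-scale asymptotic construction sketched in Section \ref{multiple_scales} to all orders in $\delta$, exploiting the nested Fredholm structure supplied by the Dirac-point eigenspace in the fast variable $\bx$ and by the one-dimensional Dirac operator $\mathcal{D}^{\bK_\star}$ in the slow variable $\zeta=\delta\ktilde_2\cdot\bx$. First I would substitute the ansatz \eqref{formal-E}--\eqref{formal-psi}, together with the $\bK_\star$-pseudo-periodicity in $\bx$ and the $L^2(\R_\zeta)$-decay requirement, into the eigenvalue problem \eqref{dw_evp} and equate powers of $\delta$. At order $\delta^0$ equation \eqref{Eq_delta0} forces $\Psi^{(0)}$ into the nullspace of $\LA-E^{(0)}$ on $L^2_{\bK_\star}$; spectral localization at the Dirac point dictates $E^{(0)}=E_D$ and $\Psi^{(0)}=\alpha_1^{\bK_\star}(\zeta)\Phi_1^{\bK_\star}+\alpha_2^{\bK_\star}(\zeta)\Phi_2^{\bK_\star}$, with the envelope $\balpha^{\bK_\star}$ still undetermined.

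The heart of the construction is the order-$\delta^1$ solvability condition. The inhomogeneous problem \eqref{Eq_delta1} is solvable on $L^2_{\bK_\star}$ if and only if its right-hand side is orthogonal to $\mathrm{span}\{\Phi_1^{\bK_\star},\Phi_2^{\bK_\star}\}$; evaluating the two inner products \eqref{Solvability_d} and inserting the symmetry-reduced coupling matrices already computed in Propositions \ref{prop_g} and \ref{MLB} (namely $\mathcal M^{\bK_\star}_{\mathscr A}(\ktilde_2)$ off-diagonal with entries $\upsilon_{_F}(\ktilde_2^{(1)}\pm i\ktilde_2^{(2)})$, and $\mathcal M^{\bK_\star}_{\LB}=\thetasharp^{\bK_\star}\sigma_3$, the latter a consequence of the $\pc$-antisymmetry of $\LB$) collapses the pair of relations into the $1$D Dirac eigenvalue equation \eqref{dirac_eqn}. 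At this point I would invoke Proposition \ref{0E_prop}: since $\thetasharp^{\bK_\star}\neq0$, the operator $\mathcal{D}^{\bK_\star}$ carries the exponentially localized zero mode $\balpha^{\bK_\star}_\star$ of \eqref{0mode_theta}, so one selects $E^{(1)}=0$ and $\balpha^{\bK_\star}=\balpha^{\bK_\star}_\star$. This fixes $\Psi^{(0)}$ and yields the leading-order edge state \eqref{ldord} after restoring $\zeta=\delta\ktilde_2\cdot\bx$; the $\delta^{1/2}$ prefactor is forced by the $L^2(\R^2)$ normalization, since the transverse width of the envelope is $O(\delta^{-1})$.

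To continue to arbitrary order I would argue inductively. Assume $(E^{(i)},\Psi^{(i)})$ are determined for $i<j$ so that all solvability conditions through order $\delta^{j-1}$ hold. Collecting order-$\delta^j$ terms gives $(\LA-E_D)\Psi^{(j)}=F^{(j)}$, where $F^{(j)}$ depends on the already constructed quantities and linearly on the next envelope correction and on $E^{(j)}$. The fast-variable Fredholm conditions $\inner{\Phi_l^{\bK_\star},F^{(j)}}=0$, $l=1,2$, again reduce, via the same symmetry identities, to an \emph{inhomogeneous} $1$D Dirac equation of the form $\mathcal{D}^{\bK_\star}\balpha^{(j-1)}=H^{(j)}+E^{(j)}\balpha^{\bK_\star}_\star$ for the envelope correction (recall $E^{(1)}=0$, so the leading spectral-parameter term appears multiplied by the zero mode). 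Because $\mathcal{D}^{\bK_\star}$ is self-adjoint with one-dimensional kernel $\mathrm{span}\{\balpha^{\bK_\star}_\star\}$, solvability in $L^2(\R_\zeta)$ requires $\inner{\balpha^{\bK_\star}_\star,H^{(j)}+E^{(j)}\balpha^{\bK_\star}_\star}=0$; since $\norm{\balpha^{\bK_\star}_\star}_{L^2(\R)}=1$, this determines $E^{(j)}$ uniquely. One then solves for $\balpha^{(j-1)}$, and, the $2$D right-hand side now being orthogonal to the Dirac-point eigenspace, for $P_\perp\Psi^{(j)}$ through the bounded partial inverse $R_{\bK_\star}(E_D)$ of $\LA-E_D$, closing the induction.

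The main obstacle is the \emph{double} Fredholm bookkeeping together with preservation of transverse localization at every order: one must verify that the forcing terms $H^{(j)}$ inherit the exponential decay of $\balpha^{\bK_\star}_\star$, so that the particular solutions of the inhomogeneous Dirac equations remain in $L^2(\R_\zeta)$ and indeed decay. Decay propagates because $\eta(\zeta)\mp\eta_\infty$ and its derivatives decay, the coupling matrices are $\zeta$-independent, and the Dirac Green's function off the kernel decays exponentially; thus the induction yields, at each order, an exponentially localized, $\bK_\star\cdot\vtilde_1$-pseudo-periodic corrector, consistent with $E^\delta_{\bK_\star}=E_D+O(\delta^2)$ (the first genuine energy correction being deferred to order $\delta^2$ precisely because $E^{(1)}=0$). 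Topological protection of the construction is then immediate from part (2) of Proposition \ref{0E_prop}, since only the sign of $\thetasharp^{\bK_\star}$ and the asymptotics $\pm\eta_\infty$ of $\eta$ enter the zero mode, and these are unchanged by spatially localized deformations of the domain wall.
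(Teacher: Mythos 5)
Your proposal is correct and follows essentially the same route as the paper: the two-scale ansatz of Section \ref{multiple_scales}, the order-$\delta^0$ and order-$\delta^1$ Fredholm solvability conditions reducing (via Propositions \ref{prop_g} and \ref{MLB}) to the Dirac equation \eqref{dirac_eqn}, and the zero mode of Proposition \ref{0E_prop} with $E^{(1)}=0$, which is exactly how the paper justifies Theorem \ref{edgemode_thm}. Your inductive continuation to all orders, with the nested fast-variable/slow-variable Fredholm alternatives determining $E^{(j)}$ from orthogonality to $\balpha^{\bK_\star}_\star$, correctly fills in the higher-order bookkeeping that the paper asserts but leaves implicit (deferring rigor to Theorem \ref{rig-edge}).
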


\begin{proof}[Proof of Proposition \ref{0E_prop}]
Setting $E^{(1)}=0$, equation \eqref{dirac_eqn} may be rewritten as:
\[\partial_\zeta\balpha^{\bK_\star}(\zeta)\ =\ -i\ \eta(\zeta)\ \mathcal{M}^{\bK_\star}_{\mathscr{A}}(\ktilde_2)^{-1}\ \mathcal{M}^{\bK_\star}_{\LB}\ \balpha^{\bK_\star}(\zeta).\]
Using the expressions displayed in \eqref{matr-k2} we obtain:
\begin{equation}
\label{dirac_0E}
\partial_\zeta \balpha^{\bK_\star}(\zeta)\ =\
\frac{\thetasharp^{\bK_\star}}{\upsilon_F^3|\ktilde_2|^2}\ \eta(\zeta)\ \mathcal{M}^{\bK_\star}_{\mathscr{A}}(\ktilde_2^{\perp})\ \balpha^{\bK_\star}(\zeta),
\end{equation}
where $\ktilde_2=(\ktilde_2^{(1)},\ktilde_2^{(2)})$ and $\ktilde_2^\perp=(-\ktilde_2^{(2)},\ktilde_2^{(1)})$.

The system \eqref{dirac_0E} may be diagonalized using the eigenpairs of $\mathcal{M}^{\bK_\star}_{\mathscr{A}}(\ktilde_2^{\perp})$.
 Note that $\mathcal{M}^{\bK_\star}_{\mathscr{A}}(\ktilde_2^{\perp})$ has two eigenpairs:
\begin{align}\label{chip+}
\mu_+ &=+\upsilon_F|\ \ktilde_2^{\perp}|,\qquad \chi_+(\ktilde_2^{\perp})=\frac{1}{\sqrt{2}}
\begin{pmatrix}
\widehat{\mathfrak{z}}(\ktilde_2^{\perp})\\ 1
\end{pmatrix},\  \text{and}\\
\mu_-&=-\upsilon_F|\ \ktilde_2^{\perp}|,\qquad \chi_-(\ktilde_2^{\perp})=\frac{1}{\sqrt{2}}
\begin{pmatrix}
\widehat{\mathfrak{z}}(\ktilde_2^{\perp})\\ -1
\end{pmatrix},
\label{chip-}
\end{align}
where $\widehat{\mathfrak{z}}(\ktilde)= (\smallktilde^{(1)}+i\smallktilde^{(2)})/|\ktilde|$.

Therefore, \eqref{dirac_0E} has the general solution
\begin{equation}\label{zeromode_generalsolution}
\balpha^{\bK_\star}_\star(\zeta)=c_1\ e^{\frac{\thetasharp^{\bK_\star}}{\upsilon|\ktilde_2|}\ \int^{\zeta}_0\eta(s)ds}\chi_+(\ktilde_2^{\perp})+c_2\ e^{-\frac{\thetasharp^{\bK_\star}}{\upsilon|\ktilde_2|}\ \int^{\zeta}_0\eta(s)ds}\chi_-(\ktilde_2^{\perp})
\end{equation}
If $\thetasharp^{\bK_\star}>0$, we obtain an exponentially localized solution of \eqref{dirac_0E} by setting $c_1=0$ in \eqref{zeromode_generalsolution}, and if $\thetasharp^{\bK_\star}<0$, we obtain the exponentially localized solution by setting $c_2=0$ in \eqref{zeromode_generalsolution}. Imposing the normalization $\norm{\balpha^{\bK_\star}_\star}_{L^2(\R)}=1$ concludes the proof of Proposition \ref{0E_prop}.
\end{proof}

\subsection{Rigorous formulation of Theorem \ref{edgemode_thm}, the spectral no-fold condition,
and outline of the proof}\label{rigor}

A rigorous reformulation of Theorem \ref{edgemode_thm} and a detailed proof follows the strategy implemented in full detail for the case of Schr\"odinger operators in \cites{FLW-AnnalsPDE:16}. These arguments can be adapted to the current setting of divergence form operators. In this section we provide a detailed sketch of the construction, motivate the key {\it spectral no-fold condition}, and conclude the section with the reformulation in Theorem \ref{rig-edge}.

We begin by formulating the edge state eigenvalue problem  in an appropriate Hilbert space.
For a fixed edge, $\R\vtilde_1$, introduce the cylinder $\Sigma\equiv \R^2/ \Z\vtilde_1$. A function on $\Sigma$ is invariant under the shift  $\bx\to\bx+\vtilde_1$. For $s\ge0$, denote by  $H^s(\Sigma)$,  the  Sobolev space of order $s$  of  functions on $\Sigma$. Note $L^2(\Sigma)=H^0(\Sigma)$.
The pseudo-periodicity and decay conditions \eqref{pseudo-per}-\eqref{localized} are encoded by requiring $ \Psi \in H^s_\kpar(\Sigma)$ for some $s\ge0$, where
\begin{equation}
\label{H_kpar_def}
H^s_\kpar(\Sigma)\equiv \ \left\{f : f(\bx)e^{-i\frac{\kpar}{2\pi}\ktilde_1\cdot\bx}\in H^s(\Sigma) \right\}.
\end{equation}

The eigenvalue problem \eqref{dw_evp}-\eqref{localized} can then be reformulated as:
\begin{equation}
 \label{EVP}
 \mathcal{L}_{\rm dw}^{(\delta)} \Psi = E \Psi , \quad \Psi\in H_{\kpar}^2(\Sigma) .
\end{equation}
Edge states for $\kpar$ in a neighborhood of $\kpar=\bK_\star\cdot\vtilde_1$ can be constructed perturbatively.
\medskip

In analogy to Theorem 4.2 of \cites{FLW-AnnalsPDE:16}, any function in $L^2_{\kpar=\bK_\star\cdot\vtilde_1}(\Sigma)$ can be decomposed into a superposition of Floquet-Bloch modes with the pseudo-periodicity condition $f(\bx+\vtilde_1)=e^{i\bK\cdot\vtilde_1} f(\bx)$. The set of all such modes is given by:
$\{\Phi_b(\bx;\bK_\star+\lambda\ktilde_2)\}_{b\ge1}$ and where $|\lambda|\le1/2$.

\begin{theorem}\label{fourier-edge}
Let $A(\bx)$ be periodic with respect to the triangular lattice, $\Lambda_h$, and such that $\LA=-\nabla_\bx\cdot A(\bx)\ \nabla_\bx$ is strictly elliptic on $\R^2$. In particular, we may take $A(\bx)$ to be a honeycomb structured medium.
Let $f\in L_\kparvstar^2(\Sigma)= L_\kparvstar^2(\R^2/\Z\vtilde_1)$. Then,
\begin{enumerate}
\item  $f$ can be represented as a superposition of Floquet-Bloch modes of $\LA$ with quasimomenta in $\mathcal{B}_h$ located on the segment $\lambda\mapsto \bK_\star+\lambda\ktilde_2$ for $|\lambda|\le\frac{1}{2}$.
\begin{align}
f(\bx)
&=\sum_{b\ge1}\int_{-\frac12}^{\frac12} \widetilde{f}_b(\lambda) \Phi_b(\bx;\bK_\star+\lambda \ktilde_2) d\lambda  \nn \\ 
&= e^{i\bK_\star\cdot\bx} \sum_{b\ge1}\int_{-\frac12}^{\frac12} e^{i\lambda\ktilde_2\cdot\bx}\widetilde{f}_b(\lambda) p_b(\bx;\bK+\lambda \ktilde_2) d\lambda,\qquad {\rm where} \label{fb-Sigma} \\
\widetilde{f}_b(\lambda) \ &=\ \left\langle \Phi_b(\cdot,\bK_\star+\lambda\ktilde_2),f(\cdot)\right\rangle_{L_\kparv^2} . \nn
\end{align}
Here, the sum representing $e^{-i\bK_\star\cdot\bx}f(\bx)$, in \eqref{fb-Sigma} converges in the $L^2(\Sigma)$ norm.
\item In the special case where $A(\bx)=I_{2\times2}$\ ($\LA=-\Delta$):
 \begin{align*}
 f(\bx) = \sum_{{\bfm}\in\Z^2}e^{ i(\bK_\star+\bfm\vec\ktilde)\cdot\bx } \int_{-\frac12}^{\frac12}
 \widehat{f}_\bfm(\lambda)e^{i\lambda\ktilde_2\cdot\bx} d\lambda\ .
  \end{align*}
 Here, for $\bfm=(m_1,m_2)\in\Z^2$, we define $\bfm\vec\ktilde=m_1\ktilde_1+m_2\ktilde_2$.
 \end{enumerate}
\end{theorem}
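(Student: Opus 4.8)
The plan is to reduce Theorem~\ref{fourier-edge} to the ordinary Floquet--Bloch spectral theory of $\LA$ (summarized in Section~\ref{fb_theory}) by performing a Floquet--Bloch (Gelfand--Zak) transform only in the direction \emph{transverse} to the edge. The key structural observation is that, although $\mathcal{L}_{\rm dw}^{(\delta)}$ was built to break most of the symmetry, the \emph{unperturbed} cylinder $\Sigma=\R^2/\Z\vtilde_1$ retains a residual lattice symmetry: since $\Lambda_h=\Z\vtilde_1\oplus\Z\vtilde_2$, translation by $\vtilde_2$ acts on $\Sigma$, commutes with $\LA$ (because $A$ is $\Lambda_h$-periodic), and preserves $L^2_\kparvstar(\Sigma)$ (if $f(\bx+\vtilde_1)=e^{i\bK_\star\cdot\vtilde_1}f(\bx)$ then so does $f(\cdot+\vtilde_2)$). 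Floquet-decomposing with respect to this $\Z\vtilde_2$-action collapses the infinite cylinder to the compact cell $\Omega_h=\R^2/\Lambda_h$, fibered over a dual circle parametrized by $\lambda$.

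Concretely, for $f\in L^2_\kparvstar(\Sigma)$ and $\bk=\bK_\star+\lambda\ktilde_2$, I would introduce the transverse transform
\begin{equation*}
f_\lambda(\bx)\ \equiv\ \sum_{n\in\Z} e^{-i\bk\cdot n\vtilde_2}\,f(\bx+n\vtilde_2).
\end{equation*}
Using the dual relations $\ktilde_2\cdot\vtilde_1=0$ and $\ktilde_2\cdot\vtilde_2=2\pi$, a direct check shows $f_\lambda(\bx+\vtilde_j)=e^{i\bk\cdot\vtilde_j}f_\lambda(\bx)$ for $j=1,2$, so that $f_\lambda\in L^2_{\bk}(\Omega_h)$ with the correct $\vtilde_1$-pseudoperiodicity (here $\ktilde_2\cdot\vtilde_1=0$ guarantees $\bk\cdot\vtilde_1=\bK_\star\cdot\vtilde_1$), and that $f_\lambda$ is $1$-periodic in $\lambda$, so $|\lambda|\le\tfrac12$ suffices. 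Term-by-term integration gives the inversion $f(\bx)=\int_{-1/2}^{1/2}f_\lambda(\bx)\,d\lambda$, while orthogonality of the characters $\lambda\mapsto e^{i2\pi\lambda n}$ yields the Plancherel identity $\int_{-1/2}^{1/2}\norm{f_\lambda}^2_{L^2(\Omega_h)}\,d\lambda=\norm{f}^2_{L^2(\Sigma)}$; hence $f\mapsto(f_\lambda)_\lambda$ is unitary onto the direct integral $\int^{\oplus} L^2_{\bK_\star+\lambda\ktilde_2}(\Omega_h)\,d\lambda$.

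With the fiber decomposition in hand, I would invoke the elliptic Floquet--Bloch theory of Section~\ref{fb_theory}: for each $\lambda$ the family $\{\Phi_b(\cdot;\bk)\}_{b\ge1}$, $\bk=\bK_\star+\lambda\ktilde_2$, is a complete orthonormal set in $L^2_{\bk}(\Omega_h)$. Expanding each fiber and inserting into the inversion formula gives exactly \eqref{fb-Sigma} with $\widetilde f_b(\lambda)=\inner{\Phi_b(\cdot;\bk),f_\lambda}_{\Omega_h}$. To match the stated coefficient formula I would verify $\inner{\Phi_b(\cdot;\bk),f_\lambda}_{\Omega_h}=\inner{\Phi_b(\cdot;\bk),f}_{L^2_\kparv}$: unfolding the defining sum for $f_\lambda$ and using the $\vtilde_2$-pseudoperiodicity of $\Phi_b$ converts $\int_{\Omega_h}$ into $\sum_n\int_{\Omega_h+n\vtilde_2}=\int_\Sigma$, the union of the $\vtilde_2$-translates of $\Omega_h$ being precisely a fundamental domain for $\Z\vtilde_1$; this also explains in what (conditionally convergent) sense the cylinder inner product is to be read, since $\Phi_b$ does not decay transversally. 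Part~(2) is then an immediate specialization: for $A=I$ the $\Lambda_h$-periodic Bloch eigenfunctions of $-(\nabla+i\bk)\cdot(\nabla+i\bk)$ are the plane waves indexed by $\bfm\in\Z^2$, so $\Phi_b(\bx;\bK_\star+\lambda\ktilde_2)\propto e^{i(\bK_\star+\bfm\vec\ktilde)\cdot\bx}\,e^{i\lambda\ktilde_2\cdot\bx}$ and the band sum $\sum_b$ becomes the sum $\sum_{\bfm}$.

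The routine algebra (fiber membership, inversion, the $\Omega_h\leftrightarrow\Sigma$ unfolding) is straightforward; the one point requiring genuine care is the direct-integral/Plancherel bookkeeping — measurability of $\lambda\mapsto\widetilde f_b(\lambda)$, the selection of a measurable field of orthonormal Bloch bases, and the $L^2(\Sigma)$-sense in which \eqref{fb-Sigma} converges. These are standard for self-adjoint uniformly elliptic periodic operators (it is precisely the ellipticity hypothesis on $\LA$ that supplies the discrete, complete per-fiber spectral decomposition) and are handled exactly as in the Schr\"odinger analogue, Theorem~4.2 of \cites{FLW-AnnalsPDE:16}, on which this statement is modeled.
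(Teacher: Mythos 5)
Your proposal is correct and follows essentially the same route the paper intends: the paper gives no independent proof of Theorem \ref{fourier-edge} but defers to Theorem 4.2 of \cites{FLW-AnnalsPDE:16}, whose argument is precisely your partial (transverse) Zak transform $f\mapsto f_\lambda$, the unitary direct-integral decomposition of $L^2_\kparvstar(\Sigma)$ over the fibers $L^2_{\bK_\star+\lambda\ktilde_2}$, and per-fiber completeness of the Bloch eigenfunctions supplied by the elliptic Floquet--Bloch theory of Section \ref{fb_theory}. Your checks of the pseudo-periodicity in $\vtilde_1$ (via $\ktilde_2\cdot\vtilde_1=0$), the $1$-periodicity in $\lambda$, the inversion and Plancherel identities, the unfolding $\int_{\Omega_h}\leftrightarrow\int_\Sigma$ for the coefficient formula, and the plane-wave specialization in part (2) are all accurate, and you correctly flag the only delicate points (measurable choice of Bloch bases and the $L^2$ sense of convergence) as standard.
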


We now seek $(\Psi,E)$, a solution of the eigenvalue problem \eqref{EVP}, in the form of a truncated multiscale expansion plus a corrector to be constructed:
\begin{align}
\Psi^\delta &\equiv \Psi^{(0)}(\bx,\delta\ktilde_2\cdot\bx)+
\delta\Psi^{(1)}(\bx,\delta\ktilde_2\cdot\bx)+\delta U(\bx) , \label{U-def}\\
E^\delta &\equiv E_D + \delta^2\mu \label{mu-def},
\end{align}
where  $ \Psi^{(0)}(\bx,\delta\ktilde_2\cdot\bx) = \alpha^{\bK_\star}_1(\delta\ktilde_2\cdot\bx)\Phi^{\bK_\star}_1(\bx)+\alpha^{\bK_\star}_2(\delta\ktilde_2\cdot\bx)\Phi^{\bK_\star}_2(\bx)
$. The pair  $\balpha^{\bK_\star}(\zeta)=(\alpha^{\bK_\star}_1(\zeta),\alpha^{\bK_\star}_2(\zeta))^T$ is a zero energy eigenstate of a Dirac operator: $\mathcal{D}^{\bK_\star}\balpha^{\bK_\star}(\zeta) = 0$, $\balpha^{\bK_\star}\in L^2(\R)$,
where $\mathcal{D}^{\bK_\star}$ is displayed in \eqref{dirac_op}; see \eqref{0mode_theta} of Proposition \ref{0E_prop}. $\Psi^{(1)}(\bx,\delta\ktilde_2\cdot\bx)$ is the solution of \eqref{Eq_delta1} with $E^{(1)}=0$. Both  $\Psi^{(0)}(\bx,\zeta)$ and $\Psi^{(1)}(\bx,\zeta)$ are constructed
to be $\bK_\star-$ pseudo-periodic as functions of $\bx$ and decaying as functions of $\zeta$. Finally, we seek the $U\in L^2_{\kpar=\vtilde_1\cdot\bK}$.

Substitution of \eqref{U-def} and \eqref{mu-def} into \eqref{EVP} yields an equation for the corrector $(\mu,U(\bx))$.
 By Theorem \ref{fourier-edge}, we have
 \begin{equation}
 U(\bx)=\sum_{b\ge1}\int_{-\frac12}^{\frac12} \widetilde{U}_b(\lambda) \Phi_b(\bx;\bK_\star+\lambda \ktilde_2) d\lambda\ .
 \label{U-exp}
 \end{equation}
The expression in \eqref{U-exp} is a superposition of modes corresponding to the {\it band structure slice at quasi-momentum $\bK_\star$, which is dual to the edge $\R\vtilde_1$}:
\begin{equation*}
\lambda\mapsto E_b(\bK_\star+\lambda\ktilde_2),\ \ |\lambda|\le 1/2,\ \ b\ge1.
\end{equation*}
 Figure \ref{E_k_slices} displays the first three dispersion curves of band structure slices for the zigzag and armchair edges, respectively.

The equation for $U(\bx)$ is equivalent to the coupled system for the amplitudes $\{\widetilde{U}_b(\lambda)\}_{b\ge1},\ |\lambda|\le1/2$, associated with the Floquet-Bloch modes
 $\{\Phi_b(\bx;\bK_\star+\lambda\ktilde_2)\}_{b\ge1},\ |\lambda|\le1/2$:

\begin{equation}
\label{amp_b}
 \begin{split}
& \left( E_b(\lambda)  - E_D \right) \widetilde{U}_b(\lambda) -
\delta \left\langle \Phi_b^{\bK_\star}(\cdot;\lambda) , \nabla_\bx\cdot \left[ \eta(\delta\ktilde_2\cdot\bx)  B(\bx) \nabla_\bx U(\bx) \right]  \right\rangle_{L^2_{\kparv}(\Sigma_\bx)} \\
& \qquad = \delta\widetilde{F}_b[\mu,\delta](\lambda)  + \delta^2\ \mu\ \widetilde{U}_b(\lambda) , \qquad b\ge1, \ |\lambda|\le1/2 .
\end{split}
\end{equation}

Here, $\widetilde{F}_b[\mu,\delta](\lambda)
$ is an expression involving  $\Psi^{(j)}(\bx,\delta\ktilde_2\cdot\bx)$, $j=0,1$, and their derivatives, projected onto
the Floquet-Bloch mode, $\Phi_b^{\bK_\star}(\bx;\bK_\star+\lambda\ktilde_2)$. A solution $\{\widetilde{U}_b(\lambda): b\ge1,\ |\lambda|\le1/2\}$ of the system \eqref{amp_b} is sought which satisfies $\sum_{b\ge1} \int_{-1/2}^{1/2}\ (1+|\lambda|^2)^s\ |\widetilde{U}_b(\lambda)|^2\ d\lambda<\infty$, for some $s\ge0$.

Suppose that the Dirac point $(\bK_\star,E_D)$ occurs at the intersection of the $b_*^{th}$ and $(b_*+1)^{st}$ spectral bands of $\LA$.

Following the strategy of \cite{FLW-AnnalsPDE:16}, we decompose the system \eqref{amp_b} into two subsystems consisting of near- and far- quasi-momentum Floquet-Bloch amplitudes:
\begin{itemize}
 \item $\widetilde{U}_{{\rm near}}(\lambda)=\{\widetilde{U}_{{\rm near},-}(\lambda) , \widetilde{U}_{{\rm near},+}(\lambda)\}$, for $|\lambda|\le \delta^\nu$: Floquet-Bloch amplitudes corresponding to quasi-momenta which are near the Dirac point, where
 $\{-,+\} \equiv \{b_\star, b_{\star}+1\}$.
 \item $\{ \widetilde{U}_{{\rm far},b}(\lambda)\}_{_{b\ge1}}$ for $1/2\ge|\lambda|\ge \left(\delta_{b,b_\star}+\delta_{b,b_\star+1}\right) \delta^\nu$: Floquet-Bloch amplitudes corresponding to quasi-momenta which are bounded away from the Dirac point.
\end{itemize}
Here, $ \nu>0$ is appropriately chosen and fixed.

To construct a solution for \eqref{amp_b}, we proceed via a Lyapunov-Schmidt reduction strategy.
The first step is to solve for the far-amplitudes in terms of the near-amplitudes, by constructing the mapping: $\widetilde{U}_{{\rm near}}[\mu,\delta]\mapsto \widetilde{U}_{{\rm far}}[\widetilde{U}_{{\rm near}},\mu,\delta]$.
The second step is to substitute this mapping into the near-amplitude equations to obtain a closed system of two
(nonlocal) equations for $\widetilde{U}_{{\rm near},-}(\lambda)$ and $\widetilde{U}_{{\rm near},+}(\lambda)$, for $|\lambda|\le \delta^\nu$. An appropriate rescaling of this closed system  can be solved for $\delta$ sufficiently small. In this way, the corrector $(U^\delta,\mu^\delta)$ can be constructed.

The construction  of the mapping $\widetilde{U}_{{\rm near}}[\mu,\delta]\mapsto \widetilde{U}_{{\rm far}}[\widetilde{U}_{{\rm near}},\mu,\delta]$ and the necessary estimates on this mapping requires the \emph{spectral no-fold condition},  introduced in \cites{FLW-AnnalsPDE:16}. Specifically, we require that  for $\lambda$ bounded away from $\lambda=0$ or $b\notin\{b_*,b_*+1\}$, the energy difference $E_{b}(\lambda)  - E_D$ is uniformly bounded away from zero.
This condition ensures that, for $\delta\ne0$, the $L^2_{\kpar=\bK_\star\cdot\vtilde_1}-$ spectra of the asymptotic operators $\mathcal{L}_-^{(\delta)}$ and  $\mathcal{L}_+^{(\delta)}$ (see \eqref{asymptotic_ops}) have a full spectral gap containing the energy $E=E_D$. In one-dimensional periodic problems with Dirac points (linear band crossings), the spectral no-fold condition holds automatically; see \cites{FLW-PNAS:14,FLW-MAMS:17}.

The left panel of Figure \ref{E_k_slices} displays the first three bands of the zigzag slice of the band structure for a honeycomb structure, $A(\bx)$.
By Theorem \ref{PT_thm}, the corresponding slice of the band structure of $\mathcal{L}^{(\delta)}_\pm=-\nabla_\bx\cdot\left(A(\bx)\pm\delta\eta_\infty B(\bx)\right)\nabla_\bx$ will have small local ($\lambda$ near 0) gap with energy $E_D$ in its interior.
Since the energies ($E$) attained by the dispersion curves of $\LA$, outside a neighborhood of $\lambda=0$, are bounded away from $E_D$, it follows that for $\delta$ sufficiently small  $\mathcal{L}^{(\delta)}$ has a full $L^2_{\kpar=\bK_\star\cdot\vtilde_1}-$ spectral gap, an open interval about $E_D$. That is, the spectral no-fold condition holds in this case.

The right panel of Figure \ref{E_k_slices} displays the first three bands of the armchair slice for the same choice of $A(\bx)$.
Again, by Theorem \ref{PT_thm}, the corresponding slices of the band structures of $\mathcal{L}^{(\delta)}_\pm$ have a local spectral gap near $\bk=\bK$ ($\lambda=0$). However, the intersections of the horizontal dashed line away from $\lambda=0$ indicate that for $\delta$ small there is no $L^2_{\kpar=\bK\cdot\vtilde_1}-$ spectral gap; the spectral no-fold condition fails for the armchair edge.

\begin{remark}\label{high_contrast_limit}
A honeycomb Schr\"odinger operator is an operator of the form $H^\lambda=-\Delta +\lambda^2 V(\bx)$,
where $V$ has the symmetries of a hexagonal tiling of the plane \cites{FW:12}. An important special case is where
$V(\bx)$ is a superposition of translates of  an `atomic' potential well. For any fixed rational edge,  the spectral no-fold condition as stated in  \cites{FLW-AnnalsPDE:16}, holds for all $\lambda$  sufficiently large.
This follows from the scaled-convergence of the low-lying dispersion surfaces to those of the tight-binding limiting model, proved in \cites{FLW-CPAM:17}.  In the setting of Maxwell's equations, numerical studies
indicate that for high contrast the spectral no-fold condition holds for the zig-zag edge. However, to date we have not found this condition to hold for other edges. It is an interesting question to determine for which edges the no-fold condition holds. This has implications for energy localization and is currently under investigation.  \end{remark}

%

The above strategy is implemented along the lines of the Schr\"odinger operator setting \cites{FLW-AnnalsPDE:16}.
An essential difference between the Schr\"odinger setting and the present Maxwell setting
 is that the perturbation, $\delta \nabla_\bx\cdot \eta(\delta\ktilde_2\cdot\bx)  B(\bx)\nabla_\bx U(\bx)$, which is projected  in \eqref{amp_b}, is a second order differential operator acting on the unknown $U$; compare with  Equation (7.12) of \cites{FLW-AnnalsPDE:16}.
This difference results in simple changes in the analysis of the closed system for the near-amplitudes, $\widetilde{U}_{{\rm near},-}(\lambda)$ and $U_{{\rm near},+}(\lambda)$ (step two), and to a different definition of $\thetasharp^{\bK_\star}$; see \eqref{thta_shp}). For the  Schr\"odinger case,  see Proposition 6.2 of \cites{FLW-AnnalsPDE:16}.
To obtain the  far-amplitudes as a functional of the near-amplitudes (step one), we construct the mapping $\widetilde{U}_{{\rm near}}[\mu,\delta]\mapsto \widetilde{U}_{{\rm far}}[\widetilde{U}_{{\rm near}},\mu,\delta]$ by solving the fixed point equation:
 $\mathcal{Q}_\delta \phi = \phi$, where
 {\footnotesize{
  \begin{align}
&\widetilde{\left[ \mathcal{Q}_\delta \phi \right]}_b(\lambda)\nn\\
&\quad \equiv -\delta \frac{ \chi(\abs{\lambda}\geq(\delta_{b,b_{\star}}+\delta_{b,b_{\star}+1})\delta^{\exponent})}{E_b(\lambda)-E_{\star}}
\inner{\Phi_{b}(\bx,\lambda),\nabla_{\bx}\cdot \eta(\delta \ktilde_2 \bx)B(\bx) \nabla_\bx \phi (\bx)}_{L_\kparv^2} \nn \\
&\quad + \delta^2\mu \frac{\widetilde{\phi}_b(\lambda)}{{E_b(\lambda)-E_{\star}}} ;  \label{tQ-def}
\end{align}
}}
For  $s\geq0$, introduce the Sobolev space of order $s$ consisting of  $H^s_{\kparvstar}(\Sigma)$ functions with quasi-momenta $\lesssim\delta^\nu$:
$H^s_{\rm near, \delta^\nu} \equiv \{f \in H^s_{\kparvstar}(\Sigma) : \widetilde{f}_b(\lambda) = \chi(|\lambda| \leq \delta^\nu) \widetilde{f}_b(\lambda)\}$.
One can verify, using that $(\LA+I)^{-1}\ \nabla_{\bx}\cdot \eta(\delta \ktilde_2 \bx)B(\bx) \nabla_\bx$  is a bounded operator on $H^2_{\kparvstar}$, that  $\mathcal{Q}_\delta$ maps $H^2_{\rm near, \delta^\nu}$ to $H^2_{\rm near, \delta^\nu}$ with norm bounded by ${\rm constant}\times\mathfrak{e}(\delta)$,%
where
 \begin{align}
&\mathfrak{e}(\delta)\equiv \sup_{b=\pm}\ \ \sup_{|\delta|^\exponent\le|\lambda|\le\frac12}\
 \frac{|\delta|}{|E_b(\lambda)-E_{\star}|}\nn\\
 &\quad  +\ \sup_{ b\ge1,\ b\ne\pm}\ (\ 1+|b|\ ) \sup_{0\le|\lambda|\le\frac12}
 \frac{|\delta|}{|E_b(\lambda)-E_{\star}|} .
\label{frak-e-def}
\end{align}
The spectral no-fold condition ensures that $\mathfrak{e}(\delta)=o(1)$ as $\delta\rightarrow0$.
Hence, by the contraction mapping principle the fixed point equation $\mathcal{Q}_\delta \phi = \phi$ may be solved, for $\delta>0$ and sufficiently small, on an appropriately sized ball in $H^2_{\rm near, \delta^\nu}$.

With these observations, the above strategy of proof yields the existence of edge states for the Maxwell system:

\begin{theorem}\label{rig-edge}
 Consider the $\kpar=\bK\cdot\vtilde_1-$ edge state eigenvalue problem
\eqref{EVP}.
There exist positive constants $\delta_0, c_0$ and a branch of solutions of \eqref{EVP},
   \[|\delta|\in(0,\delta_0)\longmapsto (E^\delta,\Psi^\delta)\in (E_D-c_0\ \delta_0\ ,\ E_D+c_0\ \delta_0)\times H^2_{\kparvstar}(\Sigma) , \]
   such that the following holds:
\begin{enumerate}
\item $\Psi^\delta$ is well-approximated by a slow modulation of the
degenerate Floquet-Bloch modes $\Phi_1^{\bK_\star}$ and $\Phi_2^{\bK_\star}$,
which decays to zero transverse to the edge, $\Z\vtilde_1$:
\begin{align}
&\left\|\ \Psi^\delta(\cdot)\ -\
\left[\alpha^{\bK_\star}_{\star,1}(\delta\ktilde_2 \cdot)\Phi_1^{\bK_\star}(\cdot)+\alpha^{\bK_\star}_{\star,2}(\delta\ktilde_2 \cdot )\Phi_2^{\bK_\star}(\cdot)\right]\
\right\|_{H^2_{\kparvstar}}\
\lesssim\ \delta^{\frac12} , \label{TM_Psi-error}\\
& E^\delta = E_D + \mathcal{O}(\delta^2) . \label{TM_E-error}
\end{align}
\item The amplitude vector,  $\balpha^{\bK_\star}_\star(\zeta)=\left(\alpha^{\bK_\star}_{\star,1}(\zeta),\alpha^{\bK_\star}_{\star,2}(\zeta)\right)$, is an  $L^2(\R_\zeta)- $
normalized, topologically protected zero-energy eigenstate of the Dirac operator, $\mathcal{D}^{\bK_\star}$; see \eqref{dirac_op}.
\end{enumerate}
\end{theorem}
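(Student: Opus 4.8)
The plan is to cast the edge-state problem \eqref{EVP} as a perturbation problem on $H^2_{\kparvstar}(\Sigma)$ and to construct the branch $\delta\mapsto(E^\delta,\Psi^\delta)$ by a Lyapunov--Schmidt reduction, paralleling the Schr\"odinger analysis of \cites{FLW-AnnalsPDE:16} while accounting for the divergence-form (second-order) character of the perturbation. Concretely, I would seek a solution in the form \eqref{U-def}--\eqref{mu-def}, namely $\Psi^\delta=\Psi^{(0)}(\bx,\delta\ktilde_2\cdot\bx)+\delta\Psi^{(1)}(\bx,\delta\ktilde_2\cdot\bx)+\delta U(\bx)$ and $E^\delta=E_D+\delta^2\mu$, where $\Psi^{(0)}$ is the leading modulated Dirac mode built from the topologically protected zero-energy eigenstate $\balpha^{\bK_\star}_\star$ of $\mathcal{D}^{\bK_\star}$ (Proposition \ref{0E_prop}), and $\Psi^{(1)}$ solves the order-$\delta$ equation \eqref{Eq_delta1} with $E^{(1)}=0$. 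The pair $(\mu,U)$ is the unknown corrector, and the goal is to determine it for all small $\delta$ with $\mu=\mathcal{O}(1)$, so that the correction $\delta U$ enters \eqref{TM_Psi-error} at order $\delta^{1/2}$.

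\textbf{Reduction.} Substituting the ansatz into \eqref{EVP} and using that the first two orders cancel by the solvability relations \eqref{Solvability_d} and the Dirac zero-mode equation \eqref{dirac_eqn}, I would reduce to the projected linear system \eqref{amp_b} for the Floquet--Bloch amplitudes $\{\widetilde{U}_b(\lambda)\}$, with $U$ expanded via Theorem \ref{fourier-edge} along the band-structure slice $\lambda\mapsto E_b(\bK_\star+\lambda\ktilde_2)$. Following \cites{FLW-AnnalsPDE:16}, I would split the amplitudes into a near part $\widetilde{U}_{\rm near}$ (indices $b\in\{b_\star,b_\star+1\}$ and $|\lambda|\le\delta^\nu$) and a far part $\widetilde{U}_{\rm far}$. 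The first step of the reduction is to solve for the far amplitudes in terms of the near amplitudes by recasting the far subsystem as the fixed-point equation $\mathcal{Q}_\delta\phi=\phi$, with $\mathcal{Q}_\delta$ as in \eqref{tQ-def}. The essential estimate is that $\mathcal{Q}_\delta$ maps $H^2_{\rm near,\delta^\nu}$ to itself with operator norm $\lesssim\mathfrak{e}(\delta)$ (see \eqref{frak-e-def}); the \emph{spectral no-fold condition} guarantees $\mathfrak{e}(\delta)=o(1)$, so the contraction mapping principle yields a unique far-amplitude solution $\widetilde{U}_{\rm far}[\widetilde{U}_{\rm near},\mu,\delta]$ depending smoothly and linearly on $\widetilde{U}_{\rm near}$.

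\textbf{Closed near system and conclusion.} Inserting this map into the near-amplitude equations gives a closed nonlocal $2\times2$ system for $(\widetilde{U}_{{\rm near},-},\widetilde{U}_{{\rm near},+})$ on $|\lambda|\le\delta^\nu$. Rescaling $\lambda=\delta\Lambda$ and passing to the limit $\delta\to0$, this system converges to the eigenvalue problem $(\mathcal{D}^{\bK_\star}-\mu)\balpha=0$ on $L^2(\R)$. Since $E^{(1)}=0$ is an isolated, simple eigenvalue of $\mathcal{D}^{\bK_\star}$ lying in the gap $(-|\thetasharp^{\bK_\star}|\eta_\infty,|\thetasharp^{\bK_\star}|\eta_\infty)$ of the essential spectrum (Proposition \ref{0E_prop} and standard Dirac-operator theory), the reduced problem is nondegenerate, and a further implicit-function/Lyapunov--Schmidt argument at the level of the rescaled system determines $(\mu^\delta,\widetilde{U}_{\rm near})$ uniquely for small $\delta$, with $\mu^\delta$ bounded. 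Unwinding the construction yields $E^\delta=E_D+\mathcal{O}(\delta^2)$ and the approximation estimate \eqref{TM_Psi-error}, and identifies the leading amplitude with $\balpha^{\bK_\star}_\star$, establishing both assertions.

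\textbf{Main obstacle.} The principal difference from the Schr\"odinger case, and the main technical point, is that the perturbation appearing in \eqref{amp_b}, $\delta\,\nabla_\bx\cdot[\eta(\delta\ktilde_2\cdot\bx)B(\bx)\nabla_\bx U]$, is a \emph{second-order} differential operator in the unknown $U$ (compare Equation (7.12) of \cites{FLW-AnnalsPDE:16}, a bounded multiplication operator). The remedy is elliptic regularity: I would verify that $(\LA+I)^{-1}\nabla_\bx\cdot\eta(\delta\ktilde_2\cdot\bx)B(\bx)\nabla_\bx$ is bounded on $H^2_{\kparvstar}(\Sigma)$ uniformly in $\delta$, which is precisely what makes $\mathcal{Q}_\delta$ well-defined and small in \eqref{tQ-def}; the derivatives falling on the slow factor $\eta(\delta\ktilde_2\cdot\bx)$ produce only harmless extra powers of $\delta$. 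The second, more structural, difficulty is the spectral no-fold condition itself, which must be verified for the given edge: as discussed above it holds for the zigzag edge in low-contrast media (inherited from \cites{FLW-AnnalsPDE:16}) but can fail, e.g.\ for the armchair edge, and outside such verified cases the construction does not apply.
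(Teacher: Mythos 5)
Your proposal follows essentially the same route as the paper's own argument: the ansatz \eqref{U-def}--\eqref{mu-def}, the Floquet--Bloch decomposition of the corrector via Theorem \ref{fourier-edge}, the near/far splitting of the amplitude system \eqref{amp_b}, the contraction argument for $\mathcal{Q}_\delta$ with norm controlled by $\mathfrak{e}(\delta)$ under the spectral no-fold condition, and the rescaled closed near system limiting to the Dirac eigenvalue problem. You also correctly single out the one genuine departure from \cites{FLW-AnnalsPDE:16} — the second-order nature of the perturbation, handled by the boundedness of $(\LA+I)^{-1}\nabla_\bx\cdot\eta(\delta\ktilde_2\cdot\bx)B(\bx)\nabla_\bx$ on $H^2_{\kparvstar}$ — which is exactly the point the paper emphasizes.
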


\begin{figure}
\centering
\includegraphics[width=4in]{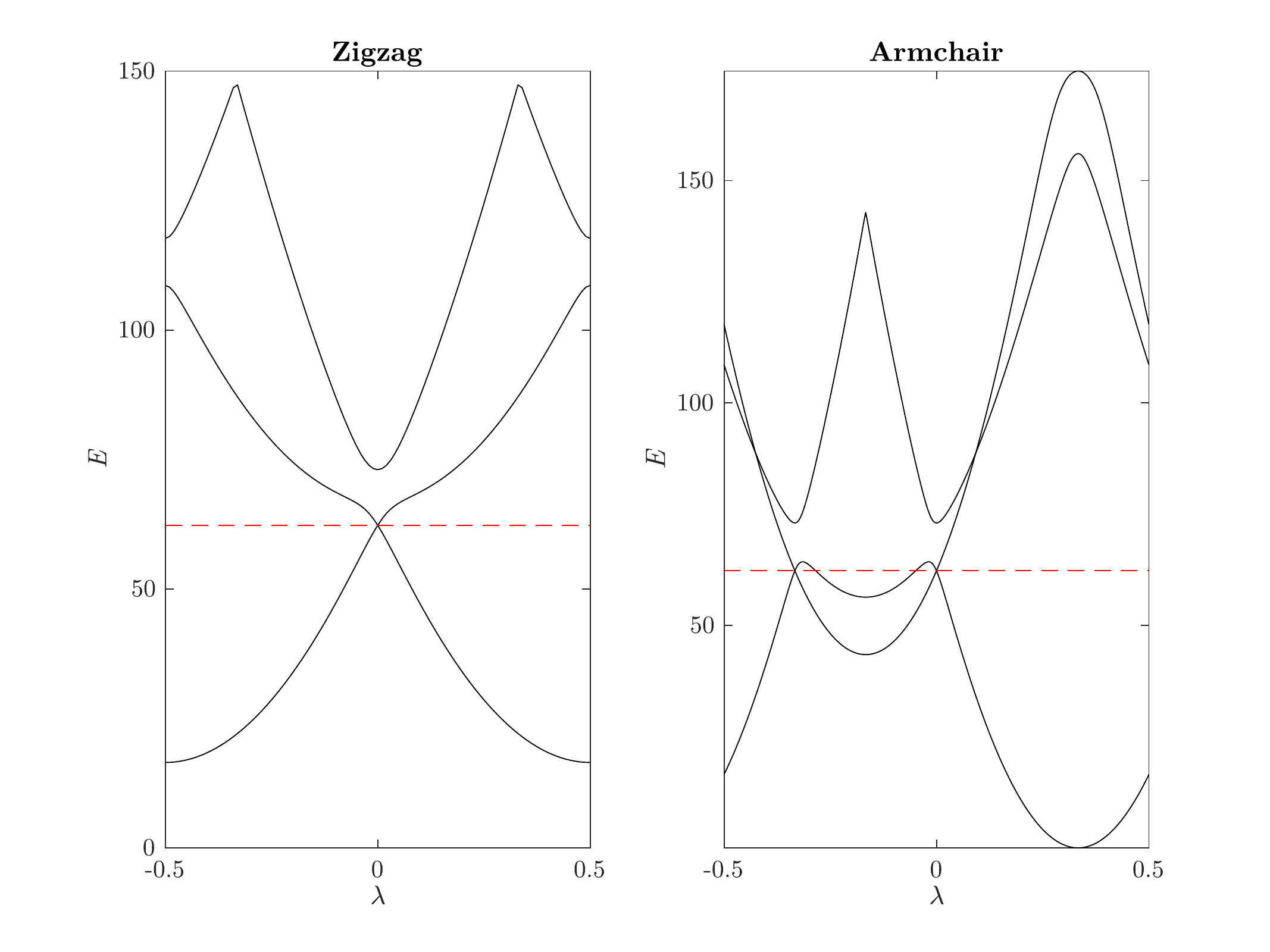}
\caption{\footnotesize
Band dispersion slices of $\LA$ along the quasi-momentum segments:  $\bK+\lambda\ktilde_2,\ |\lambda|\le1/2$, for $\ktilde_2=\bk_2$ (zigzag) and $\ktilde_2=-\bk_1+\bk_2$ (armchair).
Here, $A(\bx)=\e(x)I$ where $\e(x)$ is a particular honeycomb lattice function.
Dirac point energy levels $E=E_D$ are indicated with dotted red lines.
The spectral no-fold condition is satisfied for the zigzag slice (left panel) but not the armchair slice (right panel).}
\label{E_k_slices}
\end{figure}

\section{Edge state dispersion curves}\label{edge_state_kpar}

Under the stated hypotheses, Theorem \ref{edgemode_thm} guarantees the existence of protected edge states for $\delta>0$ sufficiently small and $\kpar=\bK_\star\cdot\vtilde_1$ fixed. Building off this result, perturbation theory can be used to construct edge state eigenpairs, $(\Psi(\cdot;\kpar),E(\kpar))$, of the eigenvalue problem \eqref{dw_evp}-\eqref{localized} for all $\kpar$ in a neighborhood of $\kpar=\bK_\star\cdot\vtilde_1$:

\begin{theorem}
 \label{near-kpar1}
 Assume hypotheses (1)-(3) of Theorem \ref{edgemode_thm}. Let $\bK_\star$ be of the $\bK$ or $\bK'$ type. Fix $\delta$ sufficiently small.
 Then there exists $c_0\ll\delta$ such that for all $\kpar$ satisfying $|\kpar - \bK_\star\cdot\vtilde_1| < c_0$, there exists a
 $\kpar\mapsto \left(E_{\bK_\star}(\kpar),\Psi_{\bK_\star}(\bx;\kpar)\right)$ edge state solution of the eigenvalue problem \eqref{dw_evp}-\eqref{localized} (equivalently \eqref{EVP}). Here, the energy
 $E^\delta_{\bK_\star}(\kpar)= E^\delta_{\bK_\star} + \mu^{\delta} (\kpar-\bK_\star\cdot\vtilde_1) +\mathcal{O}(|\kpar-\bK_\star\cdot\vtilde_1|^2)$, where $E^{\delta}_{\bK_\star}=E_D+\mathcal{O}(\delta^2)$ is the eigenvalue of the edge state given in \eqref{edgemode_thm}, and $\mu^\delta$ is a constant, which is independent of $\kpar$.
\end{theorem}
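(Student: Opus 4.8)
The plan is to regard Theorem~\ref{near-kpar1} as an exercise in analytic perturbation theory in the parameter $\kpar$, with the edge state produced by Theorem~\ref{rig-edge} at the base value $\kpar_0\equiv\bK_\star\cdot\vtilde_1$ serving as the unperturbed eigenpair. First I would move the $\kpar$-dependence out of the domain of the operator and into its coefficients. Since $\ktilde_1\cdot\vtilde_1=2\pi$, the gauge substitution $\Psi(\bx)=e^{i\frac{\kpar}{2\pi}\ktilde_1\cdot\bx}\psi(\bx)$ carries $H^2_\kpar(\Sigma)$ isomorphically onto the fixed space $H^2(\Sigma)=H^2_0(\Sigma)$ and, having unit modulus, preserves the transverse decay condition \eqref{localized}. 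Under this substitution the eigenvalue problem \eqref{dw_evp}--\eqref{localized} becomes $\widehat{\mathcal{L}}(\kpar)\psi=E\psi$ on the $\kpar$-independent space $H^2(\Sigma)$, where
\begin{equation*}
\widehat{\mathcal{L}}(\kpar)=-\left(\nabla+i\tfrac{\kpar}{2\pi}\ktilde_1\right)\cdot\left[A(\bx)+\delta\eta(\delta\ktilde_2\cdot\bx)B(\bx)\right]\left(\nabla+i\tfrac{\kpar}{2\pi}\ktilde_1\right).
\end{equation*}
This family is quadratic, hence entire analytic, in $\kpar$, has constant domain, and is self-adjoint for real $\kpar$; it is therefore a self-adjoint holomorphic family of type~(A) in a complex neighborhood of $\kpar_0$.

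The crux is to verify that the edge-state energy $E^\delta_{\bK_\star}$, which by Theorem~\ref{rig-edge} lies within $\mathcal{O}(\delta^2)$ of $E_D$, remains an isolated, simple eigenvalue of $\widehat{\mathcal{L}}(\kpar)$ for all $\kpar$ near $\kpar_0$. By a standard Weyl-sequence decomposition of the essential spectrum on the cylinder $\Sigma$, the essential spectrum of the fiber operator at quasi-momentum $\kpar$ is the union of the $L^2_\kpar$-spectra of the two asymptotic bulk operators $\mathcal{L}^{(\delta)}_\pm$ (see \eqref{asymptotic_ops}), toward which the coefficients tend as $\ktilde_2\cdot\bx\to\pm\infty$. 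The spectral no-fold condition, together with Theorem~\ref{PT_thm}, guarantees that for $\kpar$ near $\kpar_0$ these bulk spectra have a genuine gap of width $\mathcal{O}(\delta)$ straddling $E_D$. Because the band edges depend Lipschitz-continuously on $\kpar$, choosing $c_0\ll\delta$ keeps $E_D$, and hence the nearby discrete eigenvalue, strictly inside this gap, uniformly for $|\kpar-\kpar_0|<c_0$; this is precisely the origin of the restriction $c_0\ll\delta$ in the statement.

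With a uniform spectral gap in hand, Kato's analytic perturbation theory yields a simple eigenvalue $E^\delta_{\bK_\star}(\kpar)$ and a rank-one eigenprojection depending analytically on $\kpar$ throughout $|\kpar-\kpar_0|<c_0$; normalizing the corresponding eigenfunction produces the analytic branch $\kpar\mapsto\left(E^\delta_{\bK_\star}(\kpar),\Psi_{\bK_\star}(\cdot;\kpar)\right)$ of edge states, which solves \eqref{dw_evp}--\eqref{localized} after undoing the gauge. Taylor expanding at $\kpar_0$ gives $E^\delta_{\bK_\star}(\kpar)=E^\delta_{\bK_\star}+\mu^\delta(\kpar-\kpar_0)+\mathcal{O}(|\kpar-\kpar_0|^2)$, with the Feynman--Hellmann coefficient
\begin{equation*}
\mu^\delta=\left\langle\psi_0,\ \partial_\kpar\widehat{\mathcal{L}}(\kpar_0)\,\psi_0\right\rangle,\qquad \psi_0=e^{-i\frac{\kpar_0}{2\pi}\ktilde_1\cdot\bx}\,\Psi^\delta_{\bK_\star}.
\end{equation*}
Since this is evaluated at the fixed base point, $\mu^\delta$ is manifestly a constant independent of $\kpar$, as asserted. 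Here $\partial_\kpar\widehat{\mathcal{L}}(\kpar_0)$ is a first-order (current) operator built from $\ktilde_1$ and the domain-wall coefficient, so $\mu^\delta$ is exactly the group velocity $\partial_\kpar E$ at $\kpar_0$ analyzed in Section~\ref{direction}.

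I expect the main obstacle to be entirely contained in the second step: one must control the essential spectrum, which itself moves with $\kpar$, and show it stays bounded away from $E^\delta_{\bK_\star}(\kpar)$ uniformly. Because the protecting gap is only $\mathcal{O}(\delta)$ wide, this forces $c_0$ to be small relative to $\delta$ and requires a quantitative, $\delta$-uniform version both of the no-fold condition and of the Lipschitz dependence of the bulk band edges on $\kpar$. Once this uniform gap is secured, the construction of the analytic branch and the expansion of $E^\delta_{\bK_\star}(\kpar)$ are routine.
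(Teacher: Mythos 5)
Your proposal is correct and follows essentially the same route the paper intends: the paper leaves Theorem \ref{near-kpar1} as an application of standard perturbation theory, and its proof of Theorem \ref{gp-vel} uses exactly your gauge substitution $\Psi=e^{i\frac{\kpar}{2\pi}\ktilde_1\cdot\bx}\psi$ with $\kpar=\bK_\star\cdot\vtilde_1+\xi$, the resulting type-(A) family \eqref{kpar_evp_psi1} whose first-order term $\frac{1}{2\pi}\ktilde_1\cdot\left(\mathscr{A}(\bK_\star)+\delta\mathscr{Q}^\delta(\bK_\star)\right)$ is your Feynman--Hellmann operator, and the no-fold condition (via Theorem \ref{PT_thm}) to keep $E^\delta_{\bK_\star}$ isolated in the $\mathcal{O}(\delta)$-wide gap, which is indeed why $c_0\ll\delta$. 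Your filling-in of the details --- the Weyl-sequence identification of the essential spectrum with the union of the asymptotic bulk slices, the Lipschitz-in-$\kpar$ control of the band edges, and Kato/Rellich analyticity of the eigenbranch --- is exactly what the paper's sketch presupposes (if simplicity of $E^\delta_{\bK_\star}$ were not available, Rellich's theorem for self-adjoint analytic families would still yield the analytic branches and the constant $\mu^\delta$).
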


By taking a continuous superposition of the edge states given by Theorem \ref{near-kpar1}, we obtain dispersive wavepackets which are concentrated along the edge. These wavepackets have parallel-quasi-momentum components concentrated near $\bK_\star\cdot\vtilde_1$, and their group velocity is determined by $\frac{dE^{\delta}(\kpar)}{d\kpar}$, for $\kpar$ near $\bK_\star\cdot\vtilde_1$.

\begin{theorem}\label{gp-vel}
Assume hypotheses (1)-(3) of Theorem \ref{edgemode_thm}. Let $\bK_\star$ be of the $\bK$ or $\bK'$ type.
Then,
\begin{equation}
\label{GV2_thm}
 \frac{d E^{\delta}}{d\kpar}\Big|_{_{\kpar=\bK_\star\cdot\vtilde_1}} =
 \left\{
 \begin{array}{ll}
 - \frac{1}{2\pi}\frac{\upsilon_F}{|\ktilde_2|}(\ktilde_2\wedge\ktilde_1) + \mathcal{O}(\delta^{1/2})
 & \quad \text{if}\quad \thetasharp^{\bK_\star}>0 ; \\
 \\
 +\frac{1}{2\pi}\frac{\upsilon_F}{|\ktilde_2|}(\ktilde_2\wedge\ktilde_1) + \mathcal{O}(\delta^{1/2})
  & \quad \text{if}\quad \thetasharp^{\bK_\star}<0 .
  \end{array}\right.
\end{equation}
\end{theorem}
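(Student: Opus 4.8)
The plan is to identify the group velocity $\frac{dE^\delta}{d\kpar}\big|_{\kpar=\bK_\star\cdot\vtilde_1}$ with the coefficient $\mu^\delta$ produced by Theorem \ref{near-kpar1}, and then to evaluate $\mu^\delta$ to leading order in $\delta$ by first-order (Hellmann--Feynman) perturbation theory in the parallel quasi-momentum. In the pseudo-periodic gauge the $\kpar$-dependence of the eigenvalue problem \eqref{EVP} enters only through conjugation by $e^{i\frac{\kpar}{2\pi}\ktilde_1\cdot\bx}$ (recall $\ktilde_1\cdot\vtilde_1=2\pi$, $\ktilde_2\cdot\vtilde_1=0$, and the space \eqref{H_kpar_def}). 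Differentiating the operator family in $\kpar$ brings down, up to the factor $\tfrac{1}{2\pi}$, the current operator $\ktilde_1\cdot\mathscr{A}_{\rm dw}$ associated with the domain-wall constitutive matrix $A(\bx)+\delta\eta(\delta\ktilde_2\cdot\bx)B(\bx)$. Hence, for the normalized edge-state branch $\kpar\mapsto(E^\delta(\kpar),\Psi^\delta(\cdot;\kpar))$ of Theorem \ref{near-kpar1},
\[
\frac{dE^\delta}{d\kpar}\Big|_{\kpar=\bK_\star\cdot\vtilde_1}
=\frac{1}{2\pi}\ \frac{\inner{\Psi^\delta,\ \ktilde_1\cdot\mathscr{A}_{\rm dw}\,\Psi^\delta}_{L^2(\Sigma)}}{\inner{\Psi^\delta,\Psi^\delta}_{L^2(\Sigma)}}.
\]
Since $A+\delta\eta B$ differs from $A$ by $\mathcal{O}(\delta)$, we have $\mathscr{A}_{\rm dw}=\mathscr{A}+\mathcal{O}(\delta)$, so to leading order $\mathscr{A}_{\rm dw}$ may be replaced by the current operator $\mathscr{A}$ of $A(\bx)$.

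Next I would insert the leading-order edge state. By Theorem \ref{rig-edge}, $\Psi^\delta=\Psi^{(0)}+\mathcal{O}_{H^2}(\delta^{1/2})$, where $\Psi^{(0)}(\bx)=\alpha^{\bK_\star}_{\star,1}(\zeta)\Phi_1^{\bK_\star}(\bx)+\alpha^{\bK_\star}_{\star,2}(\zeta)\Phi_2^{\bK_\star}(\bx)$, $\zeta=\delta\ktilde_2\cdot\bx$, and $\balpha^{\bK_\star}_\star=(\alpha^{\bK_\star}_{\star,1},\alpha^{\bK_\star}_{\star,2})$ is the $L^2(\R)$-normalized zero mode of $\mathcal{D}^{\bK_\star}$ from Proposition \ref{0E_prop}. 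When $\ktilde_1\cdot\mathscr{A}$ acts on $\Psi^{(0)}$, the order-one contribution comes from $\mathscr{A}$ differentiating the fast variable in $\Phi_j^{\bK_\star}(\bx)$, while the piece in which $\nabla$ hits the slow variable $\zeta$ is $\mathcal{O}(\delta)$. Separating fast ($\bx$) and slow ($\zeta$) scales and using $\inner{\Phi_a^{\bK_\star},\ktilde_1\cdot\mathscr{A}\Phi_b^{\bK_\star}}_{L^2_{\bK_\star}(\Omega_h)}=[\mathcal{M}^{\bK_\star}_{\mathscr{A}}(\ktilde_1)]_{ab}$ from Proposition \ref{prop_g}, one finds that the transverse scale factor $\delta^{-1}$ (from $d(\ktilde_2\cdot\bx)=\delta^{-1}d\zeta$) is common to numerator and denominator and cancels, leaving
\[
\frac{dE^\delta}{d\kpar}\Big|_{\kpar=\bK_\star\cdot\vtilde_1}
=\frac{1}{2\pi}\ \inner{\balpha^{\bK_\star}_\star,\ \mathcal{M}^{\bK_\star}_{\mathscr{A}}(\ktilde_1)\,\balpha^{\bK_\star}_\star}_{L^2(\R)}+\mathcal{O}(\delta^{1/2}),
\]
since $\|\balpha^{\bK_\star}_\star\|_{L^2(\R)}=1$.

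It then remains a finite-dimensional spinor computation. By \eqref{0mode_theta}, $\balpha^{\bK_\star}_\star(\zeta)=f(\zeta)\,\chi_{\mp}(\ktilde_2^\perp)$ with $f$ real and $\int_\R f^2=1$, where $\chi_-$ is used when $\thetasharp^{\bK_\star}>0$ and $\chi_+$ when $\thetasharp^{\bK_\star}<0$. The expectation therefore reduces to $\chi_\mp^\dagger\,\mathcal{M}^{\bK_\star}_{\mathscr{A}}(\ktilde_1)\,\chi_\mp$. Writing $\mathcal{M}^{\bK_\star}_{\mathscr{A}}(\ktilde_1)=\upsilon_F\big(\ktilde_1^{(1)}\sigma_1-\ktilde_1^{(2)}\sigma_2\big)$ and using $\widehat{\mathfrak{z}}(\ktilde_2^\perp)=i\,\widehat{\mathfrak{z}}(\ktilde_2)$ together with $(\ktilde_1^{(1)}+i\ktilde_1^{(2)})(\ktilde_2^{(1)}-i\ktilde_2^{(2)})=\ktilde_1\cdot\ktilde_2+i(\ktilde_2\wedge\ktilde_1)$, a direct evaluation gives
\[
\chi_\mp^\dagger\,\mathcal{M}^{\bK_\star}_{\mathscr{A}}(\ktilde_1)\,\chi_\mp
=\mp\,\frac{\upsilon_F}{|\ktilde_2|}\,(\ktilde_2\wedge\ktilde_1),\qquad \ktilde_2\wedge\ktilde_1\equiv\ktilde_2^{(1)}\ktilde_1^{(2)}-\ktilde_2^{(2)}\ktilde_1^{(1)}.
\]
Multiplying by $\tfrac{1}{2\pi}$ yields exactly the two cases of \eqref{GV2_thm}.

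The main obstacle is the rigorous justification of the Hellmann--Feynman step and the control of the error at the claimed $\mathcal{O}(\delta^{1/2})$ order. One has the edge state only to $H^2$-accuracy $\delta^{1/2}$ (Theorem \ref{rig-edge}), so one must show that the three approximations made — replacing $\mathscr{A}_{\rm dw}$ by $\mathscr{A}$, replacing $\Psi^\delta$ by $\Psi^{(0)}$, and discarding the slow-variable action of $\nabla$ — each perturb the $\mathcal{O}(1)$ group velocity by no more than $\mathcal{O}(\delta^{1/2})$. Because $\ktilde_1\cdot\mathscr{A}_{\rm dw}$ is a \emph{first}-order differential operator and $\Psi^{(0)},\Psi^\delta\in H^2_{\kparvstar}(\Sigma)$, the relevant cross terms are controlled by $H^1$ boundedness estimates rather than by any new difficulty; the delicate bookkeeping is the uniform normalization and the scale-separation accounting that makes the $\delta^{-1}$ transverse factors cancel between numerator and denominator. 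This can be carried out following the corresponding group-velocity computation for the Schr\"odinger case in \cites{FLW-AnnalsPDE:16}.
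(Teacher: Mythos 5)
Your proposal is correct and follows essentially the same route as the paper's proof: a Hellmann--Feynman differentiation of the $\kpar$-family in the periodic gauge (yielding $\tfrac{1}{2\pi}\langle\Psi^\delta,\ktilde_1\cdot\mathscr{A}\,\Psi^\delta\rangle$ up to $\mathcal{O}(\delta)$), substitution of the leading-order two-scale edge state, scale separation to reduce to $\chi_\mp^\dagger\mathcal{M}^{\bK_\star}_{\mathscr{A}}(\ktilde_1)\chi_\mp$, and the same spinor evaluation giving $\mp\frac{\upsilon_F}{|\ktilde_2|}(\ktilde_2\wedge\ktilde_1)$. The technical point you flag --- controlling the error at $\mathcal{O}(\delta^{1/2})$ via the multiple-scale inner-product lemma --- is exactly how the paper handles it (citing Lemma 7.9 of the Schr\"odinger-case reference).
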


\begin{proof}[Proof of Theorem \ref{gp-vel}]
The edge state eigenvalue problem with $\kpar-$ pseudo-periodicity is given \eqref{dw_evp}-\eqref{localized}.
It is convenient to factor the Bloch phase with parallel quasi-momentum, $\kpar$, in order to replace \eqref{pseudo-per} by an equivalent periodic boundary condition. Thus, we set
$\Psi(\bx; \kpar) = e^{i\frac{\kpar}{2\pi} \ktilde_1 \cdot \bx} \psi(\bx; \kpar)$, where $\psi(\bx+\vtilde_1; \kpar)=\psi(\bx; \kpar)$. Further, we set
$
\frac{\kpar}{2\pi} \ktilde_1 \equiv \bK_\star + \frac{\xi}{2\pi} \ktilde_1$, with  $ |\xi| \ll 1$ .
Since $\ktilde_1\cdot\vtilde_1=2\pi$, it follows that
\[\kpar=\bK_\star\cdot\vtilde_1+\xi.\]
Substituting
\[\Psi(\bx; \kpar) = e^{i\frac{\kpar}{2\pi} \ktilde_1 \cdot \bx} \psi(\bx; \kpar) = e^{i (\bK_\star + \frac{\xi}{2\pi} \ktilde_1) \cdot \bx} \psi(\bx; \xi)\]
into \eqref{dw_evp}-\eqref{localized}, yields an equation for
 $\psi(\bx;\xi)$:
\begin{align}
&\left[\mathcal{L}_{_{\rm dw}}^{(\delta)}(\bK_\star)+\frac{\xi}{2\pi} \ktilde_1\cdot\left(
\mathscr{A}(\bK_\star)+\delta \mathscr{Q}^\delta(\bK_\star) \right) \right. \label{kpar_evp_psi1}  \\
&\qquad \left. + \left(\frac{\xi}{2\pi}\right)^2 \ktilde_1 \cdot (A + \delta \eta^\delta B) \ktilde_1 \right] \psi^\delta (\bx; \xi)  \nn
= E^\delta (\xi) \psi^\delta (\bx; \xi) , \\
&\psi(\bx+\vtilde_1; \xi) = \psi(\bx;\xi) , \qquad \psi(\bx; \xi) \to 0\ \ {\rm as}\ \ |\bx\cdot\ktilde_2|\to\infty . \label{kpar_evp_psi3}
\end{align}

Here, $\mathscr{A}$ and $\mathscr{Q}^\delta$ are first order operators given by:
\begin{align*}
\mathscr{A}&=\frac{1}{i}\nabla\cdot A + A\frac{1}{i}\nabla,\qquad
\mathscr{A}(\bb k)=e^{-i\bk\cdot \bx}\ \mathcal{A}\ e^{-i\bk\cdot \bx} ; \quad \text{see also \eqref{Ak_def}}; \nn\\
\mathscr{Q}^\delta&=\frac{1}{i}\nabla\eta^\delta B+\frac{1}{i}\eta^\delta B\nabla,\qquad \mathscr{Q}^\delta(\bb k)=e^{-i\bk\cdot \bx}\ \mathscr{Q}^\delta\ e^{-i\bk\cdot \bx}.
\end{align*}
Recall that $E^\delta(\xi=0)=E_{K_\star}^\delta$ is an edge state eigenvalue with corresponding $L^2(\Sigma)-$ (normalized) eigenstate $\psi^\delta (\bx; 0)=e^{-i\bK_\star\cdot\bx} \Psi^\delta(\bx;0)=e^{-i\bK_\star\cdot\bx} \Psi^\delta_{\bK_\star}$.
Therefore, differentiating \eqref{kpar_evp_psi1} with respect to $\xi$ and setting $\xi=0$ yields
\begin{align}
\label{ham_xi_diff1}
&\left( \mathcal{L}_{_{\rm dw}}^{(\delta)}(\bK_\star) - E_{\bK_\star}^\delta \right) \dot\psi^\delta (\bx; 0)\nn\\
&\quad = \dot E^\delta(0) \psi^\delta (\bx; 0)-\frac{1}{2\pi} \ktilde_1\cdot
\left( \mathscr{A}(\bK_\star)+\delta \mathscr{Q}^\delta(\bK_\star) \right) \psi^\delta (\bx; 0) ,
\end{align}
where we use the notation $\dot f(0) \equiv \D_\xi f(\xi)|_{\xi=0}$.

Taking the $L^2(\Sigma)-$ inner product  of \eqref{ham_xi_diff1}
 with $\psi^\delta (\bx; 0)$, we obtain the following expression for the group velocity at $\xi=0$ ($\kpar=\bK_\star\cdot\vtilde_1$):
 \begin{align}
 \frac{d E^\delta}{d\xi}(0)
 & = \frac{1}{2\pi} \inner{\psi^\delta (\bx; 0) , \ktilde_1\cdot
 \left( \mathscr{A}(\bK_\star)+\delta \mathscr{Q}^\delta(\bK_\star) \right) \psi^\delta (\bx; 0)}_{_{L^2(\Sigma)}} \nn\\
 & = \frac{1}{2\pi} \inner{ \Psi^\delta_{\bK_\star} , \ktilde_1\cdot \mathscr{A}\ \Psi^\delta_{\bK_\star} }_{_{L^2(\Sigma)}} + \mathcal{O}(\delta) .
 \label{E_xi_0}
\end{align}

Since $\delta$ is small, we may further simplify the expression for the group velocity in \eqref{E_xi_0} by substituting the expansion for  $\Psi_{\bK_\star}^{\delta}(\bx)$, displayed in \eqref{ldord} of Theorem \ref{edgemode_thm}.
N.B. In \eqref{ldord}, the positive sign corresponds to the case  $\vartheta_\sharp<0$ and the negative sign to the case $\vartheta_\sharp>0$.

We substitute \eqref{ldord} into \eqref{E_xi_0}. This inner product is a sum of terms of the form:
\begin{equation}
 \inner{ \Gamma_1(\bx,\delta\bk_2\cdot\bx), \mathscr{A}(\bx) \Gamma_2(\bx,\delta\bk_2\cdot\bx)}_{_{L^2(\Sigma)}},
\end{equation}
where $\bx\mapsto\Gamma_j(\bx,\zeta)$ is $\Lambda_h-$ periodic and $H^2(\Omega_h)$  with values in $L^2(\R_\zeta)$, and $\mathscr{A}\equiv\mathscr{A}(\bx)=\frac{1}{i}\nabla_{\bx}\cdot A(\bx) + A(\bx)\frac{1}{i}\nabla_{\bx}$.
Therefore, we may exploit the multiple scale character to evaluate the inner product; see Lemma 7.9 of \cites{FLW-AnnalsPDE:16}. We obtain:

\begin{align}
 \frac{d E^\delta}{d\xi}(0)\ =\
 &\frac{1}{2\pi}\ \ktilde_1 \cdot
 \inner{ \chi_\pm(\ktilde_2^\perp)\cdot\left(\Phi_1^{\bK_\star},\Phi_2^{\bK_\star}\right) ,
 \mathscr{A}\  \chi_\pm(\ktilde_2^\perp)\cdot\left(\Phi_1^{\bK_\star},\Phi_2^{\bK_\star}\right) }_{L^2(\Omega_{h})}\nn\\
 &\qquad + \mathcal{O}(\delta^{1/2}),
 \label{GV1}
 \end{align}

where we have also substituted the value of $\gamma$, displayed in \eqref{gamma-def}.
Again we emphasize that in \eqref{GV1}
the positive sign corresponds to the case
$\vartheta_\sharp<0$ and the negative sign to the case
 $\vartheta_\sharp>0$.
Finally, observe that
\begin{align}
& \ktilde_1 \cdot \inner{ \chi_\pm(\ktilde_2^\perp)\cdot\left(\Phi_1^{\bK_\star},\Phi_2^{\bK_\star}\right) ,
  \mathscr{A}\ \chi_\pm(\ktilde_2^\perp)\cdot\left(\Phi_1^{\bK_\star},\Phi_2^{\bK_\star}\right) }_{L^2(\Omega_{h})} \nn\\
& \quad=\chi_\pm^\dagger(\ktilde_2^{\perp})\mathcal{M}_\mathscr{A}(\ktilde_1)\ \chi_\pm(\ktilde_2^{\perp}) \nn \\
&= \pm\frac{\upsilon_F}{|\ktilde_2|}
 \Re\left(\overline{\mathfrak{z}(\ktilde_2^\perp)} \mathfrak{z}(\ktilde_2)\right) \nn \\
 &= \pm \frac{\upsilon_F}{|\ktilde_2|}(\ktilde_1^{(2)}\ktilde_2^{(1)}-\ktilde_1^{(1)}\ktilde_2^{(2)})
 = \left\{ \begin{array}{ll}
 +\frac{\upsilon_F}{|\ktilde_2|}(\ktilde_2\wedge\ktilde_1) & \quad \text{if} \quad \vartheta_\sharp<0; \\
 -\frac{\upsilon_F}{|\ktilde_2|}(\ktilde_2\wedge\ktilde_1) & \quad \text{if} \quad \vartheta_\sharp>0.
 \end{array} \right.
 \label{phipminner}
\end{align}
Here, $\ktilde_2\wedge\ktilde_1\equiv (\ktilde_1^{(2)}\ktilde_2^{(1)}-\ktilde_1^{(1)}\ktilde_2^{(2)})$.

Substituting \eqref{phipminner} into \eqref{GV1}, we obtain
\begin{equation}
\label{GV2}
 \frac{d E^\delta}{d\xi}(0) =
 \left\{
 \begin{array}{ll}
 - \frac{1}{2\pi}\frac{\upsilon_F}{|\ktilde_2|}(\ktilde_2\wedge\ktilde_1) + \mathcal{O}(\delta^{1/2})
 & \quad \text{if} \quad \thetasharp^{\bK_\star}>0 ;\\
 \\
 + \frac{1}{2\pi}\frac{\upsilon_F}{|\ktilde_2|}(\ktilde_2\wedge\ktilde_1) + \mathcal{O}(\delta^{1/2})
  & \quad \text{if}\quad \thetasharp^{\bK_\star}<0 .
  \end{array} \right.
\end{equation}
Equation \eqref{GV2_thm} and Theorem \ref{gp-vel} follow from \eqref{GV2} using the change of variables: $\xi=\kpar-\bK_\star\cdot\vtilde_1$.
\end{proof}

\subsection{The direction of energy propagation along the edge}\label{direction}

In this section we apply Theorem \ref{gp-vel} to two particular situations where $\LB=-\nabla\cdot B\nabla$ is $\pc$ anti-symmetric: $\pc\LB=-\LB\pc$.
\begin{enumerate}
\item  Preserving $\mathcal{C}-$ invariance and breaking $\mathcal{P}-$ invariance;
$[\mathcal{C},\LB]=0$ and $\mathcal {P}\LB=-\LB\mathcal{P}$.
\item  Preserving $\mathcal{P}-$ invariance and breaking $\mathcal{C}-$ invariance;
$[\mathcal{P},\LB]=0$ and $\mathcal {C}\LB=-\LB \mathcal{C}$.
\end{enumerate}

First note that because $\overline{L^2_{\bK,\bar\tau}}=L^2_{\bK',\tau}$, we have that $\Phi_2^{\bK}(\bx)=\mathcal{C}[\Phi_1^{\bK'}](\bx)$. Recall also that $\pc\Phi_2^{\bK}=\Phi_1^{\bK}$.
%

\emph{Case (1): $[\mathcal{C},\LB]=0$ and  $\mathcal {P}\LB=-\LB\mathcal{P}$; bi-directional propagation along the edge.}
We claim that $\thetasharp^{\bK'}=-\thetasharp^{\bK}$.  Indeed, we have
%
%
\begin{align*}
\thetasharp^{\bK'}
& = \inner{\Phi_1^{\bK'},\LB\Phi_1^{\bK'}}_{L^2(\Omega)}
 = \inner{\mathcal{C}\Phi_2^{\bK},\LB\mathcal{C}\Phi_2^{\bK}}_{L^2(\Omega)}
  = \inner{\mathcal{P}\Phi_1^{\bK},\LB\mathcal{P}\Phi_1^{\bK}}_{L^2(\Omega)}\\
& = -\inner{\mathcal{P}\Phi_1^{\bK},\mathcal{P}\LB\Phi_1^{\bK}}_{L^2(\Omega)}
 = -\inner{\Phi_1^{\bK},\LB\Phi_1^{\bK}}_{L^2(\Omega)}
 = -\thetasharp^{\bK} .
\end{align*}

In particular, $\thetasharp^{\bK'}$ and
$\thetasharp^{\bK}$ have opposite sign. Moreover, by \eqref{GV2_thm}:

\begin{equation}
\lim_{\delta\to0} \D_\kpar E^\delta (\bK\cdot\vtilde_1) = - \lim_{\delta\to0} \D_\kpar E^\delta(\bK'\cdot\vtilde_1) .
\label{E_xi_diff}
\end{equation}

The quantities $\D_\kpar E^\delta(\bK\cdot\vtilde_1)$ and $\D_\kpar E^\delta(\bK'\cdot\vtilde_1)$ are, respectively, the group velocities of wavepackets, constructed as a superposition of edge states,
 $(E^\delta(\kpar),\Psi^\delta(\bx;\kpar))$ for quasi-momentum $\kpar$ near $\bK\cdot\vtilde_1$ and $\bK'\cdot\vtilde_1$, respectively. We conclude that such packets with quasi-momentum centered at $\kparv$ and $\kparpv$, travel in opposite directions and at the same  speed; see Figure \ref{E_kpar_zz} (top panel) and the discussion in the Introduction (Section \ref{intro}).

\emph{Case 2: $[\mathcal{P},\LB]=0$ and  $\mathcal {C}\LB=-\LB\mathcal{C}$; uni-directional propagation along the edge.}
In contrast to Case 1, here we claim that $\thetasharp^{\bK'}=\thetasharp^{\bK}$.  Indeed, we have
\begin{align*}
\thetasharp^{\bK'}
& = \inner{\Phi_1^{\bK'},\LB\Phi_1^{\bK'}}_{L^2(\Omega)}
 = \inner{\mathcal{C}\Phi_2^{\bK},\LB\mathcal{C}\Phi_2^{\bK}}_{L^2(\Omega)}
  = \inner{\mathcal{P}\Phi_1^{\bK},\LB\mathcal{P}\Phi_1^{\bK}}_{L^2(\Omega)}\\
& = \inner{\mathcal{P}\Phi_1^{\bK},\mathcal{P}\LB\Phi_1^{\bK}}_{L^2(\Omega)}
 = \inner{\Phi_1^{\bK},\LB\Phi_1^{\bK}}_{L^2(\Omega)}
 = \thetasharp^{\bK} .
\end{align*}

Arguing analogously to Case 1 we have  find for Case 2:
\begin{equation}
\lim_{\delta\to0} \D_\kpar E^\delta (\bK\cdot\vtilde_1) = \lim_{\delta\to0} \D_\kpar E^\delta(\bK'\cdot\vtilde_1) .
\label{E_xi_diff2}
\end{equation}

Hence, the group velocities of wavepackets, constructed as a superposition of edge states,
 $(E^\delta(\kpar),\Psi^\delta(\bx;\kpar))$ for quasi-momentum $\kpar$ near $\bK\cdot\vtilde_1$, respectively, near $\bK'\cdot\vtilde_1$ are equal in magnitude and sign.  We conclude, in this case where $\mathcal{C}-$ invariance is broken,  that such packets with quasi-momentum centered at $\kparv$ and $\kparpv$, travel in the same direction and at the same  speed. See Figure \ref{E_kpar_zz} (bottom panel) and the discussion in the Introduction (Section \ref{intro}). Such unidirectional propagation of topologically protected edge states is well-known to be the hallmark of topological insulators \cites{hasan2010colloquium,ando2013topological,bernevig2013topological,ortmann2015topological}.

\appendix

\section{ Maxwell eqns and the operators $\LA$ and $\mathcal{L}_{\rm dw}^{(\delta)}$}\label{maxwell}

In this Appendix we apply the main results of this paper to Maxwell's equation in media with magneto-optic and bi-anisotropic constitutive  laws.  We begin with a self-contained outline of Maxwell's equations in such media
and their reduction to the setting of our main analytical results.  We also discuss the sense in which these media are time-reversal invariant.

 In appropriate units \cites{joannopoulos2011photonic}, Maxwell's system for the electromagnetic field in a medium with no free charges or currents  is given by
\begin{equation}
\label{Maxwell_eq}
\begin{split}
&\nabla\times \bb E=-\frac{\partial \bb B}{\partial t},\quad\quad \nabla\times\bb H=\frac{\partial \bb D}{\partial t}\\
&\nabla\cdot \bb D=0,\quad\quad\nabla\cdot\bb B=0.
\end{split}
\end{equation}

Here, $\bb E$ and $\bb H$ are the macroscopic electric and magnetic fields, $\bb D$ and $\bb B$ are the displacement and magnetic induction fields, respectively. Each field $\bb F=(F_1, F_2, F_3)^T$ is a vector-valued function of the time $t\in \mathbb{R}_+$ and the spatial coordinate $\bx=(x_1,x_2,x_3)\in\R^3$. The standard vector notations $\nabla\times$, $\nabla\cdot$ and $\nabla$, are used for the curl, divergence, and gradient.

The displacement and magnetic induction fields $(\bb D, \bb B)$ are related to the electromagnetic fields $(\bb E, \bb H)$ by a material-dependent constitutive relation.
The constitutive relation of linear loss-free material has the general form
\begin{equation}
\label{constitutive_relation_Rmatrix}
\begin{pmatrix}\bb D\\ \bb B\end{pmatrix}=\widehat{\mathbf R}\begin{pmatrix}\bb E\\ \bb H\end{pmatrix}\equiv\begin{pmatrix}\hat{\epsilon}&\hat{\xi}\\ \hat{\xi}^{\dagger}&\hat{\mu}\end{pmatrix}\begin{pmatrix}\bb E\\ \bb H\end{pmatrix},
\end{equation}
where the constitutive matrix $\widehat{\mathbf R}=\widehat{\mathbf R}(\bx)$ is a $6\times6$ positive-definite Hermitian matrix.
The permittivity tensor $\hat \epsilon$ and the permeability tensor $\hat \mu$ are $3\times 3$ are positive-definite Hermitian matrices. The bianisotropy tensor, $\hat{\xi}$, and  its conjugate-transpose $\hat{\xi}^{\dagger}$  couple the magnetic and electric fields.

In terms of $\bb E$ and $\bb H$, Maxwell's equations become
\begin{equation}\label{Maxwell_eq2}
\frac{\partial}{\partial t}\widehat{\bb R} \begin{pmatrix}
                               \bb E \\
                               \bb H \\
                             \end{pmatrix}=\begin{pmatrix}
                               \nabla \times\bb H \\
                              - \nabla\times \bb E \\
                             \end{pmatrix}.
\end{equation}
For the purpose of discussing time-reversibility, it is convenient to express Maxwell's equations \eqref{Maxwell_eq2} as a   system of Schr\"odinger-type:
\begin{equation}\label{Maxwell_eq3}
i\partial_t {\bf \Psi}= \mathscr{M}_{\widehat{\bb R}} {\bf \Psi}
\end{equation}
for the electric and magnetic fields, ${\bf \Psi}=(\bb E,\bb H)^T$, with Maxwell operator  $ \mathscr{M}_{\widehat{\bb R}}$:
\begin{equation}\label{Maxwell_operator}
\mathscr{M}_{\widehat{\bb R}}\equiv {\widehat{\bb R}}^{-1}\begin{pmatrix}
  0&+i\nabla\times\\-i\nabla\times&0
\end{pmatrix}.
\end{equation}
Note that $\mathscr{M}_{\widehat{\bb R}}$ is self-adjoint with the respect the weighted inner product:
\[
\left\langle {\bf F},{\bf G}\right\rangle_{\widehat{\bb R}}\ =\ \left\langle {\bf F}\ ,\ \widehat{\bb R}\ {\bf G}\right\rangle_{L^2(\R^3;\C^6)}
\]

\subsection{Time-reversal symmetry and Maxwell's equations}\label{Maxwell_symmetry}
In this section we discuss the sense in which Maxwell's equations with the above class of constitutive relations is time-reversal invariant. Our discussion is motivated by the treatment in  \cites{De_Nittis-Lein:14,De_Nittis_Lein:17}. Introduce the transformations
\begin{align}
& \mathcal{T}_r=\sigma_3\otimes I: \ (\bb E, \bb H)~\mapsto (\bb E, -\bb H), \quad \text{and} \label{TR}\\
& \mathcal{T}_c= \left(\sigma_3\otimes I\right)\circ \mathcal{C}:\  (\bb E, \bb H)~\mapsto (\overline{\bb E}, -\overline{\bb H})\ .\label{TC}
\end{align}
used to express time-reversibility. Note that $\mathcal{T}_c$ is introduced for the case of complex-valued constitutive tensors, $\widehat{\bb R}$.

\begin{proposition}[Time reversibility] \label{Tinvariant}
Let $\widehat{\bb R}$ be the constitutive matrix defined in \eqref{constitutive_relation_Rmatrix},
$\mathscr{M}_{\widehat{\bb R}}$ be the associated Maxwell operator defined in \eqref{Maxwell_operator}, and  $\mathcal{T}_r$ and $\mathcal{T}_c$ be the transformation operators defined in \eqref{TR} and \eqref{TC}.
Then
\begin{enumerate}
\item $\mathcal{T}_r\ \widehat{\bb R}\ \mathcal{T}_r=\widehat{\bb R}$ if and only if $\hat{\xi}=0$.
 Under this condition we have
$\mathcal{T}_r\ e^{-it\mathscr{M}_{\small \widehat{\bb R}}}\ \mathcal{T}_r = e^{+it\mathscr{M}_{\widehat{\bb R}}}$.
Therefore, Maxwell's equations are invariant under the transformation: $t\to -t$, $(\bb E, \bb H)~\mapsto (\bb E, -\bb H)$.
\item $\mathcal{T}_c\ \widehat{\bb R}\ \mathcal{T}_c=\widehat{\bb R}$ if and only if $\overline{\hat{\epsilon}} =\hat{\epsilon}, \quad \overline{\hat{\mu}}=\hat{\mu},\quad \overline{\hat{\xi}}=-\hat{\xi}$.
Under these conditions we have $\mathcal{T}_c\ e^{-it\mathscr{M}_{{\small \widehat{\bb R}}}}\ \mathcal{T}_c = e^{+it\mathscr{M}_{\small \widehat{\bb R}}}$ and therefore Maxwell's equations are invariant under the transformation: $t\to -t$, $(\bb E, \bb H)~\mapsto (\overline{\bb E}, -\overline{\bb H})$.
\end{enumerate}
\end{proposition}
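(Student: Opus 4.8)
The plan is to reduce each assertion about the propagator $e^{-it\mathscr{M}_{\widehat{\bb R}}}$ to a purely algebraic (anti)commutation identity between the symmetry operator and the generator $\mathscr{M}_{\widehat{\bb R}}$, and then to exponentiate. First I would factor $\mathscr{M}_{\widehat{\bb R}}=\widehat{\bb R}^{-1}W$, where $W\equiv\begin{pmatrix}0 & i\nabla\times\\ -i\nabla\times & 0\end{pmatrix}$ is the curl block, and recall that $\mathscr{M}_{\widehat{\bb R}}$ is self-adjoint in the weighted inner product $\langle\cdot,\cdot\rangle_{\widehat{\bb R}}$, so that $e^{-it\mathscr{M}_{\widehat{\bb R}}}$ is a well-defined unitary group on this Hilbert space. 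The key observation is that $\mathcal{T}_r$ is a \emph{linear} involution, for which $\mathcal{T}_r e^{-it\mathscr{M}}\mathcal{T}_r=e^{-it\,\mathcal{T}_r\mathscr{M}\mathcal{T}_r}$, whereas $\mathcal{T}_c$ is an \emph{antilinear} involution, for which the factor $-it$ gets conjugated and $\mathcal{T}_c e^{-it\mathscr{M}}\mathcal{T}_c=e^{+it\,\mathcal{T}_c\mathscr{M}\mathcal{T}_c}$. Consequently the target identity $\mathcal{T}e^{-it\mathscr{M}}\mathcal{T}=e^{+it\mathscr{M}}$ is equivalent to the \emph{anticommutation} relation $\mathcal{T}_r\mathscr{M}_{\widehat{\bb R}}\mathcal{T}_r=-\mathscr{M}_{\widehat{\bb R}}$ in case (1) and to the \emph{commutation} relation $\mathcal{T}_c\mathscr{M}_{\widehat{\bb R}}\mathcal{T}_c=+\mathscr{M}_{\widehat{\bb R}}$ in case (2).

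Next I would establish the two elementary building blocks, treating the curl block $W$ and the matrix multiplier $\widehat{\bb R}$ separately. Since $\nabla\times$ is real, complex conjugation flips the sign of the $i$'s, giving $\mathcal{C}W\mathcal{C}=-W$; and conjugation by $\sigma_3\otimes I$ reverses the relative sign of the off-diagonal blocks, giving $(\sigma_3\otimes I)W(\sigma_3\otimes I)=-W$. Combining these yields $\mathcal{T}_r W\mathcal{T}_r=-W$ and $\mathcal{T}_c W\mathcal{T}_c=+W$. For the multiplier, a direct block computation gives $\mathcal{T}_r\widehat{\bb R}\mathcal{T}_r=\begin{pmatrix}\hat{\epsilon} & -\hat{\xi}\\ -\hat{\xi}^{\dagger} & \hat{\mu}\end{pmatrix}$ and $\mathcal{T}_c\widehat{\bb R}\mathcal{T}_c=\begin{pmatrix}\overline{\hat{\epsilon}} & -\overline{\hat{\xi}}\\ -\overline{\hat{\xi}^{\dagger}} & \overline{\hat{\mu}}\end{pmatrix}$. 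Equating each with $\widehat{\bb R}=\begin{pmatrix}\hat{\epsilon} & \hat{\xi}\\ \hat{\xi}^{\dagger} & \hat{\mu}\end{pmatrix}$ produces exactly the stated conditions: $\hat{\xi}=0$ for $\mathcal{T}_r$, and $\overline{\hat{\epsilon}}=\hat{\epsilon}$, $\overline{\hat{\mu}}=\hat{\mu}$, $\overline{\hat{\xi}}=-\hat{\xi}$ for $\mathcal{T}_c$, which settles the ``if and only if'' half of each statement.

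To assemble the generator identities I would use $\mathcal{T}^{-1}=\mathcal{T}$ together with $\mathcal{T}\widehat{\bb R}^{-1}\mathcal{T}=(\mathcal{T}\widehat{\bb R}\mathcal{T})^{-1}$ (valid for both linear and antilinear involutions, since $\mathcal{T}$ preserves the relation $\widehat{\bb R}\widehat{\bb R}^{-1}=I$). Under the condition $\mathcal{T}_r\widehat{\bb R}\mathcal{T}_r=\widehat{\bb R}$ this gives $\mathcal{T}_r\mathscr{M}_{\widehat{\bb R}}\mathcal{T}_r=(\mathcal{T}_r\widehat{\bb R}\mathcal{T}_r)^{-1}(\mathcal{T}_r W\mathcal{T}_r)=\widehat{\bb R}^{-1}(-W)=-\mathscr{M}_{\widehat{\bb R}}$, and under $\mathcal{T}_c\widehat{\bb R}\mathcal{T}_c=\widehat{\bb R}$ it gives $\mathcal{T}_c\mathscr{M}_{\widehat{\bb R}}\mathcal{T}_c=\widehat{\bb R}^{-1}(+W)=+\mathscr{M}_{\widehat{\bb R}}$. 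Feeding these into the exponential reduction of the first paragraph yields $\mathcal{T}_r e^{-it\mathscr{M}}\mathcal{T}_r=e^{+it\mathscr{M}}$ and $\mathcal{T}_c e^{-it\mathscr{M}}\mathcal{T}_c=e^{+it\mathscr{M}}$. Finally I would translate the propagator identity into the invariance statement: writing the solution of \eqref{Maxwell_eq3} as $\Psi(t)=e^{-it\mathscr{M}_{\widehat{\bb R}}}\Psi(0)$ and replacing $t\mapsto-t$ in the identity shows that $t\mapsto\mathcal{T}\Psi(-t)$ again solves \eqref{Maxwell_eq3}, i.e. $(\bb E(\cdot,-t),-\bb H(\cdot,-t))$ (resp. $(\overline{\bb E}(\cdot,-t),-\overline{\bb H}(\cdot,-t))$) is a solution whenever $(\bb E,\bb H)$ is.

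The step I expect to be most delicate is the sign bookkeeping forced by the antilinearity of $\mathcal{T}_c$: it is precisely the conjugation of $-it$ that converts the \emph{commutation} relation $\mathcal{T}_c\mathscr{M}\mathcal{T}_c=+\mathscr{M}$ into a reversal of the time direction, in sharp contrast to the \emph{anticommutation} relation needed for the linear operator $\mathcal{T}_r$. Making this rigorous requires justifying the exponential manipulations at the level of unbounded operators, namely checking that $\mathcal{T}_r$ and $\mathcal{T}_c$ are (anti)unitary bijections of the weighted space $L^2(\R^3;\C^6)$ with $\langle\cdot,\cdot\rangle_{\widehat{\bb R}}$ that preserve the domain of the self-adjoint generator $\mathscr{M}_{\widehat{\bb R}}$, so that $\mathcal{T}\mathscr{M}\mathcal{T}=\pm\mathscr{M}$ may legitimately be exponentiated. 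The block computations themselves are routine once these functional-analytic prerequisites are in place.
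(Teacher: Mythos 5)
Your proposal is correct and follows essentially the same route as the paper: a direct block computation showing $\mathcal{T}\,\widehat{\bb R}\,\mathcal{T}=\widehat{\bb R}$ forces the stated conditions on $\hat\epsilon,\hat\mu,\hat\xi$, combined with the sign behaviour of the curl block under $\sigma_3\otimes I$ and under conjugation, followed by exponentiation of $\mathcal{T}\mathscr{M}_{\widehat{\bb R}}\mathcal{T}=\mp\mathscr{M}_{\widehat{\bb R}}$. Your treatment is somewhat more explicit than the paper's (which dismisses part (2) as "similar"), in particular in tracking how the antilinearity of $\mathcal{T}_c$ converts a commutation relation, rather than an anticommutation relation, into time reversal.
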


\begin{proof}
The following direct calculation is the key of the proof
\begin{equation*}
\begin{pmatrix}I&0\\0&-I\end{pmatrix}\begin{pmatrix}\bb R_1&\bb R_2\\ \bb R_3 & \bb R_4\end{pmatrix}\begin{pmatrix}I&0\\0&-I\end{pmatrix}=\begin{pmatrix}\bb R_1&-\bb R_2\\ -\bb R_3&\bb R_4\end{pmatrix}.
\end{equation*}
Therefore $\mathcal{T}_r \ \widehat{\bb R}\ \mathcal{T}_r=\widehat{\bb R}$ if and only if $\hat{\xi}=-\hat{\xi}=0$.
 Furthermore,
  \begin{equation*}
\mathcal{T}_r\ \begin{pmatrix}
  0&+i\nabla\times\\-i\nabla\times&0
\end{pmatrix}\ \mathcal{T}_r = -\ \begin{pmatrix}
  0&+i\nabla\times\\-i\nabla\times&0
\end{pmatrix}.
\end{equation*}
It follows that if $\mathcal{T}_r \widehat{\bb R} \mathcal{T}_r = \widehat{\bb R}$, then $\mathcal{T}_r \mathscr{M}_{_{\widehat{\bb R}}} \mathcal{T}_r = -\mathscr{M}_{\widehat{\bb R}}$ and furthermore
$\mathcal{T}_r e^{-it\mathscr{M}_{\small \widehat{\bb R}}}\mathcal{T}_r=e^{+it\mathscr{M}_{\small \widehat{\bb R}}}$.
The proof for part (2) is similar.
\end{proof}

\subsection{In-plane propagation in 2-dimensional media }

We next impose simplifying constraints on the material weight matrix, $\widehat{\bb R}$, following \cites{Shvets-PTI:13}.
Consider a material in which the constitutive matrix only varies in the transverse plane with coordinates $\bx_\perp=(x_1,x_2)$, and is invariant with respect to translations in the longitudinal direction, with coordinate $x_3$. Thus, $ \widehat{\bb R}=\widehat{\bb R}(\bx_\perp)$. Furthermore, we assume the coupling entries in $\widehat{\bb R}$ between transverse and longitudinal directions to be zero. Bianisotropy is assumed to exist only in the transverse directions. Therefore, the constitutive tensors are of the form
\begin{equation}\label{Constit_Relat}
  \hat \epsilon(\bx_\perp)=\begin{pmatrix}\hat{\epsilon}_\perp &0\\0&\epsilon_{3}
  \end{pmatrix},\;  \hat \mu(\bx_\perp) =\begin{pmatrix}\hat{\mu}_\perp &0\\0&\mu_{3}
  \end{pmatrix},\; \hat \xi(\bx_\perp) =\begin{pmatrix}\hat{\xi}_\perp&0\\0&0
  \end{pmatrix},
\end{equation}
where $\hat{\epsilon}_\perp$, $\hat{\mu}_\perp$, $\hat{\xi}_\perp$ are $2\times 2$ matrices depending on $\bx_\perp$. We re-express Maxwell's equations in terms of transverse and longitudinal components: ${\bf \Psi}_\perp=({\bb E}_\perp, ~{\bb H}_\perp)^T$ and ${\bf \Psi}_\parallel=(E_3, ~H_3)^T$ of the electromagnetic field. Thus,  we introduce the $4\times4$ transverse constitutive tensor $\widehat{\bb R}_\perp$ and the $2\times 2$ longitudinal constitutive tensor $\widehat{\bb R}_\parallel$
\begin{equation}
\widehat{\bb R}_\perp=\begin{pmatrix}\hat{\epsilon}_\perp&\hat{\xi}_\perp\\ \hat{\xi}^{\dagger}_\perp&\hat{\mu}_\perp\end{pmatrix},\quad \widehat{\bb R}_\parallel=\begin{pmatrix}\epsilon_3&0\\ 0&\mu_3\end{pmatrix}.
\label{tcon}\end{equation}
 $\widehat{\bb R}_\perp$ and $\widehat{\bb R}_\parallel$ are  Hermitian matrices and positive-definite, and hence invertible.

We consider the in-plane propagating electromagnetic waves, i.e., all fields are independent of $x_3$.  
Substituting the constitutive relations \eqref{tcon} into \eqref{Maxwell_eq3}, we obtain
\begin{equation}\label{tran}
\partial_t\begin{pmatrix}\bb E_\perp\\ \bb H_\perp \end{pmatrix}=\widehat{\bb R}_\perp^{-1}\begin{pmatrix}-J \nabla_\perp\  H_3\\ J \nabla_\perp\ E_3\end{pmatrix}.
\end{equation}
and
\begin{equation}
\label{EHlong}
\partial_t\begin{pmatrix}E_3\\ H_3 \end{pmatrix}=\widehat{\bb R}_\parallel^{-1}\begin{pmatrix} J\nabla_\perp \cdot \bb H_\perp\\ -J \nabla_\perp \cdot \bb E_\perp\end{pmatrix}.
\end{equation}
To obtain \eqref{tran}-\eqref{EHlong} we have used that if $\bb F=(F_1(\bb x_\perp), F_2(\bb x_\perp),F_3(\bb x_\perp))$, then
$
(\nabla \times \bb F)_\perp=-J\nabla_\perp F_3, \ \text{and} \ (\nabla \times \bb F)_\parallel=J\nabla_\perp\cdot \bb F_\perp,
$,
where $\nabla_\perp=(\partial_{x_1} , \partial_{x_2})^T$ and $J=\begin{pmatrix}0&-1\\1&0\end{pmatrix}$.

Using \eqref{tran} to eliminate $(\bb E_\perp, \bb H_\perp)$ in \eqref{EHlong}, we obtain a closed system of longitudinal fields $H_3$ and $E_3$. Specifically, we first denote the entries the inverse of $\widehat{\bb R}_\perp$ as follows
\begin{equation}
\label{Rinverse}
\widehat{\bb R}_\perp^{-1}(\bx_\perp) \equiv \begin{pmatrix}\hat {e}& \hat{\theta}\\ \hat{\theta}^{\dagger}&\hat{m}
\end{pmatrix}
\end{equation}
where $\hat e, \hat\theta, \hat m$ are $2\times 2$ matrices depending on $\bx_\perp$.%

The closed system for $(E_3,H_3)$ is
\begin{equation}\label{H3E32}
\begin{split}
&\mu_3\ \partial_{t}^2H_3~=~-\left(
\mathcal{L}^{A_e}H_3-
\mathcal{L}^{B}E_3\right),\\
&\epsilon_3\ \partial_{t}^2 E_3~=~-\left(\mathcal{L}^{A_m} E_3-
\mathcal{L}^{B^\dagger} H_3\right),
\end{split}
\end{equation}
where $\mathcal{L}^A=-\nabla\cdot A\nabla$ ( see \eqref{L_def} ) and
\begin{equation}
A_e=J^T \hat{e} J ,\quad A_m=J^T \hat{m} J ,\quad B=J^T \hat{\theta} J.
\label{AeAmB}
\end{equation}
Since $\hat{e}$ and $\hat{m}$ are positive definite Hermitian matrices, so are the matrices $A_e$ and $A_m$.
 The Hermitian matrix $B$, arises due to bianisotropy and couples $E_3$ and $H_3$ in \eqref{H3E32}.
 Note that the system \eqref{H3E32} is invariant under $t\mapsto t'=-t$, consistent with our earlier discussion of time-reversal symmetry;
  see Proposition \ref{Tinvariant}.


To construct the full electro-magnetic field for in-plane propagating electromagnetic waves in a general 2-dimensional bianisotropic medium we may first obtain $(E_3, H_3)$ from \eqref{H3E32}. Then, the transverse components ${\bf E}_\perp$ and  ${\bf H}_\perp$ can be obtained from \eqref{tran}.

Consider now time-harmonic solutions of Maxwell's equation:
\[ (\bb E(t,\bx), \bb H(t,\bx))\to e^{-i\omega t}({\bb E}(\bx), {\bb H}(\bx)).\]
 The system  \eqref{H3E32} now becomes
\begin{equation}\label{H3E33}
\begin{split}
&\mu_3\ \omega^2\ H_3~=~\mathcal{L}^{A_e} H_3-~\mathcal{L}^{B} E_3
\\ &\epsilon_3\ \omega^2\ E_3~=\mathcal{L}^{A_m} E_3-~\mathcal{L}^{B^\dagger} H_3.
\end{split}
\end{equation}
and by \eqref{tran} the corresponding transverse fields are
\begin{equation}\label{Trans2}
\begin{pmatrix}
\bb E_\perp\\
\bb H_\perp
\end{pmatrix}=\frac{i}{\omega}\widehat{\bb R}_\perp^{-1}\begin{pmatrix}-J\nabla_\perp H_3\\ J \nabla_\perp E_3\end{pmatrix} = \frac{i}{\omega}\ \begin{pmatrix}\hat {e}& \hat{\theta}\\ \hat{\theta}^{\dagger}&\hat{m}
\end{pmatrix}\ \begin{pmatrix}-J\nabla_\perp H_3\\ J \nabla_\perp E_3\end{pmatrix}
\end{equation}

Suppose the bianisotropy of the medium is negligible, {\it i.e.} $\hat{\xi}=0$. From \eqref{tcon}, we then have that  $\hat{\bf R}_\perp^{-1}$, defined in \eqref{Rinverse} is diagonal  with  $\hat e_\perp=\hat \epsilon_\perp^{-1}$, $\hat m_\perp=\hat \mu_\perp^{-1}$ and $\hat{\theta}=0$. Furthermore, equations \eqref{H3E32} for $H_3$ and $E_3$ is a  decoupled system wave equations.

 Assuming for simplicity that  $\mu_3$ and $\epsilon_3$ are constant, and rescaling so that $\mu_3=1$ and $\epsilon_3=1$, we find that \eqref{H3E32} reduces to a pair of decoupled wave equations of the form \eqref{LA-wave} discussed in Remark \ref{wavepkts}:
%
\begin{equation}\label{H3E322}
\begin{split}
&\D_t^2 H_3~=~\nabla_\perp \cdot A_e \nabla_\perp H_3=-\mathcal{L}^{A_e} H_3,\\
&\D_t^2 E_3~=~\nabla_\perp \cdot A_m \nabla_\perp E_3=-\mathcal{L}^{A_m} H_3.
\end{split}
\end{equation}

Further, we may decompose the electromagnetic vector $({\bf E},{\bf H})$  into its decoupled TE- and TM- parts,
\begin{align}
\textrm{TE:}\qquad {\bf E}(\bx_\perp)&=\begin{pmatrix} E_1(\bx_\perp)\\ E_2(\bx_\perp)\\ 0\end{pmatrix},\qquad
{\bf H}(\bx_\perp)=\begin{pmatrix} 0\\ 0\\ H_3(\bx_\perp)\end{pmatrix}\label{TE}\\
\textrm{TM:}\qquad {\bf E}(\bx_\perp)&=\begin{pmatrix} 0\\ 0\\ E_3(\bx_\perp)\end{pmatrix},\qquad
{\bf H}(\bx_\perp)=\begin{pmatrix} H_1(\bx_\perp)\\ H_2(\bx_\perp)\\ 0\end{pmatrix}\label{TM}
\end{align}
 starting with $H_3$, in the absence of bianisotropy ($\hat\xi=0$)  we construct the transverse electrical components ${\bf E}(\bx_\perp)$ in terms of $H_3$ alone from \eqref{Trans2}. Similarly, starting with $E_3$, we construct the transverse magnetic components ${\bf H}(\bx_\perp)$ in terms of $E_3$ alone from \eqref{Trans2}.
 \bigskip

 \subsection{ Frequencies $\pm\omega$ and eigenvalue the $E$}\label{Eomega}
 In the main body of this article we study the
eigenvalue problem $\LA\psi=E\psi$, where $\LA$ is self-adjoint and positive definite. The  eigenvalues, $E$, are real and positive.
Consider now the case of $TE-$ modes in the electromagnetic setting. Given an eigensolution $(\psi,E)$ of $\LA\psi=E\psi$, we set $E=\omega^2$ and $H_3(\bx)=\psi(\bx)$, and obtain  time-harmonic longitudinal magnetic components of frequency $-\omega$ and $\omega>0$, respectively:
 $e^{-i\omega t}H_3(\bx)$ and $e^{+i\omega t}H_3(\bx)$.

Therefore, via the relation \eqref{Trans2},  we have that     $(\bb E_\perp(\bx),H_3(\bx))=(-\frac{i}{\omega\epsilon_\perp} J\nabla_\perp H_3, H_3)$ is the TE mode corresponding to frequency $\omega$;
and  $(\bb E_\perp(\bx),H_3(\bx))=(+\frac{i}{\omega\epsilon_\perp} J\nabla_\perp H_3, H_3)$ is the TE mode corresponding to frequency $-\omega$.

\subsection{Edge states in magneto-optic and bianisotropic media}\label{HR-K}
In the following two subsections we show that the photonic edge states of
Haldane and Raghu for magneto-optic materials \cites{HR:07,RH:08}, and Khanikaev {\it et al} \cites{Shvets-PTI:13}
for bianisotropic materials are covered by the mathematical framework and analysis of this paper.

\subsubsection{Magneto-optic materials \cites{HR:07}}\label{HR}
In magneto-optic materials with the Faraday-rotation effect, the polarization of light is rotated in the transverse plane which is perpendicular to the external magnetic field. Here we neglect the bianisotropy and the spatial variation of the magnetic permeability. Thus, we have
\begin{equation}
\widehat{\bb R}_\perp=\begin{pmatrix}
  \epsilon I_{_{2\times2}}-\gamma \sigma_2&0\\ 0&\mu I_{_{2\times2}}
\end{pmatrix}\ =\
\begin{pmatrix}
  \epsilon I_{_{2\times2}}- i\gamma J_{_{2\times2}}&0\\ 0&\mu I_{_{2\times2}}
\end{pmatrix}
\label{magneto}
\end{equation}
and
\begin{equation}
\widehat{\bb R}_\perp^{-1}\ =\ \begin{pmatrix}\hat {e}& \hat{\theta}\\ \hat{\theta}^{\dagger}&\hat{m}
\end{pmatrix}
\ =\ \begin{pmatrix}
  \frac{\epsilon}{\epsilon^2-\gamma^2} I_{_{2\times2}}+\frac{\gamma}{\epsilon^2-\gamma^2} \sigma_2&0\\ 0&\mu^{-1} I_{_{2\times2}}
\end{pmatrix}.
\end{equation}
Here,  $\gamma$ is real-valued and denotes the strength of the Faraday-rotation.

Since bianisotropy has been neglected, {\it i.e.} $\hat\xi=0$, and the electric and magnetic fields
${\bf E}_\perp$ and  ${\bf H}_\perp$ are not coupled by the constitutive tensor $\hat{\bf R}_\perp$.  Furthermore, by Proposition \ref{Tinvariant}, Maxwell's equation with the  constitutive relation \eqref{magneto} is invariant under
the transformations: $t\to t'=-t$, $({\bf E}, {\bf H})^T\mapsto \mathcal{T}_r ({\bf E}, {\bf H})^T=({\bf E}, -{\bf H})^T$.

If the strength of the Faraday-rotation is weak, {\it i.e.} $\gamma\ll\epsilon$, then
$\epsilon/(\epsilon^2-\gamma^2)=\epsilon^{-1}+\mathcal{O}(\gamma^2/\epsilon^2)$,  and $\gamma/(\epsilon^2-\gamma^2)=\epsilon^{-2}\gamma+\mathcal{O}(\gamma^2/\epsilon^2)$.
Specializing to the TE mode, \eqref{TE},  taking $\mu$ to be constant and scaling so that
$\mu=1$,  we have by \eqref{H3E32}
\begin{equation}
\partial^2_t H_3-\left[\ \nabla_\perp\cdot \epsilon^{-1}\nabla_\perp\ +\ \nabla_\perp \cdot\gamma\epsilon^{-2}\sigma_2\nabla_\perp\ \right] H_3=0.
\end{equation}
Time-harmonic solutions then satisfy
\begin{equation}
\label{TE_mag-op}
-\nabla_\perp \cdot [\epsilon^{-1}I+\gamma\epsilon^{-2}\sigma_2]\nabla_\perp H_3=\omega^2 H_3.
\end{equation}
The operator on the left hand side of  \eqref{TE_mag-op} is of the class  \eqref{L_delta} studied in Section \ref{edge_states}. In this case, $A(\bx_\perp)=\epsilon^{-1}(\bx_\perp) I_{_{2\times2}}$ is taken to be a honeycomb structured media for which $[\pc,\LA]=0$; see Section \ref{honey-media}.  Furthermore, $B=(\epsilon^2)^{-1}\sigma_2=(\epsilon^2)^{-1}\ i\ J$ for which $[\mathcal{P},\LB]=0$ and $\mathcal{C}\LB=-\LB\mathcal{C}$.  Our analytical results imply:

\begin{enumerate}
\item If $\gamma=0$, then there exist Dirac points within the band structure of \eqref{TE_mag-op}. See Section \ref{dirac-pts}, and in particular Theorem \ref{prop_conical} and Theorem \ref{low-dp}.

\item If $\gamma$ is a small constant, then since $\mathcal{C}-$ symmetry is broken,  a \underline{local} spectral gap opens about the Dirac point of the unperturbed operator; see Section \ref{dirac_persistence}

\item If the bulk structure, $\epsilon(\bx_\perp)$, satisfies the spectral no-fold hypothesis and $\gamma=\delta\eta(\delta\ktilde_2\cdot\bx_\perp)$ is a domain wall function in the sense of Definition \ref{domain_wall_defn}, then there exist uni-directional TE edge state
curves $\kpar\mapsto e^{\pm i\omega(\kpar)t}(\bb E_\perp(\bx_\perp;\kpar),H_3(\bx_\perp;\kpar))$; see Section \ref{edge_states}. In particular, these curves are defined in an open neighborhood of the parallel quasi-momenta $\kpar=\bK\cdot \vtilde_1$ and $\kpar=\bK^\prime\cdot \vtilde_1$.  The group velocities computed at these parallel quasi-momenta are equal; see \eqref{E_xi_diff2}.  Hence, {\it edge wave-packets} which are constructed from a superposition of edge modes in this neighborhood propagate in the same direction; propagation is unidirectional.

\quad The two dark (red) curves in the Figure \ref{E_kpar_zz} (top panel), $\kpar\mapsto E^{\bK}(\kpar)$ and $\kpar\mapsto E^{\bK'}(\kpar)$,  correspond to two edge modes  propagating along a zigzag edge. Our analytical results  construct these curves in a neighborhood of $\kpar=\bK\cdot\bv_1=2\pi/3$ and $\kpar=\bK'\cdot\bv_1=-2\pi/3\equiv 4\pi/3\ (mod\ 2\pi)$,  respectively. These (unidirectional) edge state curves, defined for $\kpar\in[0,2\pi]$, are computed numerically. The edge-mode electromagnetic frequency curves are obtained from the relation $\pm\omega(\kpar)=\pm\sqrt{E(\kpar)}$.
\end{enumerate}

\subsubsection{2-D bianisotropic meta-materials \cites{Shvets-PTI:13}}
Here, we consider  constitutive tensors of the form
\begin{equation}
\widehat{\bb R}_\perp\ =\ \begin{pmatrix}\epsilon I_{_{2\times2}} & -\chi \sigma_2 \\ -\chi \sigma_2 & \mu I_{_{2\times2}}\end{pmatrix}
\ =\ \begin{pmatrix}\epsilon I_{_{2\times2}} & -i \chi J_{_{2\times2}} \\ -i \chi J_{_{2\times2}} & \mu I_{_{2\times2}}\end{pmatrix}
\label{bian}\end{equation}
and
\begin{equation}
\widehat{\bb R}_\perp^{-1}\ =\ \begin{pmatrix}\hat {e}& \hat{\theta}\\ \hat{\theta}^{\dagger}&\hat{m}
\end{pmatrix}
\ =\ \begin{pmatrix}\frac{\mu}{\mu\epsilon-\chi^2} I_{_{2\times2}} & \frac{\chi}{\mu\epsilon-\chi^2} \sigma_2 \\ \frac{\chi}{\mu\epsilon-\chi^2} \sigma_2 & \frac{\epsilon}{\mu\epsilon-\chi^2} I_{_{2\times2}}\end{pmatrix},
\end{equation}
where $\epsilon,\mu>0$ are the transverse principle permittivity and permeability, the real parameter $\chi$ represents the Pasteur parameter which induces a phase delay of the electric polarization from the magnetic field \cites{bianisotropy_book}.
From Proposition \ref{Tinvariant}, Maxwell's equation with the  constitutive relation \eqref{bian} is invariant under
the transformations: $t\to t'=-t$, $({\bf E}, {\bf H})^T\mapsto \mathcal{T}_c ({\bf E}, {\bf H})^T=(\overline{\bf E}, -\overline{\bf H})^T$.

In bianisotropic media one typically has $|\chi|\ll \sqrt{\mu\epsilon}$ giving the following expansions of the entries of
$\widehat{\bb R}_\perp^{-1}$:\
$ \mu/(\mu\epsilon-\chi^2)=\epsilon^{-1}+\mathcal O(\chi^2/(\mu\epsilon))$,\ $ \epsilon/(\mu\epsilon-\chi^2)=\mu^{-1}+\mathcal O(\chi^2/(\mu\epsilon))$ and $\chi/(\mu\epsilon-\chi^2)=\chi/(\mu\epsilon)+\mathcal O(\chi^2/(\mu\epsilon))$.
Retaining only the dominant order terms of the constitutive matrices,  the system  \eqref{H3E32} becomes
\begin{equation}
\label{bianiso1}
\begin{split}
&\mu_3\partial_t^2H_3-\nabla_\perp\cdot \epsilon^{-1}\nabla_\perp H_3 -\nabla_\perp\cdot \frac{\chi}{\mu\epsilon}\sigma_2\nabla_\perp E_3=0,\\
&\epsilon_3\partial_t^2E_3-\nabla_\perp\cdot \mu^{-1}\nabla_\perp E_3-\nabla_\perp\cdot \frac{\chi}{\mu\epsilon}\sigma_2\nabla_\perp H_3=0.
\end{split}
\end{equation}

In \cites{Shvets-PTI:13}, the situation where $\eps=\mu$ is studied.

In typical materials, the permittivity and permeability are different, i.e. $\epsilon\neq \mu$.  A meta-material design for which $\epsilon_\perp\approx\mu_\perp$ and $\epsilon_3\approx\mu_3\approx {\rm constant}$ is discussed in \cites{Shvets-PTI:13}.

We rescale so that  $\eps_3=\mu_3=1$. Introduce the spin-like variables
\begin{equation}
  \Psi_\pm=H_3\pm E_3.
\end{equation}
Then, $\Psi_+$ and $\Psi_-$ satisfy the pair of \underline{decoupled}  wave equations:
 \footnote{This is related to the following observation. Consider 1-dimensional Maxwell's equations: $\epsilon(x) E_t= H_x$, $\mu(x)H_t=E_x$. Introduce the electromagnetic wave speed  $c(x)=1/\sqrt{\epsilon(x)\mu(x)}$ and wave-impedance $Z=\sqrt{\mu(x)/\epsilon(x)}$ \cite{FGPS:07}. Assume $Z$ is constant. Then the wave components $E+ Z H$ and $E-Z H$ are decoupled and unidirectionally counter-propagating with speed $c(x)$ which, in general, is non-constant.}
\begin{equation}
\label{spin1}
\partial_t^2\Psi_\pm-\nabla_\perp\cdot [\epsilon^{-1}I\pm  \epsilon^{-2}\chi \sigma_2]\nabla_\perp \Psi_\pm=0.
\end{equation}
Time harmonic solutions are of the form: $\Psi_\pm(\bx)e^{-i\omega t}$ where
\begin{equation}\label{dw_bianiso}
\omega^2\Psi_\pm=\mathcal{L}_{\rm dw,b}^{\pm}\Psi_\pm\equiv-\nabla_\perp\cdot [\epsilon^{-1}I\pm\epsilon^{-2}\chi \sigma_2]\nabla_\perp \Psi_\pm.
\end{equation}

The operators $\mathcal{L}_{\rm dw,b}^{\pm}$ are of the class \eqref{L_delta} studied in Section \ref{edge_states}. Similar to last example, in this case,  $A=\epsilon^{-1}I$ is taken to be a honeycomb structured media for which $[\pc,\LA]=0$; see Section \ref{honey-media}.  Furthermore, $B=(\epsilon^2)^{-1}\sigma_2=(\epsilon^2)^{-1}\ i\ J$ for which $[\mathcal{P},\LB]=0$ and $\mathcal{C}\LB=-\LB\mathcal{C}$.  Our analytical results imply:
%
%
\begin{enumerate}
\item If $\chi=0$, then there exist Dirac points in the dispersion surfaces  of  \eqref{dw_bianiso} for both ``spin $+$''  and ``spin $-$'' operators $\mathcal{L}_{\rm dw,b}^{\pm}$ .
\item If $\chi$ is a small constant, then since $\mathcal{C}-$ symmetry is broken, a \underline{local} spectral gap opens about the Dirac point of the unperturbed operator for both ``spin $+$''  and ``spin $-$''  states; see Section \ref{dirac_persistence}.
\item  If the bulk structure, $\epsilon(\bx_\perp)$, satisfies the spectral no-fold hypothesis and
$\chi=\delta\eta(\delta\ktilde_2\cdot\bx_\perp)$, where $0<\delta\ll1$, is a domain wall function in the sense of Definition \ref{domain_wall_defn}, then there exist unidirectional ``spin $+$''  and uni-directional ``spin $-$'' edge states. The previous discussion of Subsection \ref{HR} applies separately to ``spin $\pm$'' states.

\quad Let $E^{(+)}(k_\parallel)$ denote either of the two  edge state dispersion curves  corresponding to $\mathcal{L}^+_{\rm dw,b}$; see the discussion in Section \ref{HR}.  Since  $\overline{\mathcal{L}^+_{\rm dw,b}}=\mathcal{L}^-_{\rm dw,b}$, it follows that if $(E_*, \Psi(\bx))$ is the edge state pair at $k_\parallel$ for the operator $\mathcal{L}^+_{\rm dw,b}$, then $(E_*, \overline{\Psi(\bx)})$ is the edge state pair at $-k_\parallel$ for $\mathcal{L}^-_{\rm dw,b}$. In other words, $E^{(+)}(k_\parallel)=E^{(-)}(-k_\parallel)=E^{(-)}(2\pi-k_\parallel)$ ($-\kpar\equiv2\pi-\kpar$\ $mod\ 2\pi$),  where
$E^{(-)}(k_\parallel)$ is a dispersion curve associated with $\mathcal{L}^-_{\rm dw,b}$.  Therefore, $\D_\kpar E^{(+)}(k_\parallel)=-\D_s E^{(-)}(s)\Big|_{s=-k_\parallel}=-\D_s E^{(-)}(s)\Big|_{s=2\pi-k_\parallel}$. Since the group velocities of edge states with like spin are the same, it follows that edge states of opposite spin components are counterpropagating and decoupled.
\end{enumerate}

\section{Numerical methods}\label{numerical_schemes}

Let a honeycomb structured medium be defined by $A(\bx)$ (Section \ref{honey-media}), where $\Omega$ is its period cell.
The study of Dirac points of the Floquet-Bloch eigenvalue problem of $\LA$ \eqref{L_def} in Sections \ref{dirac-pts}-\ref{dirac_persistence} can be formulated as a family of periodic eigenvalue problems for the operator $\LA(\bk)=-(\nabla+i\bk)\cdot A (\nabla+i\bk)$ acting on $L^2(\Omega)$, parametrized by $\bk\in\B_h$. %

The study of $\vtilde_1-$ edge states of $\mathcal{L}_{\rm dw}^{(\delta)}$ \eqref{dw_ham} in Section \ref{edge_states} can be formulated as an eigenvalue problem for the operator
 $\LA^{(\delta)}_{\rm dw}(\kpar)\equiv -(\nabla + \frac{\kpar}{2\pi}\ktilde_1)\cdot (A(\bx)+\delta \eta(\delta\ktilde_2\cdot\bx)B(\bx) (\nabla + \frac{\kpar}{2\pi}\ktilde_1)$ on the cylinder $\Sigma=\R^2/\Z\vtilde_1$
  where  periodicity is imposed in the $\vtilde_1$ direction and a decaying boundary condition is imposed in the transverse,
   $|\ktilde_2\cdot\bx|\to\infty$, direction.  For eigenvalue problems on the cylinder, $\Sigma$,  we truncate the unbounded direction of the cylinder and impose a Dirichlet boundary condition; see the discussion below.

We use three different numerical schemes, depending on the geometry ($\Omega$ vs. $\Sigma$) and the form of the matrices $A(\bx)$ and $B(\bx)$, to reduce these  PDE eigenvalue problems to algebraic eigenvalue problems.
In each case the resulting algebraic eigenvalue problem can be solved with a sparse eigenvalue solver.

\begin{enumerate}
\item \emph{Spectral methods}.

To study the family of periodic eigenvalue problems, we expand $A(\bx)$ in Fourier series and reduce the periodic eigenvalue problem: $\LA(\bk)\psi=E\psi$ to a family of algebraic systems of equations, parameterized by $\bk$, for the Fourier coefficients of $\psi(\bx;\bk)$,
\item \emph{Finite difference methods}.

For diagonal $A(\bx)$ and $B(\bx)$ matrices (see \eqref{dw_ham}), including the case of $\mathcal{P}-$ symmetry breaking, the eigenvalue problems: $\mathcal{L}_{\rm dw}^{(\delta)}(\kpar)\psi=E\psi$, $\psi\in L^2(\Sigma)$, can be reduced to a family of sparse algebraic eigenvalue problems, parametrized by $\kpar\in[0,2\pi]$.

Some care needs to taken when computing on the (non-rectangular) honeycomb lattice. We find it more convenient to work in the standard, rectangular basis in the plane with coordinates denoted $\by=(y_1,y_2)$, rather than in the triangular basis, $\{\bv_1,\bv_2\}$, with coordinates denoted $\bx=(x_1,x_2)$. We therefore transform coordinates $\bx \mapsto \by$: $\by=x_1\bv_1+x_2\bv_2$. Under this transformation $\frac{\D}{\D{y_j}} = \sum_{k=1}^2\frac{\D x_k}{\D y_j} \frac{\D}{\D{x_k}},\ j=1,2$.

\item \emph{Finite element methods}. For non-diagonal $A(\bx)$ or $B(\bx)$ matrices, including the case of $\mathcal{C}-$ symmetry breaking, it is difficult to preserve the self-adjointness using a finite difference approach to discretize the edge state eigenvalue problems, $\mathcal{L}_{\rm dw}^{(\delta)}\psi=E\psi$, $\psi\in L^2(\Sigma)$.
In these cases, we found it simpler to use a finite element method to discretize the problem. Our finite element method was implemented in FEniCS \cites{logg2012automated}.
\end{enumerate}

Finally, we comment on the spurious (gray) edge modes observed in Figure \ref{E_kpar_zz} (top panel).
These modes are spatially localized at the computational boundary, where the cylinder $\Sigma$ is truncated. They are heavily dependent on finiteness effects associated with the choice of numerical domain.
In Figure \ref{E_L_comparison}, we compare bifurcation diagrams for several different finite cylinder truncations, parametrized by the cylinder-length, $L$. The (red) topologically protected (domain-wall induced) edge modes stabilize and are essentially independent of the cylinder truncation, while the spurious (blue) edge modes, localized at the truncated cylinder boundary, are heavily dependent on the choice of numerical domain truncation.
A computational method for which  spurious, $L-$ dependent edge modes, would not occur would require the use of appropriate perfectly matched or  radiation conditions; see {\it e.g.} \cite{Fliss-Joly:16}.

\begin{figure}
\centering
\includegraphics[width=5in]{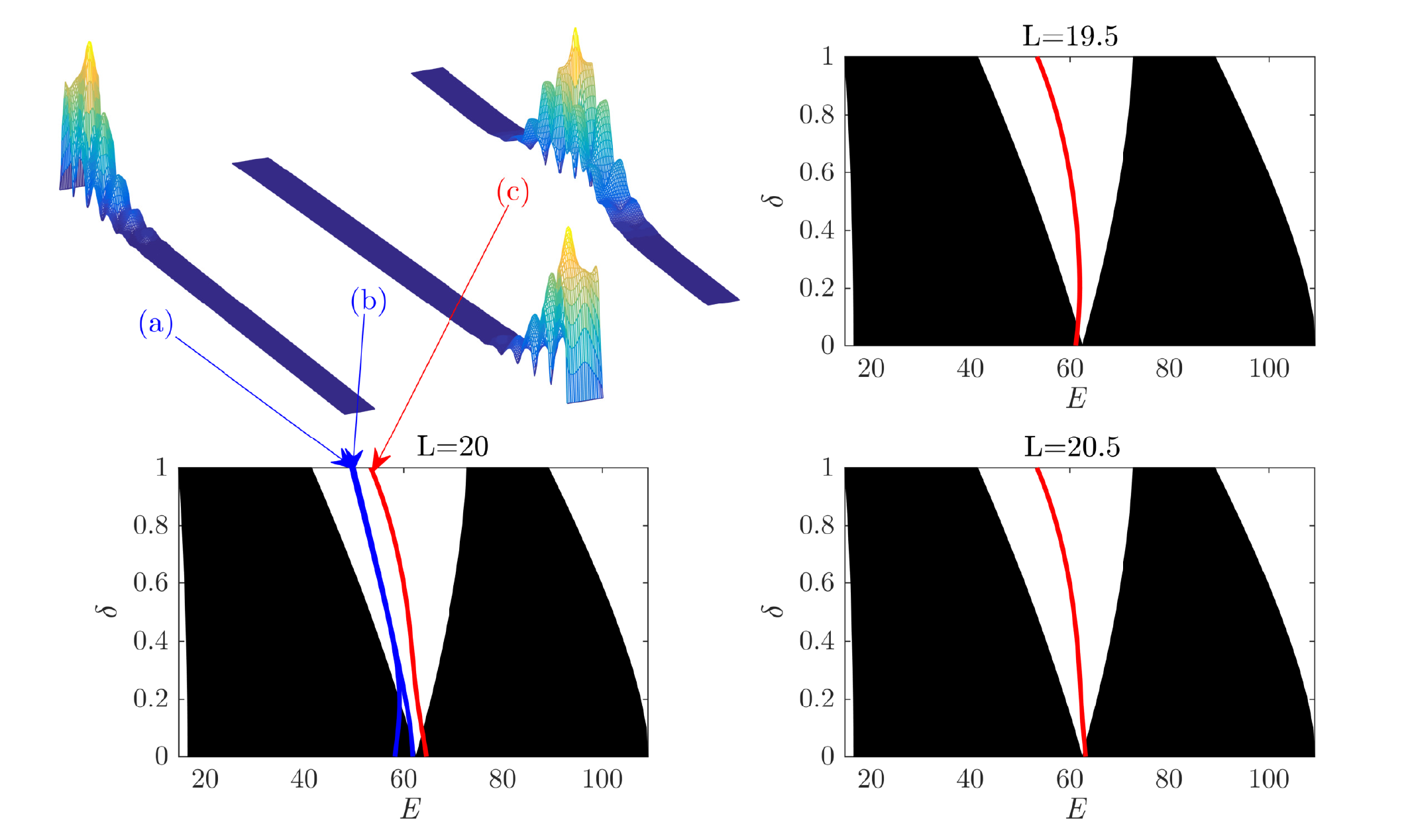}
\caption{\footnotesize
 Bifurcation curves for the zigzag edge, illustrated by the $L^2_{\kpar=\bK\cdot\bv_1=2\pi/3}(\Sigma)-$ energy spectrum of $\mathcal{L}_{\rm dw}^{(\delta)}$ vs. the perturbation parameter, $\delta$, for various cylinder-lengths, $L$.
 $\mathcal{L}_{\rm dw}^{(\delta)}$ is chosen as in the top panel of Figure \ref{E_kpar_zz}.
 Edge modes (a)-(c) correspond to the respective blue and red energies at $\delta=1$ for the case where $L=20$.
 Modes (a)-(b) are spurious hard edge modes. Mode (c) is a true edge mode.
 As we vary $L$, we observe that (red) true edge mode energy curves are independent of the numerical setup, whereas the (blue) spurious edge energy curves are heavily dependent on the size of the numerical domain.}
\label{E_L_comparison}
\end{figure}

\section{Figure potentials}\label{fig_potentials}

\begin{figure}
\centering
\includegraphics[width=5in]{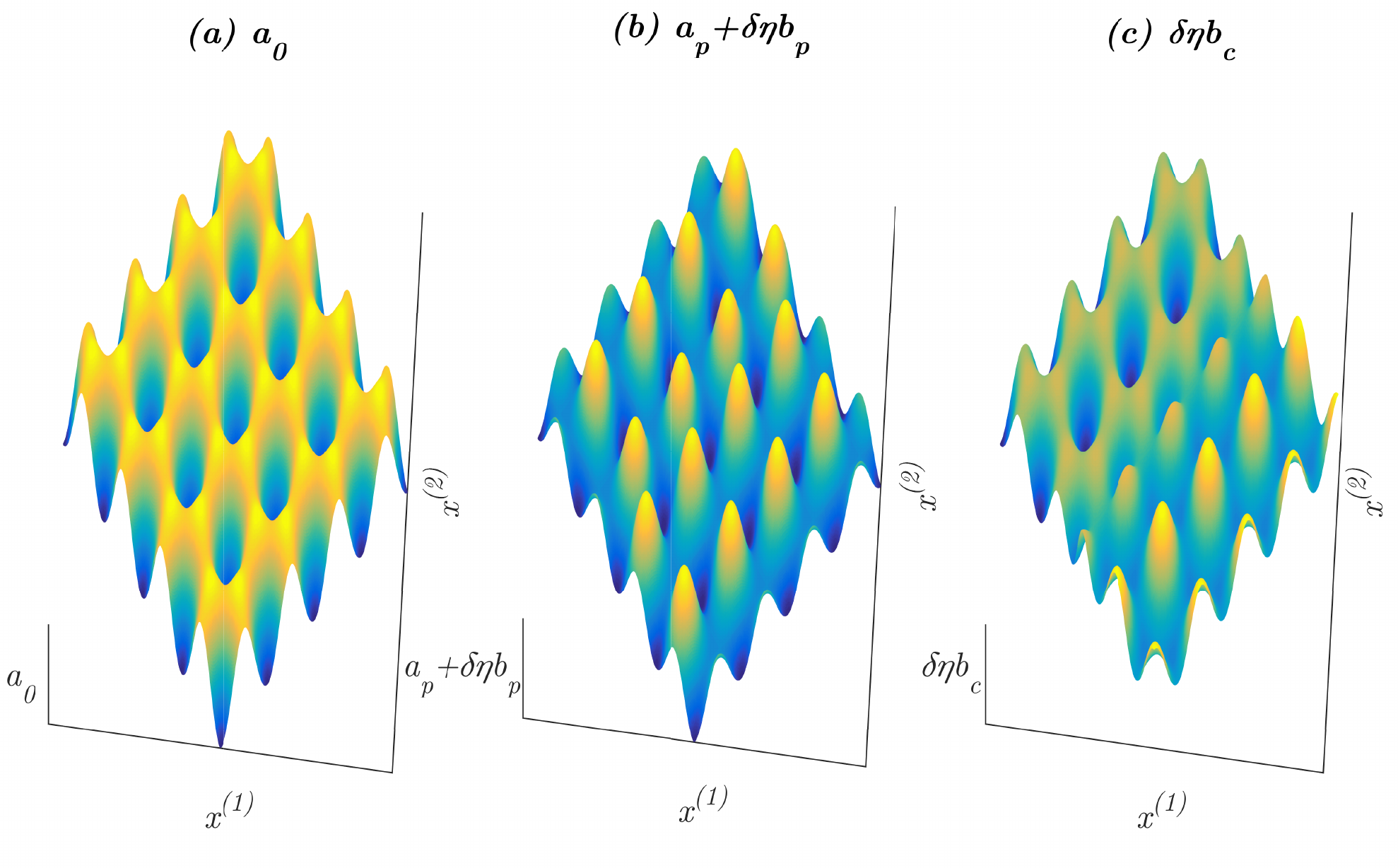}
\caption{\footnotesize
Plots of the potentials used in spectral plots throughout the text.
{\bf (a)}: Potential $\e_0(\bx)$ given in \eqref{unperturbed_potential}.
{\bf (b)}: Potential $\e_p(\bx)+\delta\eta(\delta\bk_2\cdot\bx)b_p(\bx)$ (for $\delta=1$), where $\eta(\zeta)=\tanh(\zeta)$, and $\e_p(\bx)$ and $b_c(\bx)$ are given in \eqref{P_breaking_potential_a}-\eqref{P_breaking_potential_b}.
{\bf (c)}: Potential $\delta\eta(\delta\bk_2\cdot\bx)b_c(\bx)$ (for $\delta=1$), where $\eta(\zeta)=\tanh(\zeta)$ and $b_c(\bx)$ is defined in \eqref{C_breaking_potential_b}.
}
\label{all_potentials}
\end{figure}

The following potentials were used in the various figures appearing throughout the text.
\begin{enumerate}
 \item [$\bullet$] Figures \ref{E_k_BZ} and \ref{E_k_slices} (unperturbed media):
 The honeycomb structured media is $A(\bx)=\e_0(\bx)I$, where
 \begin{equation}
 \label{unperturbed_potential}
  \e_0(\bx)= 4 - \eps \left( \cos(\bk_1\cdot\bx) + \cos(\bk_2\cdot\bx) + \cos((\bk_1+\bk_2)\cdot\bx) \right),
 \end{equation}
 with $\eps=1$.
 The potential \eqref{unperturbed_potential} is plotted in Figure \ref{all_potentials}(a).
 \item [$\bullet$] Bottom panels of Figures \ref{E_delta_zz} and \ref{E_kpar_zz} ($\mathcal{P}-$ symmetry breaking):
 The honeycomb structured media with $\mathcal{P}-$ symmetry broken for $\delta>0$ is $A(\bx)+\delta\eta(\delta\bk_2\cdot\bx)B(\bx)$, where $A(\bx)=\e_p(\bx)I$, $B(\bx)=b_p(\bx)I$ and $\eta(\zeta)=\tanh(\zeta)$.
 Here,
 \begin{align}
 \e_p(\bx) &= 4.5 - \left( \cos(\bk_1\cdot\bx) + \cos(\bk_2\cdot\bx) + \cos((\bk_1+\bk_2)\cdot\bx) \right),  \label{P_breaking_potential_a} \\
 b_p(\bx) &= \sin(\bk_1\cdot\bx) + \sin(\bk_2\cdot\bx) - \sin((\bk_1+\bk_2)\cdot\bx)  \label{P_breaking_potential_b} .
 \end{align}
 The potential $\e_p(\bx)+\delta\eta(\delta\bk_2\cdot\bx)b_p(\bx)$ (for $\delta=1$) is plotted in Figure \ref{all_potentials}(b).
 \item [$\bullet$] Top panels of Figures \ref{E_delta_zz} and \ref{E_kpar_zz}, and all panels of Figure \ref{E_L_comparison} ($\mathcal{C}-$ symmetry breaking):
 The honeycomb structured media with $\mathcal{C}-$ symmetry broken for $\delta>0$ is $A(\bx)+\delta\eta(\delta\bk_2\cdot\bx)B(\bx)$, where $A(\bx)=\e_0(\bx)I$, $B(\bx)=b_c(\bx)\sigma_2$ and $\eta(\zeta)=\tanh(\zeta)$.
 Here, $\e_0(\bx)$ is given in \eqref{unperturbed_potential} (with $\eps=1$) and
  \begin{equation}
 \label{C_breaking_potential_b}
  b_c(\bx)= \cos(\bk_1\cdot\bx) + \cos(\bk_2\cdot\bx) + \cos((\bk_1+\bk_2)\cdot\bx),
 \end{equation}
 The potentials $\e_0(\bx)$ and $\delta\eta(\delta\bk_2\cdot\bx)b_c(\bx)$ (for $\delta=1$) are shown in Figure \ref{all_potentials} panels (a) and (c), respectively.
\end{enumerate}

\section{Complex-valued honeycomb structured media}\label{nonreal}

A honeycomb structured media $A(\bx)$ is generically complex. According to Corollary \ref{A-expand}, the simplest nonconstant honeycomb media containing the lowest Fourier components is of the form
\begin{equation*}
\begin{split}
A(\bx)=&a_0I+C~ e^{i\bk_1\cdot \bx}+ R C R^* ~ e^{i\bk_2\cdot \bx} + R^* C R ~e^{i(-\bk_1-\bk_2)\cdot \bx} \\
&+ C^T ~e^{-i\bk_1\cdot \bx}+ R C^T R^* ~e^{-i\bk_2\cdot \bx} + R^* C^T R ~\frac{}{}e^{i(\bk_1+\bk_2)\cdot \bx}
\end{split}
\end{equation*}
where $C$ could be any real $2\times2$ matrix and $a_0$ is a positive constant ensuring that $A(\bx)$ is positive definite.

If $C$ is symmetric, then $A(\bx)$ is real and $E_D^{\bK}~=~E_D^{\bK'}$ according to Theorem \ref{prop_conical}. For a general non-symmetric $C$, $A(\bx)$ is complex. In Figure \ref{nonreal_Dirac}, we present the lowest three dispersion slices for a non-real $A(\bx)$ with $C=\begin{pmatrix}
  -1&-1\\-2&-2
\end{pmatrix}$ and $a_0=10$. It is seen that $E_D^{\bK}\neq E_D^{\bK'}$ in this case.

\begin{figure}
\centering
\includegraphics[width=4in]{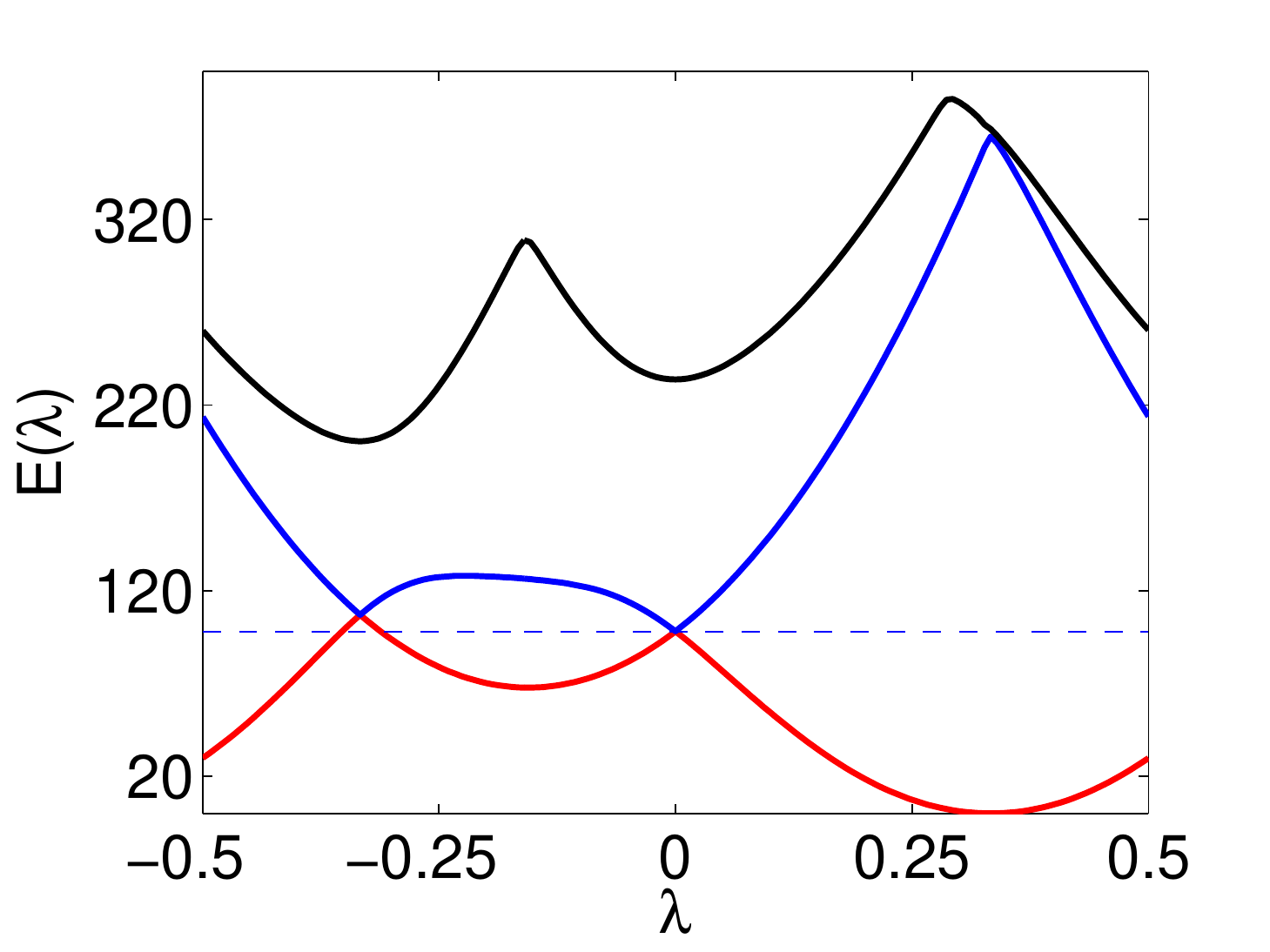}
\caption{\footnotesize Band dispersion slice along the $\bk_1-\bk_2$ direction of a complex honeycomb structured media $A(\bx)$. Two Dirac points at $\bK$ and $\bK'$ but $E_D^{\bK}\neq E_D^{\bK'}$.
}
\label{nonreal_Dirac}
\end{figure}

\section{Statement on Ethics}

\nit {\bf Funding:}\bigskip

\nit M. I.  Weinstein  was supported in part by U.S. National Science Foundation grants DMS-1412560, DMS-1620418 and  DGE-1069420 and Simons Foundation Math + X Investigator grant \#376319. \medskip

\nit J. P. Lee-Thorp was supported in part by U.S.  NSF grants DMS-1412560, DMR-1420073, and   Simons Foundation grant \#376319 (M.I. Weinstein).\medskip

\nit Y. Zhu by the Tsinghua University Initiative Scientific Research Program \# 20151080424 and NSFC grants \#11471185 and \#11871299.
\bigskip

\nit {\bf Conflict of Interest:}  The authors declare that they have no conflict of interest.

\bibliographystyle{}
\bibliography{Maxwell_HC}

\address{Courant Institute of Mathematical Sciences, \\ New York University, New York, NY, USA; leethorp@cims.nyu.edu \and
Department of Applied Physics and Applied Mathematics and Department of Mathematics, \\ Columbia University, New York, NY, USA; miw2103@columbia.edu \and
Zhou Pei-Yuan Center for Applied Mathematics, \\ Tsinghua University, Beijing, China; yizhu@tsinghua.edu.cn
}

\end{document}